\newtheorem{theorem}{Theorem}[section]
\newtheorem{lemma}{Lemma}[section]
\theoremstyle{definition}
\theoremstyle{definition}
\newtheorem{remark}{Remark}[section]
\newtheorem*{notation}{Notation}
\numberwithin{equation}{section}
\theoremstyle{definition}
\newcommand{\Keywords}[1]{\par\noindent {\small{\em Keywords\/}: #1}} % keywords
\newcommand{\JELclass}[1]{\par\noindent {\small{\em JEL classification\/}: #1}} % JEL classification codes
\begin{document}
\title{Semiparametric Discrete Choice Models for Bundles}

\author{{Fu Ouyang}\footnote{Corresponding author. School of Economics, The University of Queensland, St Lucia, QLD 4072, Australia. Tel: +61 451 478 622. Email: \href{f.ouyang@uq.edu.au}{f.ouyang@uq.edu.au}.} \\ {\normalsize University of Queensland} \and {Thomas Tao Yang}\footnote{Research School of Economics, The Australian National University, Canberra, ACT 0200, Australia. Email: \href{tao.yang@anu.edu.au}{tao.yang@anu.edu.au}.} \\ {\normalsize Australian National University}}
\maketitle

\begin{abstract}
\noindent We propose two approaches to estimate semiparametric discrete choice models for bundles. Our first approach is a kernel-weighted rank estimator based on a matching-based identification strategy. We establish its complete asymptotic properties and prove the validity of the nonparametric bootstrap for inference. We then introduce a new multi-index least absolute deviations (LAD) estimator as an alternative, of which the main advantage is its capacity to estimate preference parameters on both alternative- and agent-specific regressors. Both methods can account for arbitrary correlation in disturbances across choices, with the former also allowing for interpersonal heteroskedasticity. We also demonstrate that the identification strategy underlying these procedures can be extended naturally to panel data settings, producing an analogous localized maximum score estimator and a LAD estimator for estimating bundle choice models with fixed effects. We derive the limiting distribution of the former and verify the validity of the numerical bootstrap as an inference tool. All our proposed methods can be applied to general multi-index models. Monte Carlo experiments show that they perform well in finite samples.
\par \bigskip
\JELclass{C13, C14, C35.}
\par \medskip
\Keywords{Multi-index models; Rank correlation; Maximum score; Least absolute deviations; Panel data.}
\par \vspace{2cm} 
\end{abstract}

\onehalfspacing
\section{Introduction}

\label{Introduction} 
In many circumstances, consumers purchase bundles of
goods (e.g., both chips and salsa) or services (e.g., a combination of
mobile phone, home internet, and cable TV plans), instead of a single good or
service. The literature on bundle choice is less well developed than that on
ordinary multinomial choice models. An important but less extensively studied
empirical question in industrial organization and marketing research relates
to complementary or substitutive effects that may explain the bundle choice
behavior of consumers. We refer readers to \shortcites{berry2014structural}\cite{berry2014structural} for a review
of this literature. In an empirical study, \cite{Gentzkow2007} estimated a
parametric bundle choice model analyzing demand for print and online
newspapers. Similarly, using aggregate data, \cite{Fan2013} examined
ownership consolidation in the newspaper market, where households on the
demand side may purchase two newspapers as a bundle. A substantial
literature focuses on bundle choice behavior, analysis of product demand,
pricing strategy, customer subscriptions, and brand collaboration, among
others. Examples include \cite{manski1980empirical}, \cite{train1987demand},
\cite{hendel1999estimating}, \cite{chung2003general}, \cite{dube2004multiple}%
, \cite{nevo2005academic}, \cite{augereau2006coordination}, \cite%
{song2006measuring}, \cite{foubert2007shopper}, \cite%
{liu2010complementarities}, \cite{gentzkow2014competition}, \cite{lewbel2019sparse}, \cite
{KimEtal2020}, \cite{ershov2021estimating}, and \cite{iaria2021empirical}. The models in most of these
applications are parametric.\footnote{%
See \cite{Gentzkow2007} pp. 722--723 for a brief review of pervasive
parametric methods in the literature.}

There has been a growing interest in recent years in developing nonparametric and semiparametric methods for discrete choice models with bundles. \cite{SherKim2014} explored the identification of bundle choice
models without stochastic components in the utility from the perspective of
microeconomic theory, assuming a finite population of consumers. \cite%
{dunker2022nonparametric} studied nonparametric identification of
market-level demand models in a general framework in which the demand for
bundles is nested. \cite{FoxLazzati2017} provided nonparametric
identification results for both single-agent discrete choice bundle models
and binary games of complete information, and established the mathematical
equivalence between these models. \cite{AllenRehbeck2019} studied
nonparametric identification of a class of latent utility models with
additively separable unobserved heterogeneity by exploiting asymmetries in
cross-partial derivatives of conditional average demands. \cite{OYZ_infinity}
used the ``identification at infinity'' for
bundle choice models in cross-sectional settings. \cite{allen2022latent}
established partial identification results for complementarity
(substitutability) in a latent utility model with binary quantities of two
goods. These methods assume cross-sectional data. \cite{wang2023testing} proposed a semiparametric framework for identifying and testing substitution and complementarity patterns in multinomial choice models with bundles, which can be applied to both cross-sectional and panel data settings.

Our paper differs from these studies in many respects. Using individual-level data, we study semiparametric point identification and estimation of preference coefficients for bundle choice models in both cross-sectional and panel data settings. Our methods do not impose distributional constraints on unobserved error terms and allow for arbitrary correlation among them. The robustness afforded by the distribution-free specification makes our approaches competitive alternatives to existing parametric methods. Our paper is most closely related to \cite{wang2023testing} in terms of model specification and the assumptions regarding the dependence of unobservables on observed covariates. However, our proposed estimation and inference methods are built on distinct identification inequalities and criterion functions.

For the cross-sectional model, our identification results rely on exogeneity conditions similar to \cite{AllenRehbeck2019, allen2022latent}) and \cite{wang2023testing}, instead of either the exclusion
restrictions (``special regressors'') in \cite{FoxLazzati2017} or aggregate data and valid instrumental variables required by \cite{dunker2022nonparametric}. We first consider a maximum rank correlation (MRC) estimator in the spirit of \cite{Han1987}, using a matching insight. A recent study by \cite{KOT2021} used a similar method to study multinomial choice models. We validate nonparametric bootstrapping for inference and introduce a criterion-based test for interaction effects. Additionally, we propose a novel least absolute deviations (LAD) procedure as an alternative to the matching-based MRC method. The LAD estimator offers the advantage of estimating coefficients for both choice- and individual-specific covariates. It is worth noting that while we introduce these methods in the context of discrete choice models with bundles, their fundamental principles can be readily adapted to estimate a wide range of ``multi-index'' models with minor modifications.

We then extend the insight gained from the identification of cross-sectional models to panel data models with fixed effects. In this context, our identification relies on a mild conditional homogeneity assumption, which is also employed by \cite{PakesPorter2016}, \cite{ShiEtal2018}, and  \cite{gao2020robust}, among others, for studying ordinary panel data multinomial choice models. Based on a series of moment inequalities, we propose a maximum-score-type estimation procedure in the spirit of \cite{Manski1987} and derive its complete asymptotic properties. We also justify the validity of applying the numerical bootstrap for inference and propose a test for the presence of interaction effects. As in the cross-sectional model, we also introduce a panel data LAD estimation method, which offers the same advantages of accommodating both choice- and individual-specific covariates. Our methods for constructing these estimations are general and can be applied to models with similar multi-index structures.

This paper contributes to the fast-growing literature on a broad class of multi-index models. Leading examples are the multinomial choice model and its variants. In addition to the previously mentioned studies, this literature includes works by \cite{Lee1995}, \cite{Lewbel2000}, \cite{AltonjiMatzkin2005}, \cite{Fox2007}, \cite{berry2009nonparametric}, \cite{AhnEtal2018}, \cite{YanYoo2019}, and \cite{chernozhukov2019nonseparable}, to name a few. These approaches, however, do not directly extend to bundle choice models because of the non-mutually exclusive choice set and distinctive random utilities associated with bundles. Another example of multi-index models involves the study of dyadic networks, which has gained increasing attention among researchers. Some of the notable works in this area include \cite{graham2017econometric}, \cite{toth2017estimation}, \cite{candelaria2020semiparametric}, and \cite{gao2023logical}.

The remainder of this paper is organized as follows. Section \ref{SEC2} presents our MRC and LAD methods for the cross-sectional bundle choice model. Section \ref{SEC3} studies a panel data bundle choice model with fixed effects, introducing our localized MS and panel data LAD procedures. In Section \ref{monte}, we assess the finite sample performance of these proposed procedures through Monte Carlo experiments. Section \ref{conclude} concludes the paper. All tables, proofs, extensions, and further discussions are collected in the supplementary appendix (Appendixes \ref{appendixC}–\ref{appendixE}).

For ease of reference, the notations maintained throughout this paper are
listed here.

\begin{notation}
All vectors are column vectors. $\mathbb{R}^{p}$ is a $p$-dimensional
Euclidean space equipped with the Euclidean norm $\Vert \cdot \Vert $, and $%
\mathbb{R}_{+}\equiv \{x\in \mathbb{R}|x\geq 0\}$. We reserve letters $i$ and $m$ for indexing agents, $j$ and $l$
for indexing alternatives, and $s$ and $t$ for indexing time periods. The
first element of a vector $v$ is denoted by $v^{(1)}$ and the sub-vector
comprising its remaining elements is denoted by $\tilde{v}$. $P(\cdot
)$ and $\mathbb{E}[\cdot ]$ denote probability and expectation,
respectively. $1[\cdot ]$ is an indicator function that equals 1 when the
event in the brackets occurs, and 0 otherwise. For two random vectors $U$
and $V$, the notation $U\overset{d}{=}V|\cdot $ means that $U$ and $V$ have
identical distribution conditional on $\cdot $, and $U\perp V|\cdot $ means
that $U$ and $V$ are independent conditional on $\cdot $. Symbols $\setminus$, $^{\prime }$, $\Leftrightarrow $, $\Rightarrow$, $\propto$, $\overset{d}{\rightarrow }$, and $\overset{p}{\rightarrow }$ represent set
difference, matrix transposition, if and only if, implication, proportionality,
convergence in distribution, and convergence in probability, respectively. For any (random) positive
sequences $\{a_{N}\}$ and $\{b_{N}\}$, $a_{N}=O(b_{N})$ ($O_{p}(b_{N})$) means
that $a_{N}/b_{N}$ is bounded (bounded in probability) and $a_{N}=o(b_{N})$
($o_{p}(b_{N})$) means that $a_{N}/b_{N}\rightarrow0$ ($a_{N}/b_{N}%
\overset{p}{\rightarrow}0$).  
\end{notation}

\section{Cross-Sectional Model}\label{SEC2}
Throughout this paper, we focus on a choice model in
which the choice set $\mathcal{J}$ consists of three mutually exclusive
alternatives (numbered $0$--$2$) and a bundle of alternatives $1$ and $2$;
that is, $\mathcal{J}=\{0,1,2,(1,2)\}$. This simple model is sufficient to
illustrate the main intuition that runs through both cross-sectional and
panel data models.\footnote{We discuss models with more alternatives in Appendix \ref{appendixAdd_3}.}

For ease of exposition, we re-number alternatives in $\mathcal{J}$ with $2$%
-dimensional vectors of binary indicators $d=(d_{1},d_{2})\in \{0,1\}\times
\{0,1\}$, where $d_{1}$ and $d_{2}$ indicate if alternative 1 and 2 are
chosen, respectively. In this way the choice set $\mathcal{J}$ can be
one-to-one mapped to the set $\mathcal{D}=\{(0,0),(1,0),(0,1),(1,1)\}$. An
agent chooses the alternative in $\mathcal{D}$ to maximize the latent
utility
\begin{equation}
U_{d}=\sum_{j=1}^{2}F_{j}(X_{j}^{\prime }\beta ,\epsilon _{j})\cdot
d_{j}+\eta \cdot F_{b}(W^{\prime }\gamma)\cdot d_{1}\cdot d_{2},
\label{crossutility}
\end{equation}%
where $F_{j}(\cdot ,\cdot )$'s and $F_{b}(\cdot )$ are unknown (to the
econometrician) $\mathbb{R}^{2}\mapsto \mathbb{R}$ and $\mathbb{R}\mapsto
\mathbb{R}$, respectively, functions strictly monotonic in each of their
arguments, $X_{j}\in \mathbb{R}^{k_{1}}$ collects alternative-specific covariates affecting the
utility associated with stand-alone alternative $j$, $W\in \mathbb{R}%
^{k_{2}} $ is a vector of explanatory variables characterizing the
interaction effects of the bundle (e.g., bundle discount), $(\epsilon
_{1},\epsilon _{2},\eta )\in \mathbb{R}^{2}\times \mathbb{R}_{+}$ captures
unobserved (to the econometrician) heterogeneous effects, and $(\beta,\gamma)\in \mathbb{R}^{k_{1}+k_{2}}$ are
unknown preference parameters to estimate. 

We assume that $F_{b}(0)=0$ so that the sign of $W^{\prime
}\gamma$ has its economic meaning (see below for details). For instance, $%
F_{b}\left( x\right) $ can be $x$ or $x^{3}$. The specification of model (%
\ref{crossutility}) takes the linear forms in \cite{FoxLazzati2017} as a
special case. We note that \cite{FoxLazzati2017} does rely on the additive
separability while ours does not. The utility of choosing $(0,0)$ is
normalized to 0. The utilities of choosing $(1,0)$ and $(0,1)$ are $F_{1}\left(
X_{1}^{\prime }\beta ,\epsilon _{1}\right) $ and $F_{2}$($X_{2}^{\prime
}\beta ,\epsilon _{2}$), respectively. The utility of choosing the bundle is
the sum of the two stand-alone utilities plus the interaction term $%
\eta \cdot F_{b}(W^{\prime }\gamma )$ which captures either complementary (%
$W^{\prime }\gamma >0$) or substitution ($W^{\prime }\gamma <0$) effects. $(\epsilon _{1},\epsilon _{2})$ are
idiosyncratic shocks associated with each stand-alone alternative as in
common multinomial choice models, and $\eta $ reflects unobserved
heterogeneity for the bundle. In what follows, we assume $\eta \geq 0$ so that $%
W^{\prime }\gamma >0$ indicates interaction effect to be complementary, and
otherwise substitutive.  %\textcolor{red}{We note that assuming $\eta >0$ is innocuous. As will be clear from below, the estimates of $\eta \cdot (W^{\prime }\gamma )$ remain the same for either assuming $\eta >0$ or $\eta <0.$}

Given the latent utility model (\ref{crossutility}), the observed dependent
variable $Y_{d}$ takes the form
\begin{equation}
Y_{d}=1[U_{d}>U_{d^{\prime }},\forall d^{\prime }\in \mathcal{D}\setminus d].\footnote{Throughout this paper, we do not  consider the case in which $U_{d}=U_{d^{\prime }}$. As will be clear below, such utility ties occur with probability 0 under our identification consitions.}
\label{choicemodel}
\end{equation}%
Let $Z\equiv (X_{1},X_{2},W)$. The probabilities of choosing alternatives $d\in \mathcal{D}$ can be
expressed as
\begin{align}
 & (P(Y_{(0,0)}=1|Z),P(Y_{(1,0)}=1|Z),P(Y_{(0,1)}=1|Z),P(Y_{(1,1)}=1|Z))' \label{crossprob}\\
= & \left[\begin{array}{c}
P(\max \{F_{1}\left( X_{1}^{\prime }\beta ,\epsilon
_{1}\right) ,F_{2}\left( X_{2}^{\prime }\beta ,\epsilon _{2}\right)
,F_{1}\left( X_{1}^{\prime }\beta ,\epsilon _{1}\right) +F_{2}\left(
X_{2}^{\prime }\beta ,\epsilon _{2}\right) +\eta \cdot F_{b}(W^{\prime
}\gamma )\}<0|Z)\\
P(\max \{0,F_{2}\left( X_{2}^{\prime }\beta ,\epsilon
_{2}\right) ,F_{1}\left( X_{1}^{\prime }\beta ,\epsilon _{1}\right)
+F_{2}\left( X_{2}^{\prime }\beta ,\epsilon _{2}\right) +\eta \cdot F_{b}(W^{\prime
}\gamma )\}<F_{1}\left( X_{1}^{\prime }\beta ,\epsilon
_{1}\right) |Z)\\
P(\max \{0,F_{1}\left( X_{1}^{\prime }\beta ,\epsilon
_{1}\right) ,F_{1}\left( X_{1}^{\prime }\beta ,\epsilon _{1}\right)
+F_{2}\left( X_{2}^{\prime }\beta ,\epsilon _{2}\right) +\eta \cdot F_{b}(W^{\prime
}\gamma )\}<F_{2}\left( X_{2}^{\prime }\beta ,\epsilon
_{2}\right) |Z)\\
P(\max \{0,F_{1}\left( X_{1}^{\prime }\beta ,\epsilon
_{1}\right) ,F_{2}\left( X_{2}^{\prime }\beta ,\epsilon _{2}\right)
\}<F_{1}\left( X_{1}^{\prime }\beta ,\epsilon _{1}\right) +F_{2}\left(
X_{2}^{\prime }\beta ,\epsilon _{2}\right) +\eta \cdot F_{b}(W^{\prime
}\gamma )|Z)
\end{array}\right]. \nonumber
\end{align}

\subsection{Rank Correlation Method}\label{sec:rcm}
Section \ref{sec:cross_identify} shows the identification result. Section \ref{estimator} presents the MRC estimator and its asymptotic properties. Bootstrap inference procedure and a criterion-based test for interaction effects are discussed in Sections \ref{SEC:infer_cross} and \ref{SEC:test}, respectively.
 
\subsubsection{Identification}\label{sec:cross_identify}
When $(\epsilon _{1},\epsilon _{2},\eta )\perp Z$, expression (\ref%
{crossprob}) implies that $P(Y_{(1,0)}=1|X_{1}=x_{1},X_{2}=x_{2},W=w)$ and $%
P(Y_{(1,1)}=1|X_{1}=x_{1},X_{2}=x_{2},W=w)$ are both increasing in $%
x_{1}^{\prime }\beta $ for some constant vectors $x_{2}$ and $w$. That is, for $d_2\in\{0,1\}$,
\begin{equation}
x_{1}^{\prime }\beta \geq \bar{x}_{1}^{\prime }\beta  
\Leftrightarrow P(Y_{\left( 1,d_{2}\right)
}=1|Z=(x_{1},x_{2},w))\geq P(Y_{\left( 1,d_{2}\right)
}=1|Z=(\bar{x}_{1},x_{2},w)).  \label{crossmi1}
\end{equation}%
Similarly, for cases with $d_{1}=0$ and any $d_{2},$ $x_{1}^{\prime }\beta
\geq \bar{x}_{1}^{\prime }\beta $ is the ``if-and-only-if'' condition to the
second inequality in (\ref{crossmi1}) but with \textquotedblleft $\leq $%
\textquotedblright\ instead.

Once $\beta $ is identified, we can move on to identify $\gamma $ using the
following moment inequalities for $w^{\prime }\gamma $ constructed by fixing
$V(\beta)\equiv(X_{1}^{\prime }\beta ,X_{2}^{\prime }\beta )$ at some constant vector $%
(v_{1},v_{2})$. For $d=(1,1)$,
\begin{align}
& w^{\prime }\gamma \geq \bar{w}^{\prime }\gamma  \notag \\
\Leftrightarrow & P(Y_{(1,1)}=1|V(\beta) =(v_{1},v_{2}),W=w)\geq P(Y_{(1,1)}=1|V(\beta) =(v_{1},v_{2}),W=\bar{w}).  \label{crossmi3}
\end{align}
For all $d\in \mathcal{D}\setminus (1,1)$, $w^{\prime }\gamma \geq \bar{w}%
^{\prime }\gamma $ is the ``if-and-only-if'' condition for the second inequality
in (\ref{crossmi3}) but with \textquotedblleft $\leq $\textquotedblright\
instead.

To implement the above idea, we assume a random sample in Assumption C1 below. Thus, we can take two independent copies of $\left( Y,Z\right) $, i.e., $\left( Y_{i},Z_{i}\right) $ and $\left( Y_{m},Z_{m}\right) $ from the
sample. For (\ref{crossmi1}), we can match $X_{2}$ and $W$ for this two observations, so then we can rank their choice probabilities of $Y_{\left( 1,d_{2}\right) }$ using the index $X_{1}^{\prime }\beta$. This idea forms the basis of our MRC estimator, which we introduce in the next section. Note that $\gamma$ can be alternatively identified by matching $(X_{1},X_{2})$ across agents. However, some exploratory simulation studies have shown that matching the estimated index $V(\hat{\beta})$ (as suggested in (\ref{crossmi3})) can yield better finite sample performance than element-by-element matching.  

To establish the identification of $\beta $ and $\gamma $ based on (\ref%
{crossmi1})--(\ref{crossmi3}), the following conditions are sufficient:
\begin{itemize}
\item[\textbf{C1}] (i) $\{(Y_{i},Z_{i})\}_{i=1}^{N}$ are i.i.d. across $i$, (ii) $(\epsilon _{1},\epsilon
_{2},\eta )\perp Z$, and (iii) the joint distribution of $(\epsilon
_{1},\epsilon _{2},\eta )$ is absolutely continuous on $\mathbb{R}^{2}\times
\mathbb{R}_{+}$.

\item[\textbf{C2}] For any pair of $(i,m)$ and $j=1,2$, denote $%
X_{imj}=X_{ij}-X_{mj}$, $W_{im}=W_{i}-W_{m}$, and $V_{im}(\beta)= V_{i}(\beta)-V_{m}(\beta)$. Then, (i) $X_{im1}^{(1)}$ ($%
X_{im2}^{(1)}$) has almost everywhere (a.e.) positive Lebesgue density on $%
\mathbb{R}$ conditional on $\tilde{X}_{im1}$ ($\tilde{X}_{im2}$) and
conditional on $(X_{im2},W_{im})$ ($(X_{im1},W_{im})$) in a neighborhood of $%
(X_{im2},W_{im})$ ($(X_{im1},W_{im})$) near zero, (ii) Elements in $X_{im1}$ ($%
X_{im2}$), conditional on $(X_{im2},W_{im})$ ($(X_{im1},W_{im})$) in a
neighborhood of $(X_{im2},W_{im})$ ($(X_{im1},W_{im})$) near zero, are
linearly independent, (iii) $W_{im}^{(1)}$ has a.e. positive Lebesgue
density on $\mathbb{R}$ conditional on $\tilde{W}_{im}$ and conditional on $V_{im}(\beta)$ in a neighborhood of $V_{im}(\beta)$ near zero, and (iv)
Elements in $W_{im}$, conditional on $V_{im}(\beta)$ in a neighborhood of $V_{im}(\beta)$ near zero, are linearly independent.

\item[\textbf{C3}] $(\beta,\gamma)\in
\mathcal{B}\times\mathcal{R}$, where $\mathcal{B}=\{b\in\mathbb{R}^{k_1}|b^{(1)}=1,\Vert
b\Vert\leq C_b\}$ and $\mathcal{R}=\{r\in\mathbb{R}^{k_2}|r^{(1)}=1,\Vert
r\Vert\leq C_r\}$ for some constants $C_b,C_r>0$.

\item[\textbf{C4}] $F_{1}(\cdot ,\cdot )$, $F_{2}(\cdot ,\cdot )$, and $%
F_{b}(\cdot )$ are respectively $\mathbb{R}^{2}\mapsto
\mathbb{R}$, $\mathbb{R}^{2}\mapsto \mathbb{R}$, and $\mathbb{R}\mapsto
\mathbb{R}$ functions strictly increasing in each of their arguments. $F_{b}(0)=0$.
\end{itemize}

From the previous discussion, it can be seen that Assumptions C1 and C4 are the keys to establishing the moment inequalities (\ref{crossmi1})--(\ref{crossmi3}). It is important to note that Assumption C1 allows for arbitrary correlation among $(\epsilon _{1},\epsilon _{2},\eta )$. Assumptions C2(i) and C2(iii) are standard requirements for MRC and MS types of estimators. Assumptions C2(ii) and C2(iv) are regular rank conditions. Furthermore, Assumption C3 defines a compact parameter space and normalizes the scale of coefficients, which is standard practice for discrete choice models. Essentially, we assume that $\beta^{(1)}$ and $\gamma^{(1)}$ are not equal to zero. %We restrict the functions of $F$ to be strictly increasing in C4.

The following identification result for model (\ref{crossutility})--(\ref{choicemodel}%
) is proved in Appendix \ref{appendixA}.
\begin{theorem}
\label{T:crossidentify} Suppose Assumptions C1--C4 hold. Then $\beta $ and $\gamma $ are identified.
\end{theorem}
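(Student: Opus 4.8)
The plan is to identify $\beta$ first and, with $\beta$ in hand, then identify $\gamma$; each stage proceeds in two moves: (i) reduce the structural model to a \emph{strict} monotonicity statement for an observed conditional choice probability in a single linear index, and (ii) turn that monotonicity into uniqueness of the index coefficients using the support and rank conditions in C2 and the scale normalizations in C3.

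For move (i) in the $\beta$-stage, fix $(x_2,w)$ and consider $P(Y_{(1,1)}=1\mid Z=(x_1,x_2,w))$. By C1(ii) the conditional law of $(\epsilon_1,\epsilon_2,\eta)$ given $Z$ is free of $Z$, so by (\ref{crossprob}) this probability equals $P\big(U_{(1,1)}>\max\{0,U_{(1,0)},U_{(0,1)}\}\big)$ with $U_{(1,1)}=F_1(x_1'\beta,\epsilon_1)+F_2(x_2'\beta,\epsilon_2)+\eta F_b(w'\gamma)$, $U_{(1,0)}=F_1(x_1'\beta,\epsilon_1)$ and $U_{(0,1)}=F_2(x_2'\beta,\epsilon_2)$. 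By C4, raising $x_1'\beta$ increases $F_1(x_1'\beta,\epsilon_1)$ for every $\epsilon_1$, which relaxes $U_{(1,1)}>0$ and $U_{(1,1)}>U_{(0,1)}$ while leaving $U_{(1,1)}>U_{(1,0)}$ (equivalently $F_2(x_2'\beta,\epsilon_2)+\eta F_b(w'\gamma)>0$) unchanged, so the set of $(\epsilon_1,\epsilon_2,\eta)$ for which $(1,1)$ is chosen only grows; hence the probability is nondecreasing in $x_1'\beta$ for fixed $(x_2,w)$. C1(iii) (absolute continuity of $(\epsilon_1,\epsilon_2,\eta)$) upgrades this to \emph{strict} monotonicity on the range where the relevant comparison region is nondegenerate, which is (\ref{crossmi1}); in particular the probability is then a strictly increasing function of $x_1'\beta$ alone. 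The same calculation with $V(\beta)=(X_1'\beta,X_2'\beta)$ held fixed and $w'\gamma$ varied --- now $\eta F_b(w'\gamma)$ moves, using $F_b$ strictly increasing, $\eta\geq0$, and $P(\eta=0)=0$ (again from C1(iii)) --- yields the strict version of (\ref{crossmi3}); the $d\neq(1,1)$ cases are identical with the inequality reversed, and any one of these relations suffices.

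For move (ii) in the $\beta$-stage, note that the derivation just given used only C1 and C4, so every $b\in\mathcal B$ compatible with the observed distribution of $(Y,Z)$ satisfies (\ref{crossmi1}) with $b$ in place of $\beta$; it therefore suffices to show $\beta$ is the unique such $b$. Composing the two equivalences, for $x_1,\bar x_1$ in the support of $X_1$ given $(X_2,W)$ in the neighbourhood of zero for $(X_{im2},W_{im})$ named in C2(i)--(ii), one gets $x_1'\beta\geq\bar x_1'\beta\Leftrightarrow x_1'b\geq\bar x_1'b$, hence $X_{im1}'\beta=0\Leftrightarrow X_{im1}'b=0$ there. Writing $X_{im1}=(X_{im1}^{(1)},\tilde X_{im1})$ and using $\beta^{(1)}=b^{(1)}=1$ from C3, the point $X_{im1}^{(1)}=-\tilde X_{im1}'\tilde\beta$ annihilates the first index and, by C2(i) (a.e.\ positive Lebesgue density of $X_{im1}^{(1)}$ on $\mathbb R$ conditionally), lies in the support; thus $\tilde X_{im1}'(\tilde b-\tilde\beta)=0$ for a.e.\ $\tilde X_{im1}$ in that neighbourhood, and C2(ii) (conditional linear independence of the coordinates of $X_{im1}$) forces $\tilde b=\tilde\beta$, i.e.\ $b=\beta$. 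With $\beta$ identified, $V(\beta)$ is a known function of $Z$, and since $(\epsilon_1,\epsilon_2,\eta)\perp Z$ conditioning on $\{V(\beta)=(v_1,v_2)\}$ leaves the error law intact; replaying moves (i)--(ii) with the index $w'\gamma$, the difference $W_{im}$ in place of $X_{im1}$, and C2(iii)--(iv) in place of C2(i)--(ii) gives $r=\gamma$ for every compatible $r\in\mathcal R$.

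The main obstacle is move (i): one must verify that the choice probability is \emph{strictly}, not merely weakly, monotone in the relevant index once all the cases inside the $\max$ are sorted out, and that the two conditioning manipulations --- deleting $Z$ from the error law via C1(ii), and later conditioning on the \emph{identified} index $V(\beta)$ without perturbing that law --- are legitimate. Absolute continuity in C1(iii) (which in particular rules out an atom of $\eta$ at $0$) together with the independence in C1(ii), and the fact that C2 localizes everything to a neighbourhood of zero where the comparison regions are nondegenerate, are precisely what make strictness work. Once (\ref{crossmi1})--(\ref{crossmi3}) hold strictly, the remainder is the by-now-standard single-index rank argument, and the only real care needed is to invoke each support and rank condition on the correct conditioning event.
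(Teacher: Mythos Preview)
Your argument is correct and uses the same substantive ingredients as the paper --- the strict monotonicity relations (\ref{crossmi1})--(\ref{crossmi3}) derived from C1 and C4, together with the support/rank conditions in C2 and the normalization in C3 --- but your move (ii) is packaged differently. The paper casts identification as unique maximization of the population rank-correlation criterion
\[
Q_1(b)=\mathbb{E}\big[(P(Y_{i(1,d_2)}=1\mid Z_i)-P(Y_{m(1,d_2)}=1\mid Z_m))\,\mathrm{sgn}(X_{im1}'b)\,\big|\,\Omega_{im}\big],
\]
and shows by contradiction that if $b\neq\beta$ then C2(i)--(ii) give $P(\mathrm{sgn}(X_{im1}'b)\neq\mathrm{sgn}(X_{im1}'\beta)\mid\Omega_{im})>0$, whence (\ref{crossmi1}) forces $Q_1(b)<Q_1(\beta)$. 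You instead argue directly that any observationally equivalent $b$ must itself satisfy (\ref{crossmi1}), so the two index orderings coincide on the relevant support, and then match the zero sets $\{X_{im1}'\beta=0\}=\{X_{im1}'b=0\}$ to deduce $\tilde b=\tilde\beta$ from C2 and C3. The paper's route has the practical advantage of mirroring the sample MRC criterion (\ref{crossobjbK}) that underlies estimation; yours is the classical single-index uniqueness argument and reads as a slightly more self-contained identification proof. In either framing, the real work --- getting strictness in (\ref{crossmi1})--(\ref{crossmi3}) from C1(iii) and invoking the localized rank/support conditions on the right conditioning event --- is the same, and you handle both correctly.
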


\begin{remark}
The matching-based method discussed here is distribution-free, accommodating arbitrary correlation among $(\epsilon_{1},\epsilon_{2},\eta)$ and interpersonal heteroskedasticity. Furthermore, as demonstrated in Sections \ref{estimator}--\ref{SEC:infer_cross}, the resulting estimator is computationally tractable and has an asymptotic normal distribution, rendering inference relatively straightforward, particularly when employing bootstrap methods. However, this method has two main limitations. Firstly, it becomes difficult to apply when the model has many covariates, as this decreases the number of potential matches. Secondly, in cases where ``common regressors'' are present, it cannot estimate coefficients related to these variables unless their coefficients are assumed to be not ``alternative-specific''.
\end{remark}

%\begin{comment}
%\begin{remark}
%Note that we can also identify $\gamma$ (and $\text{sgn}(w^{\prime}\gamma)$ for each %$w$) first with $x_{1}$ and $x_{2}$ matched, and then identify $\beta$ by %matching $x_{2}$ ($x_{1}$) and $w$. Further, note that for $w^{\prime %}\gamma\leq0$, we have:
%\[
%P(y_{(0,0)}=1|z)=P\left(
%\begin{array}
%[c]{c}%
%x_{1}^{\prime}\beta+\epsilon_{1}<0\\
%x_{2}^{\prime}\beta+\epsilon_{2}<0\\
%(x_{1}+x_{2})^{\prime}\beta+(\epsilon_{1}+\epsilon_{2})+\eta(w^{\prime}%
%\gamma)<0
%\end{array}
%|z\right) =P\left(
%\begin{array}
%[c]{c}%
%x_{1}^{\prime}\beta+\epsilon_{1}<0\\
%x_{2}^{\prime}\beta+\epsilon_{2}<0
%\end{array}
%|z\right)
%\]
%Then, for constant vectors $\bar{x}_{1}$ and $\bar{x}_{2}$:
%\[
%x_{i1}^{\prime}\beta\geq x_{m1}^{\prime}\beta\Leftrightarrow P(y_{i(0,0)}%
%=1|z_{i},x_{i2}=\bar{x}_{2},w_{i}^{\prime}\gamma\leq0)\leq P(y_{m(0,0)}%
%=1|z_{m},x_{m2}=\bar{x}_{2},w_{m}^{\prime}\gamma\leq0)
%\]
%\[
%x_{i2}^{\prime}\beta\geq x_{m2}^{\prime}\beta\Leftrightarrow P(y_{i(0,0)}%
%=1|z_{i},x_{i1}=\bar{x}_{1},w_{i}^{\prime}\gamma\leq0)\leq P(y_{m(0,0)}%
%=1|z_{m},x_{m1}=\bar{x}_{1},w_{m}^{\prime}\gamma\leq0)
%\]
%However, we do not have similar moment inequalities (only matching $x_{2}$ and %$w^{\prime}\gamma\leq0$) for the case with $w^{\prime}\gamma\geq0$.
%\end{remark}
%\end{comment}

\subsubsection{Localized MRC Estimator}\label{estimator} 
The local monotonic relations established in (\ref%
{crossmi1})--(\ref{crossmi3}) naturally motivate a two-step localized MRC
estimation procedure. Note that the probability of obtaining perfectly
matched observations is zero for continuous regressors. Following the
literature, we propose to use kernel weights as an approximation to the
matching. These steps are described in turn below. 

In the first step, we consider the localized MRC estimator $\hat{\beta}$ of $%
\beta $, analogous to the MRC estimator proposed in \cite{Han1987} and \cite{KOT2021}.
Specifically, $\hat{\beta}=\arg \max_{\beta \in \mathcal{B}}\mathcal{L}%
_{N,\beta }^{K}(b)$. $\mathcal{L}_{N,\beta }^{K}(b)$ is defined as
\begin{align}
\mathcal{L}_{N,\beta }^{K}(b)=\sum_{i=1}^{N-1}\sum_{m>i}\sum_{d\in \mathcal{D%
}}& \{\mathcal{K}_{h_{N}}(X_{im2},W_{im})(Y_{md}-Y_{id})\text{sgn}%
(X_{im1}^{\prime }b)\cdot \left( -1\right) ^{d_{1}}  \notag \\
& +\mathcal{K}_{h_{N}}(X_{im1},W_{im})(Y_{md}-Y_{id})\text{sgn}%
(X_{im2}^{\prime }b)\cdot \left( -1\right) ^{d_{2}}\}  \label{crossobjbK}
\end{align}%
with $\mathcal{K}_{h_{N}}(X_{imj},W_{im})\equiv
h_{N}^{-(k_{1}+k_{2})}\prod_{\iota =1}^{k_{1}}K(X_{imj,\iota
}/h_{N})\prod_{\iota =1}^{k_{2}}K(W_{im,\iota }/h_{N})$, where $X_{imj,\iota
}$ ($W_{im,\iota }$) is the $\iota $-th element of vector $X_{imj}$ ($W_{im}$%
), $K\left( \cdot \right) $ is a standard kernel density function, and $%
h_{N} $ is a bandwidth sequence that converges to 0 as $N\rightarrow \infty $%
.\footnote{%
In practice, kernel functions and bandwidths can be different for each
univariate matching variable. Here we assume they are the same for all
covariates just for notational convenience.} Obviously, $\mathcal{K}%
_{h_{N}}(X_{imj},W_{im})\approx\left[
X_{imj}=0,W_{im}=0\right] $ for $j=1,2,$ as $h_{N}\rightarrow 0.$

In the second step, we obtain $\hat{\gamma}=\arg \max_{\gamma \in \mathcal{R}%
}\mathcal{L}_{N,\gamma }^{K}(r;\hat{\beta})$ with
\begin{equation}
\mathcal{L}_{N,\gamma }^{K}(r;\hat{\beta})=\sum_{i=1}^{N-1}\sum_{m>i}%
\mathcal{K}_{\sigma _{N}}(V_{im}(\hat{\beta}))(Y_{i(1,1)}-Y_{m(1,1)})\text{%
sgn}(W_{im}^{\prime }r),  \label{crossobjrK}
\end{equation}%
where $\mathcal{K}_{\sigma _{N}}(V_{im}(\hat{%
\beta}))=\sigma _{N}^{-2}K(X_{im1}^{\prime }\hat{\beta}/\sigma
_{N})K(X_{im2}^{\prime }\hat{\beta}/\sigma _{N})$ and $\sigma _{N}$ is a
bandwidth sequence that converges to 0 as $N\rightarrow \infty $. Again, $%
\mathcal{K}_{\sigma _{N}}(V_{im}(\hat{\beta}))\approx 1%
[ V_{im}(\hat{\beta})=0] $ as $%
\sigma _{N}\rightarrow 0$.

The rest of this section establishes the asymptotic properties of the
estimators computed based on criterion functions (\ref{crossobjbK}) and (\ref%
{crossobjrK}). To do so, we need to place Assumptions C5--C7 in addition to Assumptions C1--C4 and introduce some new notations. Assumption C5 is a purely technical assumption and standard in the literature. To save space, we present it in Appendix \ref{appendixA} before the proof of Theorem \ref{T:crossAsymp}. Here we only present Assumptions C6--C7 since they are of practical importance.

\begin{itemize}
\item[\textbf{C6}] $K(\cdot )$ is continuously differentiable and assumed to
satisfy: (i) $\sup_{\upsilon }|K(\upsilon )|<\infty $, (ii) $\int K(\upsilon
)\text{d}\upsilon =1$, (iii) for any positive integer $\iota \leq \max
 \{ \kappa _{\beta },\kappa _{\gamma } \}$ where $\kappa_\beta$ is an even integer greater than $k_1+k_2$ and $\kappa_\gamma$ is an even integer greater than 2, $\int \upsilon ^{\iota }K(\upsilon )\text{d}\upsilon =0$ if $\iota < \max\{\kappa_{\beta},\kappa_{\gamma}\}$, $\int \upsilon ^{\iota }K(\upsilon )\text{d}\upsilon\neq 0$ if $\iota=\max\{\kappa_{\beta},\kappa_{\gamma}\}$, and $\int
\vert \upsilon ^{\max\{\kappa_{\beta},\kappa_{\gamma
}\} }\vert K(\upsilon )\text{d}\upsilon <\infty $.\footnote{Here we use the same kernel function $K(\cdot)$ for estimating $\beta$ and $\gamma$ to avoid additional notations. However, in practice, one can use kernel functions with different orders by replacing $\max\{\kappa_{\beta},\kappa_{\gamma}\}$ with $\kappa_{\beta}$ and $\kappa_{\gamma}$,respectively. }
\item[\textbf{C7}] $h_{N}$ and $\sigma _{N}$ are sequences of positive
numbers such that as $N\rightarrow \infty $: (i) $h_{N}\rightarrow 0$, $%
\sigma _{N}\rightarrow 0$, (ii) $\sqrt{N}h_{N}^{k_{1}+k_{2}}\rightarrow
\infty $, $\sqrt{N}\sigma _{N}^{2}\rightarrow \infty $, and (iii) $\sqrt{N}%
h_{N}^{\kappa _{\beta }}\rightarrow 0$, $\sqrt{N}\sigma _{N}^{\kappa
_{\gamma }}\rightarrow 0$.
\end{itemize}

Assumptions C6 and C7 impose mild and standard restrictions on kernel functions and tuning parameters. These assumptions are essential to establish the consistency of the estimator and to ensure that the bias term becomes asymptotically negligible when deriving the limiting distribution. Assumption C7 requires the order of the kernel function to be at least $\max \left\{ \kappa _{\beta },\kappa _{\gamma}\right\}$. Using similar arguments employed in \cite{Han1987}, \cite{Sherman1993, Sherman1994AoS, Sherman1994ET}, and \cite{AbrevayaEtal2010}, we show the $\sqrt{N}$-consistency and asymptotic normality of the MRC estimators. The proof can be found in Appendix \ref{appendixA}.

\begin{theorem}
\label{T:crossAsymp} If Assumptions C1--C7 hold, then we have
\begin{itemize}
\item[(i)] $\sqrt{N}(\hat{\beta}-\beta )\overset{d}{\rightarrow }%
N(0,4\left\{ \mathbb{E}[\nabla ^{2}\varrho _{i}(\beta )]\right\} ^{-1}%
\mathbb{E}[\nabla \varrho _{i}(\beta )\nabla \varrho _{i}(\beta )^{\prime
}]\left\{ \mathbb{E}[\nabla ^{2}\varrho _{i}(\beta )]\right\} ^{-1})$ or,
alternatively, $\hat{\beta}-\beta $ has the linear representation:
\begin{equation*}
\hat{\beta}-\beta =-\frac{2}{N}\left\{ \mathbb{E}[\nabla ^{2}\varrho
_{i}(\beta )]\right\} ^{-1}\sum_{i=1}^{N}\varrho _{i}(\beta
)+o_{p}(N^{-1/2}),
\end{equation*}%
with $\varrho _{i}\left( \cdot \right) $ defined in (\ref{EQ:rhoi}).

\item[(ii)] $\sqrt{N}(\hat{\gamma}-\gamma )\overset{d}{\rightarrow }N\left(
0,\mathbb{E}[\nabla ^{2}\tau _{i}(\gamma )]^{-1}\mathbb{E}\left[ \varDelta%
_{i}\varDelta_{i}^{\prime }\right] \mathbb{E}[\nabla ^{2}\tau _{i}(\gamma
)]^{-1})\right) $, where
\begin{equation*}
\varDelta_{i}\equiv 2\nabla \tau _{i}(\gamma )+2\mathbb{E}\left[ \nabla
_{13}^{2}\mu \left( V_{i}(\beta ),V_{i}(\beta ),\gamma \right) \right]
\left\{ \mathbb{E}[\nabla ^{2}\varrho _{i}(\beta )]\right\} ^{-1}\varrho
_{i}(\beta ),
\end{equation*}%
with $\tau _{i}\left( \cdot \right) $ and $\mu \left( \cdot ,\cdot ,\cdot
\right) $ defined in (\ref{EQ:Tau}) and (\ref{EQ:Mu}), respectively, and $\nabla _{13}^{2}$ denotes the
second order derivative w.r.t. the first and third arguments.
\end{itemize}
\end{theorem}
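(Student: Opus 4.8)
The plan is to treat each step as an $M$-estimator whose criterion is a kernel-weighted second-order $U$-process, and to adapt the asymptotic machinery of \cite{Sherman1993, Sherman1994AoS, Sherman1994ET} and \cite{AbrevayaEtal2010} to this localized setting; Assumptions C6--C7 enter precisely to render the kernel-smoothing bias of order $o_p(N^{-1/2})$ while keeping the effective sample sizes $\sqrt{N}h_N^{k_1+k_2}$ and $\sqrt{N}\sigma_N^2$ divergent. Throughout, $\beta$ is identified and consistently estimated in the first step, so the second step can condition on $\hat\beta\overset{p}{\rightarrow}\beta$.

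For part (i), I would first establish $\hat\beta\overset{p}{\rightarrow}\beta$. Since $\sum_{d\in\mathcal{D}}$ is a fixed finite sum, $\mathcal{L}_{N,\beta}^K(b)$ is a sum of two $U$-statistics of order two in $(Y_i,Z_i)$; a Hoeffding decomposition together with a uniform law of large numbers for $U$-processes with a shrinking bandwidth (standard bracketing/VC arguments combined with C7(ii)) gives $\sup_{b\in\mathcal{B}}|N^{-2}\mathcal{L}_{N,\beta}^K(b)-\ell_\beta(b)|\overset{p}{\rightarrow}0$, where $\ell_\beta$ is the limit criterion obtained as $h_N\to0$. By the monotone-index equivalence (\ref{crossmi1}) and Assumptions C1--C4 — in particular the density and rank conditions C2(i)--(ii) — $\ell_\beta$ is uniquely maximized at $\beta$, so an argmax argument yields consistency. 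To upgrade to the linear representation I would (after a preliminary rate argument) decompose $\mathcal{L}_{N,\beta}^K(b)-\mathcal{L}_{N,\beta}^K(\beta)$ into its degenerate $U$-process part, shown to be $o_p(N^{-1})$ uniformly over $o_p(1)$-neighbourhoods of $\beta$ via Sherman's maximal inequality, and a projection part $N^{-1}\sum_i\varrho_i^{(N)}(b)$ whose mean, by the higher-order kernel condition C6(iii) and the rate C7(iii) ($\sqrt{N}h_N^{\kappa_\beta}\to0$), differs from a smooth limiting function by $o(N^{-1/2})$; this lets one replace $\varrho_i^{(N)}$ by its limit $\varrho_i$ of (\ref{EQ:rhoi}). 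A second-order Taylor expansion of the limiting projection around $\beta$, with Hessian $\mathbb{E}[\nabla^2\varrho_i(\beta)]$ negative definite by the rank conditions, gives $\hat\beta-\beta=-2\{\mathbb{E}[\nabla^2\varrho_i(\beta)]\}^{-1}N^{-1}\sum_i\varrho_i(\beta)+o_p(N^{-1/2})$, and the Lindeberg--L\'evy CLT with Slutsky delivers the stated sandwich-form normal limit.

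For part (ii), I would run the same argument on $\mathcal{L}_{N,\gamma}^K(r;\hat\beta)$, the new ingredient being the plug-in of $\hat\beta$ inside the kernel $\mathcal{K}_{\sigma_N}(V_{im}(\hat\beta))$. Consistency of $\hat\gamma$ follows from (\ref{crossmi3}) and C2(iii)--(iv) using $\hat\beta\overset{p}{\rightarrow}\beta$ and continuity. For the expansion, I would derive a stochastic expansion of $\mathcal{L}_{N,\gamma}^K(r;b)$ in $b-\beta$, uniform over shrinking neighbourhoods of $(\gamma,\beta)$, using the continuous differentiability of $K$ in C6 to differentiate through the kernel; the leading term of the correction is $\mathbb{E}[\nabla_{13}^{2}\mu(V_i(\beta),V_i(\beta),\gamma)](b-\beta)$ with $\mu$ as in (\ref{EQ:Mu}). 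Substituting the linear representation of $\hat\beta-\beta$ from part (i) converts this into the extra influence-function term appearing in $\varDelta_i$, while the degenerate-$U$-process and bias-control steps — now invoking $\sqrt{N}\sigma_N^2\to\infty$ and $\sqrt{N}\sigma_N^{\kappa_\gamma}\to0$ from C7 — yield $\hat\gamma-\gamma=-\{\mathbb{E}[\nabla^2\tau_i(\gamma)]\}^{-1}N^{-1}\sum_i\tfrac{1}{2}\varDelta_i+o_p(N^{-1/2})$, whence the claimed asymptotic normality.

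I expect the main obstacle to be the $U$-process analysis under a \emph{vanishing} bandwidth: one must establish the stochastic equicontinuity and degenerate-$U$-process maximal inequalities uniformly over parameter neighbourhoods while the kernel concentrates, and simultaneously verify that the smoothing bias is of smaller order than $N^{-1/2}$ — this is exactly why the higher-order kernel condition C6(iii) and the joint bandwidth restrictions C7(ii)--(iii) are imposed. A second delicate point is propagating the first-step error into the second step: since $\hat\beta$ enters nonlinearly through the kernel argument $V_{im}(\hat\beta)$, one must justify differentiating under the shrinking kernel and show the remainder is $o_p(N^{-1/2})$ uniformly, which is what produces the cross term in $\varDelta_i$.
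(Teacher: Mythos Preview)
Your proposal is correct and follows essentially the same route as the paper: a Hoeffding decomposition of the localized $U$-process criterion, Sherman's (1993, 1994) machinery for the degenerate remainder and quadratic expansion (the paper's Lemmas~\ref{lemmaA4}--\ref{lemmaA6}), higher-order kernels with C7 to kill the smoothing bias, and a first-order Taylor expansion of $K_{\sigma_N,\gamma}(V_{im}(\hat\beta))$ in $\hat\beta-\beta$ (the paper's Lemma~\ref{lemmaA7}) to produce the cross term in $\varDelta_i$. The only point worth sharpening is that the degenerate part is initially only $O_p(N^{-1}\sigma_N^{-2})$ over $o_p(1)$-neighbourhoods, and the paper iterates---first obtaining the preliminary rate $\hat\gamma-\gamma=O_p(N^{-1/2}\sigma_N^{-1})$ via \cite{Sherman1994ET}, then re-applying the degenerate bound over this shrunken neighbourhood to upgrade it to $o_p(N^{-1})$---which is what your phrase ``after a preliminary rate argument'' gestures at but does not spell out.
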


We provide some intuition on the $\sqrt{N}$ convergence rate in Appendix \ref{appendixAdd}.
\begin{comment}
\textcolor{red}{
Although the expressions for $V_{\beta}$, $V_{\gamma}$, $\Psi$, and $\Lambda$ are quite involved, note that $V_{\beta}$ and $V_{\gamma}$ represent the second derivatives of the limit of the expectation of the maximands in (\ref{crossobjbK}) and (\ref{crossobjrK}), respectively; and $\Psi$ and $\Lambda$ represent the variances of the limit of their projections. The latter takes into account the fact that (\ref{crossobjrK}) uses estimated $\beta$. In practice, statistical inference, such as confidence regions, can be conducted using the standard nonparametric bootstrap.
}
\end{comment}

\subsubsection{Inference\label{SEC:infer_cross}}

Theorem \ref{T:crossAsymp} indicates that our localized MRC estimators are asymptotically normal and have asymptotic variances with the usual sandwich structure. The expressions for the asymptotic variances consist of first and second derivatives of the limit of the expectation of the maximands in (\ref{crossobjbK}) and (\ref{crossobjrK}). To use these results for statistical inference, \cite{Sherman1993} proposed applying the numerical derivative method of \cite{PakesPollard1989}. \cite{HongEtal2015} investigated the application of the numerical derivative method in extremum estimators, including second-order U-statistics. As an alternative, \cite{CavanaghSherman1998} suggested nonparametrically estimating these quantities. These procedures, however, require selecting additional tuning parameters.

%The asymptotic distribution of $\left( \hat{\beta},\hat{\gamma}\right) $ is not straightforward because the variance terms come with expectations of some first- and second-order derivatives. Those terms can be estimated via numerical derivatives. However, we do need additional tuning parameters for that, and the estimates can be sensitive to these parameters.

To avoid this complexity, we propose conducting inference using the classic nonparametric bootstrap for ease of implementation. \cite{Subbotin2007} proved the consistency of the nonparametric bootstrap for \citeauthor{Han1987}'s (\citeyear{Han1987}) MRC estimator.\footnote{\cite{JinEtal2001} proposed an alternative resampling method for U-statistics by perturbing the criterion function repeatedly.} %although its validity is rather complicated to demonstrate, it is easy to implement in practice. A similar result appeared in \cite{Subbotin2007}, who demonstrated the consistency of the nonparametric bootstrap for the MRC estimator.
The structure of the criterion functions of our estimators is similar to that of the standard MRC estimator, but they do differ in two ways. First, our estimators require matching and thus contain kernel functions. Second, the criterion function for estimating $\hat{\gamma}$ contains a first-step ($\sqrt{N}$-consistent) $\hat{\beta}$ to approximate the true value of $\beta$. These two differences usually do not make the bootstrap inconsistent. For example, \cite{Horowitz2001} demonstrated the consistency of the bootstrap for a range of estimators involved with kernel functions; \cite{ChenEtal2003} demonstrated the consistency of the bootstrap for estimators with a first-step estimation component and a well-behaved criterion function.

The bootstrap algorithm is standard. Draw $\{ (Y_{i} ^{\ast},Z_{i}^{\ast})\} _{i=1}^{N}$ independently from the sample $\{ (Y_{i},Z_{i})\} _{i=1}^{N}$ with replacement. Obtain the bootstrap estimator $\hat {\beta}^{\ast}$ is obtained from
\begin{align}
\hat{\beta}^{\ast} =\arg\max_{b\in\mathcal{B}}\mathcal{L}_{N,\beta}%
^{K\ast}\left( b\right) \equiv &\arg\max_{b\in\mathcal{B}}\sum_{i=1}^{N-1}%
\sum_{m>i}\sum_{d\in\mathcal{D}} \{ \mathcal{K}_{h_{N}}\left(
X_{im2}^{\ast},W_{im}^{\ast}\right) Y_{mid}^{\ast}\text{sgn}\left( X_{im1}%
^{\ast\prime}b\right) \cdot\left( -1\right) ^{d_{1}}
\nonumber\\
& + \mathcal{K}_{h_{N}}\left( X_{im1}^{\ast},W_{im}^{\ast}\right)
Y_{mid}^{\ast}\text{sgn}\left( X_{im2}^{\ast\prime}b\right) \cdot\left( -1\right)
^{d_{2}}\} .\label{EQ:betabootstrap}
\end{align}
Then, obtain the bootstrap estimator $\hat{\gamma}^{\ast}$ through
\begin{equation}
\hat{\gamma}^{\ast}=\arg\max_{r\in\mathcal{R}}\mathcal{L}_{N,\gamma}^{K\ast
}( r,\hat{\beta}^{\ast}) \equiv\arg\max_{r\in\mathcal{R}}%
\sum_{i=1}^{N-1}\sum_{m>i}\mathcal{K}_{\sigma_{N}} ( V_{im}^{\ast} (
\hat{\beta}^{\ast} ) ) Y_{im\left( 1,1\right) }^{\ast
}\text{sgn}\left( W_{im}^{\ast\prime}r\right) . \label{EQ:gammabootstrap}%
\end{equation}
Repeat the above steps for $B$ times to yield a sequence of bootstrap estimates $\{(\hat{\beta}_{(b)}^{\ast},\hat{\gamma}_{(b)}^{\ast})\}_{b=1}^{B}$.

\begin{theorem} \label{T:cross_boot}
Suppose Assumptions C1--C7 hold. Then, we have
\[
\sqrt{N} ( \hat{\beta}^{\ast}-\hat{\beta} ) \overset{d}{\rightarrow
}N ( 0,4\left\{ \mathbb{E}[\nabla^{2}\varrho_{i}(\beta)]\right\}
^{-1}\mathbb{E}[\nabla\varrho_{i}(\beta)\nabla\varrho_{i}(\beta)^{\prime
}]\left\{ \mathbb{E}[\nabla^{2}\varrho_{i}(\beta)]\right\} ^{-1} ) \textrm{ and}  
\]
\[
\sqrt{N}\left( \hat{\gamma}^{\ast}-\hat{\gamma}\right)
\overset{d}{\rightarrow}N\left( 0,\mathbb{E}[\nabla^{2}\tau_{i}(\gamma
)]^{-1}\mathbb{E}\left[ \varDelta_{i}\varDelta_{i}^{\prime}\right]
\mathbb{E}[\nabla^{2}\tau_{i}(\gamma)]^{-1}\right) 
\]
conditional on the sample. 
\end{theorem}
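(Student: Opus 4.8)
The plan is to reproduce, in the bootstrap world, the two-step argument behind Theorem~\ref{T:crossAsymp}, leaning on the consistency of the nonparametric bootstrap for maximum-rank-correlation (second-order U-statistic) estimators established by \cite{Subbotin2007} together with bootstrap limit theory for (degenerate and non-degenerate) U-statistics. Write $\hat{P}_{N}$ for the empirical measure of $\{(Y_{i},Z_{i})\}_{i=1}^{N}$, let $(M_{N1}^{\ast},\ldots,M_{NN}^{\ast})$ be the $\mathrm{Multinomial}(N;1/N,\ldots,1/N)$ resampling counts, and note $\hat{P}_{N}^{\ast}-\hat{P}_{N}=N^{-1}\sum_{i}(M_{Ni}^{\ast}-1)\delta_{(Y_{i},Z_{i})}$. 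Everything below is stated conditional on the sample, so stochastic-order symbols mean ``in probability over the bootstrap draws, for $\hat{P}_{N}$-almost every sample sequence.'' The two extra features relative to \citeapos{Han1987} MRC estimator---the kernel localization and, for $\hat\gamma^{\ast}$, the first-step plug-in---are handled exactly as in Appendix~\ref{appendixA}, with \citeapos{Subbotin2007} resampling bounds substituted at each appeal to a U-process limit theorem; \cite{Horowitz2001} and \cite{ChenEtal2003} supply the templates for bootstrap consistency with kernel smoothing and with a first-step estimator, respectively.

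For $\hat\beta^{\ast}$ I would first show $\hat\beta^{\ast}\overset{p}{\rightarrow}\hat\beta$ conditionally: the (normalized, $h_{N}\to 0$) population limit of $\mathcal{L}_{N,\beta}^{K\ast}$ is the same smooth, uniquely-maximized-at-$\beta$ criterion used in the consistency part of Theorem~\ref{T:crossAsymp}, and since the kernels indexed by $b\in\mathcal{B}$ form a Euclidean class a bootstrap uniform law of large numbers for U-processes applies. Next, following \cite{Sherman1993} and \cite{AbrevayaEtal2010}, I would expand $\mathcal{L}_{N,\beta}^{K\ast}(b)-\mathcal{L}_{N,\beta}^{K\ast}(\hat\beta)$ via the Hoeffding decomposition of the resampled U-statistic: the projection term equals, up to $o_{p}(N^{-1})$ uniformly on shrinking neighborhoods of $\hat\beta$, the inner product of $(b-\hat\beta)$ with $2N^{-1}\sum_{i}(M_{Ni}^{\ast}-1)\hat\varrho_{i}$ (where $\hat\varrho_{i}$ is the sample analogue of $\varrho_{i}$ from (\ref{EQ:rhoi})), plus a deterministic quadratic whose Hessian converges to $\mathbb{E}[\nabla^{2}\varrho_{i}(\beta)]$; the degenerate second-order component is $o_{p}(N^{-1})$ despite variance inflation of order $h_{N}^{-(k_{1}+k_{2})}$ precisely because $\sqrt{N}h_{N}^{k_{1}+k_{2}}\to\infty$ (C7(ii)), and the kernel-bias term is $o_{p}(N^{-1/2})$ because $\sqrt{N}h_{N}^{\kappa_{\beta}}\to 0$ (C7(iii))---here centering at $\hat\beta$ rather than $\beta$ is what cancels the sample-side score. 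A conditional Lindeberg CLT for the triangular array $N^{-1/2}\sum_{i}(M_{Ni}^{\ast}-1)\hat\varrho_{i}$, whose conditional variance is the sample second moment of $\varrho_{i}$ and converges to $\mathbb{E}[\nabla\varrho_{i}(\beta)\nabla\varrho_{i}(\beta)^{\prime}]$, together with the argmax/Hessian-inversion step then yields the stated limit for $\sqrt{N}(\hat\beta^{\ast}-\hat\beta)$.

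For $\hat\gamma^{\ast}$ the new ingredient is the plug-in $\hat\beta^{\ast}$ inside $V_{im}^{\ast}(\hat\beta^{\ast})$. I would repeat the expansion above for $\mathcal{L}_{N,\gamma}^{K\ast}(r;\hat\beta^{\ast})-\mathcal{L}_{N,\gamma}^{K\ast}(\hat\gamma;\hat\beta)$, but linearize jointly in $(r,\hat\beta^{\ast})$ around $(\hat\gamma,\hat\beta)$. A bootstrap stochastic-equicontinuity argument (using again that the relevant classes are Euclidean and that the $\sigma_{N}$-localized kernel in (\ref{crossobjrK}) is smooth, with C7 controlling variance inflation of order $\sigma_{N}^{-2}$ and the bias via $\sqrt{N}\sigma_{N}^{\kappa_{\gamma}}\to0$) shows that the effect of replacing $\hat\beta$ by $\hat\beta^{\ast}$ is, to first order, $\mathbb{E}[\nabla_{13}^{2}\mu(V_{i}(\beta),V_{i}(\beta),\gamma)]\,\sqrt{N}(\hat\beta^{\ast}-\hat\beta)$, exactly as in Theorem~\ref{T:crossAsymp}(ii). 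Substituting the Step-1 linear representation $\sqrt{N}(\hat\beta^{\ast}-\hat\beta)=-2\{\mathbb{E}[\nabla^{2}\varrho_{i}(\beta)]\}^{-1}N^{-1/2}\sum_{i}(M_{Ni}^{\ast}-1)\hat\varrho_{i}+o_{p}(1)$ and combining with the $\gamma$-projection term $2N^{-1/2}\sum_{i}(M_{Ni}^{\ast}-1)\nabla\hat\tau_{i}$ produces the bootstrap influence $N^{-1/2}\sum_{i}(M_{Ni}^{\ast}-1)\hat\varDelta_{i}$, with $\hat\varDelta_{i}$ the sample analogue of $\varDelta_{i}$; the conditional CLT for this array (conditional variance converging to $\mathbb{E}[\varDelta_{i}\varDelta_{i}^{\prime}]$) and Hessian inversion by $\mathbb{E}[\nabla^{2}\tau_{i}(\gamma)]$ give the claimed limit for $\sqrt{N}(\hat\gamma^{\ast}-\hat\gamma)$.

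The main obstacle is the simultaneous control of the bootstrap randomness and the kernel localization: one needs a bootstrap maximal inequality / stochastic-equicontinuity bound for the localized second-order U-processes in (\ref{crossobjbK}) and (\ref{crossobjrK}), uniform over $\mathcal{B}$ (resp.\ $\mathcal{R}$) and sharp enough that the degenerate part of the Hoeffding decomposition is $o_{p}(N^{-1})$ even though the kernel inflates the relevant $L^{2}$ norms by $h_{N}^{-(k_{1}+k_{2})/2}$ (resp.\ $\sigma_{N}^{-1}$), and this must survive the ties that resampling-with-replacement introduces. This is where Assumptions C5--C7 are used, and the argument runs parallel to the non-bootstrap proof of Theorem~\ref{T:crossAsymp}; a secondary but routine point is verifying the conditional Lindeberg condition for the triangular arrays $N^{-1/2}\sum_{i}(M_{Ni}^{\ast}-1)\hat\varrho_{i}$ and $N^{-1/2}\sum_{i}(M_{Ni}^{\ast}-1)\hat\varDelta_{i}$, which follows from the finiteness of $\mathbb{E}[\|\varrho_{i}(\beta)\|^{2}]$ and $\mathbb{E}[\|\varDelta_{i}\|^{2}]$ and a standard truncation.
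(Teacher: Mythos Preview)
Your proposal is correct and follows essentially the same route as the paper: Hoeffding-decompose the bootstrap U-statistic criterion, control the degenerate remainder using Euclidean-class arguments under C5--C7, apply a bootstrap CLT to the projection term, and for $\hat\gamma^{\ast}$ linearize in the plugged-in first step and substitute its bootstrap linear representation. The only cosmetic differences are that the paper (i) works directly with $\mathbb{E}^{\ast}$ and $P^{\ast}$ rather than your multinomial-weight representation, (ii) expands around the true values $(\beta,\gamma)$ and then subtracts the original estimator's expansion (so the bootstrap criterion picks up both $W_{N1}$ and an independent copy $W_{N1}^{\ast}$, and the subtraction kills $W_{N1}$), whereas you center at $(\hat\beta,\hat\gamma)$ so the sample-side score is cancelled from the start, and (iii) cites \cite{gine1990bootstrapping} for the bootstrap empirical-process step where you invoke a conditional Lindeberg CLT---these are interchangeable packaging choices, not different arguments.
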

This result is proved in Appendix \ref{appendixD}.  As a result, $\{(\hat{\beta}_{(b)} ^{\ast},\hat{\gamma}_{(b)}^{\ast})\}_{b=1}^{B}$ can be used to compute the standard errors or confidence intervals of the estimators $\hat{\beta}$ and $\hat{\gamma}$.  %A complete proof of the consistency of the bootstrap procedure is lengthy and tedious. Considering that this process is relatively standard in the literature, we only present an outline of the proof in Appendix \ref{APP:boot_cross}.
%The asymptotics of $(\hat{\beta},\hat{\gamma})$ are already quite tedious to show.

\subsubsection{Testing $\eta$}\label{SEC:test}
Our identification requires $\gamma^{(1)}\neq 0$ and a scale normalization $\gamma^{(1)}=1$. As a consequence, it is impossible to test for the presence of the interaction effect by simply looking at the significance of $\left\Vert \gamma \right\Vert$. This observation highlights the role of $\eta$--it determines the magnitude of the interaction effect between the two stand-alone alternatives. In this section, we introduce a criterion-based procedure to test whether $\eta$ degenerates to zero. We formulate the null hypothesis as
\begin{equation*}
\mathbb{H}_{0}:\eta >0\text{ almost surely and }E\left( \eta \right) >0,
\end{equation*}%
and the alternative hypothesis as%
\begin{equation*}
\mathbb{H}_{1}:\eta =0\text{ almost surely.}
\end{equation*}%

Before presenting the theoretical result, we explain the idea underlying the test. We define $\mathcal{\hat{L}}_{N}^{K}\left( \hat{\gamma}\right) $ as
\begin{equation*}
\mathcal{\hat{L}}_{N}^{K} ( \hat{\gamma} ) =\frac{1}{\sigma
_{N}^{2}N\left( N-1\right) }\sum_{i\neq m}K_{\sigma _{N},\gamma } (
V_{im} ( \hat{\beta} )  ) Y_{im\left( 1,1\right) }\text{sgn}%
\left( W_{im}^{\prime }\hat{\gamma}\right) .
\end{equation*}%
Suppose we know the value of $\beta $. Define
\begin{equation*}
\mathcal{L}_{N}^{K}\left( r\right) =\frac{1}{\sigma _{N}^{2}N\left(
N-1\right) }\sum_{i\neq m}K_{\sigma _{N},\gamma }\left( V_{im}\left( \beta
\right) \right) Y_{im\left( 1,1\right) }\text{sgn}\left( W_{im}^{\prime
}r\right) .
\end{equation*}%
Clearly, $\mathcal{L}_{N}^{K}\left( r\right) $ is a second-order
U-statistic, and%
\begin{equation*}
\mathcal{L}_{N}^{K}\left( r\right) \overset{p}{\rightarrow }\mathcal{\bar{L}}%
\left( r\right) \equiv f_{V_{im}\left( \beta \right) }\left( 0\right)
\mathbb{E}\left[ Y_{im\left( 1,1\right) }\text{sgn}\left( W_{im}^{\prime
}r\right) |V_{im}\left( \beta \right) =0\right] .
\end{equation*}%
If $\eta =0$ almost surely,
\begin{equation*}
\mathcal{\bar{L}}\left( r\right) =0\text{ for any }r
\end{equation*}%
since $Y_{im\left( 1,1\right) }$ is independent of $W_{im}$ conditional on
$V_{im}\left( \beta \right) =0$ and $\mathbb{E}[ Y_{im\left(
1,1\right) }|V_{im}\left( \beta \right) =0] =0$. On the contrary, if $%
\eta >0$ almost surely and $E\left( \eta \right) >0$, then the signs of $W_{im}^{\prime }\gamma $ and $P\left( Y_{i\left( 1,1\right)
}|V_{i}\left( \beta \right) =v,W_{i}\right) -P\left( Y_{m\left( 1,1\right)
}|V_{m}\left( \beta \right) =v,W_{m}\right) $ should be the same. %and this holds with a nonzero probability measure. 
As a result,
\begin{equation*}
\mathcal{\bar{L}}\left( \gamma \right) >0.
\end{equation*}%
In light of this observation, testing $\mathbb{H}_0$ is equivalent to testing $\mathcal{\bar{L}}(\gamma) > 0$.

By some standard analysis for U-statistics, the limiting distribution of $%
\mathcal{L}_{N}^{K}\left( \gamma \right) $ is
\begin{equation*}
\sqrt{N}\left( \mathcal{L}_{N}^{K}\left( \gamma \right) -\mathcal{\bar{L}}%
\left( \gamma \right) \right) \overset{d}{\rightarrow }N\left( 0,\Delta_\gamma
\right) ,
\end{equation*}%
with
\begin{equation*}
\Delta_\gamma =4\text{Var}\left\{ f_{V_{im}(\beta)}(0)\mathbb{E}\left[ Y_{im\left( 1,1\right) }\text{sgn}%
\left( W_{im}^{\prime }\gamma \right) |Z_{i},V_{m}\left( \beta \right)
=V_{i}\left( \beta \right) \right] \right\} .
\end{equation*}%
We can then test whether $\mathcal{\bar{L}}\left( \gamma \right) >0$ using
the above limiting distribution.

The analysis for our case is more complicated by the fact that we need to first
estimate $\beta $ and $\gamma$ before running the test. In Theorem \ref{TH:testing} below, we show that the
plugged-in $\hat{\beta}$ and $\hat{\gamma}$ affect $\mathcal{\hat{L}}%
_{N}^{K}\left( \hat{\gamma}\right) $ at the rates of $N^{-1}\sigma _{N}^{-2}$
and $N^{-1}$, respectively, around the true values ($\beta ,\gamma $). These
rates are much faster than $N^{-1/2}$. Consequently, the asymptotic behavior of $\mathcal{\hat{L}}_N^K(\hat{\gamma})$ is the same as that of $\mathcal{L}_N^K(\gamma)$. The limiting distribution of $\mathcal{\hat{L}}_N^K(\hat{\gamma})$ under the null hypothesis is provided in the following theorem. The proof is included in Appendix \ref{appendixA}.

\begin{theorem}
\label{TH:testing}Suppose Assumptions C1-C7 hold. Then, under $\mathbb{H}%
_{0},$%
\begin{equation*}
\sqrt{N} ( \mathcal{\hat{L}}_{N}^{K}\left( \hat{\gamma}\right) -\mathcal{%
\bar{L}}\left( \gamma \right) ) \overset{d}{\rightarrow }N\left(
0,\Delta \right) .
\end{equation*}
\end{theorem}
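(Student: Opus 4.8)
\textbf{Proof proposal for Theorem \ref{TH:testing}.}

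The plan is to decompose $\sqrt{N}(\mathcal{\hat{L}}_{N}^{K}(\hat{\gamma})-\mathcal{\bar{L}}(\gamma))$ into (a) the ``oracle'' U-statistic $\sqrt{N}(\mathcal{L}_{N}^{K}(\gamma)-\mathcal{\bar{L}}(\gamma))$ evaluated at the true parameters, plus (b) a remainder capturing the effect of substituting $\hat{\beta}$ for $\beta$ in the kernel argument, plus (c) a remainder capturing the effect of substituting $\hat{\gamma}$ for $\gamma$ in the sign function. Once this decomposition is in place, the strategy is: show term (a) is asymptotically $N(0,\Delta_\gamma)$ by a standard Hoeffding projection for second-order U-statistics (using that $\sqrt{N}\sigma_N^2\to\infty$ so the degenerate component is negligible and the bias vanishes because the kernel has order at least $\kappa_\gamma$, by Assumptions C6 and C7), and show that terms (b) and (c) are $o_p(N^{-1/2})$, so that $\Delta=\Delta_\gamma$ up to the correction induced by the first-step estimation (the Theorem writes $\Delta$, and the place where it differs from $\Delta_\gamma$ is precisely the influence-function contribution of $\hat\gamma$ and $\hat\beta$; I would track this through the linear representations in Theorem \ref{T:crossAsymp}).

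The key steps, in order. First, I would write $\mathcal{\hat{L}}_{N}^{K}(\hat\gamma)-\mathcal{L}_{N}^{K}(\gamma)$ as a sum of two differences: one replacing $V_{im}(\beta)$ by $V_{im}(\hat\beta)$ inside $K_{\sigma_N,\gamma}(\cdot)$ while holding the sign at $\gamma$, the other replacing $\mathrm{sgn}(W_{im}'\gamma)$ by $\mathrm{sgn}(W_{im}'\hat\gamma)$ while holding the kernel argument at $\hat\beta$. Second, for the $\hat\beta$-perturbation I would Taylor-expand $K_{\sigma_N,\gamma}(V_{im}(\hat\beta))$ around $V_{im}(\beta)$; the leading term is linear in $\hat\beta-\beta=O_p(N^{-1/2})$ multiplied by $\sigma_N^{-2}$ times a derivative of the kernel, whose $L_2$ norm over the U-statistic is $O_p(\sigma_N^{-1})$ after the usual change of variables; combining, the contribution is $O_p(N^{-1/2}\cdot\sigma_N^{-2}\cdot\sigma_N^{-1}\cdot\sigma_N)\cdot$(something). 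Here I need to be careful: the heuristic in the text says the plugged-in $\hat\beta$ affects $\mathcal{\hat L}_N^K(\hat\gamma)$ ``at the rate $N^{-1}\sigma_N^{-2}$'', which is $o(N^{-1/2})$ under C7(ii) ($\sqrt{N}\sigma_N^2\to\infty$); I would make this rigorous by bounding the second moment of the relevant U-statistic kernel and invoking the projection bound, exploiting that $\mathbb{E}[Y_{im(1,1)}\,\mathrm{sgn}(W_{im}'\gamma)\,|\,V_{im}(\beta)]$ is smooth in a neighborhood of $0$ (a consequence of C1(iii), C2, C4). Third, for the $\hat\gamma$-perturbation, the change in $\mathrm{sgn}(W_{im}'r)$ happens only on the event that $W_{im}'\gamma$ is within $O(\|\hat\gamma-\gamma\|)$ of zero; conditioning on $V_{im}(\beta)\approx 0$, the probability of this event is $O_p(N^{-1/2})$, and combined with the kernel mass $\approx 1$ and a second projection argument one gets the stated $N^{-1}$ rate, again $o(N^{-1/2})$. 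Fourth, having shown (b) and (c) are negligible at the $\sqrt{N}$ scale, I would apply the H-decomposition to term (a): the linear projection term gives the $\sqrt{N}$-asymptotically normal part with variance $\Delta_\gamma$ as displayed, and the bias $\mathbb{E}[\mathcal{L}_N^K(\gamma)]-\mathcal{\bar L}(\gamma)=O(\sigma_N^{\kappa_\gamma})=o(N^{-1/2})$ by C6(iii) and C7(iii). Collecting these and noting that under $\mathbb{H}_0$ the relevant conditional expectations are evaluated at the true $\gamma$, I get convergence to $N(0,\Delta)$, with $\Delta$ the variance of the full influence function (the $\Delta_\gamma$ term plus, if any survives at the $\sqrt N$ scale, the correction from $\hat\beta$; per the text's rate calculation the $\hat\beta$ and $\hat\gamma$ corrections are asymptotically negligible, so in fact $\Delta=\Delta_\gamma$).

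I expect the main obstacle to be the $\hat\beta$-perturbation term (step two): because the kernel argument $V_{im}(\hat\beta)$ is shrinking at rate $\sigma_N$ and we divide by $\sigma_N^2$, the naive bound is $\sigma_N^{-2}\cdot N^{-1/2}$, which is \emph{not} obviously $o(N^{-1/2})$ — it is only $o(1)$. The resolution, which I would need to execute carefully, is that this term is itself (approximately) a \emph{centered} second-order U-statistic in the original data: its expectation vanishes to leading order (by an integration-by-parts / symmetry argument using $\int \upsilon^\iota K'(\upsilon)\mathrm{d}\upsilon$-type identities and the smoothness of the conditional expectation), so it is the \emph{variance} not the mean that controls it, and the variance carries an extra factor that upgrades the rate to $N^{-1}\sigma_N^{-2}=o(N^{-1/2})$ under C7(ii). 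Making this centering precise — essentially showing that the first-order effect of $\hat\beta$ on $\mathcal{\hat L}_N^K$ has mean zero up to $o_p(N^{-1/2})$, so only its (small) fluctuation matters — is the delicate part, and it mirrors the degenerate-U-statistic handling of first-step plug-in error in \cite{Sherman1994AoS} and \cite{AbrevayaEtal2010}; I would adapt their lemmas rather than redo the projection from scratch.
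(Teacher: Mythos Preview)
Your overall architecture matches the paper exactly: decompose into the oracle U-statistic at $(\beta,\gamma)$, the $\hat\beta$-perturbation, and the $\hat\gamma$-perturbation; show the last two are $o_p(N^{-1/2})$; then apply the Hoeffding projection and Lindeberg--L\'evy CLT to the oracle term to get $N(0,\Delta_\gamma)$, with the kernel bias $O(\sigma_N^{\kappa_\gamma})=o(N^{-1/2})$ handled as you say. Your treatment of the $\hat\beta$-perturbation is also right and mirrors the paper's Lemma~\ref{LE:L_N^K_beta}: the leading Taylor term involves $K'$, whose integral vanishes, so the \emph{mean} of that piece is small and only the $O_p(N^{-1/2})$ fluctuation of a centered U-statistic (times $(\hat\beta-\beta)/\sigma_N$) survives, giving $O_p(N^{-1}\sigma_N^{-2})=o_p(N^{-1/2})$ under C7.

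The gap is in your step (c). Bounding by ``probability of the sign-change event is $O_p(N^{-1/2})$'' and then appealing to ``a second projection argument'' does not deliver the $N^{-1}$ rate. The projection controls fluctuations around the mean, but the \emph{mean} of the $\hat\gamma$-perturbation term, by your bound alone, is only $O_p(N^{-1/2})$, and that is not $o_p(N^{-1/2})$. What actually kills the mean is a structural fact you do not invoke: $\gamma$ is the \emph{maximizer} of the limiting criterion $\mathcal{\bar L}(\cdot)$, so the first-order effect of moving from $\gamma$ to $\hat\gamma$ vanishes and only the quadratic term $\tfrac12(\hat\gamma-\gamma)'\mathbb{V}_\gamma(\hat\gamma-\gamma)=O_p(N^{-1})$ remains. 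The paper packages this as Lemma~\ref{LE:L_N^K}, which gives the uniform expansion
\[
\mathcal{\hat L}_N^K(r)=\tfrac12(r-\gamma)'\mathbb V_\gamma(r-\gamma)+O_p\bigl(N^{-1/2}\|r-\gamma\|\bigr)+o_p\bigl(\|r-\gamma\|^2\bigr)+\mathcal L_N^K(\gamma)+o_p(N^{-1/2}),
\]
and then plugs in $r=\hat\gamma$ with $\|\hat\gamma-\gamma\|=O_p(N^{-1/2})$. Once you add this stationary-point observation (which is already implicit in the proof of Theorem~\ref{T:crossAsymp}), your argument goes through and you correctly conclude $\Delta=\Delta_\gamma$.
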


To perform the test, we propose using a standard bootstrap procedure. Specifically, we independently draw $\{ (Y_{i} ^{\ast},Z_{i}^{\ast})\} _{i=1}^{N}$ with replacement from the sample $\{ (Y_{i},Z_{i})\} _{i=1}^{N}$ for a total of $B$ times. For each draw, we calculate
\begin{equation*}
\mathcal{\hat{L}}_{N}^{K\ast }\left( \hat{\gamma}\right) =\frac{1}{\sigma
_{N}^{2}N\left( N-1\right) }\sum_{i\neq m}K_{\sigma _{N},\gamma } (
V_{im}^{\ast } ( \hat{\beta} )  ) Y_{im\left( 1,1\right)
}^{\ast }\text{sgn}\left( W_{im}^{\ast \prime }\hat{\gamma}\right) .
\end{equation*}%
Note that there is no need to re-compute $\hat{\beta}$ and $\hat{\gamma}$ using the bootstrap samples. We construct a 95\% confidence interval for $\mathcal{\bar{L}}(\gamma)$ as a one-sided interval $[ Q_{0.05}( \mathcal{\hat{L}}_{N}^{K\ast }\left( \hat{\gamma}\right) ) ,+\infty )$, where $Q_{0.05}(\mathcal{\cdot})$ denotes the 5\% quantile of $\mathcal{\hat{L}}_N^{K^*}(\hat{\gamma})$. By the same reason as we provided above, the plugged-in $\hat{\beta}$ and $\hat{\gamma}$ have no impact on the asymptotics of $\mathcal{\hat{L}}_{N}^{K\ast }\left( \hat{\gamma}\right) .$ The validity of the bootstrap for U-statistics has been established in the literature (see, e.g., \cite{ArconesGine}). Since this analysis is standard, we omit it here for brevity.

\subsection{Least Absolute Deviations Method}\label{SEC:LAD}
The main limitation of the MRC estimation proposed in Section \ref{sec:rcm} is its inability to estimate coefficients for regressors that are common across different alternatives, often referred to as common regressors. For example, individual-specific characteristics like gender may be present in multiple stand-alone utilities as well as the bundle utility. In such cases, the coefficients for these regressors are often alternative-specific. In this section, we propose an innovative least absolute deviations (LAD) estimator, offering the advantage of estimating coefficients for both common and alternative-specific regressors. The LAD estimator is also robust in that it does not impose any distributional assumptions on the error terms and allows for arbitrary correlations among them, aligning with the flexibility of the MRC estimator in these respects. %However, it comes at the cost of not accommodating interpersonal heteroskedasticity as the MRC estimator can do. Nonetheless, it is essential to highlight that the LAD estimator is still general enough despite this limitation. It does not impose any distributional assumptions on the error terms and allows for arbitrary correlations among them, aligning with the flexibility of the MRC estimator in these respects.

Consider the following model extended from the model specified in (\ref{crossutility}) and (\ref{choicemodel}):
\begin{align}
U_{d}  &  =\sum_{j=1}^{2}F_{j}(X_{j}^{\prime}\beta+S^{\prime}\rho_{j}%
,\epsilon_{j})\cdot d_{j}+\eta\cdot F_{b}(W^{\prime}\gamma+S^{\prime}\rho
_{b})\cdot d_{1}\cdot d_{2},\nonumber\\
Y_{d}  &  =1[U_{d}>U_{d^{\prime}},\forall d^{\prime}\in\mathcal{D}\setminus
d], \label{eq:lad_model}%
\end{align}
where $S\in\mathbb{R}^{k_{3}}$ is a vector of common regressors affecting the
utilities of all stand-alone alternatives and the bundle. With a slight abuse
of notation, we let $Z\equiv(X_{1},X_{2},W,S)$ in this section. Then, under
Assumption C4 and $(\epsilon_{1},\epsilon_{2},\eta)\perp Z$, for two
independent copies of $(Y,Z)$, $(Y_{i},Z_{i})$ and $(Y_{m},Z_{m})$, we can
write the following inequalities from model (\ref{eq:lad_model}) and equation
(\ref{crossprob}) (with modified latent utilities specified in (\ref{eq:lad_model})):
\begin{align}
\{X_{im1}^{\prime}\beta+S^{\prime}_{im}\rho_{1}\geq0,X_{im2}^{\prime}%
\beta+S_{im}^{\prime}\rho_{2}\leq0,W_{im}^{\prime}\gamma+S^{\prime}_{im}\rho
_{b}\leq0\}  &  \Rightarrow\Delta p_{(1,0)}(Z_{i},Z_{m})\geq
0,\label{eq:lad_ineq_1}\\
\{X_{im1}^{\prime}\beta+S^{\prime}_{im}\rho_{1}\leq0,X_{im2}^{\prime}%
\beta+S_{im}^{\prime}\rho_{2}\geq0,W_{im}^{\prime}\gamma+S^{\prime}_{im}\rho
_{b}\geq0\}  &  \Rightarrow\Delta p_{(1,0)}(Z_{i},Z_{m})\leq0,
\label{eq:lad_ineq_2}%
\end{align}
where $S_{im}\equiv S_{i}-S_{m}$ and $\Delta p_{(1,0)}(Z_{i},Z_{m})\equiv
P(Y_{i(1,0)}=1|Z_{i})-P(Y_{m(1,0)}=1|Z_{m})$.

We propose the following LAD criterion function to identify the unknown
parameters $\theta\equiv(\beta,\gamma,\rho_{1},\rho_{2},\rho_{b})$ in model
(\ref{eq:lad_model}):
\begin{equation}
Q_{(1,0)}(\vartheta)=\mathbb{E}\left[  q_{(1,0)}(Z_{i},Z_{m},\vartheta
)\right]  , \label{eq:lad_obj}%
\end{equation}
where $\vartheta=(b,r,\varrho_{1},\varrho_{2},\varrho_{b})$ is an arbitrary
vector in the parameter space $\Theta$ containing $\theta$ and
\begin{align*}
q_{(1,0)}(Z_{i},Z_{m},\vartheta)=  &  [|I_{im(1,0)}^{+}(\vartheta)-1[\Delta
p_{(1,0)}(Z_{i},Z_{m})\geq0]|+|I_{im(1,0)}^{-}(\vartheta)-1[\Delta
p_{(1,0)}(Z_{i},Z_{m})\leq0]|]\\
&  \times\lbrack I_{im(1,0)}^{+}(\vartheta)+I_{im(1,0)}^{-}(\vartheta)]
\end{align*}
with
\begin{align}
I_{im(1,0)}^{+}(\vartheta)  &  =1[X_{im1}^{\prime}b+S_{im}^{\prime
}\varrho_{1}\geq0,X_{im2}^{\prime}b+S_{im}^{\prime}\varrho_{2}\leq
0,W_{im}^{\prime}r+S^{\prime}_{im}\varrho_{b}\leq0]\text{ and}%
\label{eq:lad_ineq_1_2}\\
I_{im(1,0)}^{-}(\vartheta)  &  =1[X_{im1}^{\prime}b+S_{im}^{\prime
}\varrho_{1}\leq0,X_{im2}^{\prime}b+S^{\prime}_{im}\varrho_{2}\geq
0,W_{im}^{\prime}r+S^{\prime}_{im}\varrho_{b}\geq0] \label{eq:lad_ineq_2_2}%
\end{align}
being indicator functions respectively for the events defined on the left hand
sides of (\ref{eq:lad_ineq_1}) and (\ref{eq:lad_ineq_2}), but evaluated at $\vartheta$.

The construction of the function $q_{(1,0)}(Z_{i},Z_{m},\vartheta)$ might seem
complex at first glance, but it has an intuitive interpretation. Inequalities
(\ref{eq:lad_ineq_1}) and (\ref{eq:lad_ineq_2}) imply that using the true
parameter $\theta$, we can flawlessly predict the occurrence of the event
$\{\Delta p_{(1,0)}(Z_{i},Z_{m})\geq0\}$ ($\{\Delta p_{(1,0)}(Z_{i},Z_{m}%
)\leq0\}$) when $I^{+}_{im(1,0)}(\theta)=1$ ($I^{-}_{im(1,0)}(\theta)=1$). However,
for any $\vartheta\neq\theta$, such predictions may lead to errors, where
$I^{+}_{im(1,0)}(\vartheta)=1$ ($I^{-}_{im(1,0)}(\vartheta)=1$) does not align
with the observed data ${\Delta p_{(1,0)}(Z_{i},Z_{m})\geq0}$ (${\Delta
p_{(1,0)}(Z_{i},Z_{m})\leq0}$). Besides, when $I^{+}_{im(1,0)}(\vartheta
)=I^{-}_{im(1,0)}(\vartheta)=0$, no prediction can be made for the sign of
$\Delta p_{(1,0)}(Z_{i},Z_{m})$. Then, treating $q_{(1,0)}(Z_{i}%
,Z_{m},\vartheta)$ as a loss function, we can interpret it as the loss
incurred from making incorrect predictions for the sign of $\Delta
p_{(1,0)}(Z_{i},Z_{m})$ using $\vartheta$:
\begin{itemize}
\item[-] A wrong prediction incurs a loss of 2, i.e., $q_{(1,0)}(Z_{i}%
,Z_{m},\vartheta)=2$ when $\{I^{+}_{im(1,0)}(\vartheta)=1,\Delta
p_{(1,0)}(Z_{i},Z_{m})\leq0\}$ or $\{I^{-}_{im(1,0)}(\vartheta)=1,\Delta
p_{(1,0)}(Z_{i},Z_{m})\geq0\}$ occurs.
\item[-] A correct prediction or no prediction incurs 0 loss, i.e.,
$q_{(1,0)}(Z_{i},Z_{m},\vartheta)=0$ when $\{I^{+}_{im(1,0)}(\vartheta
)=1,\Delta p_{(1,0)}(Z_{i},Z_{m})\geq0\}$ or $\{I^{-}_{im(1,0)}(\vartheta
)=1,\Delta p_{(1,0)}(Z_{i},Z_{m})\leq0\}$ or $\{I^{+}_{im(1,0)}(\vartheta
)=I^{-}_{im(1,0)}(\vartheta)=0\}$ occurs.
\end{itemize}

Therefore, $Q_{(1,0)}(\vartheta)$ represents the expected loss of making wrong
predictions for the sign of $\Delta p_{(1,0)}(Z_{i},Z_{m})$ based on
inequalities (\ref{eq:lad_ineq_1}) and (\ref{eq:lad_ineq_2}). It is evident
that the true parameter $\theta$ minimizes $Q_{(1,0)}(\vartheta)$ since it
never gives wrong predictions.

The following conditions are sufficient for establishing the identification of
$\theta$ via (\ref{eq:lad_obj}), which are analogous to Assumptions C1--C4:

\begin{itemize}
\item[\textbf{CL1}] (i) $\{(Y_{i},Z_{i})\}_{i=1}^{N}$ are i.i.d. across $i$,
(ii) $(\epsilon_{1},\epsilon_{2},\eta)\perp Z$, and (iii) the joint
distribution of $(\epsilon_{1},\epsilon_{2},\eta)$ is absolutely continuous on
$\mathbb{R}^{2}\times\mathbb{R}_{+}$.

\item[\textbf{CL2}] For any pair of $(i,m)$ and $j=1,2$, denote $X_{imj}%
=X_{ij}-X_{mj}$, $W_{im}=W_{i}-W_{m}$, and $S_{im}= S_{i}-S_{m}$. Then,
(i) $X_{im1}^{(1)}$ ($X_{im2}^{(1)}$) has a.e. positive Lebesgue density on
$\mathbb{R}$ conditional on $(\tilde{X}_{im1},X_{im2},W_{im},S_{im})$
($(\tilde{X}_{im2},X_{im1},W_{im},S_{im})$), (ii) Elements in $(X_{im1},S_{im})$
($(X_{im2},S_{im})$), conditional on $(X_{im2},W_{im})$ ($(X_{im1}
,W_{im})$), are linearly independent, (iii) $W_{im}^{(1)}$ has a.e.
positive Lebesgue density on $\mathbb{R}$ conditional on $(\tilde{W}%
_{im},X_{im1},X_{im2},S_{im})$, and (iv) Elements in $(W_{im},S_{im})$, conditional on
$(X_{im1},X_{im2})$, are linearly independent.

\item[\textbf{CL3}] $\theta=(\beta,\gamma,\rho_{1},\rho_{2},\rho_{b})\in
\Theta\subset\mathbb{R}^{k_{1}+k_{2}+3k_{3}}$, where $\Theta=\{\vartheta
=(b,r,\varrho_{1},\varrho_{2},\varrho_{b})|b^{(1)}=r^{(1)}=1,\Vert
\vartheta\Vert\leq C\}$ for some constant $C>0$.

\item[\textbf{CL4}] $F_{1}(\cdot,\cdot)$, $F_{2}(\cdot,\cdot)$, and
$F_{b}(\cdot)$ are respectively $\mathbb{R}^{2}\mapsto\mathbb{R}$,
$\mathbb{R}^{2}\mapsto\mathbb{R}$, and $\mathbb{R}\mapsto\mathbb{R}$ functions
strictly increasing in each of their arguments. $F_{b}(0)=0$.
\end{itemize}

The following theorem, proved in Appendix \ref{appendixA}, summarizes our identification result.

\begin{theorem}
\label{thm:cross_theta_identification} Under Assumptions CL1--CL4,
$Q_{(1,0)}(\theta)<Q_{(1,0)}(\vartheta)$ for all $\vartheta\in\Theta
\setminus\{\theta\}$.
\end{theorem}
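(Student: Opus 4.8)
The plan is to verify two claims: (a) $Q_{(1,0)}(\theta)=0$, and (b) $Q_{(1,0)}(\vartheta)>0$ for every $\vartheta\in\Theta\setminus\{\theta\}$. Part (a) is immediate from \eqref{eq:lad_ineq_1}--\eqref{eq:lad_ineq_2}: on $\{I_{im(1,0)}^{+}(\theta)=1\}$ we have $\Delta p_{(1,0)}(Z_{i},Z_{m})\geq 0$, on $\{I_{im(1,0)}^{-}(\theta)=1\}$ we have $\Delta p_{(1,0)}(Z_{i},Z_{m})\leq 0$, and $q_{(1,0)}(Z_{i},Z_{m},\theta)=0$ whenever no prediction is made. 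By Assumption CL2(i),(iii) each of the linear forms $X_{im1}^{\prime}\beta+S_{im}^{\prime}\rho_{1}$, $X_{im2}^{\prime}\beta+S_{im}^{\prime}\rho_{2}$, $W_{im}^{\prime}\gamma+S_{im}^{\prime}\rho_{b}$ vanishes with probability zero, so the two index regions are disjoint up to a null set, and on $\{I_{im(1,0)}^{+}(\theta)=1\}$ at least one of the three inequalities is strict a.s.; since the conditional choice probability of $(1,0)$ is strictly monotone in each index argument (shown next), $\Delta p_{(1,0)}>0$ there, and hence $q_{(1,0)}(Z_{i},Z_{m},\theta)=0$ a.s.\ and $Q_{(1,0)}(\theta)=0$.

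For (b), write $g(t_{1},t_{2},t_{b})\equiv P(Y_{(1,0)}=1\mid X_{1}^{\prime}\beta+S^{\prime}\rho_{1}=t_{1},\,X_{2}^{\prime}\beta+S^{\prime}\rho_{2}=t_{2},\,W^{\prime}\gamma+S^{\prime}\rho_{b}=t_{b})$ for the true-index form of \eqref{crossprob}; by Assumptions CL1(iii) and CL4 it is continuous and strictly increasing in $t_{1}$, strictly decreasing in $t_{2}$ and $t_{b}$. Fix $\vartheta\neq\theta$, set $t_{i1}=X_{i1}^{\prime}\beta+S_{i}^{\prime}\rho_{1}$ etc.\ and $P^{*}=(t_{m1},t_{m2},t_{mb})$. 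In terms of true index differences, $\{I_{im(1,0)}^{+}(\vartheta)=1\}=\{t_{i1}-t_{m1}\geq\delta_{1}^{*},\,t_{i2}-t_{m2}\leq\delta_{2}^{*},\,t_{ib}-t_{mb}\leq\delta_{b}^{*}\}$, where $\delta_{1}^{*}=\tilde{X}_{im1}^{\prime}(\tilde{\beta}-\tilde{b})+S_{im}^{\prime}(\rho_{1}-\varrho_{1})$, $\delta_{2}^{*}=\tilde{X}_{im2}^{\prime}(\tilde{\beta}-\tilde{b})+S_{im}^{\prime}(\rho_{2}-\varrho_{2})$, $\delta_{b}^{*}=\tilde{W}_{im}^{\prime}(\tilde{\gamma}-\tilde{r})+S_{im}^{\prime}(\rho_{b}-\varrho_{b})$ (the leading coordinates cancel by the normalization in CL3). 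By monotonicity the infimum of $\Delta p_{(1,0)}=g(t_{i1},t_{i2},t_{ib})-g(P^{*})$ over this region equals $g(P^{*}+\delta^{*})-g(P^{*})$; because CL2(i),(iii) give the special regressors $(X_{i1}^{(1)},X_{i2}^{(1)},W_{i}^{(1)})$ (and, symmetrically, $m$'s) an a.e.\ positive joint conditional density on $\mathbb{R}^{3}$, a strict inequality $g(P^{*}+\delta^{*})<g(P^{*})$ on a set of positive probability would make $\{I_{im(1,0)}^{+}(\vartheta)=1,\ \Delta p_{(1,0)}<0\}$ positive-probability and hence $Q_{(1,0)}(\vartheta)>0$. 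The symmetric computation with $\{I_{im(1,0)}^{-}(\vartheta)=1\}$ excludes $g(P^{*}+\delta^{*})>g(P^{*})$. Therefore $Q_{(1,0)}(\vartheta)=0$ forces $g\big(P^{*}+\delta^{*}(Z_{i},Z_{m})\big)=g\big(P^{*}\big)$ almost surely.

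It remains to deduce $\vartheta=\theta$ from this functional equation. Conditioning on $Z_{m}$ and on $(\delta_{2}^{*},\delta_{b}^{*})$ while letting $\tilde{X}_{im1}$ vary: if $\tilde{\beta}\neq\tilde{b}$, Assumption CL2(ii) makes $\delta_{1}^{*}$ nondegenerate given this conditioning, so it takes two distinct values at which $g(P^{*}+\delta^{*})=g(P^{*})$ with $P^{*}$ and $(\delta_{2}^{*},\delta_{b}^{*})$ fixed, contradicting strict monotonicity of $g$ in its first argument; hence $\beta=b$. The same argument with $\tilde{W}_{im}$ and Assumption CL2(iv) gives $\gamma=r$, after which $\delta^{*}=\big(S_{im}^{\prime}(\rho_{1}-\varrho_{1}),\,S_{im}^{\prime}(\rho_{2}-\varrho_{2}),\,S_{im}^{\prime}(\rho_{b}-\varrho_{b})\big)$. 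Since $S_{im}$ has nonsingular conditional variance by CL2(ii), a further single-coordinate argument shows that no $\rho_{j}-\varrho_{j}$ can lie outside the span of the other two; combining this with the (coordinatewise-)monotonicity of $g$ reduces matters to the case in which $\delta^{*}$ is confined to a line or plane $V$ each of whose nonzero elements is ``sign-mixed'' for $g$.

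The main obstacle is exactly this residual case: a direction such as $\delta_{0}\propto(1,1,0)$, along which coordinatewise monotonicity does not determine the sign of $g(\cdot+\delta_{0})-g(\cdot)$. Here one must rule out that $g$ is invariant under translation by any nonzero vector. The plan for this step is: because $g$ is continuous and (by the density conditions) one may drive $P^{*}$ over all of $\mathbb{R}^{3}$ while keeping $\delta^{*}=\delta_{0}$ reachable, the a.e.\ identity $g(v+\delta_{0})=g(v)$ holds for all $v\in\mathbb{R}^{3}$; iterating gives $g(v+n\delta_{0})=g(v)$ for all integers $n$; and as $|n|\to\infty$ some coordinate of $v+n\delta_{0}$ diverges, along which $g\to 0$ because that index eventually makes the outside option ($t_{1}\to-\infty$), the rival stand-alone alternative ($t_{2}\to+\infty$), or—recalling $\eta>0$ a.s.—the bundle ($t_{b}\to+\infty$) dominate in \eqref{crossprob}. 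This would force $g\equiv 0$, contradicting $0<g<1$ on a nonempty open set, and hence $\delta^{*}\equiv 0$ a.s., i.e.\ $\rho_{1}=\varrho_{1}$, $\rho_{2}=\varrho_{2}$, $\rho_{b}=\varrho_{b}$ by CL2(ii),(iv). This last step relies on the asymmetric structural roles of the three indices rather than mere monotonicity, and making the limiting behaviour rigorous—in particular when some $F_{j}$ has bounded range—is the delicate part of the argument.
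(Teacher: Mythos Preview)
Your proposal differs substantially from the paper's argument, and the final step contains a real gap that you yourself flag. The translation--invariance argument requires that $g(v+n\delta_{0})\to 0$ along some diverging coordinate, but Assumptions CL1--CL4 do not ensure this: the functions $F_{1},F_{2}$ are only required to be strictly increasing $\mathbb{R}^{2}\to\mathbb{R}$, so their ranges can be bounded (e.g.\ $F_{2}(u,\epsilon)=\tanh(u)+\epsilon$). In that case, as $t_{2}\to+\infty$ we only get $F_{2}(t_{2},\epsilon_{2})\to 1+\epsilon_{2}$, and the event $\{F_{2}(t_{2},\epsilon_{2})+\eta F_{b}(t_{b})<0\}$ need not become null; likewise for $t_{1}\to-\infty$ or $t_{b}\to+\infty$. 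So ``$g\equiv 0$, contradiction'' cannot be concluded, and the residual sign-mixed directions (like $(1,1,0)$) are not ruled out. A secondary concern: your intermediate conditioning steps (fixing $Z_{m}$ and $(\delta_{2}^{*},\delta_{b}^{*})$ while varying $\tilde{X}_{im1}$) do not line up cleanly with CL2(ii), which gives linear independence of $(X_{im1},S_{im})$ conditional on $(X_{im2},W_{im})$, not conditional on $(\delta_{2}^{*},\delta_{b}^{*})$.

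The paper's proof bypasses any global or limiting property of $g$. Its key device is to pair the $\vartheta$-indicator for $(1,0)$ with the \emph{$\theta$-indicator for a different alternative}: e.g.\ the joint event $\{I_{im(1,1)}^{+}(\theta)=I_{im(1,0)}^{-}(\vartheta)=1\}$. Writing this out gives $\{u_{im1}(\vartheta)\le -X_{im1}^{(1)}\le u_{im1}(\theta)\}$ intersected with events fixing the signs of the other two $\theta$-indices. When $(b,\varrho_{1})\neq(\beta,\rho_{1})$, CL2(i)--(ii) make this interval nondegenerate with positive probability. To show the conditional probability of a wrong prediction is positive, the paper uses a \emph{matching} trick: on the null set $E=\{X_{im2}=0,\,W_{im}=0,\,S_{im}=0\}$ the second and third true indices are exactly zero while the first is strictly positive, so strict monotonicity of $g$ in its first argument alone yields $\Delta p_{(1,0)}>0$; continuity then extends this to a neighborhood $E^{\delta}$ of positive probability. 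Analogous pairings (with $I_{imu}^{\pm}(\theta)$ and $I_{im(0,1)}^{\pm}(\theta)$) handle $(\beta,\rho_{2})$ and $(\gamma,\rho_{b})$. This uses only strict coordinatewise monotonicity of $g$ and the free variation of the normalized regressors---exactly what CL1--CL4 deliver---and never needs tail behavior.
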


The true parameter $\theta$ can also be identified using an alternative
population criterion function, replacing the $1[\Delta p_{(1,0)}(Z_{i}%
,Z_{m})\geq0]$ and $1[\Delta p_{(1,0)}(Z_{i},Z_{m})\leq0]$ in
(\ref{eq:lad_obj}) with $\Delta p_{(1,0)}(Z_{i},Z_{m})$ and $-\Delta
p_{(1,0)}(Z_{i},Z_{m})$, respectively:
\begin{equation}
Q_{(1,0)}^{D}(\vartheta)=\mathbb{E}\left[  q_{(1,0)}^{D}(Z_{i},Z_{m}%
,\vartheta)\right]  , \label{eq:lad_pop_obj}%
\end{equation}
where
\begin{align*}
q_{(1,0)}^{D}(Z_{i},Z_{m},\vartheta)=  &  [|I_{im(1,0)}^{+}(\vartheta)-\Delta
p_{(1,0)}(Z_{i},Z_{m})|+|I_{im(1,0)}^{-}(\vartheta)+\Delta p_{(1,0)}%
(Z_{i},Z_{m})|]\\
&  \times\lbrack I_{im(1,0)}^{+}(\vartheta)+I_{im(1,0)}^{-}(\vartheta
)]+[1-(I_{im(1,0)}^{+}(\vartheta)+I_{im(1,0)}^{-}(\vartheta))].
\end{align*}
In fact, minimizing criterion functions (\ref{eq:lad_pop_obj}) and
(\ref{eq:lad_obj}) is equivalent since $|\Delta p_{(1,0)}(Z_{i},Z_{m})|<1$
and thus
\begin{align*}
\left\{q_{(1,0)}(Z_{i},Z_{m},\vartheta)=0\right\}   &  \Leftrightarrow
\left\{q_{(1,0)}^{D}(Z_{i},Z_{m},\vartheta)=1\right\}  \text{ and}\\
\left\{q_{(1,0)}(Z_{i},Z_{m},\vartheta)=2\right\}   &  \Leftrightarrow
\left\{q_{(1,0)}^{D}(Z_{i},Z_{m},\vartheta)=1+2|\Delta p_{(1,0)}(Z_{i}%
,Z_{m})|\right\}  .
\end{align*}

Though criterion function (\ref{eq:lad_obj}) is more intuitive, in practice we
recommend to construct estimation procedure based on (\ref{eq:lad_pop_obj})
rather than (\ref{eq:lad_obj}). This is because $\Delta p_{(1,0)}(Z_{i}%
,Z_{m})$ is unobservable, and hence a feasible estimation procedure will need
to plug in its (uniformly) consistent estimate $\Delta\hat{p}_{(1,0)}%
(Z_{i},Z_{m})$. However, in finite samples $\Delta p_{(1,0)}(Z_{i},Z_{m})$ and
$\Delta\hat{p}_{(1,0)}(Z_{i},Z_{m})$ can have different signs with positive
probability, leading to finite-sample bias in the estimation. Using
(\ref{eq:lad_pop_obj}) can effectively reduce the impact of such problem in
comparison to (\ref{eq:lad_obj}). To illustrate this, consider the situation
where $\Delta p_{(1,0)}(Z_{i},Z_{m})>0$ but $\Delta\hat{p}_{(1,0)}(Z_{i}%
,Z_{m})<0$. A simple calculation shows that this will generate an error of
size 2 to $q_{(1,0)}(Z_{i},Z_{m},\vartheta)$ in (\ref{eq:lad_obj}) and an
error of size $2|\Delta\hat{p}_{(1,0)}(Z_{i},Z_{m})|$ to $q_{(1,0)}^{D}%
(Z_{i},Z_{m},\vartheta)$ in (\ref{eq:lad_pop_obj}). Since $\Delta\hat
{p}_{(1,0)}(Z_{i},Z_{m})\overset{p}{\rightarrow}\Delta p_{(1,0)}(Z_{i},Z_{m}%
)$, as the sample size increases, we expect not only $P(\Delta p_{(1,0)}%
(Z_{i},Z_{m})>0,\Delta\hat{p}_{(1,0)}(Z_{i},Z_{m})<0)\rightarrow0$ but also
$\Delta\hat{p}_{(1,0)}(Z_{i},Z_{m})\approx0$ when its sign is different from
that of $\Delta p_{(1,0)}(Z_{i},Z_{m})$. Consequently, the bias caused by the
inconsistency in signs between $\Delta p_{(1,0)}(Z_{i},Z_{m})$ and $\Delta
\hat{p}_{(1,0)}(Z{i},Z_{m})$ decreases much faster when using
(\ref{eq:lad_pop_obj}) compared to (\ref{eq:lad_obj}). From this perspective,
criterion function (\ref{eq:lad_pop_obj}) can be considered a debiased version
of (\ref{eq:lad_obj}).

Suppose there is a uniformly consistent estimator $\Delta\hat{p}_{(1,0)}%
(Z_{i},Z_{m})$ for $\Delta p_{(1,0)}(Z_{i},Z_{m})$. We propose the following
LAD estimator $\hat{\theta}$ for $\theta$:
\begin{equation}
\hat{\theta}=\arg\min_{\vartheta\in\Theta}\sum_{i=1}^{N-1}\sum_{m>i}\hat
{q}^{D}_{im(1,0)}(\vartheta), \label{eq:lad_estimator}%
\end{equation}
where
\begin{align*}
\hat{q}_{im(1,0)}^{D}(\vartheta)=  &  [\vert I_{im(1,0)}^{+}(\vartheta)-\Delta
\hat{p}_{(1,0)}(Z_{i},Z_{m})\vert+\vert I_{im(1,0)}^{-}(\vartheta)+\Delta\hat
{p}_{(1,0)}(Z_{i},Z_{m})\vert]\\
&  \times[I_{im(1,0)}^{+}(\vartheta)+I_{im(1,0)}^{-}(\vartheta)]+[1-(I_{im(1,0)}
^{+}(\vartheta)+I_{im(1,0)}^{-}(\vartheta))].
\end{align*}

We need the following assumption on top of Assumptions CL1--CL4 to establish the
consistency of the estimator $\hat{\theta}$ proposed in
(\ref{eq:lad_estimator}):
\begin{itemize}
\item[\textbf{CL5}] $\sup_{(z_{i},z_{m})\in\mathcal{Z}^{2}}\vert\Delta\hat{p}_{(1,0)}(z_{i},z_{m})-\Delta p_{(1,0)}(z_{i},z_{m})\vert\overset{p}{\rightarrow}0$
as $n\rightarrow\infty$, where $\mathcal{Z}$ is the support of $Z$.
\end{itemize}

Assumption CL5, when combined with Assumption CL2, essentially requires a
uniformly consistent estimator for $\Delta p_{(1,0)}(Z_{i},Z_{m})$ on a
non-compact support.\footnote{In empirical industrial organization and marketing research, choice
models are often studied using aggregate data. Researchers can observe the
aggregated choice probabilities (market shares) along with covariates of
market-level. In this scenario, Assumption CL5 can be satisfied under mild
conditions. See Section 6 of \cite{ShiEtal2018} for a more detailed
discussion.} To our knowledge, the nearest neighbor estimator
(\cite{devroye1978uniform}) and the kernel regression estimator
(\cite{hansen2008uniform}) can be shown to satisfy such uniform consistency
condition under certain restrictions on the tail probability bounds for $Z$.
For example, \cite{hansen2008uniform} proved that when $\Delta\hat{p}%
_{(1,0)}(Z_{i},Z_{m})$ is obtained using kernel regression, Assumption CL5 is
satisfied if elements in $Z$ have distribution tails thinner than polynomial.
Some recent advances in machine learning can also be used to estimate $\Delta
p_{(1,0)}(Z_{i},Z_{m})$, especially for high-dimensional cases. However,
similar results for uniform convergence with non-compact support are lacking
in the literature. \cite{kohler2021rate} demonstrated that, with certain
restrictions on the network architecture and smoothness assumptions on the
regression function, the multi-layer neural network regression can circumvent
the curse of dimensionality. They derive a uniform convergence rate of the
multi-layer neural network estimator, assuming a compact support for
covariates. In our simulation studies presented in Section \ref{monte}, we use the
neural network method to estimate the choice probabilities in a panel data
bundle choice model (see Section \ref{SEC:LAD_panel}). Our panel data LAD estimator using such
plug-in estimates performs satisfactorily despite the lack of theoretical guarantees.

The following theorem, proved in Appendix \ref{appendixA}, shows the consistency of
$\hat{\theta}$.

\begin{theorem}
\label{thm:cross_theta_consistency} Suppose Assumptions CL1--CL5 hold. We have
$\hat{\theta}\overset{p}{\rightarrow}\theta$.
\end{theorem}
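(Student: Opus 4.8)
The plan is to treat $\hat{\theta}$ as a standard extremum estimator and verify the two classical ingredients: a well-separated population minimum and uniform convergence of the sample criterion. Write $Q_{N}(\vartheta)\equiv\binom{N}{2}^{-1}\sum_{i<m}\hat{q}^{D}_{im(1,0)}(\vartheta)$ (the normalization is harmless for the argmin) and let $\bar{Q}_{N}(\vartheta)$ denote the same object with $\Delta\hat{p}_{(1,0)}$ replaced by the true $\Delta p_{(1,0)}$. By Theorem \ref{thm:cross_theta_identification}, $\theta$ is the unique minimizer of $Q^{D}_{(1,0)}(\vartheta)=\mathbb{E}[q^{D}_{(1,0)}(Z_{i},Z_{m},\vartheta)]$ over the compact set $\Theta$. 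Moreover, Assumptions CL2 and CL3 — which fix the leading coordinates of $b$ and $r$ at one, so that $X^{(1)}_{im1}$, $X^{(1)}_{im2}$, $W^{(1)}_{im}$ enter the halfspace indices with known nonzero coefficients while possessing conditional Lebesgue densities — imply that the boundary events on which $I^{+}_{im(1,0)}(\vartheta)$ and $I^{-}_{im(1,0)}(\vartheta)$ jump carry probability zero at every $\vartheta\in\Theta$; hence by dominated convergence $Q^{D}_{(1,0)}(\cdot)$ is continuous on $\Theta$. Continuity, compactness, and uniqueness of the minimizer then yield, for each open neighborhood $\mathcal{N}$ of $\theta$, the well-separation inequality $\inf_{\vartheta\in\Theta\setminus\mathcal{N}}Q^{D}_{(1,0)}(\vartheta)>Q^{D}_{(1,0)}(\theta)$.

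Next I would establish $\sup_{\vartheta\in\Theta}|Q_{N}(\vartheta)-Q^{D}_{(1,0)}(\vartheta)|=o_{p}(1)$ in two pieces. For the plug-in error, observe that $\hat{q}^{D}_{im(1,0)}(\vartheta)-q^{D}_{im(1,0)}(\vartheta)$ arises only from substituting $\Delta\hat{p}_{(1,0)}(Z_{i},Z_{m})$ for $\Delta p_{(1,0)}(Z_{i},Z_{m})$ inside two absolute values multiplied by the factor $I^{+}_{im(1,0)}(\vartheta)+I^{-}_{im(1,0)}(\vartheta)\le 1$; the triangle inequality gives $|\hat{q}^{D}_{im(1,0)}(\vartheta)-q^{D}_{im(1,0)}(\vartheta)|\le 2\,|\Delta\hat{p}_{(1,0)}(Z_{i},Z_{m})-\Delta p_{(1,0)}(Z_{i},Z_{m})|$ uniformly in $\vartheta$, so $\sup_{\vartheta}|Q_{N}(\vartheta)-\bar{Q}_{N}(\vartheta)|\le 2\sup_{(z_{i},z_{m})\in\mathcal{Z}^{2}}|\Delta\hat{p}_{(1,0)}(z_{i},z_{m})-\Delta p_{(1,0)}(z_{i},z_{m})|=o_{p}(1)$ by Assumption CL5. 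For the remaining term $\sup_{\vartheta}|\bar{Q}_{N}(\vartheta)-Q^{D}_{(1,0)}(\vartheta)|$, I would apply a uniform law of large numbers for second-order $U$-statistics over a Euclidean class: $\vartheta\mapsto I^{+}_{im(1,0)}(\vartheta)$ and $\vartheta\mapsto I^{-}_{im(1,0)}(\vartheta)$ are intersections of three halfspace indicators and so are indexed by a VC class, and finite sums and products of such indicators, multiplied by the bounded factors $|{\cdot}-\Delta p_{(1,0)}|$, generate a Euclidean class of uniformly bounded functions (a constant envelope). The $U$-statistic uniform LLN of \cite{Sherman1994AoS} and the $U$-process maximal inequalities of \cite{Sherman1994ET}, together with the i.i.d.\ sampling in CL1, then give $\sup_{\vartheta}|\bar{Q}_{N}(\vartheta)-Q^{D}_{(1,0)}(\vartheta)|=o_{p}(1)$.

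Combining the two pieces yields $\sup_{\vartheta\in\Theta}|Q_{N}(\vartheta)-Q^{D}_{(1,0)}(\vartheta)|=o_{p}(1)$, and the conclusion follows from the standard argmin argument: since $\hat{\theta}$ minimizes $Q_{N}$, $Q^{D}_{(1,0)}(\hat{\theta})\le Q_{N}(\hat{\theta})+o_{p}(1)\le Q_{N}(\theta)+o_{p}(1)\le Q^{D}_{(1,0)}(\theta)+o_{p}(1)$, so $Q^{D}_{(1,0)}(\hat{\theta})\overset{p}{\rightarrow}Q^{D}_{(1,0)}(\theta)$; combined with the well-separation inequality, this forces $P(\hat{\theta}\notin\mathcal{N})\to0$ for every neighborhood $\mathcal{N}$ of $\theta$, i.e.\ $\hat{\theta}\overset{p}{\rightarrow}\theta$.

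I expect the main obstacle to be the uniform LLN over the parameter-indexed class of \emph{discontinuous} indicator-based functions appearing inside a $U$-statistic — specifically, verifying the Euclidean/VC property of $\{q^{D}_{(1,0)}(\cdot,\vartheta):\vartheta\in\Theta\}$ and controlling its interaction with the bounded but non-compactly-supported factor $\Delta p_{(1,0)}$. The plug-in step is made routine by Assumption CL5, and identification is already supplied by Theorem \ref{thm:cross_theta_identification}; a secondary point requiring care is confirming that CL2--CL3 genuinely render $Q^{D}_{(1,0)}$ continuous even at boundary parameter values, i.e.\ that the relevant index hyperplanes carry zero probability at each $\vartheta\in\Theta$.
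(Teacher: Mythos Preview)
Your proposal is correct and follows essentially the same route as the paper: verify the Newey--McFadden conditions (compactness from CL3, identification from Theorem~\ref{thm:cross_theta_identification}, continuity of $Q^{D}_{(1,0)}$, and uniform convergence split into a plug-in piece controlled by CL5 and a $U$-process ULLN piece handled via a Euclidean/VC argument). The only noteworthy technical difference is that, for the ULLN step, the paper embeds $\{q^{D}_{(1,0)}(\cdot,\vartheta)\}$ into the larger class $\{\psi^{D}_{(1,0)}(\cdot,\vartheta,a):\vartheta\in\Theta,\,a\in[-1,1]\}$ obtained by replacing $\Delta p_{(1,0)}(Z_i,Z_m)$ with a free scalar parameter $a$, which makes the Euclidean verification purely parametric; your direct argument treating $\Delta p_{(1,0)}$ as a fixed bounded measurable function of the data is equally valid and slightly tighter (your plug-in constant $2$ versus the paper's $4$).
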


Using the plug-in nonparametric estimates $\Delta\hat{p}_{(1,0)}(Z_{i},Z_{m})$
complicates the derivation of the convergence rate and asymptotic distribution
of the estimator, especially for its panel data counterpart which will be
discussed in Section \ref{SEC:LAD_panel}. We typically need $\Delta\hat{p}_{(1,0)}(Z_{i},Z_{m})$
converges to $\Delta p_{(1,0)}(Z_{i},Z_{m})$ sufficiently fast to prevent its
prediction error from dominating the rate of the estimator. More
importantly, we conjecture that applying certain transformations or smoothing
techniques (e.g., \cite{Horowitz1992}) to criterion function (\ref{eq:lad_obj}%
) (or (\ref{eq:lad_pop_obj})) may yield an estimator with a better
convergence rate and facilitate the derivation of its asymptotic distribution.
However, exploring these issues is a non-trivial task, and due to space
constraints, we must defer a comprehensive investigation to future research.
We conclude this section with a final remark.

\begin{remark}\label{remark:LAD1}
In addition to (\ref{eq:lad_ineq_1}) and (\ref{eq:lad_ineq_2}), we can also
construct population and sample criterion functions based on the following
inequalities:
\begin{align}
\{X_{im1}^{\prime}\beta+S^{\prime}_{im}\rho_{1}\leq0,X_{im2}^{\prime}%
\beta+S_{im}^{\prime}\rho_{2}\geq0,W_{im}^{\prime}\gamma+S^{\prime}_{im}\rho
_{b}\leq0\}  &  \Rightarrow\Delta p_{(0,1)}(Z_{i},Z_{m})\geq
0,\label{eq:lad_ineq_3}\\
\{X_{im1}^{\prime}\beta+S^{\prime}_{im}\rho_{1}\geq0,X_{im2}^{\prime}%
\beta+S_{im}^{\prime}\rho_{2}\leq0,W_{im}^{\prime}\gamma+S^{\prime}_{im}\rho
_{b}\geq0\}  &  \Rightarrow\Delta p_{(0,1)}(Z_{i},Z_{m})\leq
0,\label{eq:lad_ineq_4}\\
\{X_{im1}^{\prime}\beta+S^{\prime}_{im}\rho_{1}\geq0,X_{im2}^{\prime}%
\beta+S_{im}^{\prime}\rho_{2}\geq0,W_{im}^{\prime}\gamma+S^{\prime}_{im}\rho
_{b}\geq0\}  &  \Rightarrow\Delta p_{(1,1)}(Z_{i},Z_{m})\geq
0,\label{eq:lad_ineq_5}\\
\{X_{im1}^{\prime}\beta+S^{\prime}_{im}\rho_{1}\leq0,X_{im2}^{\prime}%
\beta+S_{im}^{\prime}\rho_{2}\leq0,W_{im}^{\prime}\gamma+S^{\prime}_{im}\rho
_{b}\leq0\}  &  \Rightarrow\Delta p_{(1,1)}(Z_{i},Z_{m})\leq0,
\label{eq:lad_ineq_6}\\
\{X_{im1}^{\prime}\beta+S^{\prime}_{im}\rho_{1}\leq0,X_{im2}^{\prime}%
\beta+S_{im}^{\prime}\rho_{2}\leq0,W_{im}^{\prime}\gamma+S^{\prime}_{im}\rho
_{b}\leq0\}  &  \Rightarrow\Delta p_{(0,0)}(Z_{i},Z_{m})\geq
0,\label{eq:lad_ineq_7}\\
\{X_{im1}^{\prime}\beta+S^{\prime}_{im}\rho_{1}\geq0,X_{im2}^{\prime}%
\beta+S_{im}^{\prime}\rho_{2}\geq0,W_{im}^{\prime}\gamma+S^{\prime}_{im}\rho
_{b}\geq0\}  &  \Rightarrow\Delta p_{(0,0)}(Z_{i},Z_{m})\leq
0,
\label{eq:lad_ineq_8}%
\end{align}
where $\Delta p_{d}(Z_{i},Z_{m})$ for $d\in\{(0,1),(1,1),(0,0)\}$ are similarly
defined as $\Delta p_{(1,0)}(Z_{i},Z_{m})$ in (\ref{eq:lad_ineq_1})--(\ref{eq:lad_ineq_2}). Note that the following two events are not informative
in the sense that they are not sufficient to predict the sign of any $\Delta
p_{d}(Z_{i},Z_{m})$:
\begin{align*}
&  \{X_{im1}^{\prime}\beta+S^{\prime}_{im}\rho_{1}\le0,X_{im2}^{\prime}%
\beta+S_{im}^{\prime}\rho_{2}\leq0,W_{im}^{\prime}\gamma+S^{\prime}_{im}\rho
_{b}\geq0\},\\
&  \{X_{im1}^{\prime}\beta+S^{\prime}_{im}\rho_{1}\geq0,X_{im2}^{\prime}%
\beta+S_{im}^{\prime}\rho_{2}\geq0,W_{im}^{\prime}\gamma+S^{\prime}_{im}\rho
_{b}\leq0\}.
\end{align*}

For any $d\in\{(0,1),(1,1),(0,0)\}$ corresponding to (\ref{eq:lad_ineq_3}%
)--(\ref{eq:lad_ineq_4}), (\ref{eq:lad_ineq_5})--(\ref{eq:lad_ineq_6}), and
(\ref{eq:lad_ineq_7})--(\ref{eq:lad_ineq_8}), respectively, $(I_{imd}%
^{+}(\vartheta),I_{imd}^{-}(\vartheta))$ can be defined analogously to
$(I_{im(1,0)}^{+}(\vartheta),I_{im(1,0)}^{-}(\vartheta))$ in
(\ref{eq:lad_ineq_1_2})--(\ref{eq:lad_ineq_2_2}). Then, any single criterion
$Q_{d}(\vartheta)=\mathbb{E}[q_{d}(Z_{i},Z_{m},\vartheta)]$ ($Q_{d}%
^{D}(\vartheta)=\mathbb{E}[q_{d}^{D}(Z_{i},Z_{m},\vartheta)]$) or their convex
combination can be used to identify $\theta$. For estimation purposes, we
recommend to use a sample criterion constructed by combining the sample
analogues of all available population criteria (e.g., summing them up), to
achieve higher finite sample efficiency. For example, in the simulation
studies presented in Section \ref{monte}, we compute
\[
\hat{\theta}=\arg\min_{\vartheta\in\Theta}\sum_{i=1}^{N-1}\sum_{m>i}\sum
_{d\in\mathcal{D}}\hat{q}_{imd}^{D}(\vartheta),
\]
where $\mathcal{D}=\{(0,0),(1,0),(0,1),(1,1)\}$ and $\hat{q}_{imd}^{D}(\vartheta)$ is similarly defined as in (\ref{eq:lad_estimator}).
\end{remark}

\section{Panel Data Model}\label{SEC3}
The structure of this section is similar to that of Section \ref{SEC2}. Section \ref{SEC3.1} introduces a localized MS procedure, in which we establish the identification and the limiting distribution of the estimator (Section \ref{SEC3.1.1}), show the validity of the numerical bootstrap inference procedure (Section \ref{SEC:infer_panel}), and propose a test for the interaction effects (Section \ref{SEC:paneltest}). Section \ref{SEC:LAD_panel} develops a LAD estimation analogous to Section \ref{SEC:LAD}. 

\subsection{Maximum Score Estimation}\label{SEC3.1}
\subsubsection{Localized MS Estimator}\label{SEC3.1.1}
The increasing availability of panel data sets provides new opportunities
for the econometrician to control for unobserved heterogeneity across
agents. This helps with the relaxing of the strict exogeneity restriction
placed in the cross-sectional model. In this section we apply the
matching-based identification strategy presented in Section \ref{sec:rcm}
to a panel data bundle choice model. The latent utilities and observed
choices are expressed as follows:
\begin{equation}
U_{dt}=\sum_{j=1}^{2}F_{j}(X_{jt}^{\prime }\beta ,\alpha _{j},\epsilon
_{jt})\cdot d_{j}+ \eta _{t}\cdot F_{b}\left(W_{t}^{\prime }\gamma
+\alpha _{b} \right) \cdot d_{1}\cdot d_{2},  \label{panelutility}
\end{equation}%
and
\begin{equation}
Y_{dt}=1[U_{dt}>U_{d^{\prime }t},\forall d^{\prime }\in \mathcal{D}\setminus
d],  \label{panelchoice}
\end{equation}%
where we use $t=1,...,T$ to denote time periods and suppress the agent
subscript $i$ to simplify notations. Again, we assume that functions $F_j(\cdot,\cdot,\cdot)$'s and $F_b(\cdot,\cdot)$ are strictly increasing in their arguments. The specification
considered here is a natural extension of the cross-sectional model
(\ref{crossutility})--(\ref{choicemodel}). Similar extensions are common in various
discrete choice models. See, e.g., \cite{PakesPorter2016}, \cite{ShiEtal2018}, and \cite{gao2020robust}. 
Note that the random utility specified in (\ref{panelutility})
includes a set of unobserved (to the econometrician) agent-specific fixed
effects $\alpha \equiv (\alpha _{1},\alpha _{2},\alpha _{b})$
associated with the two stand-alone alternatives and the bundle,
respectively. Denote $Z_{t}=(X_{1t},X_{2t},W_{t})$ and $\xi _{t}=(\epsilon _{1t},\epsilon _{2t},\eta
_{t})$. In line with the literature on fixed effects methods, we
place no restrictions on the distribution of $\alpha $ conditional on $%
Z^{T}\equiv \left( Z_{1},...,Z_{T}\right)$
and $\xi ^{T}\equiv \left( \xi _{1},...,\xi _{T}\right)
$. 

Here we consider the identification and estimation of model (\ref%
{panelutility})--(\ref{panelchoice}) with $T<\infty $ and $N\rightarrow
\infty $ (i.e., short panel). For any $T\geq 2$, our identification strategy
relies on a similar conditional homogeneity assumption as those adopted by \cite%
{Manski1987}, \cite{PakesPorter2016}, \cite{ShiEtal2018}, and \cite{gao2020robust} for binary and
multinomial choice models. Specifically, we assume $\xi _{s}\overset{d}{=}%
\xi _{t}|(\alpha ,Z_{s},Z_{t})$ for any two time periods $s$ and $t$.

This restriction is much weaker than the strong exogeneity condition needed for the cross-sectional model. However, thanks to the panel data structure, it suffices to establish the following moment inequalities in the presence of the fixed effects $\alpha $: for all $d$ with $d_{1}=1$ and any fixed $\left(
x_{2},w,c\right)$,
\begin{equation}
x_{1}^{\prime }\beta \geq \tilde{x}_{1}^{\prime }\beta\Leftrightarrow\begin{array}{c}
P(Y_{\left( 1,d_{2}\right)t
}=1|X_{1t}=x_{1},X_{2t}=x_{2},W_t=w,\alpha =c)\\
\geq\\
P(Y_{\left( 1,d_{2}\right)s
}=1|X_{1s}=\tilde{x}_{1},X_{2s}=x_{2},W_s=w,\alpha =c).
\end{array}\label{panelmib1}
\end{equation}
Similarly, for all $d$ with $d_{1}=0$ and any fixed $\left(
x_{2},w,c\right)$, $x_{1}^{\prime }\beta \geq
\tilde{x}_{1}^{\prime }\beta $ is the if-and-only-if condition to the second
inequality in (\ref{panelmib1}) but with a ``$\leq $'' instead.

For $d=(1,1)$ and any fixed $\left( x_{1},x_{2},c\right)
$, we have
\begin{equation}
w^{\prime }\gamma \geq \tilde{w}^{\prime }\gamma\Leftrightarrow\begin{array}{c}
P(Y_{\left( 1,1\right)t
}=1|X_{1t}=x_{1},X_{2t}=x_{2},W_t=w,\alpha =c)\\
\geq\\
P(Y_{(1,1)s}=1|X_{1s}=x_{1},X_{2s}=x_{2},W_s=\tilde{w},\alpha =c).
\end{array}\label{panelmir3}
\end{equation}
Similarly, for all $d\neq (1,1)$ and any fixed $\left(
x_{1},x_{2},c\right)$, $w^{\prime }\gamma
\geq \tilde{w}^{\prime }\gamma $ is the if-and-only-if condition to the
second inequality in (\ref{panelmir3}) but with a ``$\leq $'' instead.

The moment inequalities (\ref{panelmib1})--(\ref{panelmir3})\footnote{%
Once $\beta $ is identified through (\ref{panelmib1}), $\gamma $ can be
alternatively identified via matching indexes $(X_{1t}^{\prime }\beta
,X_{2t}^{\prime }\beta )$; that is, for $d=(1,1)$ and any fixed $\left(
v_{1},v_{2}\right)$, $w^{\prime }\gamma \geq \tilde{w}^{\prime
}\gamma \Leftrightarrow P(Y_{(1,1)t}=1|X^{\prime }_{1t}\beta =v_{1},X_{2t}^{\prime
}\beta =v_{2},W_t=w,\alpha =c)\geq P(Y_{(1,1)s}=1|X^{\prime }_{1s}\beta
=v_{1},X_{2s}^{\prime }\beta =v_{2},W_s=\tilde{w},\alpha =c))$. However, due to
some technical issues (see Appendix \ref{appendixAdd}), we do not adopt
these moment inequalities to construct the MS estimator of $\gamma $.} are
derived from the monotonicity conditions of agents' choices, analogous to
(\ref{crossmi1})--(\ref{crossmi3}) for the cross-sectional model. The main
difference is that, in the presence of the fixed effects, we match and make
our comparisons within agents over time, as opposed to pairs of agents.
These moment inequalities are the foundation of our identification result
for model (\ref{panelutility})--(\ref{panelchoice}).

We impose the following conditions for identification. For notational
convenience, we let $X_{jts} \equiv X_{jt}-X_{js}$ and $W_{ts} \equiv
W_{t}-W_{s}$ for all $j=1,2$ and $(s,t)$.

\begin{enumerate}
\item[\textbf{P1}] (i) $\{(Y_{i}^{T},Z_{i}^{T})\}_{i=1}^{N}$ are i.i.d,
where $Y_{i}^{T}\equiv
\{(Y_{i(0,0)t},Y_{i(1,0)t},Y_{i(0,1)t},Y_{i(1,1)t})\}_{t=1}^T$, (ii) for almost all $%
(\alpha ,Z^{T})$, $\xi _{t}\overset{d}{=}\xi _{s}|\left( \alpha
,Z_{t},Z_{s}\right) $, and (iii) the distribution of $\xi _{t}$ conditional
on $\alpha $ is absolutely continuous w.r.t. the Lebesgue measure on $%
\mathbb{R}^{2}\times \mathbb{R}_{+}$ for all $t$.

\item[\textbf{P2}] $(\beta,\gamma)\in
\mathcal{B}\times \mathcal{R}$, where $\mathcal{B}=\{b\in \mathbb{R}%
^{k_{1}}|\left\Vert b\right\Vert =1,b^{(1)}\neq 0\}$ and $\mathcal{R}=\{r\in
\mathbb{R}^{k_{2}}|\left\Vert r\right\Vert =1,r^{(1)}\neq 0\}$.\footnote{We adopt the same scale normalization as in \cite{KimPollard1990} and \cite{SeoOtsu2018}. This simplifies the asymptotic analysis of our MS estimators, enabling us to use the results established in these works directly. It is easy to verify that the normalization used here is equivalent to that in Assumption C3.}

\item[\textbf{P3}] (i) $X_{1ts}^{(1)}$ ($X_{2ts}^{(1)}$) has a.e. positive
Lebesgue density on $\mathbb{R}$ conditional on $\tilde{X}_{1ts}$ ($\tilde{X}%
_{2ts}$) and conditional on $\left( X_{2ts},W_{ts}\right)$ ($\left( X_{1ts},W_{ts}\right)
$) in a neighborhood of $\left( X_{2ts},W_{ts}\right)$ ($\left( X_{1ts},W_{ts}\right)$) near zero, (ii) the support of $X_{1ts}$ ($X_{2ts}$),
conditional on $\left( X_{2ts},W_{ts}\right)$
($\left( X_{1ts},W_{ts}\right)$) in a
neighborhood of $\left( X_{2ts},W_{ts}\right)$
($\left( X_{1ts},W_{ts}\right)$) near zero,
is not contained in any proper linear subspace of $\mathbb{R}^{k_{1}}$,
(iii) $W_{ts}^{(1)}$ has a.e. positive Lebesgue density on $\mathbb{R}$
conditional on $\tilde{W}_{ts}$ and conditional on $\left(
X_{1ts},X_{2ts}\right) $ in a neighborhood of $\left( X_{1ts},X_{2ts}\right)
$ near zero, and (iv) the support of $W_{ts}$, conditional on $\left(
X_{1ts},X_{2ts}\right) $ in a neighborhood of $\left( X_{1ts},X_{2ts}\right)
$ near zero, is not contained in any proper linear subspace of $\mathbb{R}%
^{k_{2}}$.

\item[\textbf{P4}] $F_{j}\left( \cdot ,\cdot ,\cdot \right) ,$ $j=1,2,$ and $%
F_{b}\left( \cdot \right) $ are strictly increasing in their arguments. $F_{b}(0)=0$.
\end{enumerate}

Assumptions P1--P3 are analogous to the identification conditions used by
\cite{Manski1987}. Note that Assumption P1 allows for arbitrary correlation
between the fixed effects and the observed covariates, provided that the
correlation is time stationary. This assumption substantially relaxes the
strict exogeneity condition in the cross-sectional case, but the price to
pay is a slower convergence rate, as shown in Theorem \ref{T:paneldist}
below. A technical explanation for the slower convergence rate can be found
in Appendix \ref{appendixAdd}.

We summarize our identification results in the following theorem. The proof is omitted for brevity, as it closely resembles that of the cross-sectional model.

\begin{theorem}
\label{Thm:panelid} Suppose Assumptions P1--P4 hold. Then $\beta$ and $%
\gamma $ are identified.
\end{theorem}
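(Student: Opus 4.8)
The plan is to follow the same route as the proof of Theorem \ref{T:crossidentify}, the only substantive change being that the strict exogeneity of the cross-sectional model is replaced by the within-agent conditional homogeneity restriction in Assumption P1(ii), which is precisely what allows the agent-specific fixed effects $\alpha$ to be ``differenced out'' when one compares the same agent across two periods. Concretely, I would first establish the monotone-index equivalences (\ref{panelmib1})--(\ref{panelmir3}), and then run the usual maximum-score identification argument (in the spirit of \cite{Manski1987} and \cite{KimPollard1990}) coordinate by coordinate to pin down $\beta$ first and $\gamma$ in a second stage.

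\textbf{Step 1: the monotone relations.} Fix an agent and two periods $s,t$, and condition on $\alpha = c$ and on $(Z_t, Z_s) = (z_t, z_s)$. By Assumption P1(ii), $\xi_t \overset{d}{=} \xi_s \mid (\alpha, Z_t, Z_s)$, so the map $z \mapsto P(Y_{(1,d_2)t} = 1 \mid Z_t = z, \alpha = c)$ is the same function of $z$ for $t$ and for $s$; moreover, by the choice equation (\ref{panelchoice}) it depends on $z$ only through the indices $(x_1'\beta, x_2'\beta, w'\gamma)$. Since $F_1$ and $F_2$ are strictly increasing in their first arguments (Assumption P4), the utilities $U_{(1,0)t}$ and $U_{(1,1)t}$ are strictly increasing in $X_{1t}'\beta$ while $U_{(0,0)t}$ and $U_{(0,1)t}$ do not depend on it; combined with the absolute continuity of $\xi_t$ given $\alpha$ (Assumption P1(iii)), this makes $P(Y_{(1,d_2)t} = 1 \mid \cdot)$ nondecreasing in $x_1'\beta$, and strictly increasing on the positive-probability region where $\eta_t F_b(w'\gamma + \alpha_b)$ has the sign that keeps choice $(1,d_2)$ from having probability identically zero. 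Holding $(x_2, w, c)$ fixed and matched across the two periods then yields (\ref{panelmib1}); the same argument with the roles of the indices interchanged, now using $F_b$ strictly increasing together with $\eta_t > 0$ a.s.\ (which follows from Assumption P1(iii), since absolute continuity forbids an atom at $0$), yields (\ref{panelmir3}). The degenerate regions on which the relevant choice probabilities vanish identically are dealt with exactly as in Theorem \ref{T:crossidentify}, by restricting attention to their positive-probability complements.

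\textbf{Step 2: identifying $\beta$ and then $\gamma$.} By Step 1, the true $\beta$ satisfies, for a.e.\ configuration $(x_1, \tilde x_1, x_2, w, c)$ in the relevant support, $\text{sgn}(x_1'\beta - \tilde x_1'\beta) = \text{sgn}(P(Y_{(1,d_2)t} = 1 \mid x_1, x_2, w, c) - P(Y_{(1,d_2)s} = 1 \mid \tilde x_1, x_2, w, c))$ on the non-degenerate region. I then argue uniqueness in the usual maximum-score fashion: for any $b \in \mathcal{B}$ with $b \neq \beta$, Assumptions P3(i)--(ii) (positive conditional Lebesgue density of $X_{1ts}^{(1)}$ in a neighborhood of the matched configuration near zero, and a support not contained in a proper linear subspace) guarantee that $\{\text{sgn}(X_{1ts}'b) \neq \text{sgn}(X_{1ts}'\beta)\}$ has positive probability on a set where the above sign relation is strict, so $b$ fails the relation; $-\beta$ is likewise excluded because it reverses every predicted sign. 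Hence $\beta$ is the unique element of $\mathcal{B}$ compatible with (\ref{panelmib1}), i.e.\ identified. Because the conditioning covariates are continuous, the exact matches $X_{2ts} = 0$, $W_{ts} = 0$ occur with probability zero, so this step is made rigorous through the ``near zero'' density conditions by the same limiting argument as in Theorem \ref{T:crossidentify}. With $\beta$ identified, the index $X_{1t}'\beta$ (hence $V_t(\beta)$) is known, and repeating the argument on (\ref{panelmir3}) --- now matching $(X_{1t}, X_{2t})$ within the agent and invoking Assumptions P3(iii)--(iv) --- identifies $\gamma$.

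\textbf{Where the difficulty sits.} The crux is Step 1: one must verify that, conditional on $\alpha$ and on $(X_2, W)$ matched across the two periods, the \emph{only} remaining source of variation in the conditional choice probabilities is the index $X_{1t}'\beta$ (respectively $W_t'\gamma$). This is exactly where Assumption P1(ii) does its work of eliminating the fixed effects, and it must be combined carefully with the absolute-continuity and rank conditions to obtain strict monotonicity on a positive-probability region and to accommodate the approximate (rather than exact) matching of the continuous conditioning variables. These are, however, precisely the obstacles already overcome in the proof of Theorem \ref{T:crossidentify}, so the argument transfers essentially verbatim --- which is why the paper omits the formal proof.
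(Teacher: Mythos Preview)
Your proposal is correct and mirrors exactly what the paper does: it explicitly omits the proof of Theorem \ref{Thm:panelid} on the grounds that it ``closely resembles that of the cross-sectional model,'' i.e., one replaces the across-agent comparison under $(\epsilon_1,\epsilon_2,\eta)\perp Z$ by a within-agent across-period comparison under the conditional homogeneity $\xi_t\overset{d}{=}\xi_s\mid(\alpha,Z_t,Z_s)$, re-derives the monotone-index equivalences (\ref{panelmib1})--(\ref{panelmir3}), and then repeats the maximum-score uniqueness argument of Theorem \ref{T:crossidentify} using Assumptions P3(i)--(ii) for $\beta$ and P3(iii)--(iv) for $\gamma$. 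Your handling of the $\Vert b\Vert=1$ normalization in P2 (excluding $-\beta$ via sign reversal) and of $\eta_t>0$ a.s.\ via absolute continuity are the only panel-specific wrinkles, and both are correctly addressed.
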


The monotonic relations (\ref{panelmib1})--(\ref{panelmir3}) motivate the
following two-step localized MS estimation procedure. Given a random sample
of $N$ agents $i=1,...,N$, we obtain the estimator $\hat{\beta}$ of $\beta$
through maximizing the criterion function
\begin{align}
\mathcal{L}_{N,\beta}^{P,K}(b)=\sum_{i=1}^{N}\sum_{t>s}\sum_{d\in\mathcal{D}%
} & \{\mathcal{K}_{h_{N}}(X_{i2ts},W_{its})(Y_{ids}-Y_{idt})\text{sgn}%
(X_{i1ts}^{\prime}b)\cdot\left( -1\right) ^{d_{1}}  \notag \\
& +\mathcal{K}_{h_{N}}(X_{i1ts},W_{its})(Y_{ids}-Y_{idt})\text{sgn}%
(X_{i2ts}^{\prime}b)\cdot\left( -1\right) ^{d_{2}}\},  \label{panelobjbK}
\end{align}
where $\mathcal{K}_{h_{N}}(X_{ijts},W_{its})\equiv%
\prod_{l=1}^{k_{1}}h_{N}^{-1}K\left( \left. X_{ijts,l}\right/ h_{N}\right)
\prod_{l=1}^{k_{2}}h_{N}^{-1}K\left( \left. W_{its,l}\right/ h_{N}\right) $
for $j=1,2$, $K\left( \cdot\right) $ is a kernel density function, and $%
h_{N} $ is a bandwidth sequence that converges to 0 as $N\rightarrow\infty$.
Similarly, we compute the estimator $\hat{\gamma}$ of $\gamma$ by maximizing the
criterion function %\begin{comment}
%\begin{align}
%\mathcal{L}_{N,\gamma}^{P,K}(r)= & \sum_{i=1}^{N}\sum_{t>s}\mathcal{K}%
%_{\sigma_{N}}(x_{i1ts},x_{i2ts})(y_{i(1,1)t}-y_{i(1,1)s})\text{sgn}((w_{it}%
%-w_{is})^{\prime}r)\nonumber\\
%& +\sum_{i=1}^{N}\sum_{t>s}\sum_{d\neq(1,1)}\mathcal{K}_{\sigma_{N}}%
%(x_{i1ts},x_{i2ts})(y_{ids}-y_{idt})\text{sgn}((w_{it}-w_{is})^{\prime}r),
%\label{panelobjrK}%
%\end{align}
%\end{comment}%
\begin{equation}
\mathcal{L}_{N,\gamma}^{P,K}(r)=\sum_{i=1}^{N}\sum_{t>s}\mathcal{K}%
_{\sigma_{N}}(X_{i1ts},X_{i2ts})(Y_{i(1,1)t}-Y_{i(1,1)s})\text{sgn}%
(W_{its}^{\prime}r),  \label{panelobjrK}
\end{equation}
where $\mathcal{K}_{\sigma_{N}}(X_{i1ts},X_{i2ts})\equiv\prod_{l=1}^{k_{1}}%
\sigma_{N}^{-1}K\left( \left. X_{i1ts,l}\right/ \sigma_{N}\right)
\prod_{l=1}^{k_{1}}\sigma_{N}^{-1}K\left( \left. X_{i2ts,l}\right/
\sigma_{N}\right) $ for $j=1,2$, and $\sigma_{N}$ is a bandwidth sequence
that converges to 0 as $N\rightarrow\infty$. Unlike the cross-sectional case, here we choose not to estimate $\gamma$ by
matching $X_{jt}^{\prime}\hat{\beta}$ and $X_{js}^{\prime}\hat{\beta}$, $%
j=1,2$. The reason is explained in Appendix \ref{appendixAdd}. Note that criterion functions (%
\ref{panelobjbK}) and (\ref{panelobjrK}) take value 0 for observations whose
choice is time invariant; that is, our approach uses only data on
\textquotedblleft switchers\textquotedblright\ in a way similar to the
estimator in \cite{Manski1987}.

To establish the asymptotic properties of $\hat{\beta}$ and $\hat{\gamma}$,
we need the following conditions in addition to Assumptions P1--P4:
\begin{enumerate}
\item[\textbf{P5}] Let $f_{X_{jt},W_{t}}(\cdot)$ denote the density of the
random vector $(X_{jt},W_{t})$ for all $j=1,2$,
and $f_{X_{1ts},X_{2ts}}(\cdot)$ denote the density of the random vector $%
\left( X_{1ts},X_{2ts}\right)$. $f_{X_{jt},W_{t}}(\cdot)$'s and $%
f_{X_{1ts},X_{2ts}}(\cdot)$ are absolutely continuous, bounded from above on
their supports, strictly positive in a neighborhood of zero, and twice
continuous differentiable a.e. with bounded derivatives.

\item[\textbf{P6}] For all $d\in\mathcal{D}$, $\mathbb{E}[Y_{dts}\text{sgn}%
\left( X_{1ts}^{\prime}b\right) |X_{2ts},W_{ts}]$ and $\mathbb{E}[Y_{dts}%
\text{sgn}\left( X_{2ts}^{\prime}b\right) |X_{1ts},W_{ts}]$ are twice
continuously differentiable w.r.t. $b$ a.e. with bounded derivatives, and $%
\mathbb{E}[Y_{dts}\text{sgn}\left( W_{ts}^{\prime}r\right) |X_{1ts},X_{2ts}]$
is twice continuously differentiable w.r.t. $r$ a.e. with bounded
derivatives.

\item[\textbf{P7}] $K(\cdot)$ is a function of bounded variation and has a
compact support. It satisfies: (i) $K(v)\geq0$ and $\sup_{v}|K(v)|<\infty$,
(ii) $\int K(v)$d$v=1$, (iii) $\int vK(v)$d$v=0$, (iv) $\int v^{2}K(v)$d$%
v<\infty$ and (v) $K(\cdot)$ is twice continuously differentiable with
bounded first and second derivatives.

\item[\textbf{P8}] $(\hat{\beta},\hat{\gamma})$ satisfies
\begin{equation*}
N^{-1}\mathcal{L}_{N,\beta}^{P,K}(\hat{\beta})\geq\max_{b\in\mathcal{B}%
}N^{-1}\mathcal{L}_{N,\beta}^{P,K}(b)-o_{p}((Nh_{N}^{k_{1}+k_{2}})^{-2/3})
\end{equation*}
and
\begin{equation*}
N^{-1}\mathcal{L}_{N,\gamma}^{P,K}(\hat{\gamma})\geq\max_{r\in\mathcal{R}%
}N^{-1}\mathcal{L}_{N,\gamma}^{P,K}(r)-o_{p}((N\sigma_{N}^{2k_{1}})^{-2/3}).
\end{equation*}

\item[\textbf{P9}] $h_{N}$ and $\sigma_{N}$ are sequences of positive
numbers such that as $N\rightarrow\infty$: (i) $h_{N}\rightarrow0$ and $%
\sigma _{N}\rightarrow0$, (ii) $Nh_{N}^{k_{1}+k_{2}}\rightarrow\infty$ and $%
N\sigma_{N}^{2k_{1}}\rightarrow\infty,$ and (iii) $%
(Nh_{N}^{k_{1}+k_{2}})^{2/3}h_{N}^{2}\rightarrow0$ and $(N%
\sigma_{N}^{2k_{1}})^{2/3}\sigma_{N}^{2}\rightarrow0$.
\end{enumerate}

The boundedness and smoothness restrictions placed on Assumptions P5--P7 are
regularity conditions needed for proving the uniform convergence of the
criterion functions to their population analogues. Assumption P8 is a
standard technical condition. Assumption P9 is also standard, where P9(iii) is included to ensure that the bias term from the kernel estimation is asymptotically negligible. Note that Assumption P6 implicitly assumes that the second
moments of $X_{jts}$'s and $W_{ts}$ exist.

For the asymptotic distribution, we focus on the case where all regressors
are continuous, and introduce the following notations to ease the
exposition. Let
\begin{align*}
\phi_{Ni}\left( b\right) & \equiv\sum_{t>s}\sum_{d\in\mathcal{D}} \{
\mathcal{K}_{h_{N}}(X_{i2ts},W_{its})Y_{idst}\left( -1\right) ^{d_{1}}\left(
1 [ X_{i1ts}^{\prime}b>0 ] -1 [ X_{i1ts}^{\prime}\beta>0 ] \right) \\
& +\mathcal{K}_{h_{N}}(X_{i1ts},W_{its})Y_{idst}\left( -1\right)
^{d_{2}}\left( 1 [ X_{i2st}^{\prime}b>0 ] -1 [ X_{i2st}^{\prime}\beta>0 ]
\right) \}
\end{align*}
and
\begin{equation*}
\varphi_{Ni}\left( r\right) \equiv\sum_{t>s}\mathcal{K}_{%
\sigma_{N}}(X_{i1ts},X_{i2ts})Y_{i(1,1)ts} ( 1 [ W_{its}^{\prime}r>0 ] -1 [
W_{its}^{\prime}\gamma>0 ] ) .
\end{equation*}
Note that $\hat{\beta}$ and $\hat{\gamma}$ can be equivalently obtained from
\begin{equation*}
\hat{\beta}=\arg\max_{b\in\mathcal{B}}N^{-1}\sum_{i=1}^{N}\phi_{Ni}\left(
b\right) \text{ and }\hat{\gamma}=\arg\max_{r\in\mathcal{R}}N^{-1}\sum
_{i=1}^{N}\varphi_{Ni}\left( r\right) ,
\end{equation*}
because $\text{sgn}\left( \cdot\right) =2\times1\left[ \cdot\right] -1$ and
adding terms not related to $b$ or $r$ has no effects on the optimization.

\begin{theorem}
\label{T:paneldist} Suppose Assumptions P1--P9 hold. Then,
\begin{equation*}
(Nh_{N}^{k_{1}+k_{2}})^{1/3}(\hat{\beta}-\beta)\overset{d}{\rightarrow}%
\arg\max_{\rho\in\mathbb{R}^{k_{1}}}\mathcal{Z}_{1}\left( \rho\right) ,
\end{equation*}
where $\mathcal{Z}_{1}$ is a Gaussian process taking values in $\ell^{\infty
}\left( \mathbb{R}^{k_{1}}\right) ,$ with $\mathbb{E}\left( \mathcal{Z}%
_{1}\left( \rho\right) \right) =\frac{1}{2}\rho^{\prime}\mathbb{V}\rho,$
covariance kernel $\mathbb{H}_{1}\left( \rho_{1},\rho_{2}\right) ,$ and $%
\rho,\rho_{1},\rho_{2}\in\mathbb{R}^{k_{1}},$ and%
\begin{equation*}
(N\sigma_{N}^{2k_{1}})^{1/3}\left( \hat{\gamma}-\gamma\right) \overset{d}{%
\rightarrow}\arg\max_{\delta\in\mathbb{R}^{k_{2}}}\mathcal{Z}_{2}\left(
\delta\right) ,
\end{equation*}
where $\mathcal{Z}_{2}$ is a Gaussian process taking values in $\ell^{\infty
}\left( \mathbb{R}^{k_{2}}\right) ,$ with $\mathbb{E}\left( \mathcal{Z}%
_{2}\left( \delta\right) \right) =\frac{1}{2}\delta^{\prime}\mathbb{W}\delta$%
, covariance kernel $\mathbb{H}_{2}\left( \delta_{1},\delta _{2}\right) ,$
and $\delta,\delta_{1},\delta_{2}\in\mathbb{R}^{k_{2}}.$ $\mathbb{V}$, $%
\mathbb{W}$, $\mathbb{H}_{1}$, and $\mathbb{H}_{2}$ are defined,
respectively, by equations (\ref{EQ:V}), (\ref{EQ:W}), (\ref{EQ:H1}), and (%
\ref{EQ:H2}) in Appendix \ref{appendixB}.  
\end{theorem}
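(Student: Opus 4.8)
The plan is to derive both limits by verifying the conditions of the cube-root asymptotics framework for kernel-localized $M$-estimators developed by \cite{SeoOtsu2018}, which adapts \cite{KimPollard1990} to localized criterion functions. A useful simplification here is that the criterion (\ref{panelobjrK}) for $\hat{\gamma}$ matches on the raw within-agent differences $(X_{i1ts},X_{i2ts})$ rather than on a first-step estimate of $\beta$, so there is no plug-in correction and $\hat{\beta}$ and $\hat{\gamma}$ can be handled by parallel arguments; I would treat $\hat{\beta}$ in detail and indicate the modifications for $\hat{\gamma}$ at the end. Write $\psi_{Ni}(b)$ for the bracketed summand of (\ref{panelobjbK}), so that, since $T$ is fixed, $N^{-1}\sum_{i}\psi_{Ni}(b)$ is an i.i.d.\ average of $b$-indexed, bandwidth-dependent terms and $\hat{\beta}=\arg\max_{b\in\mathcal{B}}N^{-1}\sum_i\psi_{Ni}(b)$. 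First I would establish consistency: Assumptions P5--P7 and the standard change of variables $(u,v)=(X_{i2ts}/h_N,W_{its}/h_N)$ in the kernel give $\mathbb{E}[\psi_{Ni}(b)]\to\bar{\mathcal{L}}_\beta(b)$, a function continuous in $b$ and, by the monotonicity relations (\ref{panelmib1}) together with the rank conditions in Assumption P3, uniquely maximized at $\beta$ (this is Theorem~\ref{Thm:panelid}); since the per-agent variance of $\psi_{Ni}(b)$ is $O(h_N^{-(k_1+k_2)})$ and $Nh_N^{k_1+k_2}\to\infty$ by Assumption P9(ii), while the kernel-weighted, half-space-indexed function class is Euclidean, a uniform law of large numbers holds on the compact $\mathcal{B}$, and together with Assumption P8 the usual argmax argument gives $\hat{\beta}\overset{p}{\rightarrow}\beta$, and likewise $\hat{\gamma}\overset{p}{\rightarrow}\gamma$.

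Next I would pin down the rate. A second-order expansion of $\bar{\mathcal{L}}_\beta$ around $\beta$ (parametrizing $\mathcal{B}$ by a local chart near $\beta$), using the twice differentiability in Assumption P6 and the non-degeneracy supplied by Assumption P3, gives $\bar{\mathcal{L}}_\beta(b)-\bar{\mathcal{L}}_\beta(\beta)\le -c\Vert b-\beta\Vert^{2}$ locally, with Hessian equal to the matrix $\mathbb{V}$ of (\ref{EQ:V}). On the stochastic side, the class $\{\mathcal{K}_{h_N}(X_{i2ts},W_{its})(1[X_{i1ts}^{\prime}b>0]-1[X_{i1ts}^{\prime}\beta>0]):\Vert b-\beta\Vert\le\delta\}$ (and its twin in $X_{i1ts}$ and $X_{i2ts}^{\prime}b$) is Euclidean with pointwise envelope of order $h_N^{-(k_1+k_2)}$ and $L^{2}$-size of order $\delta^{1/2}h_N^{-(k_1+k_2)/2}$, because the indicator difference confines its mass to a wedge of probability $O(\delta)$ about $\{X_{i1ts}^{\prime}\beta=0\}$ (Assumption P3(i) supplies the bounded local density). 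The corresponding maximal inequality then bounds the fluctuation of $N^{-1}\sum_i\{\psi_{Ni}(b)-\psi_{Ni}(\beta)-\mathbb{E}[\psi_{Ni}(b)-\psi_{Ni}(\beta)]\}$ over $\Vert b-\beta\Vert\le\delta$ by a term of order $\delta^{1/2}(Nh_N^{k_1+k_2})^{-1/2}$ (up to logarithmic factors); balancing this against the $\delta^{2}$ curvature gives $\Vert\hat{\beta}-\beta\Vert=O_p((Nh_N^{k_1+k_2})^{-1/3})$. The kernel approximation bias $\mathbb{E}[\psi_{Ni}(b)]-\bar{\mathcal{L}}_\beta(b)=O(h_N^{2})$, from the second-order kernel of Assumption P7, is negligible at this resolution precisely because Assumption P9(iii) forces $(Nh_N^{k_1+k_2})^{2/3}h_N^{2}\to0$.

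Then I would identify the limiting process. Put $r_N=(Nh_N^{k_1+k_2})^{1/3}$ and define the localized process $\mathcal{Z}_{1,N}(\rho)=r_N^{2}N^{-1}\sum_i[\psi_{Ni}(\beta+r_N^{-1}\rho)-\psi_{Ni}(\beta)]$ on $\rho\in\mathbb{R}^{k_1}$. The curvature computation gives $\mathbb{E}[\mathcal{Z}_{1,N}(\rho)]\to\frac12\rho^{\prime}\mathbb{V}\rho$ (the $O(h_N^2)$ bias being negligible by P9(iii)), while the wedge-probability bound gives $\text{Var}[\mathcal{Z}_{1,N}(\rho)]=r_N^{4}N^{-1}\cdot O(r_N^{-1}h_N^{-(k_1+k_2)})=O(1)$, since $r_N^{3}=Nh_N^{k_1+k_2}$, with limiting covariance kernel $\mathbb{H}_1$ of (\ref{EQ:H1}). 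A Lindeberg CLT for the row-wise i.i.d.\ triangular array yields convergence of the finite-dimensional distributions of $\mathcal{Z}_{1,N}$ to the Gaussian process $\mathcal{Z}_1$; the Euclidean-class maximal inequality applied to increments yields stochastic equicontinuity on compacta, hence tightness in $\ell^{\infty}$ on compact sets, and the strictly concave quadratic drift localizes the argmax (so the passage from compacts to $\mathbb{R}^{k_1}$ and the a.s.\ uniqueness of $\arg\max_\rho\mathcal{Z}_1(\rho)$ follow as in \cite{KimPollard1990}). The argmax continuous-mapping theorem then gives $r_N(\hat{\beta}-\beta)\overset{d}{\rightarrow}\arg\max_\rho\mathcal{Z}_1(\rho)$. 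The derivation for $\hat{\gamma}$ is verbatim after replacing $(X_{i2ts},W_{its})$ by $(X_{i1ts},X_{i2ts})$, the index $X_{i1ts}^{\prime}b$ by $W_{its}^{\prime}r$, the bandwidth $h_N$ by $\sigma_N$, and the kernel/wedge dimensions $(k_1+k_2,k_1)$ by $(2k_1,k_2)$, producing rate $(N\sigma_N^{2k_1})^{1/3}$, drift $\frac12\delta^{\prime}\mathbb{W}\delta$, and covariance kernel $\mathbb{H}_2$.

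The hard part will be the stochastic equicontinuity (tightness) of $\mathcal{Z}_{1,N}$ and $\mathcal{Z}_{2,N}$ in $\ell^{\infty}$: the envelopes diverge like $h_N^{-(k_1+k_2)}$ (resp.\ $\sigma_N^{-2k_1}$) and the indexing classes shrink with $N$, so one needs chaining/maximal inequalities that track the bandwidth explicitly -- a fusion of the arguments of \cite{KimPollard1990} with the kernel manipulations of \cite{SeoOtsu2018} -- and one must simultaneously check, under Assumptions P8 and P9, that both the kernel bias and the remainder terms in the change-of-variables expansion of the population objective are of order smaller than $r_N^{-2}$. By comparison, the curvature step, namely that $\mathbb{V}$ and $\mathbb{W}$ are negative definite, is comparatively routine given the rank conditions in Assumption P3, though it still requires ruling out directions along which the population criterion is flat.
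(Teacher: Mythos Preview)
Your proposal is correct and follows essentially the same route as the paper: both apply the kernel-localized cube-root framework of \cite{SeoOtsu2018}, treating $\hat{\beta}$ and $\hat{\gamma}$ by parallel arguments (with no plug-in correction for $\hat{\gamma}$), establishing the rate by balancing the quadratic curvature against the $\delta^{1/2}(Nh_N^{k_1+k_2})^{-1/2}$ fluctuation bound, and passing to the limit via weak convergence of the localized process and the argmax continuous-mapping theorem. The paper packages the verification of \cite{SeoOtsu2018}'s Assumption M and the computation of $\mathbb{V},\mathbb{W},\mathbb{H}_1,\mathbb{H}_2$ into two supporting lemmas and then cites their Lemma~1 and Theorem~1 directly, whereas you spell out the underlying maximal-inequality and tightness arguments; but the substance is the same.
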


The convergence rates of $\hat{\beta}$ and $\hat{\gamma}$ decline as the number of alternatives, $J$, increases due to the necessity of matching more covariates. We illustrate this in the model with $J=3$ in Section \ref{SEC:idenJ3P}. In line with existing results on the MS estimators (e.g., \cite{KimPollard1990} and \cite{SeoOtsu2018}), the limiting distributions of $\hat{\beta}$ and $\hat{\gamma}$ are non-Gaussian, and their rates of convergence are slower than $N^{-1/3}$. We refer interested readers to Appendix \ref{appendixAdd} for further insights into the convergence rates of both cross-sectional and panel data estimators. Inference using the limiting distribution derived in Theorem \ref{T:paneldist} directly is rather difficult. Therefore, we recommend a bootstrap-based procedure for the inference in the next section. One alternative is to adopt a smoothed MS approach (e.g., \cite{Horowitz1992}), which may yield a faster and asymptotically normal estimator. We leave this topic for future research.

\subsubsection{Inference\label{SEC:infer_panel}}
\cite{AbrevayaHuang2005} proved the inconsistency of the classic bootstrap
for the ordinary MS estimator. Our panel data estimators are indeed of the
MS type and we thus expect that the classic bootstrap does not work for them
either. %Applying the same arguments as \cite{AbrevayaHuang2005} with slight modifications can prove this result.
Valid inference can be made by the $m$-out-of-$n$ bootstrap, according to
\cite{LeePun2006}. Recently, \cite{HongLi2020} proposed the numerical
bootstrap, and showed the superior performance of this procedure over the $m$%
-out-of-$n$ bootstrap. For our panel data estimators, another advantage of
the numerical bootstrap is that we do not need to choose another set of
bandwidths ($h_{N}$ and $\sigma_{N}$) for estimation using the bootstrap
series as with the $m$-out-of-$n$ bootstrap. Based on these considerations,
we propose to conduct the inference using the numerical bootstrap.

The numerical bootstrap estimators $\hat{\beta}^{\ast}$ and $\hat{\gamma }%
^{\ast}$ are obtained as follows. First, draw $\{
(Y_{i}^{T\ast},Z_{i}^{T\ast})\} _{i=1}^{N}$ independently
from the collection of the sample values $\{(
Y_{i}^{T},Z_{i}^{T})\} _{i=1}^{N}$ with replacement. Then,
obtain $\hat{\beta}^{\ast}$ from
\begin{equation}
\hat{\beta}^{\ast}=\arg\max_{b\in\mathcal{B}}N^{-1}\sum_{i=1}^{N}\phi
_{Ni}\left( b\right) +\left( N\varepsilon_{N1}\right) ^{1/2}\cdot\lbrack
N^{-1}\sum_{i=1}^{N}\phi_{Ni}^{\ast}\left( b\right)
-N^{-1}\sum_{i=1}^{N}\phi_{Ni}\left( b\right) ],  \label{EQ:betabootstrapP}
\end{equation}
where $\varepsilon_{N1}$ is a tuning parameter to be discussed later, and $%
\phi_{Ni}^{\ast}\left( b\right) $ is the same as $\phi_{Ni}\left( b\right) $
except it uses the sampling series $\{
(Y_{i}^{T\ast},Z_{i}^{T\ast})\} _{i=1}^{N}$ as inputs. Similarly, compute $\hat{\gamma }%
^{\ast}$ from
\begin{equation}
\hat{\gamma}^{\ast}=\arg\max_{r\in\mathcal{R}}N^{-1}\sum_{i=1}^{N}\varphi
_{Ni}\left( r\right) +\left( N\varepsilon_{N2}\right) ^{1/2}\cdot\lbrack
N^{-1}\sum_{i=1}^{N}\varphi_{Ni}^{\ast}\left( r\right)
-N^{-1}\sum_{i=1}^{N}\varphi_{Ni}\left( r\right) ],
\label{EQ:gammabootstrapP}
\end{equation}
where $\varepsilon_{N2}$ is a tuning parameter, and $\varphi_{Ni}^{\ast
}\left( r\right) $ is similarly defined using bootstrap series.

When $\varepsilon _{N1}^{-1}=\varepsilon _{N2}^{-1}=N,$ the numerical
bootstrap reduces to the classic nonparametric bootstrap. The numerical
bootstrap excludes this case and requires $N\varepsilon _{N1}\rightarrow
\infty $ and $N\varepsilon _{N2}\rightarrow \infty $ as $N\rightarrow \infty
$. We note that $\varepsilon _{N1}^{-1}$ and $\varepsilon _{N2}^{-1}$ play a
similar role to the $m$ in the $m$-out-of-$n$ bootstrap (use only $m$
observations for the estimation).\footnote{%
Note that our $\varepsilon _{N}$ was written as $\varepsilon _{N}^{1/2}$ in
\cite{HongLi2020}.} The following conditions for the $\varepsilon _{N}$'s
are required for the validity of the numerical bootstrap:

\begin{enumerate}
\item[\textbf{P10}] $\varepsilon _{N1}$ and $\varepsilon _{N2}$ are
sequences of positive numbers such that as $N\rightarrow \infty $: (i) $%
\varepsilon _{N1}\rightarrow 0$ and $\varepsilon _{N2}\rightarrow 0$, (ii) $%
N\varepsilon _{N1}\rightarrow \infty $ and $N\varepsilon _{N2}\rightarrow
\infty $, (iii) $\varepsilon _{N1}^{-1}h_{N}^{k_{1}+k_{2}}\rightarrow \infty
$ and $\varepsilon _{N2}^{-1}\sigma _{N}^{2k_{1}}\rightarrow \infty $, and
(iv) $(\varepsilon _{N1}^{-1}h_{N}^{k_{1}+k_{2}})^{2/3}h_{N}^{2}\rightarrow
0 $ and $(\varepsilon _{N2}^{-1}\sigma _{N}^{2k_{1}})^{2/3}\sigma
_{N}^{2}\rightarrow 0$.
\end{enumerate}

The following theorem shows the validity of the numerical bootstrap, whose proof is deferred to Appendix \ref%
{appendixB}. We note that our estimators do not directly satisfy all conditions required by
\cite{HongLi2020}, for example, condition (vi) in their Theorem 4.1 is not the case here. 

\begin{theorem}
\label{Thm:panelinfer} Suppose Assumptions P1--P10 hold. Then
\begin{equation*}
(\varepsilon _{N1}^{-1}h_{N}^{k_{1}+k_{2}})^{1/3}(\hat{\beta}^{\ast }-\hat{%
\beta})\overset{d}{\rightarrow }\arg \max_{\rho \in \mathbb{R}^{k_{1}}}%
\mathcal{Z}_{1}^{\ast }\left( \rho \right) \text{\textit{\ }\emph{%
conditional on the sample},}
\end{equation*}%
and
\begin{equation*}
(\varepsilon _{N2}^{-1}\sigma _{N}^{2k_{1}})^{1/3}\left( \hat{\gamma}^{\ast
}-\hat{\gamma}\right) \overset{d}{\rightarrow }\arg \max_{\delta \in \mathbb{%
R}^{k_{2}}}\mathcal{Z}_{2}^{\ast }\left( \delta \right) \text{ \emph{%
conditional on the sample},}
\end{equation*}%
where $\mathcal{Z}_{1}^{\ast }\left( \rho \right) $ and $\mathcal{Z}%
_{2}^{\ast }\left( \delta \right) $ are independent copies of $\mathcal{Z}%
_{1}\left( \rho \right) $ and $\mathcal{Z}_{2}\left( \delta \right) $
defined in Theorem \ref{T:paneldist}, respectively.
\end{theorem}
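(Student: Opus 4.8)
The plan is to adapt the general argument for the numerical bootstrap of \cite{HongLi2020} to our localized, kernel-weighted maximum score setting; as the text notes, their Theorem 4.1 cannot be invoked verbatim (condition (vi) there fails), essentially because the kernel smoothing makes the effective sample size $Nh_N^{k_1+k_2}$ (resp.\ $N\sigma_N^{2k_1}$) rather than $N$, which forces the extra bandwidth conditions in P9 and P10. Since the criterion for $\hat\gamma$ does not involve $\hat\beta$ and the two objectives depend on disjoint blocks of regressors, $\hat\beta^\ast$ and $\hat\gamma^\ast$ are handled by the same argument run in parallel; I describe it for $\hat\beta^\ast$. Write $a_N\equiv(\varepsilon_{N1}^{-1}h_N^{k_1+k_2})^{1/3}$ for the bootstrap rate and $r_N\equiv(Nh_N^{k_1+k_2})^{1/3}$ for the rate of Theorem \ref{T:paneldist}, and note $a_N\to\infty$ by P10(iii) while $a_N/r_N=(N\varepsilon_{N1})^{-1/3}\to0$ by P10(ii).

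First I would localize and rescale. Set $b=\hat\beta+a_N^{-1}\rho$, subtract the value at $b=\hat\beta$ (which does not change the argmax), and multiply by $a_N^2$, producing a process $\rho\mapsto\hat{\mathbb{Q}}_N^\ast(\rho)$ on $\ell^\infty(\mathbb{R}^{k_1})$ whose argmax equals $a_N(\hat\beta^\ast-\hat\beta)$. Writing $\mathbb{M}_N(b)\equiv N^{-1}\sum_i\phi_{Ni}(b)$, $M_N(b)\equiv\mathbb{E}[\phi_{Ni}(b)]$, and $\mathbb{M}_N^\ast(b)$ for the bootstrap analogue of $\mathbb{M}_N$, decompose $\hat{\mathbb{Q}}_N^\ast=D_N+E_N+B_N^\ast$, where $D_N(\rho)=a_N^2[M_N(\hat\beta+a_N^{-1}\rho)-M_N(\hat\beta)]$ is deterministic, $E_N(\rho)$ is the $a_N^2$-scaled increment of the original centered empirical process $\mathbb{M}_N-M_N$, and $B_N^\ast(\rho)=a_N^2(N\varepsilon_{N1})^{1/2}$ times the corresponding increment of the bootstrap empirical process $\mathbb{M}_N^\ast-\mathbb{M}_N$.

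Second, I would analyze the three pieces. For $D_N$, a second-order Taylor expansion of $M_N$ around $\beta$ (legitimate by the smoothness in P5--P6) gives $D_N(\rho)=a_N\rho'\nabla M_N(\hat\beta)+\tfrac12\rho'\nabla^2M_N(\hat\beta)\rho+o_p(1)$ uniformly on compacta; the linear term is $o_p(1)$ because $\nabla M_N(\beta)=O(h_N^2)$, $a_N^2h_N^2\to0$ by P10(iv), and $a_N\|\hat\beta-\beta\|=O_p(a_N/r_N)\to0$ by Theorem \ref{T:paneldist}, while $\nabla^2M_N(\hat\beta)\to\mathbb{V}$, so $D_N(\rho)\to\tfrac12\rho'\mathbb{V}\rho$. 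For $E_N$, the variance of the $a_N^2$-scaled empirical-process increment over an $a_N^{-1}$-ball is of order $a_N^4\cdot N^{-1}h_N^{-(k_1+k_2)}a_N^{-1}\|\rho\|=(N\varepsilon_{N1})^{-1}\|\rho\|\to0$, so $E_N\overset{p}{\rightarrow}0$ uniformly on compacta: the original-sample fluctuation is invisible at the coarser bootstrap scale. The heart of the proof is $B_N^\ast$: conditional on the data, the bootstrap empirical process is asymptotically $N^{-1/2}$ times a Gaussian process with the same covariance as the original one, so $B_N^\ast$ has conditional variance of order $a_N^4\cdot N\varepsilon_{N1}\cdot N^{-1}h_N^{-(k_1+k_2)}a_N^{-1}\|\rho\|=\|\rho\|$ --- exactly the right order --- and I would establish, via a conditional functional central limit theorem for the kernel-weighted, half-space--indexed (hence manageable, in the sense used in the proof of Theorem \ref{T:paneldist}) bootstrap empirical process, that $B_N^\ast\Rightarrow\mathcal{Z}_1^{\ast,0}$ conditionally in probability, where $\mathcal{Z}_1^{\ast,0}$ is the mean-zero Gaussian process with covariance kernel $\mathbb{H}_1$. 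Thus $\hat{\mathbb{Q}}_N^\ast\Rightarrow\mathcal{Z}_1^\ast\equiv\tfrac12(\cdot)'\mathbb{V}(\cdot)+\mathcal{Z}_1^{\ast,0}$ conditionally; combining the strict concavity of the quadratic drift (so the argmax is uniformly tight), the sample-path continuity and nondegeneracy of the limit, and a conditional version of the argmax continuous-mapping theorem (the argmax functional of \cite{KimPollard1990}, or van der Vaart--Wellner) yields $(\varepsilon_{N1}^{-1}h_N^{k_1+k_2})^{1/3}(\hat\beta^\ast-\hat\beta)\overset{d}{\rightarrow}\arg\max_\rho\mathcal{Z}_1^\ast(\rho)$ conditional on the sample. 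Replacing $\phi_{Ni},h_N,\varepsilon_{N1}$ by $\varphi_{Ni},\sigma_N,\varepsilon_{N2}$ gives the statement for $\hat\gamma^\ast$.

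The main obstacle I anticipate is making the conditional functional CLT for $B_N^\ast$ rigorous in this triangular-array, kernel-weighted setting: the classes $\{\rho\mapsto\mathcal{K}_{h_N}(\cdot)Y(\cdot)(1[X'(\hat\beta+a_N^{-1}\rho)>0]-1[X'\hat\beta>0])\}$ have envelopes growing like $h_N^{-(k_1+k_2)}$ and change with $N$, so one must verify a conditional Lindeberg condition and conditional stochastic equicontinuity --- controlling the uniform-entropy integral of half-space--indexed classes multiplied by the kernel and checking that the chaining bound is uniform over bootstrap draws. A secondary point is the bookkeeping flagged above: because \cite{HongLi2020} does not cover a smoothed criterion, the bias-removal role of P9(iii)/P10(iv) and the exact matching of the $a_N^2$ and $(N\varepsilon_{N1})^{1/2}$ scalings must be tracked by hand; and in stating any joint limit for $(\hat\beta^\ast,\hat\gamma^\ast)$ one should keep in mind that both are driven by the same multinomial weights. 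All of this is technical rather than conceptual, and can be carried out with the maximal inequalities and manageability arguments already used for Theorem \ref{T:paneldist}.
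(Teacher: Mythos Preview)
Your proposal is correct and follows essentially the same approach as the paper's proof: decompose the rescaled bootstrap objective into a drift converging to $\tfrac12\rho'\mathbb{V}\rho$, an original-sample empirical fluctuation that is $o_p(1)$ at the coarser bootstrap rate $a_N$ (because $N\varepsilon_{N1}\to\infty$), and a bootstrap empirical process whose conditional covariance kernel matches $\mathbb{H}_1$ by the same Lemma~\ref{LE:P2} calculation, then apply the argmax theorem.

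The one organizational difference worth noting is that the paper centers the local parametrization at the true $\beta$ rather than at $\hat\beta$: it first establishes $a_N(\hat\beta^\ast-\beta)\overset{d}{\rightarrow}\arg\max_\rho\mathcal{Z}_1^\ast(\rho)$ and only at the very end observes that $a_N(\hat\beta-\beta)=O_p(a_N/r_N)=o_p(1)$ to shift to $\hat\beta^\ast-\hat\beta$. This keeps the drift term $D_N$ genuinely deterministic throughout, so one never needs to Taylor-expand $M_N$ around the random point $\hat\beta$ or argue separately that $a_N\nabla M_N(\hat\beta)=o_p(1)$; your route works too but requires that extra step. Either way, the covariance calculation for $B_N^\ast$ in the paper is exactly your order-of-magnitude computation, and the paper is equally terse about the conditional functional CLT---it simply invokes the i.i.d.\ resampling structure and Lemma~\ref{LE:P2}, relying implicitly on the manageability already verified in Lemma~\ref{LE:P1}.
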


We conclude this section by discussing the choice of the tuning parameters. Recall that $\hat{\beta}$
is of rate $(Nh_{N}^{k_{1}+k_{2}})^{-1/3}$, and additionally we need $%
(Nh_{N}^{k_{1}+k_{2}})^{2/3}h_{N}^{2}\rightarrow 0$ (to handle the bias). To
attain a fast rate of convergence, we tend to set $h_{N}$ as large as
possible. For example, we may set $(Nh_{N}^{k_{1}+k_{2}})^{2/3}h_{N}^{2}%
\propto \log \left( N\right) ^{-1}$. For the same reason, we may set $%
(N\sigma _{N}^{2k_{1}})^{2/3}\sigma _{N}^{2}\propto \log \left( N\right)
^{-1}$. These lead to $h_{N}\propto \log \left( N\right) ^{-\frac{3}{%
2k_{1}+2k_{2}+6}}N^{-\frac{1}{k_{1}+k_{2}+3}}$ and $\sigma _{N}\propto \log
\left( N\right) ^{^{-\frac{3}{4k_{1}+6}}}N^{-\frac{1}{2k_{1}+3}}.$ As
recommended by \cite{HongLi2020}, we can choose $\varepsilon _{N1}$ and $%
\varepsilon _{N2}$ such that $\varepsilon
_{N1}^{-1}h_{N}^{k_{1}+k_{2}}\propto (Nh_{N}^{k_{1}+k_{2}})^{2/3}$ and $%
\varepsilon _{N2}^{-1}\sigma _{N}^{2k_{1}}\propto (N\sigma
_{N}^{2k_{1}})^{2/3},$ which then implies that%
\begin{equation}
\varepsilon _{N1}\propto N^{-\frac{k_{1}+k_{2}+2}{k_{1}+k_{2}+3}}\log \left(
N\right) ^{-\frac{k_{1}+k_{2}}{2k_{1}+2k_{2}+6}}\text{ and }\varepsilon
_{N2}\propto N^{-\frac{2k_{1}+2}{2k_{1}+3}}\log \left( N\right) ^{-\frac{%
k_{1}}{2k_{1}+3}}.  \label{varepsilon}
\end{equation}

\subsubsection{Testing $\protect\eta$}\label{SEC:paneltest}
In this section, we use a similar approach as in Section \ref{SEC:test} to propose a test for whether $\eta_t$ tends to degenerate to 0. The hypotheses are formulated as
\begin{align*}
&\mathbb{H}_{0}:\eta_t >0\text{ almost surely and }\mathbb{E}\left( \eta_t \right) >0, \\
& \mathbb{H}_{1}:\eta_t =0\text{ almost surely.}
\end{align*}
We will just briefly outline how to construct this test since the approach is similar to that of Section \ref{SEC:test}.

The test is based on the criterion 
\begin{equation*}
N^{-1}\mathcal{L}_{N,\gamma }^{P,K}(r)=N^{-1}\sum_{i=1}^{N}\sum_{t>s}%
\mathcal{K}_{\sigma _{N}}(X_{i1ts},X_{i2ts})(Y_{i(1,1)t}-Y_{i(1,1)s})\text{%
sgn}(W_{its}^{\prime }r).
\end{equation*}%
We also define
\begin{equation*}
\mathcal{\bar{L}}^{P}\left( r\right) =\sum_{t>s}f_{X_{1ts},X_{2ts}}\left(
0,0\right) \mathbb{E}\left[ \left. (Y_{i(1,1)t}-Y_{i(1,1)s})\text{sgn}%
(W_{its}^{\prime }r)\right\vert X_{1ts}=0,X_{2ts}=0\right] .
\end{equation*}%
Then, we have
\begin{equation*}
N^{-1}\mathcal{L}_{N,\gamma }^{P,K}(r)\overset{p}{\rightarrow }\mathcal{\bar{%
L}}^{P}\left( r\right)
\end{equation*}%
uniformly in a small neighborhood of $\gamma .$ Using the same reasoning as before, we can conclude that under $\mathbb{H}_{0},$ $\mathcal{\bar{L}}^{P}\left(
\gamma \right) >0,$ and under $\mathbb{H}_{1},$ $\mathcal{\bar{L}}^{P}\left(
\gamma \right) =0.$ We show the limiting distribution of $N^{-1}\mathcal{L}%
_{N,\gamma }^{P,K}(\hat{\gamma})$ in the following theorem. The proof is
deferred to Appendix \ref{appendixB}.

\begin{theorem}
\label{T:paneltest} Suppose Assumptions P1--P9 hold. Then, under $\mathbb{H}%
_{0},$
\begin{equation*}
\sqrt{N\sigma _{N}^{2k_{1}}}( N^{-1}\mathcal{L}_{N,\gamma }^{P,K}(\hat{%
\gamma})-\mathcal{\bar{L}}^{P}\left( \gamma \right) ) \overset{d}{%
\rightarrow }N( 0,\Delta ^{P}_{\gamma}) ,
\end{equation*}%
where
\begin{equation*}
\Delta ^{P}_{\gamma}=\lim_{N\rightarrow \infty }\text{\emph{Var}}\left( \sigma
_{N}^{k_{1}}\sum_{t>s}\mathcal{K}_{\sigma
_{N}}(X_{i1ts},X_{i2ts})(Y_{i(1,1)t}-Y_{i(1,1)s})\text{\emph{sgn}}%
(W_{its}^{\prime }\gamma )\right) .
\end{equation*}
\end{theorem}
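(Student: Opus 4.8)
The plan is to treat $N^{-1}\mathcal{L}_{N,\gamma}^{P,K}(r)$ as a sample mean $N^{-1}\sum_{i=1}^{N}g_{N}(D_{i};r)$ over the i.i.d.\ units $D_{i}\equiv(Y_{i}^{T},Z_{i}^{T})$, where
\[
g_{N}(D_{i};r)\equiv\sum_{t>s}\mathcal{K}_{\sigma_{N}}(X_{i1ts},X_{i2ts})(Y_{i(1,1)t}-Y_{i(1,1)s})\text{sgn}(W_{its}^{\prime}r),
\]
and then to decompose $\sqrt{N\sigma_{N}^{2k_{1}}}(N^{-1}\mathcal{L}_{N,\gamma}^{P,K}(\hat{\gamma})-\mathcal{\bar{L}}^{P}(\gamma))=A_{N}+B_{N}+C_{N}$, where $A_{N}\equiv N^{-1/2}\sigma_{N}^{k_{1}}\sum_{i=1}^{N}(g_{N}(D_{i};\gamma)-\mathbb{E}[g_{N}(D_{i};\gamma)])$ is the centered statistic at the truth, $B_{N}\equiv\sqrt{N\sigma_{N}^{2k_{1}}}(\mathbb{E}[g_{N}(D_{i};\gamma)]-\mathcal{\bar{L}}^{P}(\gamma))$ is the kernel-smoothing bias, and $C_{N}\equiv\sqrt{N\sigma_{N}^{2k_{1}}}N^{-1}\sum_{i=1}^{N}(g_{N}(D_{i};\hat{\gamma})-g_{N}(D_{i};\gamma))$ is the plug-in effect, which collects only the sign changes $\text{sgn}(W_{its}^{\prime}\hat{\gamma})-\text{sgn}(W_{its}^{\prime}\gamma)$, since $\hat{\gamma}$ does not enter the kernel weights. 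I will show $B_{N}\rightarrow0$ and $C_{N}=o_{p}(1)$, and that $A_{N}\overset{d}{\rightarrow}N(0,\Delta_{\gamma}^{P})$, whence the claim follows.

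For $B_{N}$, a change of variables in the kernel integral together with a second-order Taylor expansion---using that $K$ is a second-order kernel with finite second moment (Assumption P7), that $f_{X_{1ts},X_{2ts}}$ is bounded and twice differentiable near zero (Assumption P5), and that $\mathbb{E}[(Y_{i(1,1)t}-Y_{i(1,1)s})\text{sgn}(W_{its}^{\prime}\gamma)\mid X_{1ts}=\cdot,X_{2ts}=\cdot]$ is smooth---yields $\mathbb{E}[g_{N}(D_{i};\gamma)]=\mathcal{\bar{L}}^{P}(\gamma)+O(\sigma_{N}^{2})$, so $|B_{N}|=O(\sqrt{N\sigma_{N}^{2k_{1}+4}})\rightarrow0$, because Assumption P9(iii) implies $N\sigma_{N}^{2k_{1}+3}\rightarrow0$ and $\sigma_{N}\rightarrow0$. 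For $A_{N}$, note that it is $N^{-1/2}$ times a sum of i.i.d., mean-zero, triangular-array summands $\sigma_{N}^{k_{1}}(g_{N}(D_{i};\gamma)-\mathbb{E}[g_{N}(D_{i};\gamma)])$ whose per-unit variance converges to $\Delta_{\gamma}^{P}$ by the definition in the theorem statement; I will apply the Lyapunov CLT for triangular arrays, verifying the Lyapunov condition from the kernel moment bound that the $(2+\delta)$-th absolute moment of $\sigma_{N}^{k_{1}}\mathcal{K}_{\sigma_{N}}(X_{i1ts},X_{i2ts})$ times a bounded factor is $O(\sigma_{N}^{-\delta k_{1}})$, which makes the Lyapunov ratio of order $(N\sigma_{N}^{2k_{1}})^{-\delta/2}\rightarrow0$ under Assumption P9(ii). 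Hence $A_{N}\overset{d}{\rightarrow}N(0,\Delta_{\gamma}^{P})$.

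For $C_{N}$, Theorem \ref{T:paneldist} gives $\Vert\hat{\gamma}-\gamma\Vert=O_{p}((N\sigma_{N}^{2k_{1}})^{-1/3})$ under $\mathbb{H}_{0}$, so it suffices to control $g_{N}(\cdot;r)-g_{N}(\cdot;\gamma)$ uniformly over the shrinking ball $\mathcal{N}_{N}\equiv\{r:\Vert r-\gamma\Vert\le\delta_{N}\}$ with $\delta_{N}\propto(N\sigma_{N}^{2k_{1}})^{-1/3}$. Write $C_{N}=\sqrt{N\sigma_{N}^{2k_{1}}}(\nu_{N}(\hat{\gamma})-\nu_{N}(\gamma))+\sqrt{N\sigma_{N}^{2k_{1}}}(m_{N}(\hat{\gamma})-m_{N}(\gamma))$ with $m_{N}(r)\equiv\mathbb{E}[g_{N}(D_{i};r)]$ and $\nu_{N}(r)\equiv N^{-1}\sum_{i}(g_{N}(D_{i};r)-m_{N}(r))$. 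The deterministic piece is handled by observing that $m_{N}(r)$ converges to the twice continuously differentiable $\mathcal{\bar{L}}^{P}(r)$ (with uniformly bounded derivatives on $\mathcal{N}_{N}$ by Assumption P6), which under $\mathbb{H}_{0}$ is uniquely maximized at $\gamma$, so $\nabla\mathcal{\bar{L}}^{P}(\gamma)=0$ and $m_{N}(\hat{\gamma})-m_{N}(\gamma)=O_{p}(\Vert\hat{\gamma}-\gamma\Vert^{2})+O(\sigma_{N}^{2})=O_{p}((N\sigma_{N}^{2k_{1}})^{-2/3})$; its contribution to $C_{N}$ is therefore $O_{p}((N\sigma_{N}^{2k_{1}})^{-1/6})=o_{p}(1)$. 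For the stochastic piece I will bound $\sup_{r\in\mathcal{N}_{N}}|\nu_{N}(r)-\nu_{N}(\gamma)|$ by a maximal inequality: the class $\{r\mapsto g_{N}(\cdot;r)-g_{N}(\cdot;\gamma):r\in\mathcal{N}_{N}\}$ is of VC type because $r$ enters only through the linear-index signs $\text{sgn}(W_{its}^{\prime}r)$, and it admits the envelope $2\sum_{t>s}\mathcal{K}_{\sigma_{N}}(X_{i1ts},X_{i2ts})1[|W_{its}^{\prime}\gamma|\le\delta_{N}\Vert W_{its}\Vert]$, whose squared $L^{2}$-norm after the $\sigma_{N}^{k_{1}}$ rescaling is $O(\delta_{N})\rightarrow0$ by the a.e.\ positive conditional Lebesgue density of $W_{ts}^{(1)}$ in Assumption P3; a maximal inequality of the type used in \cite{KimPollard1990} and \cite{Sherman1994AoS} then gives $\sup_{r\in\mathcal{N}_{N}}|\nu_{N}(r)-\nu_{N}(\gamma)|=o_{p}((N\sigma_{N}^{2k_{1}})^{-1/2})$, so $C_{N}=o_{p}(1)$.

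The main obstacle is this last stochastic-equicontinuity bound, where one must simultaneously exploit the VC structure of the linear-index sign class, the blow-up of the kernel weights as $\sigma_{N}\rightarrow0$, and the shrinkage of the neighborhood $\mathcal{N}_{N}$; carrying it out requires adapting the maximal-inequality machinery of \cite{KimPollard1990} and \cite{Sherman1994AoS} to the kernel-weighted setting, exactly as in the cross-sectional Theorem \ref{TH:testing}. The remaining ingredients---the $O(\sigma_{N}^{2})$ bias expansion for $B_{N}$ and the triangular-array Lyapunov CLT for $A_{N}$---are routine kernel-smoothing computations.
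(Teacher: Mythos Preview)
Your proposal is correct and follows essentially the same strategy as the paper: control the kernel bias, show the plug-in $\hat{\gamma}$ has an asymptotically negligible effect because the criterion is locally quadratic at $\gamma$ and $\hat{\gamma}-\gamma=O_p((N\sigma_N^{2k_1})^{-1/3})$, and apply a triangular-array CLT at the truth. The only organizational difference is that the paper packages your $C_N$ analysis into a single appeal to Lemma~1 of \cite{SeoOtsu2018} (whose hypotheses were already verified in Lemma~\ref{LE:P1} for the proof of Theorem~\ref{T:paneldist}), obtaining $N^{-1}\mathcal{L}_{N,\gamma}^{P,K}(\hat{\gamma})-N^{-1}\mathcal{L}_{N,\gamma}^{P,K}(\gamma)=O_p((N\sigma_N^{2k_1})^{-2/3})$ in one step, whereas you split $C_N$ into its mean and centered parts and control the latter by an explicit maximal-inequality/VC argument; both routes yield the same $O_p((N\sigma_N^{2k_1})^{-1/6})=o_p(1)$ bound.
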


Some standard calculations can verify that $\Delta ^{P}_{\gamma}$ is well defined. Importantly, the
plug-in estimator $\hat{\gamma}$ does not affect the asymptotics of $N^{-1}\mathcal{L}%
_{N,\gamma }^{P,K}(\hat{\gamma})$. To understand this, we notice from the proof that
\begin{equation*}
N^{-1}\mathcal{L}_{N,\gamma }^{P,K}(\hat{\gamma})-N^{-1}\mathcal{L}%
_{N,\gamma }^{P,K}(\gamma )=O_{p} ( \left\Vert \hat{\gamma}-\gamma
\right\Vert ^{2} ) +O_{p} ( \sigma _{N}^{2} ) +O_{p} (
\left( N\sigma _{N}^{2k_{1}}\right) ^{-2/3} ) .
\end{equation*}%
Since $\hat{\gamma}-\gamma =O_{p} (  ( N\sigma
_{N}^{2k_{1}} ) ^{-1/3} ) $ by Theorem \ref{T:paneldist}, we can show, by Assumption P9, that 
\begin{equation*}
\sqrt{N\sigma_{N}^{2k_{1}}} ( N^{-1}\mathcal{L}_{N,\gamma }^{P,K}(\hat{\gamma})-N^{-1}%
\mathcal{L}_{N,\gamma }^{P,K}(\gamma ) ) =o_{p}\left( 1\right).
\end{equation*}
For the implementation, one can employ the standard bootstrap with no need to
re-estimate $\hat{\gamma}$. The validity of this approach can be justified in a similar way to what was discussed in Section  \ref{SEC:test}.

\subsection{LAD Estimator for Panel Data Models}\label{SEC:LAD_panel}
The matching-based MS procedure proposed in Section \ref{SEC3.1} shares the same limitation as the MRC estimator in Section \ref{sec:rcm}, as it cannot identify the coefficients for (time-varying) common regressors (e.g., income). In this section, we introduce a LAD estimation method for panel data bundle choice models, similar to the one discussed in Section \ref{SEC:LAD} for cross-sectional models. This LAD estimator can estimate coefficients for both common and alternative-specific regressors. It offers the same level of robustness as the localized MS estimation (\ref{panelobjbK})--(\ref{panelobjrK}), meaning it does not rely on distributional assumptions and can handle general correlations between fixed effects and regressors. The price to pay is that its implementation requires a first-step nonparametric estimation of choice probabilities.

We extend the model (\ref{panelutility})--(\ref{panelchoice}) to accommodate common regressors, similar to what we did in Section \ref{SEC:LAD} for the cross-sectional model. We consider the following panel data model:
\begin{align}
U_{dt} & =\sum_{j=1}^{2}F_{j}(X_{jt}'\beta+S_{t}'\rho_{j},\alpha_{j},\epsilon_{jt})\cdot d_{j}+\eta_{t}\cdot F_{b}(W_{t}'\gamma+S_{t}'\rho_{b}+\alpha_{b})\cdot d_{1}\cdot d_{2},\label{eq:panel_lad_1}\\
Y_{dt} & =1[U_{dt}>U_{d't},\forall d'\in\mathcal{D}\setminus d],\label{eq:panel_lad_2}
\end{align}
where $S_{t} \in \mathbb{R}^{k_{3}}$ represents a vector of common regressors that enter more than one stand-alone alternative or the bundle. All other aspects of the model remain the same as in (\ref{panelutility})--(\ref{panelchoice}). We denote $Z_{t}=(X_{1t},X_{2t},W_{t},S_{t})$ and use $Z^{T}=(Z_{1},...,Z_{T})$ to collect all observed covariates in (\ref{eq:panel_lad_1})--(\ref{eq:panel_lad_2}).

Suppose $\xi_{s} \overset{d}{=} \xi_{t} | (\alpha, Z_{t}, Z_{s})$ for any two time periods $t$ and $s$. We can write the following identification inequalities:
\begin{align}
\{X_{1ts}'\beta+S_{ts}'\rho_{1}\geq0,X_{2ts}'\beta+S_{ts}'\rho_{2}\leq0,W_{ts}'\gamma+S_{ts}'\rho_{b}\leq0\} & \Rightarrow\Delta p_{(1,0)}(Z_{t},Z_{s})\geq0,\label{eq:panel_lad_3}\\
\{X_{1ts}'\beta+S_{ts}'\rho_{1}\leq0,X_{2ts}'\beta+S_{ts}'\rho_{2}\geq0,W_{ts}'\gamma+S_{ts}'\rho_{b}\geq0\} & \Rightarrow\Delta p_{(1,0)}(Z_{t},Z_{s})\geq0,\label{eq:panel_lad_4}
\end{align}
where $S_{ts} \equiv S_{t} - S_{s}$ and $\Delta p_{(1,0)}(Z_{t}, Z_{s}) \equiv P(Y_{(1,0)t} = 1 | Z_{t}, Z_{s}) - P(Y_{(1,0)s} = 1 | Z_{t}, Z_{s})$. It is clear that (\ref{eq:panel_lad_3}) and (\ref{eq:panel_lad_4}) are analogous to (\ref{eq:lad_ineq_1}) and (\ref{eq:lad_ineq_2}). However, in the context of a panel data model, two differences from the cross-sectional model should be noted:
\begin{itemize}
\item[-] We are comparing the same agent in different time periods, rather than different agents.
\item[-] $\Delta p_{(1,0)}(Z_{t}, Z_{s})$ is defined as the difference between choice probabilities of $(1,0)$ in time periods $t$ and $s$, conditioning on all observable covariates from these two periods. This is because we do not impose any restriction on the way that $\alpha$ is dependent on $(Z_{t}, Z_{s})$, and thus we must keep both $Z_{t}$ and $Z_{s}$ in the conditioning sets when integrating out the $\alpha$ in (\ref{eq:panel_lad_3}) and (\ref{eq:panel_lad_4}).\footnote{Recall that
in cross-sectional model, $\Delta p_{(1,0)}(Z_{i},Z_{m})\equiv P(Y_{(1,0)i}=1|Z_{i})-P(Y_{(1,0)m}=1|Z_{m})$ since $(Y_{i},Z_{i})$ and $(Y_{m},Z_{m})$ are assumed to be independent.} 
\end{itemize}

Inequalities (\ref{eq:panel_lad_3}) and (\ref{eq:panel_lad_4}) lead to the following criterion to identify model parameters $\theta \equiv (\beta, \gamma, \rho_{1}, \rho_{2}, \rho_{b})$:
\begin{equation}
Q_{ts(1,0)}^P(\vartheta)=\mathbb{E}\left[  q_{(1,0)}^P(Z_{t},Z_{s},\vartheta
)\right]  , \label{eq:panel_lad_5}%
\end{equation}
where $\vartheta=(b,r,\varrho_{1},\varrho_{2},\varrho_{b})$ is an arbitrary
vector in the parameter space $\Theta$ containing $\theta$ and
\begin{align*}
q_{(1,0)}^P(Z_{t},Z_{s},\vartheta)=  &  [|I_{ts(1,0)}^{+}(\vartheta)-1[\Delta
p_{(1,0)}(Z_{t},Z_{s})\geq0]|+|I_{ts(1,0)}^{-}(\vartheta)-1[\Delta
p_{(1,0)}(Z_{t},Z_{s})\leq0]|]\\
&  \times\lbrack I_{ts(1,0)}^{+}(\vartheta)+I_{ts(1,0)}^{-}(\vartheta)]
\end{align*}
with
\begin{align}
I_{ts(1,0)}^{+}(\vartheta)  &  =1[X_{ts1}^{\prime}b+S_{ts}^{\prime
}\varrho_{1}\geq0,X_{ts2}^{\prime}b+S_{ts}^{\prime}\varrho_{2}\leq
0,W_{ts}^{\prime}r+S^{\prime}_{ts}\varrho_{b}\leq0]\text{ and}%
\label{eq:panel_lad_6}\\
I_{ts(1,0)}^{-}(\vartheta)  &  =1[X_{ts1}^{\prime}b+S_{ts}^{\prime
}\varrho_{1}\leq0,X_{ts2}^{\prime}b+S^{\prime}_{ts}\varrho_{2}\geq
0,W_{ts}^{\prime}r+S^{\prime}_{ts}\varrho_{b}\geq0]. \label{eq:panel_lad_7}%
\end{align}

Under the following assumptions, Theorem \ref{thm:panel_lad_identification} establishes the identification
of $\theta$ based on (\ref{eq:panel_lad_5}). The proof is omitted due to its similarity to that of Theorem \ref{thm:cross_theta_identification}. 
\begin{enumerate}
\item[\textbf{PL1}] (i) $\{(Y_{i}^{T},Z_{i}^{T})\}_{i=1}^{N}$ are i.i.d,
where $Y_{i}^{T}\equiv
\{(Y_{i(0,0)t},Y_{i(1,0)t},Y_{i(0,1)t},Y_{i(1,1)t})\}_{t=1}^T$, (ii) for almost all $%
(\alpha ,Z^{T})$, $\xi _{t}\overset{d}{=}\xi _{s}|\left( \alpha
,Z_{t},Z_{s}\right) $, and (iii) the distribution of $\xi _{t}$ conditional
on $\alpha $ is absolutely continuous w.r.t. the Lebesgue measure on $%
\mathbb{R}^{2}\times \mathbb{R}_{+}$ for all $t$.
\item[\textbf{PL2}]$\theta=(\beta,\gamma,\rho_{1},\rho_{2},\rho_{b})\in
\Theta\subset\mathbb{R}^{k_{1}+k_{2}+3k_{3}}$, where $\Theta=\{\vartheta
=(b,r,\varrho_{1},\varrho_{2},\varrho_{b})|b^{(1)}=r^{(1)}=1,\Vert
\vartheta\Vert\leq C\}$ for some constant $C>0$.
\item[\textbf{PL3}] (i) $X_{1ts}^{(1)}$ ($X_{2ts}^{(1)}$) has a.e. positive
Lebesgue density on $\mathbb{R}$ conditional on $(\tilde{X}_{1ts},X_{2ts},W_{ts},S_{ts})$ ($(\tilde{X}_{2ts},X_{1ts},W_{ts},S_{ts})$), (ii) the support of $(X_{1ts},S_{ts})$ ($(X_{2ts},S_{ts})$) is not contained in any proper linear subspace of $\mathbb{R}^{k_{1}+k_3}$ conditional on $(X_{2ts},W_{ts})$ ($(X_{1ts},W_{ts})$), (iii) $W_{ts}^{(1)}$ has a.e. positive Lebesgue density on $\mathbb{R}$
conditional on $(\tilde{W}_{ts},X_{1ts},X_{2ts},S_{ts})$, and (iv) the support of $(W_{ts},S_{ts})$ is not contained in any proper linear subspace of $\mathbb{R}^{k_{2}+k_3}$ conditional on $(X_{1ts},X_{2ts})$.
\item[\textbf{PL4}] $F_{j}\left( \cdot ,\cdot ,\cdot \right) ,$ $j=1,2,$ and $%
F_{b}\left( \cdot\right) $ are strictly increasing in their arguments. $F_{b}(0)=0$.
\end{enumerate}

\begin{theorem}\label{thm:panel_lad_identification}
Under Assumptions PL1--PL4,
$Q_{ts(1,0)}^P(\theta)<Q_{ts(1,0)}^P(\vartheta)$ for all $\vartheta\in\Theta
\setminus\{\theta\}$.
\end{theorem}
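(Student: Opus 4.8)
The plan is to follow the proof of Theorem~\ref{thm:cross_theta_identification} almost verbatim, replacing the cross-agent pair $(i,m)$ by the within-agent pair $(t,s)$ and the strict-exogeneity argument by the conditional-stationarity Assumption~PL1(ii). Write $A^{1}(\vartheta)=X_{1ts}'b+S_{ts}'\varrho_{1}$, $A^{2}(\vartheta)=X_{2ts}'b+S_{ts}'\varrho_{2}$, $A^{b}(\vartheta)=W_{ts}'r+S_{ts}'\varrho_{b}$ for the three single indices evaluated at a generic $\vartheta$, and $A^{1}(\theta),A^{2}(\theta),A^{b}(\theta)$ for those at the truth, so that $\{I_{ts(1,0)}^{+}(\vartheta)=1\}=\{A^{1}(\vartheta)\ge0,\,A^{2}(\vartheta)\le0,\,A^{b}(\vartheta)\le0\}$ and $\{I_{ts(1,0)}^{-}(\vartheta)=1\}$ is the mirror event.

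The first step shows $Q_{ts(1,0)}^{P}(\theta)=0$, the global minimum of the nonnegative loss. Conditioning on $(Z_{t},Z_{s})$ and integrating out $\alpha$ against its conditional law given $(Z_{t},Z_{s})$ — the \emph{same} measure for both time periods — Assumption~PL1(ii) together with the time-invariant utility form makes $P(Y_{(1,0)t}=1\mid Z_{t},Z_{s},\alpha)$ and $P(Y_{(1,0)s}=1\mid Z_{t},Z_{s},\alpha)$ one and the same function $\Phi(\cdot,\cdot,\cdot;\alpha)$ of the three single indices, increasing in the stand-alone-$1$ index and decreasing in the stand-alone-$2$ and bundle indices by the strict monotonicity of Assumption~PL4 (the sign of the bundle index being meaningful because $F_{b}(0)=0$). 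Hence $\{A^{1}(\theta)\ge0,\,A^{2}(\theta)\le0,\,A^{b}(\theta)\le0\}$ forces $\Phi$ at $t$ to dominate $\Phi$ at $s$ pointwise in $\alpha$, and integrating preserves the inequality, giving $\Delta p_{(1,0)}(Z_{t},Z_{s})\ge0$; the reversed configuration gives $\Delta p_{(1,0)}(Z_{t},Z_{s})\le0$. These are exactly the identification inequalities (\ref{eq:panel_lad_3})--(\ref{eq:panel_lad_4}), so on $\{I_{ts(1,0)}^{+}(\theta)=1\}$ and on $\{I_{ts(1,0)}^{-}(\theta)=1\}$ the prediction is never wrong, whence $q_{(1,0)}^{P}(Z_{t},Z_{s},\theta)=0$ a.s.\ and $Q_{ts(1,0)}^{P}(\theta)=0$.

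The second step establishes the strict inequality: for every $\vartheta\ne\theta$ it is enough to produce a positive-probability event on which $I_{ts(1,0)}^{+}(\vartheta)=1$ yet $\Delta p_{(1,0)}(Z_{t},Z_{s})<0$, or the symmetric event with $I_{ts(1,0)}^{-}$ and $\Delta p_{(1,0)}>0$, because there $q_{(1,0)}^{P}(\cdot,\vartheta)=2$ and hence $Q_{ts(1,0)}^{P}(\vartheta)>0$. From Step~1 and the absolute continuity of Assumption~PL1(iii), $\Phi(\cdot,\cdot,\cdot;\alpha)$ is continuous and strictly monotone in each argument; consequently $\Delta p_{(1,0)}<0$ on the event $\{A^{1}(\theta)\le0,\,A^{2}(\theta)\ge0,\,A^{b}(\theta)\ge0\}$ with at least one inequality strict, and — by continuity — also when one of the three index differences is pinned a fixed distance on the side that makes $\Delta p_{(1,0)}$ negative while the other two lie in a small enough neighborhood of $0$. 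Because the scale normalizations $b^{(1)}=\beta^{(1)}=1$ and $r^{(1)}=\gamma^{(1)}=1$ in Assumption~PL2, combined with the full-rank conditions in PL3, imply that $\vartheta=\theta$ iff all three indices $A^{1},A^{2},A^{b}$ coincide (a.e.) under $\vartheta$ and $\theta$, the hypothesis $\vartheta\ne\theta$ forces at least one index to differ genuinely. For each differing index the $\vartheta$- and $\theta$-hyperplanes are distinct, so — invoking the full-dimensional support conditions PL3(ii) for $(X_{1ts},S_{ts})$ and $(X_{2ts},S_{ts})$ given the remaining regressors and PL3(iv) for $(W_{ts},S_{ts})$ given $(X_{1ts},X_{2ts})$ — they bound a ``wedge'' of positive Lebesgue measure; the a.e.-positive conditional densities PL3(i) and PL3(iii) of $X_{1ts}^{(1)},X_{2ts}^{(1)},W_{ts}^{(1)}$ (conditional on everything else, \emph{including} $S_{ts}$) turn it into a positive-probability event and simultaneously let us place any coinciding index into a small neighborhood of $0$, so that, by the continuity above, the sign of $\Delta p_{(1,0)}$ is decided by the wedge. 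If a particular wedge cannot be oriented to make all the required strict inequalities compatible, its mirror (swapping $I^{+}\leftrightarrow I^{-}$ and all signs) can be; hence at least one of the two is non-null, contradicting $q_{(1,0)}^{P}(\cdot,\vartheta)=0$ a.s.

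I expect the bookkeeping in the second step to be the main obstacle, precisely because of the common regressor: $S_{ts}$ enters all three indices, so once $S_{ts}$ is constrained to the wedge separating the $\vartheta$- and $\theta$-hyperplanes in one index, one must still steer the remaining indices (through $X_{1ts},X_{2ts},W_{ts}$, whose conditional densities remain positive by PL3(i),(iii)) to the strict or near-zero values the argument needs; and when only one of the three indices differs between $\vartheta$ and $\theta$, one cannot use a clean ``all indices reversed'' configuration and must lean on the continuity/domination argument instead. Verifying that Assumptions PL2--PL3 are exactly strong enough to carry this out in every difference pattern, and that at least one of the $I^{+}$/$I^{-}$ mirror events is always non-null, is the substantive content; the remainder is the routine transcription of the proof of Theorem~\ref{thm:cross_theta_identification} from $(i,m)$ to $(t,s)$.
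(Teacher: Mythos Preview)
Your proposal is correct and takes essentially the same approach as the paper: the paper explicitly omits the proof, stating it follows that of Theorem~\ref{thm:cross_theta_identification} with $(i,m)$ replaced by $(t,s)$ and strict exogeneity replaced by the conditional stationarity PL1(ii), which is precisely your plan. Your Step~1 correctly identifies that integrating out $\alpha$ against its law given $(Z_t,Z_s)$---the same law for both periods---preserves the monotonicity implications, and your Step~2 captures the matching-plus-continuity mechanism the paper uses in its ``On~(\ref{eq:clproof_3})'' paragraph.

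One organizational remark: the paper's cross-sectional proof does not argue via a single unified ``wedge'' construction as you outline, but instead proceeds parameter-block by parameter-block, intersecting $I_{im(1,0)}^{\pm}(\vartheta)$ with carefully chosen auxiliary events $I_{im(1,1)}^{\pm}(\theta)$, $I_{imu}^{\pm}(\theta)$, $I_{im(0,1)}^{\pm}(\theta)$ so that exactly one index is forced into a wedge while the constraints on the remaining two indices are \emph{compatible} (same direction) under $\vartheta$ and $\theta$. This sidesteps the difficulty you flag---that constraining $S_{ts}$ for one index may interfere with steering the others---by construction rather than by a generic continuity bound. Your abstract description and the paper's concrete case analysis are the same argument at different resolutions; if you find the bookkeeping with the common regressor awkward, adopting the paper's explicit auxiliary-indicator decomposition will make it mechanical.
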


Following similar arguments to (\ref{eq:lad_pop_obj}), we can show that $\theta$ can be equivalently identified by the following criterion, which results in less bias in its empirical counterpart:
\begin{equation}
Q_{ts(1,0)}^{P,D}(\vartheta)=\mathbb{E}\left[  q_{(1,0)}^{P,D}(Z_{t},Z_{s}%
,\vartheta)\right]  , \label{eq:panel_lad_pop_obj}
\end{equation}
where
\begin{align*}
q_{(1,0)}^{P,D}(Z_{t},Z_{s},\vartheta)=  &  [|I_{ts(1,0)}^{+}(\vartheta)-\Delta
p_{(1,0)}(Z_{t},Z_{s})|+|I_{ts(1,0)}^{-}(\vartheta)+\Delta p_{(1,0)}%
(Z_{t},Z_{s})|]\\
&  \times\lbrack I_{ts(1,0)}^{+}(\vartheta)+I_{ts(1,0)}^{-}(\vartheta
)]+[1-(I_{ts(1,0)}^{+}(\vartheta)+I_{ts(1,0)}^{-}(\vartheta))].
\end{align*}

Then, assuming a uniformly consistent estimator $\Delta\hat{p}_{(1,0)}%
(Z_{t},Z_{s})$ for $\Delta p_{(1,0)}(Z_{t},Z_{s})$, we propose the following
LAD estimator $\hat{\theta}$ for $\theta$:
\begin{equation}
\hat{\theta}=\arg\min_{\vartheta\in\Theta}\sum_{i=1}^{N-1}\sum_{t>s}\hat
{q}^{P,D}_{its(1,0)}(\vartheta), \label{eq:panel_lad_estimator}%
\end{equation}
where
\begin{align*}
\hat{q}_{its(1,0)}^{P,D}(\vartheta)=  &  [\vert I_{its(1,0)}^{+}(\vartheta)-\Delta
\hat{p}_{(1,0)}(Z_{it},Z_{is})\vert+\vert I_{its(1,0)}^{-}(\vartheta)+\Delta\hat
{p}_{(1,0)}(Z_{it},Z_{is})\vert]\\
&  \times[I_{its(1,0)}^{+}(\vartheta)+I_{its(1,0)}^{-}(\vartheta)]+[1-(I_{its(1,0)}%
^{+}(\vartheta)+I_{its(1,0)}^{-}(\vartheta))].
\end{align*}
In practice, one can employ identification inequalities analogous to those presented in Remark \ref{remark:LAD1} (i.e., (\ref{eq:lad_ineq_3})--(\ref{eq:lad_ineq_8})) and follow a procedure similar to (\ref{eq:panel_lad_3})--(\ref{eq:panel_lad_estimator}) to compute $\hat{q}_{its(0,1)}^{P,D}(\vartheta)$, $\hat{q}_{its(0,0)}^{P,D}(\vartheta)$, and $\hat{q}_{its(1,1)}^{P,D}(\vartheta)$. Collectively, a more efficient $\hat{\theta}$ can be obtained by minimizing the sample criterion $\sum_{i=1}^{N}\sum_{t>s}\sum_{d\in\mathcal{D}}\hat{q}_{itsd}^{P,D}(\vartheta)$, where $\mathcal{D}=\{(0,0),(1,0),(0,1),(1,1)\}$.

The following assumption, together with Assumptions PL1--PL4, is sufficient for the
consistency of the estimator $\hat{\theta}$ obtained from (\ref{eq:panel_lad_estimator}):
\begin{itemize}
\item[\textbf{PL5}] $\sup_{(z_{t},z_{s})\in\mathcal{Z}_{t}\times\mathcal{Z}_{s}}\vert\Delta\hat{p}_{(1,0)}(z_{t},z_{s})-\Delta p_{(1,0)}(z_{t},z_{s})\vert\overset{p}{\rightarrow}0$
as $n\rightarrow\infty$ for all $t\neq s$, where $\mathcal{Z}_{t}$
is the support of $Z_{t}$.
\end{itemize}

\begin{theorem}\label{thm:panel_lad_consistency}
Suppose Assumptions PL1--PL5 hold. We have
$\hat{\theta}\overset{p}{\rightarrow}\theta$.
\end{theorem}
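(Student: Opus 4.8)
The plan is to run the standard consistency argument for an extremum estimator with a (possibly) discontinuous sample objective, in exact parallel to the proof of Theorem~\ref{thm:cross_theta_consistency}. Since dividing the objective in~(\ref{eq:panel_lad_estimator}) by $N$ does not change the $\arg\min$, write $\bar{Q}_{N}(\vartheta)\equiv N^{-1}\sum_{i=1}^{N}\sum_{t>s}\hat{q}^{P,D}_{its(1,0)}(\vartheta)$ and $Q(\vartheta)\equiv\sum_{t>s}Q^{P,D}_{ts(1,0)}(\vartheta)$. It suffices to show: (i) $Q$ is continuous and has a well-separated unique minimizer at $\theta$ over the compact set $\Theta$ (i.e., for every $\delta>0$, $\inf_{\|\vartheta-\theta\|\geq\delta}Q(\vartheta)>Q(\theta)$); and (ii) $\sup_{\vartheta\in\Theta}|\bar{Q}_{N}(\vartheta)-Q(\vartheta)|\overset{p}{\rightarrow}0$. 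Given (i) and (ii), $\hat{\theta}\overset{p}{\rightarrow}\theta$ by the usual argument.

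For (i): Theorem~\ref{thm:panel_lad_identification} gives $Q^{P}_{ts(1,0)}(\theta)<Q^{P}_{ts(1,0)}(\vartheta)$ for every $(t,s)$ with $t>s$ and every $\vartheta\neq\theta$, and by the equivalence between the indicator-based and debiased criteria noted after~(\ref{eq:panel_lad_pop_obj}) the same holds for $Q^{P,D}_{ts(1,0)}$; summing over $t>s$ shows $Q$ is uniquely minimized at $\theta$. Under Assumption~PL3 the regressors entering the index restrictions are absolutely continuous, so the boundary events defining $I^{+}_{ts(1,0)}(\vartheta)$ and $I^{-}_{ts(1,0)}(\vartheta)$ have probability zero and their probabilities vary continuously with $\vartheta$; hence each $Q^{P,D}_{ts(1,0)}$, and therefore $Q$, is continuous on $\Theta$. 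Compactness of $\Theta$ (Assumption~PL2) then promotes the unique minimizer to a well-separated one.

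For (ii): decompose $\bar{Q}_{N}(\vartheta)=N^{-1}\sum_{i=1}^{N}\sum_{t>s}q^{P,D}_{(1,0)}(Z_{it},Z_{is},\vartheta)+R_{N}(\vartheta)$, where the first term plugs in the true $\Delta p_{(1,0)}$ and $R_{N}(\vartheta)\equiv N^{-1}\sum_{i=1}^{N}\sum_{t>s}\{\hat{q}^{P,D}_{its(1,0)}(\vartheta)-q^{P,D}_{(1,0)}(Z_{it},Z_{is},\vartheta)\}$. The first term is an i.i.d.\ average (across $i$) of summands that depend on $\vartheta$ only through the half-space indicators $I^{+}_{ts(1,0)}(\vartheta)$, $I^{-}_{ts(1,0)}(\vartheta)$, which form a Vapnik--Chervonenkis class, and the summand has a bounded envelope because $|\Delta p_{(1,0)}|\leq1$; a uniform law of large numbers therefore yields $\sup_{\vartheta}|N^{-1}\sum_{i}\sum_{t>s}q^{P,D}_{(1,0)}(Z_{it},Z_{is},\vartheta)-Q(\vartheta)|\overset{p}{\rightarrow}0$. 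For $R_{N}$, the reverse triangle inequality gives $\bigl|\,|I^{+}-\Delta\hat{p}|-|I^{+}-\Delta p|\,\bigr|\leq|\Delta\hat{p}-\Delta p|$ and likewise for $I^{-}$, so, since $I^{+}_{ts(1,0)}+I^{-}_{ts(1,0)}\leq1$ a.s.\ under PL3, $|\hat{q}^{P,D}_{its(1,0)}(\vartheta)-q^{P,D}_{(1,0)}(Z_{it},Z_{is},\vartheta)|\leq2|\Delta\hat{p}_{(1,0)}(Z_{it},Z_{is})-\Delta p_{(1,0)}(Z_{it},Z_{is})|$ uniformly in $\vartheta$; hence $\sup_{\vartheta}|R_{N}(\vartheta)|\leq2\binom{T}{2}\sup_{(z_{t},z_{s})\in\mathcal{Z}_{t}\times\mathcal{Z}_{s}}|\Delta\hat{p}_{(1,0)}(z_{t},z_{s})-\Delta p_{(1,0)}(z_{t},z_{s})|=o_{p}(1)$ by Assumption~PL5. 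Combining the two pieces gives (ii), and hence $\hat\theta\overset{p}{\rightarrow}\theta$.

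The only delicate point is the uniform control of the plug-in term $R_{N}$ over the \emph{non-compact} support $\mathcal{Z}_{t}\times\mathcal{Z}_{s}$, and this is precisely what the debiased criterion~(\ref{eq:panel_lad_pop_obj}) buys us together with Assumption~PL5: because $q^{P,D}$ is Lipschitz in $\Delta p$, a sign error in $\Delta\hat{p}$ no longer produces an $O(1)$ discrepancy (as it would for the indicator-based $q^{P}$), so the plug-in error is dominated uniformly in $\vartheta$ by the sup-norm error of $\Delta\hat{p}$, with no requirement that $\Delta p$ be bounded away from zero. All remaining steps---the identification argument, the VC/uniform-LLN bound, and the compactness conclusion---are routine and follow the cross-sectional case of Theorem~\ref{thm:cross_theta_consistency} essentially verbatim, with time pairs $(t,s)$ replacing agent pairs $(i,m)$.
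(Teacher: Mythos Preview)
Your proposal is correct and follows essentially the same route the paper indicates: the paper omits the proof entirely, stating it ``follows similar arguments to Theorem~\ref{thm:cross_theta_consistency},'' and your argument is exactly that adaptation---verifying the Newey--McFadden conditions via compactness (PL2), identification (Theorem~\ref{thm:panel_lad_identification} plus the indicator/debiased equivalence), continuity (PL3), and uniform convergence obtained by splitting off the plug-in error (controlled by PL5 and the Lipschitz dependence of $q^{P,D}$ on $\Delta p$) from an infeasible sample mean handled by a VC/ULLN bound. The only structural difference from the cross-sectional proof is that the infeasible criterion is an i.i.d.\ average over $i$ rather than a second-order U-statistic, which you note and which actually simplifies matters.
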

The proof of Theorem \ref{thm:panel_lad_consistency} follows similar arguments to Theorem \ref{thm:cross_theta_consistency} and is omitted for brevity.

\section{Monte Carlo Experiments}\label{monte}
In this section, we investigate the finite sample performance of our proposed approaches, in both cross-sectional and panel data models, by means of Monte Carlo experiments. We set sample sizes $N=250,500,1000$ for cross-sectional designs, and $N=1000,2500,5000$ for panel data designs, considering the slower convergence rates. All simulation results are obtained from 1000 independent replications. We report these results in tables collected in Appendix \ref{appendixC}.

We start from a benchmark cross-sectional design (Design 1) specified as follows:
\begin{align*}
U_{id} & =\sum_{j=1}^{2}( X_{ij,1}+\beta X_{ij,2}+\rho_j S_i+\epsilon
_{ij})\cdot d_{j}+\eta_{i}\cdot( W_{i,1}+\gamma W_{i,2})\cdot
d_{1}\cdot d_{2},\\
Y_{id} & =1[U_{id}>U_{id^{\prime}},\forall d^{\prime}\neq d],
\end{align*}
where $\beta =\gamma=\rho_1=\rho_2=1$ and $d=(d_{1} ,d_{2})\in\{(0,0),(1,0),(0,1),(1,1)\}$. We impose the scale normalization (Assumption C4) by setting the coefficients on $ X_{ij,1}$ and $W_{i,1}$ to be 1, and thus $(\beta ,\gamma,\rho_1,\rho_2)$ are free parameters to estimate. In this design, we let $\left\{ \{X_{ij,\iota}\}_{j=1,2;\iota=1,2},\{W_{i,\iota}\}_{\iota=1,2},S_i,\{\epsilon _{ij}\}_{j=1,2},\eta_{i}\right\} _{i=1,...,N}$ be independent across $i,j$, and $\iota$ and from each other. $X_{ij,1}$'s and $W_{i,1}$ are $\textrm{Logistic}(0,1)$, $X_{ij,2}$'s are $\textrm{Bernoulli}(1/3)$, $W_{i,2}$, $S_i$, and $\epsilon_{ij}$'s are $N(0,1)$, and $\eta_{i}$ is $\text{Beta}(2,2)$. 

To implement our MRC estimation, note that larger bandwidth means smaller variation. In light of this observation and requirements under Assumptions C6 and C7, we employ the sixth-order Gaussian kernel ($\kappa_{\beta}=6$) and tuning parameter (bandwidth) $h_{N}=c_{1}\hat{\sigma}\cdot N^{-1/8}\log(N)^{1/6}$ for ``matching'' the four continuous regressors ($k_{1}=k_{2}=2$) in the first-step estimation, where $\hat{\sigma}$ represents the standard deviation of the corresponding matching variable. Under this choice, $\sqrt{N}h_{N}^{4}\propto\log(N) ^{2/3}\rightarrow\infty$ and $\sqrt{N}h_{N}^{6}\rightarrow0$. For similar reasons, we adopt the fourth-order Gaussian kernel ($\kappa_{\gamma}=4$) and bandwidth $\sigma_{N}=c_{2}\hat{\sigma}\cdot N^{-1/4} \log(N)^{1/4}$ for the second step, and thus $\sqrt{N}\sigma_{N}^{2} \propto\log(N) ^{1/2}\rightarrow\infty$ and $\sqrt{N}\sigma_{N}^{4} \rightarrow0$. The discrete variable $X_{ij,2}$ is exactly matched across observations in the first step by using an indicator function. To test the sensitivity of our methods to the choice of bandwidths, we experiment with several values of $c_{1}\in\{0.6, 1,1.4,1.8\}$ and several values of $c_{2}\in\{1.6,2,2.4,2.8\}$. It turns out that the results are not sensitive to the choice of $(c_{1},c_{2})$. Thus, to save space, we only report the results of $(c_{1},c_2)=(1,2)$, for which the root mean squared errors (RMSE) are the smallest among all choices. 

The implementation of our proposed LAD estimator requires a first-stage
nonparametric estimation of $\Delta p_{d}(Z_{i},Z_{m})$. For the cross-sectional designs of this section, we use the standard Nadaraya-Watson estimator for conditional probability density (see, e.g., \cite{LiRacine2007}). We employ the fourth-order Gaussian kernel\footnote{We also try the second-order Gaussian kernel, but the simulation results exhibit larger bias than those obtained with higher-order kernels. This is in line with our discussion in Section \ref{SEC:LAD}, i.e., the bias in $\Delta\hat{p}_{d}(Z_{i},Z_{m})$, when making $\Delta\hat{p}_{d}(Z_{i},Z_{m})$ and $\Delta p_{d}(Z_{i},Z_{m})$ be of different signs, leads to the bias in the LAD estimation. Therefore, it is crucial to prioritize estimators that incur smaller bias to estimate $\Delta p_{d}(Z_{i},Z_{m})$.} for continuous variables and the Aitchison-Aitken kernel for (unordered) discrete variables. To save computation time, we simply choose bandwidths for continuous variables using Silverman's ``rule-of-thumb'' and set $N^{-1}$ as the smoothing parameter for discrete variables. When there is no need to repeat the computation thousands of times, we recommend considering more robust, data-driven methods, such as the cross-validation (CV) method proposed by \cite{hall2004cross}, to select all the tuning parameters. The computation is conducted using the state-of-art R package \texttt{np} (\cite{hayfield2008nonparametric}).  

To compute the MRC and LAD estimators, we apply the differential evolution (DE) algorithm (\cite{StornPrice1997}) to globally search the optima for their corresponding sample criterion functions proposed in previous sections. The implementation uses the well-developed and tested R package \texttt{DEoptim} (\cite{mullen2011deoptim}). %This algorithm is also recommend by \cite{Fox2007} and \cite{YanYoo2019} for computing maximum-score type estimators.

Results for Design 1 are presented in tables numbered ``1'' and so on for other designs. The performance of the MRC estimator, its nonparametric bootstrap inference, and the LAD estimator are reported in tables labeled ``A'', ``B'', and ``C'', respectively. For example, Table \ref{tab:1A} summarizes the performance of the MRC estimator for Design 1, in which we report the estimator's mean bias (MBIAS) and RMSE. Since these statistics are sensitive to outliers, we also present the median bias (MED) and the median absolute deviations (MAD). Table \ref{tab:1B} reports the empirical coverage frequencies (COVERAGE) and lengths (LENGTH) of the 95\% confidence intervals (CI) constructed using the standard bootstrap with $B=299$ independent draws and estimations. Recall that the MRC procedure can only estimate $(\beta,\gamma)$, while the LAD method can also estimate the coefficients $(\rho_1,\rho_2)$ on the common regressor $S_i$. Therefore, in Tables \ref{tab:1A} and \ref{tab:1B}, we only report results for $(\beta,\gamma)$, while in Table \ref{tab:1C} we report results for $(\beta,\gamma,\rho_1,\rho_2)$.

The results for Design 1 are in line with our asymptotic theory. As the sample size increases, the RMSE of the MRC estimator shrinks at the parametric rate, indicating that these procedures are $\sqrt{N}$-consistent. We observe a similar root-$n$ convergence rate for the LAD estimator, although we do not formally derive it in this paper. The MRC estimator has greater RMSE than the LAD estimator, as the latter makes full use of the data and can be more efficient. The nonparametric bootstrap for the MRC estimation also performs well and yields shrinking confidence intervals, with coverage rates approaching 95\% as the sample size increases.

The outcome of Monte Carlo experiments can be significantly affected by the design of the experiment itself. For instance, estimators tend to perform better when all i.i.d. regressors and errors are used compared to cases where either the regressors or errors are correlated. In order to evaluate the performance of our estimators with correlated regressors and errors, we modify Design 1 by letting $\textrm{Corr}(X_{i1,1},X_{i2,1})=\textrm{Corr}(X_{i1.2},X_{i2,2})=\textrm{Corr}(\epsilon_{i1},\epsilon_{i2})=0.5$. All other aspects of Design 1 remain the same. We refer to this modified version of the benchmark design as Design 2. Like Design 1, we use the same kernels and bandwidths in Design 2 and report the simulation results in Tables \ref{tab:2A}--\ref{tab:2C}. The results are very similar to those for Design 1, though as expected, the performance slightly deteriorates. This confirms our theoretical results that our approaches allow for arbitrary correlation in errors.

We proceed to analyze the finite sample performance of the MS estimation, its numerical bootstrap, and the LAD estimator for panel data bundle choice models. We consider two designs (Designs 3 and 4) that have the same choice set as the cross-sectional designs ($\{(0,0),(1,0),(0,1),(1,1)\}$) and a panel of two time periods. We examine sample sizes of $N=1000,2500,5000$ for each panel data design. We construct 95\% CI based on $B=299$ independent draws and estimations for inference. The first panel data design (Design 3) is specified as follows:
\begin{align*}
U_{idt} & =\sum_{j=1}^{2}(X_{ijt,1}+\beta X_{ijt,2}+\rho_j S_{it}+\alpha
_{ij}+\epsilon_{ijt})\cdot d_{j}+\eta_{it}\cdot(W_{it,1}+\gamma
 W_{it,2}+\alpha_{ib})\cdot d_{1}\cdot d_{2},\\
Y_{idt} & =1[U_{idt}>U_{id^{\prime}t},\forall d^{\prime}\neq d],
\end{align*}
where $\beta =\gamma=\rho_1=\rho_{2}=1$. $\left\{ \{X_{ijt,\iota}\}_{j=1,2;\iota=1,2},\{W_{it,\iota}\}_{\iota=1,2},S_{it},\{\epsilon_{ijt}\}_{j=1,2},\eta_{it}\right\} _{i=1,...,N;t=1,2}$ are mutually independent random variables, where $X_{ijt,1}$'s and $W_{it,1}$'s are $\textrm{Logistic}(0,1)$, $X_{ijt,2}$'s are $\textrm{Bernoulli}(1/3)$, $S_{it}$'s, $W_{it,2}$'s, and $\epsilon_{ijt}$'s are $N(0,1)$, and $\eta_{it}\sim\text{Beta}(2,2)$. We let $\alpha_{ij}=(X_{ij1,1} +X_{ij2,1})/4$ for $j=1,2$ and $\alpha_{ib}=(W_{i1,1}+W_{i2,1})/4$. For scale normalization (Assumption P3), we set the coefficients on $X_{ijt,1}$ and $W_{it,1}$ to be 1.\footnote{Note that we employ the scale normalization in Assumption C3 rather than Assumption P2. The two normalization conventions are equivalent, but the former is often more computationally convenient and easier to interpret.} In this design, the (unobserved) fixed effects are correlated with regressors, which invalidates the matching method in the cross-sectional case.

In the second panel data design (Design 4), the random utility model and coefficients are the same as in Design 3. However, similar to Design 2 for the cross-sectional setting, we set $\textrm{Corr}(X_{ijt,1},X_{ils,1})=\textrm{Corr}(X_{ijt,2},X_{ils,2})=\textrm{Corr}(\epsilon_{ijt},\epsilon_{ils})=0.25$ for all $j,l\in\{1,2\}$ and $t,s\in\{1,2\}$ with either $j\neq l$ or $t\neq s$, and $\{S_{it},\{W_{it,\iota}\}_{\iota=1,2},\eta_{it}\}_{i=1,...N;t=1,2}$ to have serial correlation 0.25. All other aspects remain the same.

To implement the MS estimation for these designs, we use the second-order Gaussian kernel and $h_{N}=\sigma_{N}=c_{3}\hat{\sigma}\cdot N^{-1/7}\log(N)^{-1/14}$. We have two more tuning parameters to implement the numerical bootstrap: $\varepsilon_{N1}$ and $\varepsilon_{N2}$. We set $\varepsilon_{N1}=\varepsilon_{N2}=c_{4}\cdot N^{-5/7}\log(N)^{-5/14}$.\footnote{Though \cite{HongLi2020} recommended $\varepsilon_{N1},\varepsilon_{N2}\propto N^{-6/7}\log(N)^{-2/7}$ (see (\ref{varepsilon}) in our Section \ref{SEC:infer_panel}), some tentative Monte Carlo simulations suggest that a slight modification like this leads to a better finite sample performance.} This choice of $(h_{N},\sigma_{N},\varepsilon_{N1}, \varepsilon_{N2})$ meets Assumptions P9 and P10. We experiment with $c_{3},c_4\in\{1,1.5,2, 2.5\}$. It turns out that the results are not sensitive to the choice of $(c_{3},c_{4})$. Therefore, to save space, we report only the results with $(c_{3},c_4)=(2,2)$. 

To implement our proposed LAD estimator, we need to estimate $\Delta p_{d}(Z_{it},Z_{is})$ nonparametrically in the first stage. It is important to note that in order to estimate the probability of $Y_{idt}=1$, we need to use observed (time-varying) regressors from both periods $t$ and $s$ due to the weak restrictions imposed on the unobservables of the model in Assumption P1. This makes the panel data LAD estimator more susceptible to the curse of dimensionality than the cross-sectional model. To address this issue, we employ nonparametric regression based on artificial neural networks with multiple hidden layers. Recent advances in statistics, such as \cite{kohler2021rate} and the references therein, establish the convergence rates of such estimators and demonstrate their ability to handle the curse of dimensionality under certain restrictions on the regression function and network (sparsity) architecture.\footnote{Strictly speaking, the rate results of \cite{kohler2021rate} only apply to cases where the covariates have compact supports, which is not the case for our simulation designs. However, despite the lack of theoretical guarantees, our simulation results do not flag any problems.} Our simulation studies simply use neural networks with two hidden layers and three neurons per hidden layer for computational efficiency. The R package $\texttt{neuralnet}$ (\cite{fritsch2019package}) is used for computation. In practice, one can consider using CV techniques to select the number of hidden layers and neurons.

Tables \ref{tab:3A}--\ref{tab:3C} and Tables \ref{tab:4A}--\ref{tab:4C} show the simulation results for Designs 3 and 4, respectively. According to the results, our MS and LAD estimators, calculated using the DE algorithm, perform reasonably well and yield similar results for these two designs. The MBIAS and RMSE of these estimators decrease as the sample size increases, indicating the consistency of our estimators. The MS estimator has a slower than cube root-$N$ rate of convergence, while the LAD estimator appears to have a rate of convergence between $\sqrt{N}$ and cube root-$N$. The inference results are also consistent with the theory. When the sample size is small, the CI obtained by the numerical bootstrap tends to be too wide, resulting in coverage rates slightly higher than 95\%. However, as the sample size increases, we observe that the CI shrink and their coverage rates approach 95\%.

\section{Conclusions}\label{conclude}
This paper develops estimation and inference procedures for semiparametric discrete choice models for bundles. We first propose a two-step maximum rank correlation estimator in the cross-sectional setting and establish its $\sqrt{N}$-consistency and asymptotic normality. Based on these results, we further show the validity of using the nonparametric bootstrap to carry out inference and propose a criterion-based test for the interaction effects. Additionally, we introduce a multi-index least absolute deviations (LAD) estimator to handle models with both alternative- and agent-specific covariates, which, to our knowledge, is new in the literature. We extend these approaches to the panel data setting, where the resulting maximum score (MS) and LAD estimators can consistently estimate bundle choice models with agent-, alternative-, and bundle-specific fixed effects. We also derive the limiting distribution of the MS estimator, justify the application of the numerical bootstrap for making inferences, and introduce an analogous test for interaction effects based on the MS criterion. It is important to note that although we propose these methods for estimating bundle choice models, they can be applied with minor modifications to many other multi-index models. A small-scale Monte Carlo study shows that our proposed methods perform adequately in finite samples. 

This paper focuses on scenarios where the researcher can access individual-level data. We note that our proposed methods can also be adapted for models that use aggregate data. In these cases, the researcher can observe the aggregated choice probabilities (e.g., market shares) in many markets and the market-level covariates for each alternative. These types of models are commonly used in empirical industrial organizations (see, e.g., \cite{Fan2013} and \cite{ShiEtal2018}). A detailed exploration of this adaptation would require different estimators and asymptotics. Due to space limitations, we do not delve into this topic in this paper and reserve it for future study.

The work here leaves many other open questions for future research. One promising subsequent research is to establish the asymptotic properties of the proposed LAD estimators for both cross-sectional and panel data models, which can further lead to the development of easy-to-implement inference methods. Another potential area of research is to explore the possibility of smoothing the criterion functions to achieve faster convergence rates and attain asymptotic normality for the MS and LAD estimators.

%\newpage
\begin{center} {\large\bf Supplementary Materials} \end{center}
Appendixes \ref{appendixA}–\ref{appendixE} are included in a supplementary appendix. Appendix \ref{appendixA} provides proofs for theorems presented in Section \ref{SEC2} for our cross-sectional estimators, except for Theorem \ref{T:cross_boot}, whose technical details are provided in Appendix \ref{appendixD} due to its length. Appendix \ref{appendixB} provides proofs for all the theorems for our panel data estimators introduced in Section \ref{SEC3}. Appendix \ref{appendixAdd} provides additional discussions on the convergence rates of proposed procedures, the construction of our MS estimator, and the identification of models with more alternatives and bundles. Finally, Appendix \ref{appendixE} collects the proofs of all the technical lemmas used in Appendices \ref{appendixA} and \ref{appendixB}.

\begin{center} {\large\bf Acknowledgements} \end{center}
This work is partially supported by the Faculty of Business, Economics, and Law (BEL), University of Queensland, via the 2019 BEL New Staff Research Start-Up Grant. We thank Shakeeb Khan, Dong-Hyuk Kim, Firmin Tchatoka, Hanghui Zhang, and Yu Zhou for their valuable feedback. We are also grateful for the helpful comments and discussions from all seminar and conference participants at the University of Adelaide and the 31st Australia New Zealand Econometric Study Group Meeting. All errors are our responsibility.

\part*{{\protect\Large Appendix}}\label{appendix}
%\section*{}\label{appendix}
\begin{appendix}
\section{Tables}\label{appendixC}

% Table generated by Excel2LaTeX from sheet 'D1-MRC'
\begin{table}[H]
  \renewcommand\thetable{1A} \centering\centering
  \caption{Design 1, Performance of MRC Estimators}\label{tab:1A}
    \begin{tabular}{lccccccccc}
    \toprule
          & \multicolumn{4}{c}{$\beta$}      &       & \multicolumn{4}{c}{$\gamma$} \\
\cmidrule{2-5}\cmidrule{7-10}          & MBIAS  & RMSE  & MED   & MAD   &       & MBIAS  & RMSE  & MED   & MAD \\
\cmidrule{2-5}\cmidrule{7-10}    $N=250$ & 0.068 & 0.534 & 0.075 & 0.424 &       & 0.101 & 0.452 & 0.074 & 0.318 \\
    $N=500$ & 0.024 & 0.367 & 0.028 & 0.274 &       & 0.065 & 0.314 & 0.038 & 0.208 \\
    $N=1000$ & 0.024 & 0.237 & 0.016 & 0.154 &       & 0.040 & 0.215 & 0.034 & 0.131 \\
    \bottomrule
    \end{tabular}%
  
\end{table}%

% Table generated by Excel2LaTeX from sheet 'D1-MRC'
\begin{table}[H]
  \renewcommand\thetable{1B} \centering\centering
  \caption{Design 1, Performance of Nonparametric Bootstrap}\label{tab:1B}
    \begin{tabular}{lccccc}
    \toprule
          & \multicolumn{2}{c}{$\beta$} &       & \multicolumn{2}{c}{$\gamma$} \\
\cmidrule{2-3}\cmidrule{5-6}          & COVERAGE & LENGTH &       & COVERAGE & LENGTH \\
\cmidrule{2-3}\cmidrule{5-6}    $N=250$ & 0.835 & 2.505 &       & 0.915 & 2.145 \\
    $N=500$ & 0.862 & 1.719 &       & 0.918 & 1.492 \\
    $N=1000$ & 0.926 & 1.164 &       & 0.938 & 1.002 \\
    \bottomrule
    \end{tabular}%
  
\end{table}%

% Table generated by Excel2LaTeX from sheet 'D1-LAD'
\begin{table}[H]
  \renewcommand\thetable{1C} \centering\centering
  \caption{Design 1, Performance of LAD Estimators}\label{tab:1C}
    \begin{tabular}{lccccccccc}
    \toprule
          & \multicolumn{4}{c}{$\beta$}      &       & \multicolumn{4}{c}{$\gamma$} \\
\cmidrule{2-5}\cmidrule{7-10}          & MBIAS  & RMSE  & MED   & MAD   &       & MBIAS  & RMSE  & MED   & MAD \\
\cmidrule{2-5}\cmidrule{7-10}    $N=250$ & 0.102 & 0.274 & 0.090 & 0.180 &       & 0.040 & 0.418 & 0.017 & 0.247 \\
    $N=500$ & 0.112 & 0.213 & 0.114 & 0.144 &       & 0.032 & 0.274 & 0.004 & 0.173 \\
    $N=1000$ & 0.106 & 0.164 & 0.100 & 0.110 &       & 0.030 & 0.189 & 0.022 & 0.124 \\
    \midrule
          &       &       &       &       &       &       &       &       &  \\
    \midrule
          & \multicolumn{4}{c}{$\rho_1$}    &       & \multicolumn{4}{c}{$\rho_2$} \\
\cmidrule{2-5}\cmidrule{7-10}          & MBIAS  & RMSE  & MED   & MAD   &       & MBIAS  & RMSE  & MED   & MAD \\
\cmidrule{2-5}\cmidrule{7-10}    $N=250$ & 0.022 & 0.208 & 0.011 & 0.133 &       & 0.019 & 0.200 & 0.011 & 0.128 \\
    $N=500$ & 0.010 & 0.134 & 0.005 & 0.092 &       & 0.009 & 0.135 & 0.001 & 0.089 \\
    $N=1000$ & 0.010 & 0.095 & 0.007 & 0.062 &       & 0.013 & 0.092 & 0.011 & 0.061 \\
    \bottomrule
    \end{tabular}%
 
\end{table}%

% Table generated by Excel2LaTeX from sheet 'D2-MRC'
\begin{table}[H]
  \renewcommand\thetable{2A} \centering\centering
  \caption{Design 2, Performance of MRC Estimators}\label{tab:2A}
    \begin{tabular}{lccccccccc}
    \toprule
          & \multicolumn{4}{c}{$\beta$}      &       & \multicolumn{4}{c}{$\gamma$} \\
\cmidrule{2-5}\cmidrule{7-10}          & MBIAS  & RMSE  & MED   & MAD   &       & MBIAS  & RMSE  & MED   & MAD \\
\cmidrule{2-5}\cmidrule{7-10}    $N=250$ & 0.067 & 0.585 & 0.051 & 0.452 &       & 0.083 & 0.471 & 0.040 & 0.325 \\
    $N=500$ & 0.061 & 0.433 & 0.038 & 0.336 &       & 0.072 & 0.328 & 0.058 & 0.216 \\
    $N=1000$ & 0.029 & 0.283 & 0.015 & 0.182 &       & 0.040 & 0.219 & 0.024 & 0.135 \\
    \bottomrule
    \end{tabular}%
  
\end{table}%

% Table generated by Excel2LaTeX from sheet 'D2-MRC'
\begin{table}[H]
  \renewcommand\thetable{2B} \centering\centering
  \caption{Design 2, Performance of Nonparametric Bootstrap}\label{tab:2B}
    \begin{tabular}{lccccc}
    \toprule
          & \multicolumn{2}{c}{$\beta$} &       & \multicolumn{2}{c}{$\gamma$} \\
\cmidrule{2-3}\cmidrule{5-6}          & COVERAGE & LENGTH &       & COVERAGE & LENGTH \\
\cmidrule{2-3}\cmidrule{5-6}    $N=250$ & 0.870 & 2.847 &       & 0.911 & 2.292 \\
    $N=500$ & 0.894 & 1.992 &       & 0.927 & 1.566 \\
    $N=1000$ & 0.925 & 1.358 &       & 0.933 & 1.067 \\
    \bottomrule
    \end{tabular}%
  
\end{table}%

% Table generated by Excel2LaTeX from sheet 'D2-LAD'
\begin{table}[H]
  \renewcommand\thetable{2C} \centering\centering
  \caption{Design 2, Performance of LAD Estimators}\label{tab:2C}
    \begin{tabular}{lccccccccc}
    \toprule
                    & \multicolumn{4}{c}{$\beta$}      &       & \multicolumn{4}{c}{$\gamma$} \\
\cmidrule{2-5}\cmidrule{7-10}          & MBIAS  & RMSE  & MED   & MAD   &       & MBIAS  & RMSE  & MED   & MAD \\
\cmidrule{2-5}\cmidrule{7-10}    $N=250$ & 0.109 & 0.312 & 0.102 & 0.205 &       & 0.059 & 0.491 & 0.018 & 0.297 \\
    $N=500$ & 0.107 & 0.222 & 0.103 & 0.147 &       & 0.048 & 0.308 & 0.023 & 0.189 \\
    $N=1000$ & 0.112 & 0.172 & 0.113 & 0.124 &       & 0.028 & 0.209 & 0.013 & 0.143 \\
    \midrule
          &       &       &       &       &       &       &       &       &  \\
    \midrule
          & \multicolumn{4}{c}{$\rho_1$}    &       & \multicolumn{4}{c}{$\rho_2$} \\
\cmidrule{2-5}\cmidrule{7-10}          & MBIAS  & RMSE  & MED   & MAD   &       & MBIAS  & RMSE  & MED   & MAD \\
\cmidrule{2-5}\cmidrule{7-10}    $N=250$ & -0.042 & 0.209 & -0.050 & 0.139 &       & -0.036 & 0.211 & -0.047 & 0.154 \\
    $N=500$ & -0.034 & 0.146 & -0.040 & 0.098 &       & -0.037 & 0.147 & -0.037 & 0.098 \\
    $N=1000$ & -0.041 & 0.105 & -0.043 & 0.074 &       & -0.037 & 0.100 & -0.039 & 0.073 \\
    \bottomrule
    \end{tabular}%
  
\end{table}%

% Table generated by Excel2LaTeX from sheet 'D1-MS'
\begin{table}[H]
  \renewcommand\thetable{3A} \centering\centering
  \caption{Design 3, Performance of MS Estimators}\label{tab:3A}
    \begin{tabular}{lccccccccc}
    \toprule
          & \multicolumn{4}{c}{$\beta$}      &       & \multicolumn{4}{c}{$\gamma$} \\
\cmidrule{2-5}\cmidrule{7-10}          & MBIAS  & RMSE  & MED   & MAD   &       & MBIAS  & RMSE  & MED   & MAD \\
\cmidrule{2-5}\cmidrule{7-10}    $N=1000$ & -0.008 & 0.371 & 0.003 & 0.244 &       & -0.014 & 0.411 & -0.057 & 0.276 \\
    $N=2500$ & 0.006 & 0.332 & 0.003 & 0.236 &       & -0.004 & 0.375 & -0.040 & 0.246 \\
    $N=5000$ & -0.002 & 0.291 & 0.001 & 0.235 &       & 0.009 & 0.367 & -0.016 & 0.252 \\
    \bottomrule
    \end{tabular}%
 
\end{table}%

% Table generated by Excel2LaTeX from sheet 'D1-MS'
\begin{table}[H]
  \renewcommand\thetable{3B} \centering\centering
  \caption{Design 3, Performance of Numerical Bootstrap}\label{tab:3B}
    \begin{tabular}{lccccc}
    \toprule
          & \multicolumn{2}{c}{$\beta$} &       & \multicolumn{2}{c}{$\gamma$} \\
\cmidrule{2-3}\cmidrule{5-6}          & COVERAGE & LENGTH &       & COVERAGE & LENGTH \\
\cmidrule{2-3}\cmidrule{5-6}    $N=1000$ & 0.970 & 2.275 &       & 0.967 & 2.837 \\
    $N=2500$ & 0.967 & 1.973 &       & 0.971 & 2.527 \\
    $N=5000$ & 0.954 & 1.743 &       & 0.967 & 2.256 \\
    \bottomrule
    \end{tabular}%
  
\end{table}%

% Table generated by Excel2LaTeX from sheet 'D1-LAD'
\begin{table}[H]
  \renewcommand\thetable{3C} \centering\centering
  \caption{Design 3, Performance of LAD Estimators}\label{tab:3C}
    \begin{tabular}{lccccccccc}
    \toprule
                    & \multicolumn{4}{c}{$\beta$}      &       & \multicolumn{4}{c}{$\gamma$} \\
\cmidrule{2-5}\cmidrule{7-10}          & MBIAS  & RMSE  & MED   & MAD   &       & MBIAS  & RMSE  & MED   & MAD \\
\cmidrule{2-5}\cmidrule{7-10}              $N=1000$ & 0.046 & 0.375 & 0.036 & 0.250 &       & 0.131 & 0.740 & 0.021 & 0.466 \\
    $N=2500$ & 0.026 & 0.205 & 0.020 & 0.133 &       & 0.069 & 0.440 & 0.027 & 0.281 \\
    $N=5000$ & 0.002 & 0.155 & -0.003 & 0.107 &       & 0.067 & 0.327 & 0.039 & 0.214 \\
    \midrule
          &       &       &       &       &       &       &       &       &  \\
    \midrule
          & \multicolumn{4}{c}{$\rho_1$}    &       & \multicolumn{4}{c}{$\rho_2$} \\
\cmidrule{2-5}\cmidrule{7-10}          & MBIAS  & RMSE  & MED   & MAD   &       & MBIAS  & RMSE  & MED   & MAD \\
\cmidrule{2-5}\cmidrule{7-10}    $N=1000$ & 0.027 & 0.320 & -0.006 & 0.198 &       & 0.031 & 0.318 & 0.000 & 0.203 \\
    $N=2500$ & 0.012 & 0.172 & 0.011 & 0.112 &       & 0.009 & 0.167 & 0.012 & 0.114 \\
    $N=5000$ & 0.002 & 0.118 & -0.002 & 0.080 &       & 0.000 & 0.126 & -0.006 & 0.086 \\
    \bottomrule
    \end{tabular}%
  
\end{table}%

% Table generated by Excel2LaTeX from sheet 'D2-MS'
\begin{table}[H]
  \renewcommand\thetable{4A} \centering\centering
  \caption{Design 4, Performance of MS Estimators}\label{tab:4A}
    \begin{tabular}{lccccccccc}
    \toprule
          & \multicolumn{4}{c}{$\beta$}      &       & \multicolumn{4}{c}{$\gamma$} \\
\cmidrule{2-5}\cmidrule{7-10}          & MBIAS  & RMSE  & MED   & MAD   &       & MBIAS  & RMSE  & MED   & MAD \\
\cmidrule{2-5}\cmidrule{7-10}    $N=1000$ & -0.014 & 0.353 & -0.012 & 0.241 &       & -0.003 & 0.411 & -0.034 & 0.280 \\
    $N=2500$ & 0.000 & 0.318 & -0.007 & 0.236 &       & 0.014 & 0.385 & -0.006 & 0.260 \\
    $N=5000$ & 0.016 & 0.274 & 0.017 & 0.232 &       & 0.027 & 0.354 & 0.002 & 0.246 \\
    \bottomrule
    \end{tabular}%
  
\end{table}%

% Table generated by Excel2LaTeX from sheet 'D2-MS'
\begin{table}[H]
  \renewcommand\thetable{4B} \centering\centering
  \caption{Design 4, Performance of Numerical Bootstrap}\label{tab:4B}
    \begin{tabular}{lccccc}
    \toprule
          & \multicolumn{2}{c}{$\beta$} &       & \multicolumn{2}{c}{$\gamma$} \\
\cmidrule{2-3}\cmidrule{5-6}          & COVERAGE & LENGTH &       & COVERAGE & LENGTH \\
\cmidrule{2-3}\cmidrule{5-6}    $N=1000$ & 0.976 & 2.159 &       & 0.972 & 2.720 \\
    $N=2500$ & 0.959 & 1.856 &       & 0.969 & 2.383 \\
    $N=5000$ & 0.956 & 1.642 &       & 0.962 & 2.115 \\
    \bottomrule
    \end{tabular}%
  
\end{table}%

% Table generated by Excel2LaTeX from sheet 'D2-LAD'
\begin{table}[H]
  \renewcommand\thetable{4C} \centering\centering
  \caption{Design 4, Performance of LAD Estimators}\label{tab:4C}
    \begin{tabular}{lccccccccc}
    \toprule
                    & \multicolumn{4}{c}{$\beta$}      &       & \multicolumn{4}{c}{$\gamma$} \\
\cmidrule{2-5}\cmidrule{7-10}          & MBIAS  & RMSE  & MED   & MAD   &       & MBIAS  & RMSE  & MED   & MAD \\
\cmidrule{2-5}\cmidrule{7-10}    $N=1000$ & 0.060 & 0.337 & 0.045 & 0.215 &       & 0.132 & 0.666 & 0.025 & 0.406 \\
    $N=2500$ & 0.011 & 0.172 & 0.000 & 0.113 &       & 0.068 & 0.404 & 0.026 & 0.246 \\
    $N=5000$ & 0.002 & 0.128 & -0.001 & 0.087 &       & 0.052 & 0.340 & 0.006 & 0.209 \\
    \midrule
          &       &       &       &       &       &       &       &       &  \\
    \midrule
          & \multicolumn{4}{c}{$\rho_1$}    &       & \multicolumn{4}{c}{$\rho_2$} \\
\cmidrule{2-5}\cmidrule{7-10}          & MBIAS  & RMSE  & MED   & MAD   &       & MBIAS  & RMSE  & MED   & MAD \\
\cmidrule{2-5}\cmidrule{7-10}    $N=1000$ & 0.069 & 0.328 & 0.029 & 0.197 &       & 0.044 & 0.314 & 0.011 & 0.204 \\
    $N=2500$ & -0.015 & 0.163 & -0.024 & 0.110 &       & -0.009 & 0.167 & -0.019 & 0.116 \\
    $N=5000$ & -0.028 & 0.124 & -0.035 & 0.087 &       & -0.024 & 0.128 & -0.024 & 0.087 \\
    \bottomrule
    \end{tabular}%
 
\end{table}

\section{Proofs for the Cross-Sectional Model}\label{appendixA}

\subsection{Proof of Theorem \ref{T:crossidentify}}

\begin{proof}
[Proof of Theorem \ref{T:crossidentify}]It suffices to show the identification of $\beta $ based on (\ref{crossmi1})
as the same arguments can be applied to the identification of $\beta $ and $%
\gamma $ based on similar moment inequalities. Denote $\Omega _{im}=\{X_{i2}=X_{m2},W_{i}=W_{m}\}$. By Assumption C1, the
monotonic relation (\ref{crossmi1}) implies that for all $d\in \mathcal{D}$
with $d_{1}=1$, $\beta $ maximizes
\begin{equation*}
Q_{1}(b)\equiv \mathbb{E}%
[(P(Y_{i(1,d_{2})}=1|Z_{i})-P(Y_{m(1,d_{2})}=1|Z_{m}))\text{sgn}%
(X_{im1}^{\prime }b)|\Omega _{im}]
\end{equation*}%
for each pair of $(i,m)$. To show that $\beta $ attains a unique maximum,
suppose that there is a $b\in \mathcal{B}$ such that $Q_{1}(b)=Q_{1}(\beta )$%
. In what follows, we show that $b=\beta $ must hold. To this end, we write $P[%
\text{sgn}(X_{im1}^{\prime }b)\neq \text{sgn}(X_{im1}^{\prime }\beta
)|\Omega _{im}]=P[(\tilde{X}_{im1}^{\prime }\beta<-X_{im1}^{(1)}<\tilde{X}_{im1}^{\prime }b)\cup (%
\tilde{X}_{im1}^{\prime }b<-X_{im1}^{(1)}<\tilde{X}%
_{im1}^{\prime }\beta)|\Omega _{im}]$. Note that when $b\neq\beta$,  $P(\tilde{X}%
_{im1}^{\prime }\beta\neq\tilde{X}_{im1}^{\prime }b|\Omega _{im})>0$ under Assumption C2(ii). Then by Assumption C2(i), we have $P[(\tilde{X}_{im1}^{\prime }\beta<-X_{im1}^{(1)}<\tilde{X}_{im1}^{\prime }b)\cup (%
\tilde{X}_{im1}^{\prime }b<-X_{im1}^{(1)}<\tilde{X}_{im1}^{\prime }\beta)|\Omega _{im}]>0$. Combining with (\ref{crossmi1}), this result implies that the event of $X_{im1}^{\prime }b$ and $P(Y_{i(1,d_{2})}=1|Z_{i})-P(Y_{m(1,d_{2})}=1|Z_{m})$ having different signs occurs with strictly positive probability, and hence $Q_{1}(b)<Q_{1}(\beta )$, a contradiction. Since $b$ is arbitrary, we conclude that $Q_{1}(b)\leq Q_{1}(\beta )$ and the equality holds only when $b=\beta$. This completes the proof.
\end{proof}

\subsection{Proof of Theorem \ref{T:crossAsymp}}\label{appendix_asym}
The proof of Theorem \ref{T:crossAsymp} involves utilizing several technical lemmas (i.e., Lemmas \ref{lemmaA1}--\ref{lemmaA7}). We will present these lemmas prior to the main proof. Before delving into the technical details, we will first introduce some new notations, state Assumption C5 that were omitted from the main text,  and provide an outline of the proof process. In what follows, we assume that all regressors
are continuous to streamline the exposition.\footnote{This is a purely technical assumption, which is made to facilitate the derivation of the asymptotic variance of the proposed estimators. With discrete covariates, the asymptotic results summarized in Theorem \ref{T:crossAsymp} still valid but with more complex expressions.}

The following notations will be used for convenience: For all $d\in\mathcal{D}$ and $j=1,2$,
\begin{itemize}
\item[-] $Y_{imd}\equiv Y_{id}-Y_{md}$.

\item[-] $f_{X_{imj},W_{im}}(\cdot )$ and $f_{V_{im}\left( \beta \right)
}(\cdot )$ denote the PDF of the random vectors $(X_{imj},W_{im})$ and $V_{im}(\beta )$, respectively. $%
f_{X_{2},W|X_{1}}(\cdot )$ ($f_{X_{1},W|X_{2}}(\cdot )$) denotes the PDF of $%
(X_{2},W)$ ($(X_{1},W)$) conditional on $X_{1}$ ($X_{2}$). $f_{V(\beta )}(\cdot )$ ($%
f_{V(\beta )|W}(\cdot )$) is the PDF of $V(\beta )$ (conditional on $W$%
). %$f_{V(\beta)}(\cdot)$ denotes the PDF of $V(\beta)$.

\item[-] For any function $g,\ \nabla g(v)$ and $\nabla ^{2}g(v)$ denote the
gradient and Hessian matrix for $g(\cdot )$ evaluated at $v$, respectively.

\item[-] $V_{i}\equiv V_{i}(\beta)$, $\hat{V}_{i}\equiv V_{i}(\hat{\beta})$,
$V_{im}\equiv V_{i}-V_{m}$, and $\hat{V}_{im}\equiv\hat{V}_{i}-\hat{V}_{m}$.

\item[-] For realized values, $v_{i}\equiv v_{i}(\beta)$, $\hat{v}_{i}\equiv
v_{i}(\hat{\beta})$, $v_{im}\equiv v_{i}-v_{m}$, and $\hat{v}_{im}\equiv
\hat{v}_{i}-\hat{v}_{m}$.

\item[-] $B(v_{i},v_{m},w_{i},w_{m})\equiv\mathbb{E}%
[Y_{im(1,1)}|V_{i}=v_{i},V_{m}=v_{m},W_{i}=w_{i},W_{m}=w_{m}]$.

\item[-] $S_{im}(r)\equiv \text{sgn}(w_{im}^{\prime }r)-\text{sgn}%
(w_{im}^{\prime }\gamma )$.

\item[-] Define
\begin{equation}
\varrho _{i}(b)\equiv -\sum_{d\in \mathcal{D}}\{\varrho _{i21d}(b)\cdot
(-1)^{d_{1}}+\varrho _{i12d}(b)\cdot (-1)^{d_{2}}\},  \label{EQ:rhoi}
\end{equation}
where
\begin{equation*}
\varrho _{ijld}(b)\equiv \mathbb{E}\left[ Y_{imd}\left[ \text{sgn}%
(X_{iml}^{\prime }b)-\text{sgn}(X_{iml}^{\prime }\beta )\right]
|Z_{i},X_{mj}=X_{ij},W_{m}=W_{i}\right]
\end{equation*}%
for $j,l=1,2$ with $j\neq l$.

\item[-] Define
\begin{equation}
\tau _{i}(r) \equiv \mathbb{E}[Y_{im(1,1)}\left[ \text{sgn}(W_{im}^{\prime
}r)-\text{sgn}(W_{im}^{\prime }\gamma )\right] |V_{m}\left( \beta \right)
=V_{i}\left( \beta \right) ,W_{i}]. \label{EQ:Tau}
\end{equation}
\item[-] Define 
\begin{align}
&\mu (v_{1},v_{2},r)  \label{EQ:Mu} \\
\equiv &\mathbb{E}\left[ \left. Y_{im(1,1)}\left[ \text{sgn}(W_{im}^{\prime
}r)-\text{sgn}(W_{im}^{\prime }\gamma )\right] \left(
\begin{array}{c}
X_{im1}^{\prime } \\
X_{im2}^{\prime }%
\end{array}%
\right) \right\vert V_{i}\left( \beta \right) =v_{1},V_{m}\left( \beta
\right) =v_{2}\right] f_{V\left( \beta \right) }\left( v_{1}\right). \nonumber
\end{align}
Let $\nabla _{1}\mu (v_{1},v_{2},r)$ denote the partial derivative
of $\mu (v_{1},v_{2},r)$ w.r.t. its first argument. Denote $\nabla
_{1k}^{2}\mu (v_{1},v_{2},r)$ as the partial derivative of $\nabla _{1}\mu
(v_{1},v_{2},r)$ w.r.t. its $k$-th argument and $\nabla _{33}^{2}\nabla
_{1}\mu (v_{1},v_{2},r)$ as the Hessian matrix of $\nabla _{1}\mu
(v_{1},v_{2},r)$ w.r.t. its third argument.

\item[-] $K_{h_{N},\beta}(\cdot)\equiv h_{N}^{k_{1}+k_{2}}\mathcal{K}_{h_{N}}(\cdot )$ and $K_{\sigma_{N},\gamma}(\cdot)\equiv \sigma _{N}^{2}\mathcal{K}_{\sigma _{N}}(\cdot )$. In proofs, we also write $K_{\beta}(\cdot/h_{N})$ and $K_{\gamma}(\cdot/\sigma_{N})$ for $K_{h_{N},\beta}(\cdot)$ and $K_{\sigma_{N},\gamma}(\cdot)$, respectively, when we need to apply changes of variables.
\end{itemize}

The following regularity conditions that were omitted in the main text will be used in proving Lemmas \ref{lemmaA1}--\ref{lemmaA7} and Theorem \ref{T:crossAsymp}: 
\begin{itemize}
\item[\textbf{C5}] (i) $f_{V_{im}\left( \beta \right) }(\cdot )$ and $%
f_{X_{imj},W_{im}}(\cdot )$ for $j=1,2$ are bounded from above on their
supports, strictly positive in a neighborhood of zero, and twice
continuously differentiable with bounded second derivatives, (ii) For all $%
d\in \mathcal{D}$, $b\in \mathcal{B}$, and $r\in \mathcal{R}$, $\mathbb{E}%
[Y_{imd}[\text{sgn}(X_{im1}^{\prime }b)-\text{sgn}(X_{im1}^{\prime }\beta
)]|X_{im2}=\cdot ,W_{im}=\cdot ]$, $\mathbb{E}[Y_{imd}[\text{sgn}%
(X_{im2}^{\prime }b)-\text{sgn}(X_{im2}^{\prime }\beta )]|X_{im1}=\cdot
,W_{im}=\cdot ]$, and $\mathbb{E}[Y_{im(1,1)}[\text{sgn}(W_{im}^{\prime }r)-%
\text{sgn}(W_{im}^{\prime }\gamma )]|V_{im}\left( \beta \right) =\cdot ]$
are continuously differentiable with bounded first derivatives, (iii) $%
\mathbb{E}[Y_{imd}|Z_{i}=\cdot ,Z_{m}=\cdot ]$ is $\kappa _{1\beta }^{\text{%
th}}$ continuously differentiable with bounded $\kappa _{1\beta }^{\text{th}%
} $ derivatives. $f_{X_{2},W|X_{1}}(\cdot )$ ($f_{X_{1},W|X_{2}}(\cdot )$)
is $\kappa _{2\beta }^{\text{th}}$ continuously differentiable with bounded $%
\kappa _{2\beta }^{\text{th}}$ derivatives. Denote $\kappa _{\beta }=\kappa
_{1\beta }+\kappa _{2\beta }$. $\kappa _{\beta }$ is an even integer greater
than $k_{1}+k_{2}$, (iv) $\mathbb{E}[Y_{im(1,1)}|V_{i}=\cdot ,V_{m}=\cdot
,W_{i}=\cdot ,W_{m}=\cdot ]$ is $\kappa _{1\gamma }^{\text{th}}$
continuously differentiable with bounded $\kappa _{1\gamma }^{\text{th}}$
derivatives, and $f_{V(\beta )|W}(\cdot )$ is $\kappa _{2\gamma }^{\text{th}%
} $ continuously differentiable with bounded $\kappa _{2\gamma }^{\text{th}}$
derivatives. Denote $\kappa _{\gamma }=\kappa _{1\gamma }+\kappa _{2\gamma }$%
. $\kappa _{\gamma }$ is an even integer greater than 2, and (v) $\mathbb{E}%
[\Vert X_{imj}\Vert ^{2}]<\infty $ for all $j=1,2,$ and $\mathbb{E}[\Vert
W_{im}\Vert ^{2}]<\infty $. All derivatives in this assumption are with
respect to $\cdot $.
\end{itemize}
The boundedness and smoothness restrictions under Assumption C5 are needed for proving the uniform convergence of the criterion functions to their population analogues. 

\begin{comment}
Before the main proofs, we first
present supporting lemmas, Lemmas \ref{lemmaA1} -- \ref{lemmaA7}, for the
proof of Theorem \ref{T:crossAsymp}.  The proofs of the support lemmas are relegated to Appendix
\ref{appendixE}.
\end{comment}

In what follows, we will only show the asymptotics of $\hat{\gamma}$. The
proof for $\hat {\beta}$ is omitted since one can derive the asymptotics of $%
\hat{\beta}$ by repeating the proof process for $\hat{\gamma}$ but skipping
the step handling the plug-in first step estimates (i.e., Lemma \ref{lemmaA7}%
). Throughout this section, we take it as a given that $\hat{\beta}$ is $%
\sqrt{N}$-consistent and asymptotically normal.

%because the asymptotic of $\hat{\beta}$ is just a special case $\hat{\gamma}$ (without a first step estimator).

\begin{comment}
\textbf{To obtain }$\psi_{i,\beta}$\textbf{, repeat the entire process for
}$\hat{\beta}$\textbf{ but the omit the part for the effects of the first step
estimator. Specifically, }$\psi_{i,\beta}$ is the analogous part of
$\varPsi_{N,1}$ for $\hat{\gamma}$\textbf{. Then}%
\[
\psi_{i,\beta}=
\]
\textbf{Thus}%
\[
\sqrt{N}\left(  \hat{\beta}-\beta\right)  \overset{d}{\rightarrow}.
\]
\end{comment}

For ease of illustration, with a bit of abuse of notation, we will work with
criterion function
\begin{equation*}
\hat{\mathcal{L}}_{N}^{K}(r)\equiv\frac{1}{\sigma_{N}^{2}N(N-1)}\sum_{i\neq
m}K_{\sigma_{N},\gamma}(V_{im}(\hat{\beta}))h_{im}(r),
\end{equation*}
where $K_{\sigma_{N},\gamma}(\cdot)\equiv\sigma_{N}^{2}\mathcal{K}_{\sigma
_{N}}(\cdot)$ and $h_{im}(r)\equiv Y_{im(1,1)}[\text{sgn}(W_{im}^{\prime }r)-%
\text{sgn}(W_{im}^{\prime}\gamma)]$. Note that here we subtract the term $\text{sgn}(W_{im}^{\prime}\gamma)$ from
criterion function (\ref{crossobjrK}), analogous to \cite{Sherman1993}.
Doing this does not affect the value of the estimator, and will facilitate
the proofs that follow. Besides, we define
\begin{equation*}
\mathcal{L}_{N}^{K}(r)=\frac{1}{\sigma_{N}^{2}N(N-1)}\sum_{i\neq
m}K_{\sigma_{N},\gamma}(V_{im}(\beta))h_{im}(r)
\end{equation*}
and
\begin{equation*}
\mathcal{L}(r)\equiv f_{V_{im}(\beta)}(0)\mathbb{E}[h_{im}(r)|V_{im}(%
\beta)=0].
\end{equation*}

We establish the consistency of $\hat{\gamma }$ in Lemmas \ref{lemmaA1}--\ref%
{lemmaA3} by applying Theorem 2.1 in \cite{NeweyMcFadden1994}. The key step
is to show the uniform convergence of $\hat{\mathcal{L}}_{N}^{K}(r)$ for $%
r\in\mathcal{R}$. To this end, we bound the differences among $\hat {%
\mathcal{L}}_{N}^{K}(r)$, $\mathcal{L}_{N}^{K}(r)$, and $\mathcal{L}(r)$ in
Lemmas \ref{lemmaA1} and \ref{lemmaA2}.
%And the condition holds by the continuity of $\mathcal{L}(r).$

The next step is to show the asymptotic normality of $\hat{\gamma}$ by
applying Theorem 2 of \cite{Sherman1994AoS}. Sufficient conditions for this
theorem are that $\hat{\gamma}-\gamma=O_{p}(N^{-1/2})$ and uniformly over a
neighborhood of $\gamma$ with a radius proportional to $N^{-1/2}$,
\begin{equation}
\hat{\mathcal{L}}_{N}^{K}(r)=\frac{1}{2}(r-\gamma)^{\prime}\mathbb{V}%
_{\gamma }(r-\gamma)+\frac{1}{\sqrt{N}}(r-\gamma)^{\prime}\varPsi%
_{N}+o_{p}(N^{-1}),  \label{eq:A2}
\end{equation}
where $\mathbb{V}_{\gamma}$ is a negative definite matrix and $\varPsi_{N}$
is asymptotically normal, with mean zero and variance $\mathbb{V}_{\varPsi}$%
. To verify equation (\ref{eq:A2}), we first show that uniformly over a
neighborhood of $\gamma$, $\mathcal{R}_{N}\equiv\{r\in\mathcal{R}|\Vert
r-\gamma\Vert\leq\delta_{N}\}$ with $\{\delta_{N}\}=O(N^{-\delta})$ for some
$0<\delta\leq1/2$,
\begin{equation}
\hat{\mathcal{L}}_{N}^{K}(r)=\frac{1}{2}(r-\gamma)^{\prime}\mathbb{V}%
_{\gamma }(r-\gamma)+\frac{1}{\sqrt{N}}(r-\gamma)^{\prime}\varPsi%
_{N}+o_{p}(\Vert r-\gamma\Vert^{2})+O_{p}(\varepsilon_{N}),  \label{eq:A3}
\end{equation}
which is the task of Lemmas \ref{lemmaA4}--\ref{lemmaA7}. The $%
O_{p}(\varepsilon_{N})$ term in equation (\ref{eq:A3}) can be shown to be of
order $o_{p}(N^{-1})$ uniformly over $\mathcal{R}_{N}$. Further, by Theorem
1 of \cite{Sherman1994ET}, equation (\ref{eq:A3}) implies $\hat{\gamma}%
-\gamma=O_{p}(\sqrt{\varepsilon_{N}}\vee N^{-1/2})=O_{p}(N^{-1/2})$. This
rate result, together with equation (\ref{eq:A3}), further verifies equation
(\ref{eq:A2}).

To obtain equation (\ref{eq:A3}), we will work with the following expansion
\begin{align}
\hat{\mathcal{L}}_{N}^{K}(r) & =\mathcal{L}_{N}^{K}(r)+\Delta\mathcal{L}%
_{N}^{K}(r)+R_{N}  \notag \\
& =\mathbb{E}[\mathcal{L}_{N}^{K}(r)]+\frac{2}{N}\sum_{i}\left\{ \mathbb{E}[%
\mathcal{L}_{N}^{K}(r)|Z_{i}]-\mathbb{E}[\mathcal{L}_{N}^{K}(r)]\right\}
+\rho_{N}(r)+\Delta\mathcal{L}_{N}^{K}(r)+R_{N},  \label{eq:A4}
\end{align}
where
\begin{equation}
\Delta\mathcal{L}_{N}^{K}(r)\equiv\frac{1}{\sigma_{N}^{3}N(N-1)}\sum_{i\neq
m}\nabla K_{\sigma_{N},\gamma}(V_{im}(\beta))^{\prime}(V_{im}(\hat{\beta }%
)-V_{im}\left( \beta\right) )h_{im}(r),  \label{eq:A5}
\end{equation}%
\begin{equation*}
\rho_{N}(r)=\mathcal{L}_{N}^{K}(r)-\frac{2}{N}\sum_{i}\mathbb{E}[\mathcal{L}%
_{N}^{K}(r)|Z_{i}]+\mathbb{E}[\mathcal{L}_{N}^{K}(r)],
\end{equation*}
and $R_{N}$ denotes the remainder term in the expansion of higher order (as $%
\sqrt{N}\sigma_{N}\rightarrow\infty$). The first three terms in (\ref{eq:A4}%
) are the U-statistic decomposition for $\mathcal{L}_{N}^{K}(r)$ (see
e.g., \cite{Sherman1993} and \cite{Serfling2009}). Lemmas \ref{lemmaA4}--\ref%
{lemmaA6} establish asymptotic properties of these three terms,
respectively. A linear representation for the fourth term in (\ref{eq:A4})
is derived in Lemma \ref{lemmaA7}.

Here we present Lemmas \ref{lemmaA1}--\ref{lemmaA7}, whose proofs are
relegated to Appendix \hyperref[appendixE]{F}. 

\begin{lemma}
\label{lemmaA1} Under Assumptions C1--C7, $\sup_{r\in \mathcal{R}}|\hat{%
\mathcal{L}}_{N}^{K}(r)-\mathcal{L}_{N}^{K}(r)|=o_{p}(1)$.
\end{lemma}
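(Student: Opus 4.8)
The plan is to show that replacing $\beta$ by the (given) $\sqrt{N}$-consistent $\hat{\beta}$ inside the kernel perturbs the criterion only negligibly, uniformly over $r\in\mathcal{R}$. Since $\text{sgn}(\cdot)\in\{-1,1\}$ a.s.\ under Assumption C2(iii) and $Y_{im(1,1)}\in\{-1,0,1\}$, we have $|h_{im}(r)|\le 2$ for every $r$, so
\[
\sup_{r\in\mathcal{R}}\bigl|\hat{\mathcal{L}}_{N}^{K}(r)-\mathcal{L}_{N}^{K}(r)\bigr|\le\frac{2}{\sigma_{N}^{2}N(N-1)}\sum_{i\neq m}\bigl|K_{\sigma_{N},\gamma}(V_{im}(\hat{\beta}))-K_{\sigma_{N},\gamma}(V_{im}(\beta))\bigr|.
\]
Using $V_{im}(b)=(X_{im1}^{\prime}b,X_{im2}^{\prime}b)$ and the differentiability of $K$ (Assumption C6), a Taylor expansion of $b\mapsto K_{\sigma_{N},\gamma}(V_{im}(b))=K(X_{im1}^{\prime}b/\sigma_{N})K(X_{im2}^{\prime}b/\sigma_{N})$ about $b=\beta$ gives $\hat{\mathcal{L}}_{N}^{K}(r)-\mathcal{L}_{N}^{K}(r)=\Delta\mathcal{L}_{N}^{K}(r)+R_{N}(r)$, where $\Delta\mathcal{L}_{N}^{K}(r)$ is the first-order term in (\ref{eq:A5}) and $R_{N}(r)$ is the quadratic remainder, evaluated at an intermediate point $\bar{\beta}_{im}$ on the segment from $\beta$ to $\hat{\beta}$. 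It then suffices to show $\sup_{r}|\Delta\mathcal{L}_{N}^{K}(r)|=o_{p}(1)$ and $\sup_{r}|R_{N}(r)|=o_{p}(1)$.

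For the linear term I would write $\Delta\mathcal{L}_{N}^{K}(r)=(\hat{\beta}-\beta)^{\prime}G_{N}(r)$, where $G_{N}(r)$ is a second-order $U$-statistic whose kernel is $\sigma_{N}^{-3}$ times $\nabla K_{\sigma_{N},\gamma}(V_{im}(\beta))$ paired with $X_{im1}$ (resp.\ $X_{im2}$) and $h_{im}(r)$. The crude bound $\|\nabla K_{\sigma_{N},\gamma}\|=O(1)$ only gives $\Delta\mathcal{L}_{N}^{K}(r)=O_{p}(N^{-1/2}\sigma_{N}^{-3})$, which need not vanish under Assumption C7; one must instead exploit that $\nabla K_{\sigma_{N},\gamma}(V_{im}(\beta))$ is concentrated on $\{|X_{im1}^{\prime}\beta|\lesssim\sigma_{N},\,|X_{im2}^{\prime}\beta|\lesssim\sigma_{N}\}$, a set of probability $O(\sigma_{N}^{2})$. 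Concretely, the change of variables $u_{j}=X_{imj}^{\prime}\beta/\sigma_{N}$ (legitimate since $\beta^{(1)}=1$), together with the smoothness and boundedness conditions in Assumption C5, produces a Jacobian factor $\sigma_{N}^{2}$, so that $\mathbb{E}[G_{N}(r)]=O(\sigma_{N}^{-1})$ uniformly in $r$, the per-pair second moment is $O(\sigma_{N}^{-4})$, and the H\'ajek projection is $O(\sigma_{N}^{-1})$; hence $\mathrm{Var}(G_{N,k}(r))=O((N\sigma_{N}^{2})^{-1}+(N\sigma_{N}^{2})^{-2})=o(1)$ by Assumption C7. Uniformity over $r$ of the fluctuation follows from a maximal inequality for $U$-processes, using that $\{w\mapsto\text{sgn}(w^{\prime}r):r\in\mathcal{R}\}$ is a VC class with envelope growing only polynomially in $\sigma_{N}^{-1}$. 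Combining, $\sup_{r}\|G_{N}(r)\|=O_{p}(\sigma_{N}^{-1})$, so $\sup_{r}|\Delta\mathcal{L}_{N}^{K}(r)|\le\|\hat{\beta}-\beta\|\,\sup_{r}\|G_{N}(r)\|=O_{p}(N^{-1/2}\sigma_{N}^{-1})=o_{p}(1)$, since $\sqrt{N}\sigma_{N}\to\infty$ by Assumption C7.

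For the remainder a crude bound suffices. Using the smoothness of $K$ (Assumption C6; in particular the higher-order Gaussian kernels used throughout are infinitely differentiable with bounded derivatives), $\|\nabla_{b}^{2}K_{\sigma_{N},\gamma}(V_{im}(\bar{\beta}_{im}))\|\le C\sigma_{N}^{-2}(\|X_{im1}\|+\|X_{im2}\|)^{2}$, and together with $\mathbb{E}[(\|X_{im1}\|+\|X_{im2}\|)^{2}]<\infty$ from Assumption C5(v) and the law of large numbers for $U$-statistics this gives $\sup_{r}|R_{N}(r)|\le C\sigma_{N}^{-4}\|\hat{\beta}-\beta\|^{2}\,O_{p}(1)=O_{p}\bigl((\sqrt{N}\sigma_{N}^{2})^{-2}\bigr)=o_{p}(1)$ by Assumption C7. (Alternatively, one can keep a first-order expansion and control the difference $\nabla K_{\sigma_{N},\gamma}(V_{im}(\bar{\beta}_{im}))-\nabla K_{\sigma_{N},\gamma}(V_{im}(\beta))$ directly, using that $\bar{\beta}_{im}$ lies in an $o_{p}(\sigma_{N})$-neighborhood of $\beta$, which avoids any use of $\nabla^{2}K$.) Adding the two displays yields $\sup_{r\in\mathcal{R}}|\hat{\mathcal{L}}_{N}^{K}(r)-\mathcal{L}_{N}^{K}(r)|=o_{p}(1)$; combined with the companion bound for $\mathcal{L}_{N}^{K}(r)-\mathcal{L}(r)$ this delivers the uniform convergence of $\hat{\mathcal{L}}_{N}^{K}$ needed to conclude $\hat{\gamma}\overset{p}{\rightarrow}\gamma$ via a standard extremum-estimator argument (Theorem 2.1 of \cite{NeweyMcFadden1994}).

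The main obstacle is the linear term: one must demonstrate that the $\sigma_{N}^{-1}$ blow-up created by differentiating the kernel is offset by the $O(\sigma_{N}^{2})$ probability that $V_{im}(\beta)$ falls in the effective support of $\nabla K_{\sigma_{N},\gamma}$, bringing the contribution of the plug-in $\hat{\beta}$ down to $O_{p}(N^{-1/2}\sigma_{N}^{-1})$ rather than the naive $O_{p}(N^{-1/2}\sigma_{N}^{-3})$ — and carrying this through uniformly over $r$ (hence the need for a $U$-process maximal inequality with a kernel-dependent envelope) while handling the random evaluation point $\bar{\beta}_{im}$. Everything else is a routine change of variables plus the rate conditions in Assumption C7.
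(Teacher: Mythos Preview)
Your proposal is correct but takes a more elaborate route than the paper. Both start from the same observation that $|h_{im}(r)|\le 2$, which you write down as
\[
\sup_{r\in\mathcal{R}}\bigl|\hat{\mathcal{L}}_{N}^{K}(r)-\mathcal{L}_{N}^{K}(r)\bigr|\le\frac{2}{\sigma_{N}^{2}N(N-1)}\sum_{i\neq m}\bigl|K_{\sigma_{N},\gamma}(V_{im}(\hat{\beta}))-K_{\sigma_{N},\gamma}(V_{im}(\beta))\bigr|.
\]
The crucial point is that the right-hand side no longer depends on $r$. The paper exploits this immediately: it Taylor-expands the kernel \emph{difference} on the right, then controls the resulting $r$-free $U$-statistic by noting that the nonnegative $U$-statistic $\frac{1}{N(N-1)}\sum_{i\ne m}|\nabla_j K_\gamma(V_{im}/\sigma_N)|\,\|X_{imj}\|$ has mean $O(\sigma_N^2)$ (change of variables), hence is $O_p(\sigma_N^2)$ by Markov, which multiplied by $\|\hat\beta-\beta\|/\sigma_N=O_p(N^{-1/2}\sigma_N^{-1})$ gives $O_p(N^{-1/2}\sigma_N)=o_p(\sigma_N^2)$; dividing through by $\sigma_N^2$ yields the claim. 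No uniformity argument in $r$ is needed at all.

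You instead abandon the $r$-free bound after stating it and work with the signed decomposition $\hat{\mathcal{L}}_N^K(r)-\mathcal{L}_N^K(r)=\Delta\mathcal{L}_N^K(r)+R_N(r)$, which retains the factor $h_{im}(r)$. This forces you to control $\sup_r\|G_N(r)\|$ via a $U$-process maximal inequality and VC-class arguments for $\{w\mapsto\text{sgn}(w'r)\}$. The machinery is valid and yields the same rate $O_p(N^{-1/2}\sigma_N^{-1})$ for the linear term, but it is unnecessary here: the finer analysis of $\Delta\mathcal{L}_N^K(r)$ that keeps track of $r$ is only needed later in Lemma~\ref{lemmaA7}, where the $r$-dependence contributes the $\varPsi_{N,2}$ term to the asymptotic distribution. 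For the coarse $o_p(1)$ statement of Lemma~\ref{lemmaA1}, the paper's route is shorter and avoids the empirical-process overhead entirely.
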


\begin{lemma}
\label{lemmaA2} Under Assumptions C1 and C5--C7, $\sup_{r\in \mathcal{R}%
}|\mathcal{L}_{N}^{K}(r)-\mathcal{L}(r)|=o_{p}(1)$.
\end{lemma}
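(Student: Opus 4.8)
The plan is to bound $\sup_{r\in\mathcal{R}}|\mathcal{L}_N^K(r)-\mathcal{L}(r)|$ by the deterministic bias $\sup_{r\in\mathcal{R}}|\mathbb{E}[\mathcal{L}_N^K(r)]-\mathcal{L}(r)|$ plus the stochastic fluctuation $\sup_{r\in\mathcal{R}}|\mathcal{L}_N^K(r)-\mathbb{E}[\mathcal{L}_N^K(r)]|$, and to show the first is $O(\sigma_N)$ and the second is $o_p(1)$. Throughout I would use that $\mathcal{L}_N^K(r)$ is a symmetric second-order U-statistic of the i.i.d.\ data (Assumption C1(i)) with $r$-indexed, $N$-dependent kernel $p_N(\cdot,\cdot;r)\equiv\mathcal{K}_{\sigma_N}(V_{im}(\beta))\,h_{im}(r)$, and that $\{h_{im}(r):r\in\mathcal{R}\}$ is a uniformly bounded Euclidean (VC) class: writing $h_{im}(r)=2Y_{im(1,1)}\big(1[W_{im}'r>0]-1[W_{im}'\gamma>0]\big)$, the half-spaces $\{1[W_{im}'r>0]\}$ form a VC class of fixed index, and multiplying by the $r$-free weight $\mathcal{K}_{\sigma_N}(V_{im}(\beta))$ preserves the Euclidean property with envelope $2\,|\mathcal{K}_{\sigma_N}(V_{im}(\beta))|$.

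For the bias, I would condition on $(V_{im}(\beta),W_{im})$ and set $g(v_1,v_2,r)\equiv\mathbb{E}[h_{im}(r)\mid V_{im}(\beta)=(v_1,v_2)]\,f_{V_{im}(\beta)}(v_1,v_2)$, so $\mathcal{L}(r)=g(0,0,r)$ and $\mathbb{E}[\mathcal{L}_N^K(r)]=\sigma_N^{-2}\int K(v_1/\sigma_N)K(v_2/\sigma_N)\,g(v_1,v_2,r)\,dv_1dv_2$. A change of variables $u_j=v_j/\sigma_N$ turns this into $\int K(u_1)K(u_2)\,g(\sigma_Nu_1,\sigma_Nu_2,r)\,du_1du_2$; by Assumption C5(i)--(ii) the map $g(\cdot,\cdot,r)$ is continuously differentiable with gradient bounded uniformly in $r$, so $|g(\sigma_Nu_1,\sigma_Nu_2,r)-g(0,0,r)|\le C\sigma_N(|u_1|+|u_2|)$; since $\int K(u)\,du=1$ and $\int|u|K(u)\,du<\infty$ (Assumption C6), this yields $\sup_{r}|\mathbb{E}[\mathcal{L}_N^K(r)]-\mathcal{L}(r)|=O(\sigma_N)\to0$ by Assumption C7(i). (Only first-order smoothness and $\sigma_N\to0$ are needed here; the higher-order-kernel part of C6 is irrelevant for this lemma and enters only in the later $\sqrt N$-rate bias removal.)

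For the fluctuation, I would use Hoeffding's decomposition, $\mathcal{L}_N^K(r)-\mathbb{E}[\mathcal{L}_N^K(r)]=\tfrac2N\sum_i\bar p_{1N}(Y_i,Z_i;r)+U_N^{(2)}(r)$, where $\bar p_{1N}(Y_i,Z_i;r)=\mathbb{E}[p_N\mid Y_i,Z_i]-\mathbb{E}[p_N]$ and $U_N^{(2)}(r)$ is completely degenerate. The projection $\mathbb{E}[\mathcal{K}_{\sigma_N}(V_{im}(\beta))h_{im}(r)\mid Y_i,Z_i]$ is bounded uniformly in $N$ and $r$ (it is a convolution of a bounded function against the locally bounded conditional density of $V_m(\beta)$ from C5(i), with $|h_{im}|\le2$), so the linear term is an empirical process over a uniformly bounded Euclidean class, and a maximal inequality for such classes gives $\sup_r|\tfrac2N\sum_i\bar p_{1N}(\cdot;r)|=o_p(1)$. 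For the degenerate term, a change of variables gives $\mathbb{E}[p_N(\cdot,\cdot;r)^2]=\mathbb{E}[\mathcal{K}_{\sigma_N}(V_{im}(\beta))^2h_{im}(r)^2]=O(\sigma_N^{-2})$ uniformly in $r$, hence $\mathrm{Var}(U_N^{(2)}(r))=O\big((N^2\sigma_N^2)^{-1}\big)$; Assumption C7(ii) ($\sqrt N\sigma_N^2\to\infty$, i.e.\ $\sigma_N^{-2}=o(N^{1/2})$) makes this $o(N^{-3/2})\to0$ pointwise, and the same Euclidean-class structure together with a maximal inequality for degenerate U-processes upgrades it to $\sup_r|U_N^{(2)}(r)|=o_p(1)$. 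Combining the three pieces proves the lemma.

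The main obstacle is the uniform-in-$r$ control of the degenerate part of an $N$-dependent-kernel U-process: one cannot argue via pointwise convergence plus continuity in $r$, since $r\mapsto h_{im}(r)$ is a step function, so one must invoke U-process maximal inequalities over VC/Euclidean classes and carefully track how the envelope of $\mathcal{K}_{\sigma_N}$, which diverges like $\sigma_N^{-2}$, is dominated by the $N^{-2}$ weight of the degenerate component — which is exactly the role played by Assumption C7(ii).
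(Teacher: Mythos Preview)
Your proposal is correct and follows the same overall split into a deterministic bias term and a stochastic fluctuation term; the bias argument is essentially identical to the paper's. The route you take for the fluctuation, however, is longer than necessary. You keep the diverging envelope $\mathcal{K}_{\sigma_N}(V_{im}(\beta))\asymp\sigma_N^{-2}$, which forces you to do a Hoeffding decomposition and treat the H\'ajek projection and the degenerate remainder separately, tracking how the envelope growth is beaten by the $N^{-2}$ weight via Assumption C7(ii). The paper avoids this entirely by working with $K_{\sigma_N,\gamma}(\cdot)=\sigma_N^{2}\mathcal{K}_{\sigma_N}(\cdot)$, so that the class $\{K_{\gamma}(v_{im}/\sigma)h_{im}(r):r\in\mathcal{R},\,\sigma>0\}$ has a \emph{constant} envelope $2\sup_v|K_\gamma(v)|$; the $N$-indexed class is then a subclass of a fixed Euclidean class, and a single application of Sherman's U-process maximal inequality (Corollary~4 in Sherman, 1994, \emph{Ann.\ Statist.}) gives $\sup_{r}|\mathcal{L}_N^K(r)-\mathbb{E}\mathcal{L}_N^K(r)|=O_p(N^{-1}\sigma_N^{-2})=o_p(1)$ directly, with no Hoeffding step. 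Your approach buys a more transparent accounting of exactly where C7(ii) is used; the paper's buys a one-line proof of the fluctuation bound and sidesteps the delicate issue you flag at the end, namely controlling a degenerate U-process over an $N$-dependent class with diverging envelope.
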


\begin{lemma}
\label{lemmaA3} Under Assumptions C1--C7,, $\hat{\gamma}\overset{p}{%
\rightarrow }\gamma $.
\end{lemma}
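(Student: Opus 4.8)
The plan is to obtain consistency from the standard extremum-estimator argument, i.e., Theorem 2.1 of \cite{NeweyMcFadden1994}, whose hypotheses are: (i) the parameter set $\mathcal{R}$ is compact; (ii) the limit criterion $\mathcal{L}(r)$ is continuous on $\mathcal{R}$; (iii) $\mathcal{L}(r)$ is uniquely maximized at $\gamma$; and (iv) $\hat{\mathcal{L}}_{N}^{K}$ converges uniformly in probability to $\mathcal{L}$ over $\mathcal{R}$. Item (i) is immediate from Assumption C3. Item (iv), the substantive step, follows by the triangle inequality
\begin{equation*}
\sup_{r\in\mathcal{R}}\bigl|\hat{\mathcal{L}}_{N}^{K}(r)-\mathcal{L}(r)\bigr|\leq\sup_{r\in\mathcal{R}}\bigl|\hat{\mathcal{L}}_{N}^{K}(r)-\mathcal{L}_{N}^{K}(r)\bigr|+\sup_{r\in\mathcal{R}}\bigl|\mathcal{L}_{N}^{K}(r)-\mathcal{L}(r)\bigr|,
\end{equation*}
where the first term on the right is $o_{p}(1)$ by Lemma \ref{lemmaA1} (which absorbs the error from replacing $\beta$ by the $\sqrt{N}$-consistent $\hat{\beta}$) and the second is $o_{p}(1)$ by Lemma \ref{lemmaA2}.

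For (ii), recall $\mathcal{L}(r)=f_{V_{im}(\beta)}(0)\,\mathbb{E}[h_{im}(r)\mid V_{im}(\beta)=0]$ with $h_{im}(r)=Y_{im(1,1)}[\text{sgn}(W_{im}^{\prime}r)-\text{sgn}(W_{im}^{\prime}\gamma)]$. Although $r\mapsto\text{sgn}(W_{im}^{\prime}r)$ is discontinuous pointwise, Assumption C2(iii) guarantees $P(W_{im}^{\prime}r=0\mid V_{im}(\beta)=0)=0$ for every $r\in\mathcal{R}$, so for fixed $r$ the map is continuous at $r$ almost surely; since $|h_{im}(r)|\leq2$, dominated convergence gives continuity of $\mathcal{L}$. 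For (iii), note $\mathcal{L}(\gamma)=0$. By the monotonicity relation (\ref{crossmi3}), conditional on $V_{i}(\beta)=V_{m}(\beta)$ the sign of $\mathbb{E}[Y_{im(1,1)}\mid Z_{i},Z_{m}]$ coincides with $\text{sgn}(W_{im}^{\prime}\gamma)$, hence $\mathbb{E}[h_{im}(r)\mid V_{im}(\beta)=0]\leq0$ for all $r$; and whenever $r\neq\gamma$, Assumptions C2(iii)--(iv) force $\text{sgn}(W_{im}^{\prime}r)\neq\text{sgn}(W_{im}^{\prime}\gamma)$ on a set of positive conditional probability, on which the contribution to the expectation is strictly negative, so together with $f_{V_{im}(\beta)}(0)>0$ we get $\mathcal{L}(r)<0=\mathcal{L}(\gamma)$. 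This is precisely the $\gamma$-analogue of the argument proving Theorem \ref{T:crossidentify}. With (i)--(iv) verified, Theorem 2.1 of \cite{NeweyMcFadden1994} yields $\hat{\gamma}\overset{p}{\rightarrow}\gamma$.

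The main obstacle is the identification step (iii): turning the ``positive probability of sign disagreement'' delivered by the rank and density conditions C2(iii)--(iv) into a \emph{strict} inequality for the kernel-smoothed limit $\mathcal{L}$ evaluated at $V_{im}(\beta)=0$, handling the non-smooth $\text{sgn}$ integrand with care as in the proof of Theorem \ref{T:crossidentify}. Items (i), (ii), and (iv) are routine once Lemmas \ref{lemmaA1} and \ref{lemmaA2} are in hand.
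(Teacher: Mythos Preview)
Your proof is correct and follows essentially the same approach as the paper: both invoke Theorem 2.1 of \cite{NeweyMcFadden1994}, verifying compactness (Assumption C3), uniform convergence via Lemmas \ref{lemmaA1}--\ref{lemmaA2}, continuity of $\mathcal{L}$, and unique maximization by the $\gamma$-analogue of the identification argument in Theorem \ref{T:crossidentify} together with $f_{V_{im}(\beta)}(0)>0$. The only cosmetic difference is in the continuity step: the paper writes $\mathbb{E}[h_{im}(r)\mid V_{im}=0]$ explicitly as an integral against the conditional density of $W_{im}^{(1)}$ and appeals to the absence of mass points, whereas you use dominated convergence directly---both arguments rest on the same absolute-continuity content of Assumption C2(iii).
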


\begin{lemma}
\label{lemmaA4} Under Assumptions C1--C7,, uniformly over $\mathcal{R}_{N}$,
we have
\begin{equation*}
\mathbb{E}[\mathcal{L}_{N}^{K}(r)]=\frac{1}{2}(r-\gamma )^{\prime }\mathbb{V}%
_{\gamma }(r-\gamma )+o(\Vert r-\gamma \Vert ^{2}),
\end{equation*}%
where $\mathbb{V}_{\gamma }\equiv \mathbb{E}[\nabla ^{2}\tau _{i}(\gamma )]$.
\end{lemma}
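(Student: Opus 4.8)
The plan is to reduce $\mathbb{E}[\mathcal{L}_N^K(r)]$ to an expectation over a single i.i.d.\ pair, strip off the kernel smoothing by a change of variables plus a higher-order Taylor expansion, and then Taylor-expand the resulting population object in $r$ around $\gamma$. First, since $\mathcal{L}_N^K(r)$ is a second-order $U$-statistic built from an i.i.d.\ sample (Assumption~C1), $\mathbb{E}[\mathcal{L}_N^K(r)]=\sigma_N^{-2}\mathbb{E}[K_{\sigma_N,\gamma}(V_{im}(\beta))h_{im}(r)]$ over one exchangeable pair $(i,m)$. Conditioning on $V_{im}(\beta)$ through its density $f_{V_{im}(\beta)}$ (Assumption~C5(i)) and substituting $v=\sigma_N u$ gives $\mathbb{E}[\mathcal{L}_N^K(r)]=\int K(u_1)K(u_2)\,g_N(\sigma_N u,r)\,\mathrm{d}u$, where $g_N(v,r)\equiv\mathbb{E}[h_{im}(r)\mid V_{im}(\beta)=v]\,f_{V_{im}(\beta)}(v)$. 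Because $h_{im}(\gamma)\equiv 0$ we have $g_N(\cdot,\gamma)\equiv 0$, so in particular $\mathbb{E}[\mathcal{L}_N^K(\gamma)]=0$ and the expansion around $\gamma$ carries no constant term.

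Next I would Taylor-expand $g_N(\sigma_N u,r)$ in its first argument around $v=0$ up to order $\kappa_\gamma$. By the order-$\kappa_\gamma$ moment conditions on $K$ in Assumption~C6 (all moments of $K$ up to $\kappa_\gamma-1$ vanish and $\kappa_\gamma>2$) every intermediate term integrates to zero, while the $\kappa_\gamma$-fold differentiability in Assumption~C5(iv) bounds the remainder; hence $\mathbb{E}[\mathcal{L}_N^K(r)]=g_N(0,r)+O(\sigma_N^{\kappa_\gamma}\sup_v\lVert\nabla_v^{\kappa_\gamma}g_N(v,r)\rVert)$, and $g_N(0,r)=\mathcal{L}(r)$ by definition. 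Since $g_N(\cdot,\gamma)\equiv 0$ and $g_N$ is—after integration against the conditional density, which smooths the indicator part of $h_{im}$—Lipschitz in $r$ together with all of its $v$-derivatives, the remainder is $O(\sigma_N^{\kappa_\gamma}\lVert r-\gamma\rVert)$, which by Assumption~C7(iii) ($\sqrt N\sigma_N^{\kappa_\gamma}\to 0$) is negligible uniformly over $\mathcal{R}_N$ (at the fastest admissible $\delta_N$ it is $o(N^{-1})$, and so is absorbed into the $O_p(\varepsilon_N)$ bookkeeping of (\ref{eq:A3})).

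It then remains to Taylor-expand $\mathcal{L}(r)$ around $\gamma$. Here $\mathcal{L}(\gamma)=0$; moreover the moment inequality (\ref{crossmi3}) restricted to the matched event $V_i(\beta)=V_m(\beta)$ shows that $\mathbb{E}[Y_{im(1,1)}\mid V_{im}(\beta)=0,W_i,W_m]$ has the same sign as $W_{im}'\gamma$, whence $\mathbb{E}[h_{im}(r)\mid V_{im}(\beta)=0]\le 0$ with equality at $r=\gamma$; thus $\gamma$ is an interior maximizer of $\mathcal{L}$ and, by the differentiability in Assumption~C5, $\nabla\mathcal{L}(\gamma)=0$. Consequently $\mathcal{L}(r)=\tfrac12(r-\gamma)'\nabla^2\mathcal{L}(\gamma)(r-\gamma)+o(\lVert r-\gamma\rVert^2)$ uniformly over $\mathcal{R}_N$, the $o(\cdot)$ being uniform by the boundedness/equicontinuity of the relevant $r$-derivatives implied by Assumption~C5(ii). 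Finally, rewriting $\mathcal{L}(r)=f_{V_{im}(\beta)}(0)\,\mathbb{E}[h_{im}(r)\mid V_{im}(\beta)=0]$ as an iterated expectation over $(V_i(\beta),W_i)$ identifies it with $\mathbb{E}[\tau_i(r)]$, and interchanging $\nabla^2$ with the expectation (justified by Assumption~C5) yields $\nabla^2\mathcal{L}(\gamma)=\mathbb{E}[\nabla^2\tau_i(\gamma)]\equiv\mathbb{V}_\gamma$. Collecting the pieces gives the claim.

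The main obstacle is the uniformity over the shrinking neighborhood $\mathcal{R}_N$: unlike a fixed-bandwidth $U$-statistic, the smoothed population criterion $\mathbb{E}[\mathcal{L}_N^K(\cdot)]$ does not have an exactly vanishing gradient at $\gamma$—only one of size $O(\sigma_N^{\kappa_\gamma})$—so one must check that this interacts harmlessly with the $\lVert r-\gamma\rVert^2$ scale, which is precisely what the higher-order kernel (Assumption~C6) together with the bandwidth rate $\sqrt N\sigma_N^{\kappa_\gamma}\to 0$ (Assumption~C7(iii)) buys. A secondary technical point is the identity $\mathcal{L}(r)=\mathbb{E}[\tau_i(r)]$, which relies on the paper's convention for conditioning on the probability-zero matching event $\{V_m(\beta)=V_i(\beta)\}$ (equivalently, on the density factor $f_{V(\beta)}$ produced by integrating the kernel against the second observation).
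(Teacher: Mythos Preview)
Your proposal is correct and follows the same approach as the paper: change of variables plus a $\kappa_\gamma$-th order Taylor expansion to reduce $\mathbb{E}[\mathcal{L}_N^K(r)]$ to the kernel-free limit (with a bias of order $\sigma_N^{\kappa_\gamma}\lVert r-\gamma\rVert$), followed by a second-order expansion in $r$ around the interior maximizer $\gamma$. The only cosmetic difference is that the paper conditions on $(V_i,V_m,W_i,W_m)$ and expands in $v_i$ around $v_m$, so the lead term lands on $\mathbb{E}[\tau_m(r)]$ by construction---this sidesteps your ``secondary technical point'' of identifying $\mathcal{L}(r)$ with $\mathbb{E}[\tau_i(r)]$ and also lines up directly with the smoothness hypotheses as stated in Assumption~C5(iv).
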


\begin{lemma}
\label{lemmaA5} Under Assumptions C1--C7,, uniformly over $\mathcal{R}_{N}$,
we have
\begin{equation*}
\frac{2}{N}\sum_{m}\mathbb{E}[\mathcal{L}_{N}^{K}(r)|Z_{m}]-2\mathbb{E}[%
\mathcal{L}_{N}^{K}(r)]=\frac{1}{\sqrt{N}}(r-\gamma )^{\prime }\varPsi%
_{N,1}+o_{p}(\Vert r-\gamma \Vert ^{2}),
\end{equation*}%
where $\varPsi_{N,1}\equiv N^{-1/2}\sum_{m}2\nabla \tau _{m}(\gamma )$.
\end{lemma}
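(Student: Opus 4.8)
The plan is to recognize the left-hand side as the centered linear (Hájek) term of the Hoeffding decomposition recorded in (\ref{eq:A4}), compute the projection kernel of the kernel-weighted second-order U-statistic $\mathcal{L}_{N}^{K}(r)$ by standard kernel asymptotics, and then Taylor-expand it in $r$ about $\gamma$, reading off the coefficient linear in $(r-\gamma)$. Concretely, write $p_{N}(r;m)\equiv\mathbb{E}[\mathcal{L}_{N}^{K}(r)\mid Z_{m}]$ as in (\ref{eq:A4}), so that $p_{N}(r;m)=\mathbb{E}[\sigma_{N}^{-2}K_{\sigma_{N},\gamma}(V_{im}(\beta))h_{im}(r)\mid Z_{m}]$ (expectation over observation $i$), with $h_{im}(r)=Y_{im(1,1)}[\text{sgn}(W_{im}^{\prime}r)-\text{sgn}(W_{im}^{\prime}\gamma)]$; then the left-hand side of the lemma equals $\tfrac{2}{N}\sum_{m}\{p_{N}(r;m)-\mathbb{E}[p_{N}(r;m)]\}$, a centered sample average. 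The structural fact driving the expansion is $h_{im}(\gamma)\equiv0$, hence $p_{N}(\gamma;m)\equiv0$ for every $m$, so the expansion about $\gamma$ starts at first order.

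First I would establish the kernel asymptotics for $p_{N}(r;m)$. Conditioning on observation $m$ and changing variables to $u=V_{im}(\beta)/\sigma_{N}$ turns $p_{N}(r;m)$ into an integral of $K(u_{1})K(u_{2})$ against the product of $f_{V(\beta)}(\cdot)$ and the map $v\mapsto\mathbb{E}[h_{im}(r)\mid V_{i}(\beta)=v,Z_{m}]$; Taylor-expanding this product to order $\kappa_{\gamma}$, using the even/higher-order moment conditions on $K$ in Assumption C6 to annihilate the intermediate terms and the $\kappa_{\gamma}$-th derivative bounds in Assumption C5 to control the remainder, gives, uniformly in $r\in\mathcal{R}$ and across realizations, $p_{N}(r;m)=\tau_{m}(r)+b_{N}(r;m)$ with $\sup_{m}\sup_{r}|b_{N}(r;m)|=O(\sigma_{N}^{\kappa_{\gamma}})$, where $\tau_{m}(\cdot)$ is the limiting projection in the notation of (\ref{EQ:Tau}) (the density factor $f_{V(\beta)}(V_{m})$ produced by the localization being part of that object); note $\tau_{m}(\gamma)=0=b_{N}(\gamma;m)$. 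Because $\mathbb{E}[h_{im}(r)\mid\cdot]$ is $\text{sgn}(W_{im}^{\prime}r)$ integrated against the absolutely continuous conditional law of $W_{im}$ (Assumption C2), $r\mapsto\tau_{m}(r)$ and $r\mapsto p_{N}(r;m)$ are twice continuously differentiable with derivatives bounded uniformly in $m$ by Assumption C5, so the non-smoothness of $\text{sgn}$ causes no difficulty after localization.

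Next, since $p_{N}(\gamma;m)=0$, write $p_{N}(r;m)=(r-\gamma)^{\prime}\nabla_{r}p_{N}(\gamma;m)+R_{m}(r)$ with $|R_{m}(r)|\le C\Vert r-\gamma\Vert^{2}$ on $\mathcal{R}_{N}$, and $\nabla_{r}p_{N}(\gamma;m)=\nabla\tau_{m}(\gamma)+O(\sigma_{N}^{\kappa_{\gamma}})$ from differentiating the expansion above. Substituting into $\tfrac{2}{N}\sum_{m}\{p_{N}(r;m)-\mathbb{E}p_{N}(r;m)\}$ isolates the linear term $(r-\gamma)^{\prime}\tfrac{2}{N}\sum_{m}\{\nabla\tau_{m}(\gamma)-\mathbb{E}\nabla\tau_{m}(\gamma)\}$ plus remainders. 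Here $\mathbb{E}\nabla\tau_{m}(\gamma)=\nabla_{r}\mathbb{E}[\mathcal{L}_{N}^{K}(r)]\big|_{r=\gamma}+O(\sigma_{N}^{\kappa_{\gamma}})$, which is negligible because by Lemma \ref{lemmaA4} the leading part of $\mathbb{E}[\mathcal{L}_{N}^{K}(r)]$ is the quadratic $\tfrac{1}{2}(r-\gamma)^{\prime}\mathbb{V}_{\gamma}(r-\gamma)$, whose gradient at $\gamma$ vanishes, and $\sqrt{N}\sigma_{N}^{\kappa_{\gamma}}\to0$ by Assumption C7(iii); hence the linear term equals $\tfrac{1}{\sqrt{N}}(r-\gamma)^{\prime}\varPsi_{N,1}+o_{p}(\Vert r-\gamma\Vert^{2})$ with $\varPsi_{N,1}=N^{-1/2}\sum_{m}2\nabla\tau_{m}(\gamma)$, as claimed.

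It remains to control the remainders uniformly over $\mathcal{R}_{N}$. The quadratic piece $\tfrac{2}{N}\sum_{m}\{R_{m}(r)-\mathbb{E}R_{m}(r)\}$ is a centered average of terms of size $O(\Vert r-\gamma\Vert^{2})$, hence $O_{p}(\Vert r-\gamma\Vert^{2}/\sqrt{N})$ pointwise, which I would upgrade to uniformity over $\mathcal{R}_{N}$ via a maximal inequality for the Euclidean class $\{R_{\cdot}(r):r\in\mathcal{R}\}$ (Euclidean by the boundedness and smoothness in Assumptions C5--C6), in the manner of \cite{Sherman1994AoS}; the contributions from $b_{N}$ and its $r$-derivative are, after using $b_{N}(\gamma;m)\equiv0$, of order $O_{p}(N^{-1/2}\Vert r-\gamma\Vert\sigma_{N}^{\kappa_{\gamma}})$ and smaller, hence $o_{p}(\Vert r-\gamma\Vert^{2})$ on $\mathcal{R}_{N}$ again by Assumption C7(iii). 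The step I expect to be the main obstacle is this combination of the order-$\kappa_{\gamma}$ localized expansion with the uniformity requirement: showing the higher-order-kernel expansion of the localized conditional expectation holds uniformly in $r$ and across sample paths, that the smoothed-$\text{sgn}$ object is genuinely $C^{2}$ in $r$ with bounds surviving the $\sigma_{N}^{-2}$ localization, and that the relevant function classes are Euclidean so the pointwise expansion extends uniformly over the shrinking neighborhood $\mathcal{R}_{N}$; granting these, the rest is routine U-statistic bookkeeping.
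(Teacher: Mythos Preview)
Your approach is essentially the same as the paper's: localize the projection $\mathbb{E}[\mathcal{L}_{N}^{K}(r)\mid Z_{m}]$ to $\tau_{m}(r)$ via a $\kappa_{\gamma}$-th order kernel expansion, then Taylor expand in $r$ about $\gamma$ and identify the linear coefficient as $2\nabla\tau_{m}(\gamma)$. The only notable difference is how you dispose of the second-order remainder: you center $R_{m}(r)$ and appeal to a maximal inequality for a Euclidean class, whereas the paper writes the quadratic piece directly as $(r-\gamma)^{\prime}\bigl[\tfrac{1}{N}\sum_{m}\nabla^{2}\tau_{m}(\gamma)-\mathbb{V}_{\gamma}\bigr](r-\gamma)$ and invokes the SLLN, which gives $o_{p}(\Vert r-\gamma\Vert^{2})$ immediately and uniformly (since the $o_{p}(1)$ factor does not depend on $r$). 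Your route works, but the paper's is shorter and avoids the empirical-process machinery you flag as the main obstacle; also note that $\mathbb{E}[\nabla\tau_{m}(\gamma)]=0$ follows directly from $\gamma$ maximizing $\mathbb{E}[\tau_{m}(r)]$, so there is no need to detour through Lemma~\ref{lemmaA4} to see the centering constant vanishes.
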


\begin{lemma}
\label{lemmaA6} Under Assumptions C1 and C6--C7, uniformly over $\mathcal{R}%
_{N}$, $\rho _{N}(r)=O_{p}(N^{-1}\sigma _{N}^{-2})$.
\end{lemma}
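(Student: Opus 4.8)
The plan is to identify $\rho_{N}(r)$ as the degenerate (``second-order projection'') part of the Hoeffding decomposition of the second-order $U$-process $r\mapsto\mathcal{L}_{N}^{K}(r)=\frac{1}{N(N-1)}\sum_{i\neq m}g_{N}(Z_{i},Z_{m};r)$, where $g_{N}(Z_{i},Z_{m};r)\equiv\mathcal{K}_{\sigma_{N}}(V_{im}(\beta))h_{im}(r)$, and then to bound it uniformly over $\mathcal{R}_{N}$ via the maximal inequality for degenerate $U$-processes indexed by a Euclidean class, the one non-routine point being to keep track of the powers of $\sigma_{N}$ that the kernel contributes. Concretely, $\rho_{N}(r)=\frac{1}{N(N-1)}\sum_{i\neq m}p_{N}(Z_{i},Z_{m};r)$ with $p_{N}(z_{i},z_{m};r)=g_{N}(z_{i},z_{m};r)-\mathbb{E}[g_{N}(z_{i},Z_{m};r)]-\mathbb{E}[g_{N}(Z_{i},z_{m};r)]+\mathbb{E}[g_{N}(Z_{i},Z_{m};r)]$ (after the standard symmetrization of $g_{N}$, which leaves $\mathcal{L}_{N}^{K}$ and hence $\rho_{N}$ unchanged), so that $\mathbb{E}[p_{N}(Z_{i},Z_{m};r)\mid Z_{i}]=0$ a.s.; this is exactly the third term isolated in the decomposition (\ref{eq:A4}).

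First I would record the envelope bound, which uses only Assumption C6. Since $|h_{im}(r)|=|Y_{im(1,1)}|\,|\text{sgn}(W_{im}^{\prime}r)-\text{sgn}(W_{im}^{\prime}\gamma)|\leq 2$ and $|\mathcal{K}_{\sigma_{N}}(V_{im}(\beta))|\leq\sigma_{N}^{-2}(\sup_{u}|K(u)|)^{2}$, the $r$-free constant $G_{N}\equiv 2\sigma_{N}^{-2}(\sup_{u}|K(u)|)^{2}$ is an envelope for $\{g_{N}(\cdot,\cdot;r):r\in\mathcal{R}_{N}\}$, with $\|G_{N}\|_{\infty}=\|G_{N}\|_{L^{2}}=O(\sigma_{N}^{-2})$. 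Pointwise, since $|p_{N}|\leq 4G_{N}$ and $p_{N}(\cdot,\cdot;r)$ is completely degenerate, the variance identity for symmetric second-order $U$-statistics gives $\mathbb{E}[\rho_{N}(r)^{2}]=\frac{2}{N(N-1)}\mathbb{E}[p_{N}(Z_{i},Z_{m};r)^{2}]=O(N^{-2}\sigma_{N}^{-4})$, i.e. $\rho_{N}(r)=O_{p}(N^{-1}\sigma_{N}^{-2})$ at each fixed $r$.

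Next I would make this uniform. The class $\{(z_{i},z_{m})\mapsto\text{sgn}(w_{im}^{\prime}r):r\in\mathbb{R}^{k_{2}}\}$ is VC (an affine family of half-space indicators via $\text{sgn}(\cdot)=2\cdot 1[\cdot\geq 0]-1$); subtracting the $r$-free function $\text{sgn}(w_{im}^{\prime}\gamma)$ and multiplying by the $r$-free bounded factors $Y_{im(1,1)}$ and $\mathcal{K}_{\sigma_{N}}(V_{im}(\beta))$ preserves the Euclidean property, so $\{g_{N}(\cdot,\cdot;r):r\in\mathcal{R}_{N}\}$ is Euclidean for the envelope $G_{N}$ with characteristics that do not depend on $N$ --- the combinatorial complexity sits entirely in the half-space family, which $\sigma_{N}$ does not affect. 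The maximal inequality for the degenerate part of a $U$-process over a Euclidean class (\cite{Sherman1994AoS}; see also the degenerate $U$-process inequalities of Nolan and Pollard (1987) and Arcones and Gin\'e (1993)) then yields
\[
\mathbb{E}\Big[\sup_{r\in\mathcal{R}_{N}}|\rho_{N}(r)|\Big]\leq C\,N^{-1}\,\|G_{N}\|_{L^{2}}=O(N^{-1}\sigma_{N}^{-2}),
\]
and Markov's inequality gives $\sup_{r\in\mathcal{R}_{N}}|\rho_{N}(r)|=O_{p}(N^{-1}\sigma_{N}^{-2})$, which is the assertion of Lemma \ref{lemmaA6}.

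The main obstacle is really just the bandwidth bookkeeping: the $\sigma_{N}^{-4}$ produced by squaring $\mathcal{K}_{\sigma_{N}}$ (equivalently, the $\sigma_{N}^{-2}$ in its sup-norm envelope) is not absorbed by the $N^{-2}$ coming from degeneracy, so the final rate carries $\sigma_{N}^{-2}$; one must also notice that the $\sigma_{N}$-dependence of the integrand enters only through an $r$-free multiplicative factor, which is what lets the Euclidean characteristics be chosen uniformly in $N$. A finer argument --- invoking C5(i), that $f_{V_{im}(\beta)}$ is bounded, together with a two-dimensional change of variables to get $\mathbb{E}[G_{N}^{2}]\lesssim\sigma_{N}^{-2}$ instead of $\sigma_{N}^{-4}$ --- would sharpen the rate to $O_{p}(N^{-1}\sigma_{N}^{-1})$, but $O_{p}(N^{-1}\sigma_{N}^{-2})$ is all the downstream expansion (\ref{eq:A4}) needs, which is presumably why the lemma is stated only under Assumptions C1 and C6--C7.
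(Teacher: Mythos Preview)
Your proof is correct and follows the same overall strategy as the paper: recognize $\rho_{N}(r)$ as the completely degenerate part of the Hoeffding decomposition of a second-order $U$-process, establish the Euclidean property of the indexing class (with characteristics independent of $N$, since the $\sigma_{N}$-dependence is an $r$-free multiplicative factor), and apply a maximal inequality for degenerate $U$-processes.

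The one methodological difference worth noting is in how the rate is extracted. The paper rescales to $\rho_{im}^{\ast}(r)=\sigma_{N}^{2}\rho_{im}(r)/M$ (envelope of order one), then computes the $L^{2}$-size $\sup_{r\in\mathcal{R}_{N}}\mathbb{E}[\rho_{im}^{\ast}(r)^{2}]=O(\sigma_{N}^{2})$ via the change-of-variables you mention at the end, and invokes Theorem~3 of \cite{Sherman1994ET}, which converts this $L^{2}$-size into $O_{p}(N^{-1}\sigma_{N}^{\alpha})$ for any $0<\alpha<1$, hence $\rho_{N}(r)=O_{p}(N^{-1}\sigma_{N}^{\alpha-2})=O_{p}(N^{-1}\sigma_{N}^{-2})$. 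You instead use the cruder envelope-only bound and reach the same $O_{p}(N^{-1}\sigma_{N}^{-2})$ directly. For the lemma as stated both are fine. The paper's choice is not idle, however: in the proof of Theorem~\ref{T:crossAsymp} the argument of Lemma~\ref{lemmaA6} is re-run over a shrinking neighborhood $\mathcal{R}_{N}$ of radius $\delta_{N}$, where $h_{im}(\gamma)=0$ gives the sharper bound $\sup_{r\in\mathcal{R}_{N}}\mathbb{E}[\rho_{im}^{\ast}(r)^{2}]=O(\sigma_{N}^{2}\delta_{N}^{2})$ and hence $\rho_{N}(r)=O_{p}(N^{-1}\sigma_{N}^{\lambda-2}\delta_{N}^{\lambda})=o_{p}(N^{-1})$. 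Your envelope-only route would need exactly that $L^{2}$ calculation added back in to complete the rate-refinement step downstream.
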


\begin{lemma}
\label{lemmaA7} Suppose
\begin{equation}
\hat{\beta}-\beta=\frac{1}{N}\sum_{i}\psi_{i,\beta}+o_{p}(N^{-1/2}),
\label{eq:A10}
\end{equation}
where $\psi_{i,\beta}$ is the influence function (of $Z_{i}$). $%
\psi_{i,\beta }$ is i.i.d across $i$ with $\mathbb{E}\left(
\psi_{i,\beta}\right) =0$.\footnote{%
In fact, the linear representation of $\hat{\beta}-\beta$ in (\ref{eq:A10})
can be obtained by applying the same arguments in Theorem 2 of \cite%
{Sherman1993} to a representation for $\mathcal{L}_{N,\beta}^{K}(b)$
analogous to (\ref{eq:A2}).} Further, suppose Assumptions C1--C7 hold. Then
uniformly over $\mathcal{R}_{N}$, we have
\begin{equation*}
\Delta\mathcal{L}_{N}^{K}(r)=\frac{1}{\sqrt{N}}(r-\gamma)^{\prime }\varPsi%
_{N,2}+o_{p}(\Vert r-\gamma\Vert^{2})+O_{p}(N^{-1}\sigma_{N}^{-2}),
\end{equation*}
with $\varPsi_{N,2}=\frac{1}{\sqrt{N}}\sum_{i}\left(
-\int\nabla_{13}^{2}\mu(v_{m},v_{m},\gamma)f_{V}(v_{m})dv_{m}\right)
\psi_{i,\beta}$, where $\mu(v_{i},v_{m},r)\equiv
G(v_{i},v_{m},r)f_{V}(v_{i}) $,
\begin{equation*}
G(v_{i},v_{m},r)\equiv\mathbb{E}\left[ B(V_{i},V_{m},W_{i},W_{m})S_{im}(r)%
\left(
\begin{array}{c}
x_{im1}^{\prime} \\
x_{im2}^{\prime}%
\end{array}
\right) |V_{i}=v_{i},V_{m}=v_{m}\right] ,
\end{equation*}
$\nabla_{1}\mu(\cdot,\cdot,\cdot)$ denotes the partial derivative of $%
\mu(\cdot,\cdot,\cdot)$ w.r.t. its first argument, and $\nabla_{13}^{2}\mu(%
\cdot,\cdot,\cdot)$ denotes the partial derivative of $\nabla_{1}\mu
(\cdot,\cdot,\cdot)$ w.r.t. its third argument.
\end{lemma}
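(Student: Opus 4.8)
The plan is to exploit the linearity of $\Delta\mathcal{L}_{N}^{K}(r)$ in $\hat{\beta}-\beta$. Since $V_{im}(\hat{\beta})-V_{im}(\beta)=\big(X_{im1}^{\prime}(\hat{\beta}-\beta),\,X_{im2}^{\prime}(\hat{\beta}-\beta)\big)^{\prime}$, we may write
\[
\Delta\mathcal{L}_{N}^{K}(r)=T_{N}(r)^{\prime}(\hat{\beta}-\beta),\qquad T_{N}(r)\equiv\frac{1}{\sigma_{N}^{3}N(N-1)}\sum_{i\neq m}(X_{im1},X_{im2})\,\nabla K_{\sigma_{N},\gamma}(V_{im}(\beta))\,h_{im}(r),
\]
a $k_{1}$-vector-valued second-order $U$-statistic (here $\nabla K_{\sigma_{N},\gamma}(V_{im}(\beta))$ is read as $(\nabla K_{\gamma})(V_{im}(\beta)/\sigma_{N})$). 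Substituting the Bahadur representation $\hat{\beta}-\beta=N^{-1}\sum_{i}\psi_{i,\beta}+o_{p}(N^{-1/2})$ from \eqref{eq:A10}, the proof reduces to two tasks: (a) showing that the contribution of the $o_{p}(N^{-1/2})$ remainder is negligible on $\mathcal{R}_{N}$, and (b) obtaining a sharp expansion of $T_{N}(r)^{\prime}\,N^{-1}\sum_{i}\psi_{i,\beta}$ uniformly over $\mathcal{R}_{N}$.

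For (a), a crude change-of-variables bound (together with $h_{im}(\gamma)=0$ and the fact that $K_{\gamma}(\cdot/\sigma_{N})$ is a kernel) gives $\sup_{r\in\mathcal{R}_{N}}\Vert T_{N}(r)\Vert=O_{p}\big(\delta_{N}+N^{-1}\sigma_{N}^{-2}\big)$ up to logarithmic factors; combined with $\hat{\beta}-\beta=O_{p}(N^{-1/2})$ and the elementary splittings $\Vert r-\gamma\Vert N^{-1/2}\leq \tfrac12\varepsilon\Vert r-\gamma\Vert^{2}+\tfrac12\varepsilon^{-1}N^{-1}$ (with $\varepsilon=\varepsilon_{N}\to0$ slowly), the remainder from the $o_{p}(N^{-1/2})$ term is $o_{p}(\Vert r-\gamma\Vert^{2})+O_{p}(N^{-1}\sigma_{N}^{-2})$. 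Throughout, the residual powers of $\sigma_{N}^{-1}$ are absorbed using Assumption~C7.

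For (b), decompose $T_{N}(r)=\mathbb{E}[T_{N}(r)]+\{\text{Hoeffding projection}\}+\{\text{degenerate remainder}\}$. The mean is the crux: conditioning on $V_{m}(\beta)=v_{m}$, changing variables $u=V_{im}(\beta)/\sigma_{N}$, Taylor-expanding the product of $f_{V(\beta)}$ and the conditional expectation $\mathbb{E}[(X_{im1},X_{im2})h_{im}(r)\mid V_{i}(\beta)=v_{m}+\sigma_{N}u,\,V_{m}(\beta)=v_{m}]$ in its $v_{i}$-argument around $v_{i}=v_{m}$, and then integrating by parts against $\nabla K_{\gamma}$ (so the zeroth-order term vanishes and $\int(\nabla K_{\gamma})(u)u^{\prime}\,du=-I_{2}$ produces the derivative $\nabla_{1}$), while using the $C^{1}$-smoothness in $r$ near $\gamma$ from Assumptions~C5(ii),(iv) (which, since $h_{im}(\gamma)=0$, contributes the derivative $\nabla_{3}$), one obtains
\[
\mathbb{E}[T_{N}(r)]=-\Big(\textstyle\int\nabla_{13}^{2}\mu(v_{m},v_{m},\gamma)f_{V}(v_{m})\,dv_{m}\Big)^{\prime}(r-\gamma)+o(\Vert r-\gamma\Vert)+O(\sigma_{N}^{\kappa_{\gamma}}\Vert r-\gamma\Vert),
\]
where the $O(\sigma_{N}^{\kappa_{\gamma}})$ bias uses the higher-order-kernel property of Assumption~C6. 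Multiplying by $N^{-1}\sum_{i}\psi_{i,\beta}$ yields exactly $\tfrac{1}{\sqrt{N}}(r-\gamma)^{\prime}\varPsi_{N,2}$ plus, via Assumption~C7(iii) ($\sqrt{N}\sigma_{N}^{\kappa_{\gamma}}\to0$) and AM--GM, a term of order $o_{p}(\Vert r-\gamma\Vert^{2})+O_{p}(N^{-1}\sigma_{N}^{-2})$. The Hoeffding projection of $T_{N}(r)$ is $O_{p}(N^{-1/2}\Vert r-\gamma\Vert)$ uniformly on $\mathcal{R}_{N}$ (same $U$-statistic bound as behind Lemma~\ref{lemmaA5}), and the degenerate part is $O_{p}(N^{-1}\sigma_{N}^{-2}\Vert r-\gamma\Vert^{1/2})$ (a second-order degenerate $U$-statistic maximal inequality as behind Lemma~\ref{lemmaA6}); each, multiplied by $O_{p}(N^{-1/2})$, is again $o_{p}(\Vert r-\gamma\Vert^{2})+O_{p}(N^{-1}\sigma_{N}^{-2})$. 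Collecting the four cross-products gives the claim.

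The main obstacle is the change-of-variables / integration-by-parts computation identifying $\mathbb{E}[T_{N}(r)]$: one must track the matrix-valued kernel $\mu$, the two-dimensional smoothing direction $v_{i}$, and the correct contraction of the kernel gradient so the limiting coefficient is exactly $\int\nabla_{13}^{2}\mu(v_{m},v_{m},\gamma)f_{V}(v_{m})\,dv_{m}$ rather than some other tensor contraction. A secondary difficulty is establishing all the stochastic bounds \emph{uniformly} over the shrinking neighborhood $\mathcal{R}_{N}$, for which I would invoke the maximal inequalities for (degenerate) $U$-processes used in \cite{Sherman1994AoS,Sherman1994ET}, while carefully bookkeeping which residual terms land in the $o_{p}(\Vert r-\gamma\Vert^{2})$ bucket versus the $O_{p}(N^{-1}\sigma_{N}^{-2})$ bucket.
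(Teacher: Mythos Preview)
Your proposal is correct and the core computation---change of variables $u=v_{im}/\sigma_N$, Taylor expansion in $v_i$ around $v_m$ (so that $\int\nabla K_\gamma(u)\,du=0$ kills the zeroth order and $\int u\,\nabla K_\gamma(u)'\,du=-I_2$ produces $\nabla_1$), followed by expansion in $r$ around $\gamma$ to extract $\nabla_{13}^2\mu$---is identical to the paper's. The organizational route differs: the paper substitutes \eqref{eq:A10} directly into \eqref{eq:A5} to obtain a \emph{third-order} $U$-process in $(Z_i,Z_m,Z_l)$ and then does a single Hoeffding decomposition, exploiting $\mathbb{E}[\psi_{l,\beta}]=0$ to observe that the projections onto $Z_i$ and onto $Z_m$ vanish automatically, so only the projection onto $Z_l$ (which equals your $\mathbb{E}[T_N(r)]'\psi_{l,\beta}$) and the degenerate remainder survive. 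You instead factor out $(\hat\beta-\beta)$, decompose the second-order $T_N(r)$ separately, and then bound each piece times $O_p(N^{-1/2})$. Both arrive at the same place; the paper's third-order decomposition is marginally cleaner because the zero mean of $\psi_{l,\beta}$ kills two of the three first-order projections in one stroke, whereas your route needs an explicit bound on the Hoeffding projection of $T_N(r)$ and a Cauchy--Schwarz step to control its (possibly correlated) product with $N^{-1}\sum_l\psi_{l,\beta}$.
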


\begin{proof}[Proof of Theorem \protect\ref{T:crossAsymp}]
Putting results in Lemmas \ref{lemmaA4}--\ref{lemmaA7} together, we write
equation (\ref{eq:A4}) as
\begin{equation}
\hat{\mathcal{L}}_{N}^{K}(r)=\frac{1}{2}(r-\gamma)^{\prime}\mathbb{V}%
_{\gamma }(r-\gamma)+\frac{1}{\sqrt{N}}(r-\gamma)^{\prime}\varPsi%
_{N}+o_{p}(\Vert r-\gamma\Vert^{2})+O_{p}(\varepsilon_{N})   \label{eq:A13}
\end{equation}
where $\varPsi_{N}=\varPsi_{N,1}+\varPsi_{N,2}$ and $\varepsilon_{N}=N^{-1}%
\sigma_{N}^{-2}$. Theorem 1 of \cite{Sherman1994ET} then implies that $\hat{%
\gamma}-\gamma=O_{p}(\sqrt{\varepsilon_{N}})=O_{p}(N^{-1/2}\sigma _{N}^{-1})$%
.

Next, take $\delta_{N}=O(\sqrt{\varepsilon_{N}})$ and $\mathcal{R}_{N}=\{r\in%
\mathcal{R}|\Vert r-\gamma\Vert\leq\delta_{N}\}$. We repeat the proof for
Lemma \ref{lemmaA6} and deduce from a Taylor expansion around $\gamma$ that $%
\sup_{r\in\mathcal{R}_{N}}\mathbb{E}[\rho_{im}^{\ast}(r)^{2}]=O(%
\sigma_{N}^{2}\delta_{N}^{2})$. Apply Theorem 3 of \cite{Sherman1994ET} to
see that uniformly over $\mathcal{R}_{N}$, $\rho_{N}(r)=O_{p}(N^{-1}\sigma
_{N}^{\lambda-2}\delta_{N}^{\lambda})$ where $0<\lambda<1$. Then we have
\begin{equation*}
\rho_{N}(r)=O_{p}(N^{-1}\sigma_{N}^{\lambda-2}\delta_{N}^{%
\lambda})=O_{p}(N^{-1})O_{p}(N^{-\lambda/2}\sigma_{N}^{-2})=o_{p}(N^{-1})
\end{equation*}
by invoking Assumption C7 and choosing $\lambda$ sufficiently close to 1.
This result in turn implies that the $O_{p}(\varepsilon_{N})$ term in (\ref%
{eq:A13}) has order $o_{p}(N^{-1})$, and hence $\hat{\gamma}-\gamma
=O_{p}(N^{-1/2})$ by applying Theorem 1 of \cite{Sherman1994ET} once again.

Now (\ref{eq:A13}) can be expressed as
\begin{equation*}
\hat{\mathcal{L}}_{N}^{K}(r)=\frac{1}{2}(r-\gamma)^{\prime}\mathbb{V}%
_{\gamma }(r-\gamma)+\frac{1}{\sqrt{N}}(r-\gamma)^{\prime}\varPsi%
_{N}+o_{p}(N^{-1}).
\end{equation*}
Let $\varDelta_{i}\equiv2\nabla\tau_{i}(\gamma)-\left(
\int\nabla_{13}^{2}\mu(v_{m},v_{m},\gamma)f_{V}(v_{m})\text{d}v_{m}\right)
\psi_{i,\beta}$. Note that $\mathbb{E}[\varDelta_{i}]=0$ since $\mathbb{E}%
[\nabla\tau_{i}(\gamma)]=0$ and $\mathbb{E}[\psi_{i,\beta}]=0$. We deduce
from Assumption C7 and Lindeberg-L\'{e}vy CLT that $\varPsi_{N}\overset{d}{%
\rightarrow }N(0,\mathbb{V}_{\varPsi})$ where $\mathbb{V}_{\varPsi}=\mathbb{E%
}[\varDelta_{i}\varDelta_{i}^{\prime}]$. The asymptotic normality of $\hat{%
\gamma}$ then follows from Theorem 2 of \cite{Sherman1994AoS}, i.e., $\sqrt{N%
}(\hat{\gamma}-\gamma)\overset{d}{\rightarrow}N(0,\mathbb{V}_{\gamma }^{-1}%
\mathbb{V}_{\varPsi}\mathbb{V}_{\gamma}^{-1})$.
\end{proof}

\subsection{Proof of of Theorem \ref{TH:testing}}
Recall that $\mathcal{\hat{L}}_{N}^{K}\left( r\right) $ was redefined in
Appendix \ref{appendix_asym}. However, for the proof of Theorem \ref{TH:testing}, it is more convenient to still use the definition in the main body of the paper, specifically,
\begin{equation*}
\mathcal{\hat{L}}_{N}^{K}\left( \hat{\gamma}\right) =\frac{1}{\sigma
_{N}^{2}N\left( N-1\right) }\sum_{i\neq m}K_{\sigma _{N},\gamma } (
V_{im} ( \hat{\beta} )  ) Y_{im\left( 1,1\right) }\text{sgn}%
\left( W_{im}^{\prime }\hat{\gamma}\right) .
\end{equation*}

\begin{lemma}
\label{LE:L_N^K_beta}Suppose Assumptions C1--C7 hold. Then, we have%
\begin{equation*}
\frac{1}{\sigma _{N}^{2}N\left( N-1\right) }\sum_{i\neq m} [ K_{\sigma
_{N},\gamma } ( V_{im} ( \hat{\beta} )  ) -K_{\sigma
_{N},\gamma }\left( V_{im}\left( \beta \right) \right)  ] Y_{im\left(
1,1\right) }\text{\emph{sgn}}\left( W_{im}^{\prime }\gamma \right)
=O_{p} ( N^{-1}\sigma _{N}^{-2} ) .
\end{equation*}
\end{lemma}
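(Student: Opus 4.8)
The plan is to Taylor-expand $K_{\sigma_N,\gamma}(V_{im}(\hat\beta))$ about $\beta$ and use the $\sqrt N$-consistency of $\hat\beta$ to show that the first-order (gradient) term contributes $O_p(N^{-1}\sigma_N^{-2})$ and the second-order remainder contributes $O_p(N^{-1})$, which is smaller. Since $b\mapsto V_{im}(b)=(X_{im1}'b,X_{im2}'b)$ is linear, the derivatives of $b\mapsto K_{\sigma_N,\gamma}(V_{im}(b))=K(X_{im1}'b/\sigma_N)K(X_{im2}'b/\sigma_N)$ are explicit products of $K,K',K''$ evaluated at $u_{imj}:=X_{imj}'\beta/\sigma_N$ with the vectors $X_{im1},X_{im2}$ and powers of $\sigma_N^{-1}$. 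Writing the displayed quantity as $\Xi_N+\mathcal R_N$, with $\Xi_N$ the gradient term and $\mathcal R_N$ the Hessian remainder at an intermediate point $\bar\beta_{im}$, I would treat the two pieces separately, using throughout that $\hat\beta^{(1)}=\beta^{(1)}=1$ so that only the free coordinates of $\hat\beta-\beta$ appear.

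For the remainder, $\mathcal R_N=\tfrac12(\hat\beta-\beta)'M_N(\hat\beta-\beta)$, where $M_N$ is a second-order $U$-statistic whose summand is $\sigma_N^{-2}$ times a bounded function of the data. The key estimate is $\mathbb E[M_N]=O(1)$: under the change of variables $v_j=\sigma_N w_j$ the Jacobian $\sigma_N^{2}$ cancels the $\sigma_N^{-2}$, leaving a bounded kernel factor integrated against the map $v\mapsto\mathbb E[X_{imj}X_{iml}'Y_{im(1,1)}\mathrm{sgn}(W_{im}'\gamma)\mid V_{im}(\beta)=v]\,f_{V_{im}(\beta)}(v)$, which is smooth by Assumption C5 and even in $v$ by the i.i.d. structure; combined with $\int K''=0$ and the vanishing lower moments of $K$ in Assumption C6 this removes the potential $O(\sigma_N^{-1})$ and $O(\sigma_N^{-2})$ parts. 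Standard $U$-statistic bounds under C5--C7, together with $N\sigma_N^{3}\to\infty$ (which follows from C7), give $M_N-\mathbb E[M_N]=o_p(1)$, so $M_N=O_p(1)$ and hence $\mathcal R_N=O_p(\|\hat\beta-\beta\|^{2})=O_p(N^{-1})=O_p(N^{-1}\sigma_N^{-2})$.

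The main work, and the step I expect to be the obstacle, is the gradient term $\Xi_N=(\hat\beta-\beta)'B_N$, where $B_N=\{\sigma_N^{3}N(N-1)\}^{-1}\sum_{i\neq m}g_{im}$ with $g_{im}=[K'(u_{im1})K(u_{im2})X_{im1}+K(u_{im1})K'(u_{im2})X_{im2}]Y_{im(1,1)}\mathrm{sgn}(W_{im}'\gamma)$; the naive bound $\|\hat\beta-\beta\|\,\|B_N\|=O_p(N^{-1/2}\sigma_N^{-3})$ is far too crude. I would (i) show, by the same $v_j=\sigma_N w_j$ substitution and a Taylor expansion, that $\mathbb E[g_{im}]$ and $\mathbb E[g_{im}\mid Z_i,Y_i]$ are $O(\sigma_N^{3})$ --- the crucial point being that the relevant conditional-expectation-times-density maps, by the $i\leftrightarrow m$ swap, vanish and are odd at the origin, so that after $\int K=1$, $\int wK'=-1$ and the higher-order moment cancellations of Assumption C6 the leading contribution is pushed down to order $\sigma_N^{3}$; (ii) show $\mathbb E[\|g_{im}\|^{2}]=O(\sigma_N^{2})$ by the same device; and (iii) insert the linear representation $\hat\beta-\beta=N^{-1}\sum_i\psi_{i,\beta}+o_p(N^{-1/2})$ from Lemma~\ref{lemmaA7} and expand $\Xi_N$ into the triple sum $\{\sigma_N^{3}N^{2}(N-1)\}^{-1}\sum_i\sum_{j\neq k}\psi_{i,\beta}'g_{jk}$ plus an $o_p(N^{-1/2})$-driven term. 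In the triple sum every configuration with $i\notin\{j,k\}$ vanishes in mean (since $\mathbb E[\psi_{i,\beta}]=0$ and $\psi_{i,\beta}\perp g_{jk}$), the $O(N^{2})$ diagonal configurations each carry a $1/N$ and, by (i), an $O(\sigma_N^{3})$ factor, and the variances are controlled by (ii); dividing by $\sigma_N^{3}$ collapses everything to $O_p(N^{-1})=O_p(N^{-1}\sigma_N^{-2})$.

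Combining the two bounds gives the claim. The delicate point in step (iii) is that the $\sqrt N$-variability of $\hat\beta$ must not feed into $\Xi_N$ at order $N^{-1/2}$ but only at order $N^{-1}\sigma_N^{-2}$; this relies on the near-cancellation of the first-order effect of $\hat\beta$ inside the kernel, mirroring (but strengthening, since here the switching factor $\mathrm{sgn}(W_{im}'\gamma)$ is not differenced) the treatment of $\Delta\mathcal L_N^{K}$ in Lemma~\ref{lemmaA7}. Assumptions C5 and C6 are used to license the changes of variables and Taylor expansions and to convert the apparently divergent $\sigma_N^{-1},\sigma_N^{-2}$ factors into bounded quantities, while Assumption C7---in particular $\sqrt N\sigma_N^{2}\to\infty$, so that $N^{-1}\sigma_N^{-2}=o(N^{-1/2})$---ensures that plugging in $\hat\beta$ is asymptotically irrelevant for the limiting distribution of $\hat{\mathcal L}_N^{K}(\hat\gamma)$.
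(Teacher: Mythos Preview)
Your overall architecture---Taylor expand in $b$ about $\beta$ and handle the gradient and Hessian pieces separately---is what the paper does, but the paper dispatches the gradient term far more directly, and your route through the influence-function representation has a real gap.

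The paper never substitutes $\hat\beta-\beta=N^{-1}\sum_l\psi_{l,\beta}+o_p(N^{-1/2})$. It simply factors the gradient piece as
\[
\Bigl[\tfrac{1}{\sigma_N^{2}N(N-1)}\sum_{i\neq m}K'\!\bigl(\tfrac{X_{im1}'\beta}{\sigma_N}\bigr)K\!\bigl(\tfrac{X_{im2}'\beta}{\sigma_N}\bigr)X_{im1}'\,Y_{im(1,1)}\,\mathrm{sgn}(W_{im}'\gamma)\Bigr]\frac{\hat\beta-\beta}{\sigma_N}
\]
(plus the symmetric term in $X_{im2}$) and bounds the bracketed second-order $U$-statistic by computing its mean and variance: after the change of variables the leading term of the expectation vanishes because $\int K'=0$, and the higher-order kernel in C6 together with C7 makes the remainder $o(N^{-1/2})$; the variance is handled by a standard Hoeffding bound. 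Multiplying by $(\hat\beta-\beta)/\sigma_N=O_p(N^{-1/2}\sigma_N^{-1})$ then gives $O_p(N^{-1}\sigma_N^{-1})$, which is dominated by the second-order Taylor remainder $O_p(N^{-1}\sigma_N^{-2})$. Only the \emph{rate} $\hat\beta-\beta=O_p(N^{-1/2})$ is used---never its linear expansion.

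Your step (iii) is therefore unnecessary, and as written it is also incomplete. You set $\Xi_N=(\hat\beta-\beta)'B_N$ and replace $\hat\beta-\beta$ by $N^{-1}\sum_l\psi_{l,\beta}+r_N$ with $r_N=o_p(N^{-1/2})$, leaving an ``$o_p(N^{-1/2})$-driven term'' $r_N'B_N$. But by your own step (i), $\mathbb E[g_{im}]=O(\sigma_N^{3})$, so $\mathbb E[B_N]=\sigma_N^{-3}\mathbb E[g_{im}]=O(1)$ and $B_N=O_p(1)$; hence $r_N'B_N=o_p(N^{-1/2})$, which is \emph{strictly larger} than the target $O_p(N^{-1}\sigma_N^{-2})$ under Assumption~C7 (since $\sqrt N\sigma_N^{2}\to\infty$ forces $N^{-1}\sigma_N^{-2}=o(N^{-1/2})$). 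The linear representation in Lemma~\ref{lemmaA7} gives no sharper control of $r_N$, so this residual cannot be absorbed. The cure is exactly the paper's: drop the triple-sum detour, bound the bracketed $U$-statistic directly via its mean and variance, and then multiply by $\|\hat\beta-\beta\|=O_p(N^{-1/2})$.
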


\begin{lemma}
\label{LE:L_N^K}Suppose Assumptions C1--C7 hold. Then uniformly over $r\in
\mathcal{R}$,
\begin{eqnarray*}
\mathcal{\hat{L}}_{N}^{K}\left( r\right) &=&\frac{1}{2}\left( r-\gamma
\right) ^{\prime }\mathbb{V}_{\gamma }\left( r-\gamma \right) +o_{p}\left(
N^{-1/2}\left\Vert r-\gamma \right\Vert \right) +o_{p}\left( \left\Vert
r-\gamma \right\Vert ^{2}\right) \\
&&+\frac{1}{\sigma _{N}^{2}N\left( N-1\right) }\sum_{i\neq m}K_{\sigma
_{N},\gamma }\left( V_{im}\left( \beta \right) \right) Y_{im\left(
1,1\right) }\textrm{sgn}\left( W_{im}^{\prime }\gamma \right) +o_{p}(
N^{-1/2}) .
\end{eqnarray*}
\end{lemma}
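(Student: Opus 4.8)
The plan is to deduce the statement by combining the expansion~\eqref{eq:A13} established in the proof of Theorem~\ref{T:crossAsymp} with Lemma~\ref{LE:L_N^K_beta}. The first step is to peel off the recentering term. Writing $\text{sgn}(W_{im}^{\prime}r)=[\text{sgn}(W_{im}^{\prime}r)-\text{sgn}(W_{im}^{\prime}\gamma)]+\text{sgn}(W_{im}^{\prime}\gamma)$, we decompose $\mathcal{\hat{L}}_{N}^{K}(r)=\hat{\mathcal{L}}_{N}^{K,c}(r)+A_{N}$, where
\begin{align*}
\hat{\mathcal{L}}_{N}^{K,c}(r)&\equiv\frac{1}{\sigma_{N}^{2}N(N-1)}\sum_{i\neq m}K_{\sigma_{N},\gamma}(V_{im}(\hat{\beta}))Y_{im(1,1)}[\text{sgn}(W_{im}^{\prime}r)-\text{sgn}(W_{im}^{\prime}\gamma)],\\
A_{N}&\equiv\frac{1}{\sigma_{N}^{2}N(N-1)}\sum_{i\neq m}K_{\sigma_{N},\gamma}(V_{im}(\hat{\beta}))Y_{im(1,1)}\text{sgn}(W_{im}^{\prime}\gamma).
\end{align*}
The term $A_{N}$ is free of $r$, and Lemma~\ref{LE:L_N^K_beta} gives $A_{N}=\frac{1}{\sigma_{N}^{2}N(N-1)}\sum_{i\neq m}K_{\sigma_{N},\gamma}(V_{im}(\beta))Y_{im(1,1)}\text{sgn}(W_{im}^{\prime}\gamma)+O_{p}(N^{-1}\sigma_{N}^{-2})$; since $\sqrt{N}\sigma_{N}^{2}\rightarrow\infty$ by Assumption~C7(ii), the remainder is $o_{p}(N^{-1/2})$. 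This accounts exactly for the U-statistic term and the trailing $o_{p}(N^{-1/2})$ on the right-hand side of the claim.

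It remains to expand $\hat{\mathcal{L}}_{N}^{K,c}(r)$. But this is, up to the recentering just performed, identical to the criterion $\hat{\mathcal{L}}_{N}^{K}(r)$ analyzed in Appendix~\ref{appendix_asym} for the proof of Theorem~\ref{T:crossAsymp} (with $h_{im}(r)=Y_{im(1,1)}[\text{sgn}(W_{im}^{\prime}r)-\text{sgn}(W_{im}^{\prime}\gamma)]$). Hence the U-statistic decomposition~\eqref{eq:A4} together with Lemmas~\ref{lemmaA4}--\ref{lemmaA7} applies verbatim, and over a shrinking neighborhood $\mathcal{R}_{N}=\{r:\Vert r-\gamma\Vert\leq\delta_{N}\}$ of $\gamma$ (as in the proof of Theorem~\ref{T:crossAsymp}) one obtains~\eqref{eq:A13},
\[
\hat{\mathcal{L}}_{N}^{K,c}(r)=\tfrac{1}{2}(r-\gamma)^{\prime}\mathbb{V}_{\gamma}(r-\gamma)+\tfrac{1}{\sqrt{N}}(r-\gamma)^{\prime}\varPsi_{N}+o_{p}(\Vert r-\gamma\Vert^{2})+O_{p}(N^{-1}\sigma_{N}^{-2}),
\]
with $\mathbb{V}_{\gamma}=\mathbb{E}[\nabla^{2}\tau_{i}(\gamma)]$ from Lemma~\ref{lemmaA4}, $\varPsi_{N}=\varPsi_{N,1}+\varPsi_{N,2}=O_{p}(1)$ the sum of the projection term (Lemma~\ref{lemmaA5}) and the first-step $\hat{\beta}$ correction coming from $\Delta\mathcal{L}_{N}^{K}(r)$ (Lemma~\ref{lemmaA7}), and the degenerate second-order $U$-statistic term $\rho_{N}(r)$ (Lemma~\ref{lemmaA6}) and the higher-order remainder $R_{N}$ both of order $O_{p}(N^{-1}\sigma_{N}^{-2})=o_{p}(N^{-1/2})$ under Assumption~C7. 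As $\varPsi_{N}=O_{p}(1)$, the linear term is $O_{p}(N^{-1/2}\Vert r-\gamma\Vert)$, which is subsumed by the $o_{p}(N^{-1/2}\Vert r-\gamma\Vert)$ placeholder in the statement; in particular, when this expansion is later evaluated at $r=\hat{\gamma}$ with $\hat{\gamma}-\gamma=O_{p}(N^{-1/2})$ (Theorem~\ref{T:crossAsymp}), the linear and quadratic terms are both $O_{p}(N^{-1})=o_{p}(N^{-1/2})$, which is all that the proof of Theorem~\ref{TH:testing} uses. Restricting to $\mathcal{R}_{N}$ costs nothing there because $\hat{\gamma}\overset{p}{\rightarrow}\gamma$ by Lemma~\ref{lemmaA3}. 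Adding $A_{N}$ back yields the displayed identity.

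The main obstacle is not conceptual but the order bookkeeping: one must verify that every remainder produced along the way---the degenerate term $\rho_{N}(r)=O_{p}(N^{-1}\sigma_{N}^{-2})$, the plug-in residual $O_{p}(N^{-1}\sigma_{N}^{-2})$ inside $\Delta\mathcal{L}_{N}^{K}(r)$, the $R_{N}$ remainder, and the $\hat{\beta}\to\beta$ discrepancy $O_{p}(N^{-1}\sigma_{N}^{-2})$ in $A_{N}$ from Lemma~\ref{LE:L_N^K_beta}---is $o_{p}(N^{-1/2})$, and this is exactly what $\sqrt{N}\sigma_{N}^{2}\rightarrow\infty$ (Assumption~C7(ii)) delivers. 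This makes rigorous the heuristic stated before Theorem~\ref{TH:testing}: the plugged-in $\hat{\beta}$ and $\hat{\gamma}$ perturb $\mathcal{\hat{L}}_{N}^{K}(\hat{\gamma})$ only at the faster rates $N^{-1}\sigma_{N}^{-2}$ and $N^{-1}$ about $(\beta,\gamma)$, so the asymptotics of $\mathcal{\hat{L}}_{N}^{K}(\hat{\gamma})$ agree with those of $\mathcal{L}_{N}^{K}(\gamma)$.
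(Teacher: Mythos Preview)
Your argument is correct and follows the same route as the paper: split the uncentered $\hat{\mathcal{L}}_N^K(r)$ into the centered piece (handled by \eqref{eq:A13} via Lemmas~\ref{lemmaA4}--\ref{lemmaA7}) plus the $r$-free term $A_N$, apply Lemma~\ref{LE:L_N^K_beta} to replace $\hat\beta$ by $\beta$ in $A_N$, and absorb every $O_p(N^{-1}\sigma_N^{-2})$ remainder into $o_p(N^{-1/2})$ using $\sqrt{N}\sigma_N^2\to\infty$ from Assumption~C7(ii). Your exposition is in fact more explicit than the paper's, which compresses these steps into one line.

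One wording slip: you say the linear term $O_p(N^{-1/2}\Vert r-\gamma\Vert)$ is ``subsumed by'' the $o_p(N^{-1/2}\Vert r-\gamma\Vert)$ in the statement, which is backwards---$o_p$ is the stronger claim. In fact the paper's own proof also only produces $O_p$ here, so the $o_p$ in the lemma statement appears to be a typo; your follow-up remark that this is immaterial for Theorem~\ref{TH:testing} (where $r=\hat\gamma$ makes the term $O_p(N^{-1})=o_p(N^{-1/2})$) is exactly the right resolution.
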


\begin{lemma}
\label{LE:delta}Suppose Assumptions C5--C7 hold. Then, we have
\begin{eqnarray*}
&&\mathbb{E}\left[ \sigma _{N}^{-2}K_{\sigma _{N},\gamma }\left(
V_{im}\left( \beta \right) \right) Y_{im\left( 1,1\right) }\textrm{sgn}\left(
W_{im}^{\prime }\gamma \right) |Z_{i}\right] \\
&=&f_{V_{im}(\beta)}(0)\mathbb{E}\left[ Y_{im\left( 1,1\right) }\textrm{sgn}\left( W_{im}^{\prime
}\gamma \right) |Z_{i},V_{im}\left( \beta \right) =0
\right] +o_{p}( N^{-1/2}) .
\end{eqnarray*}
\end{lemma}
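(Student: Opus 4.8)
The plan is to read the left-hand side as a kernel-smoothed, density-weighted conditional expectation and to show that its smoothing bias is $o(N^{-1/2})$ uniformly in $Z_i$. First I would simplify the weight: $\sigma_N^{-2}K_{\sigma_N,\gamma}(V_{im}(\beta)) = \mathcal{K}_{\sigma_N}(V_{im}(\beta)) = \sigma_N^{-2}K(X_{im1}'\beta/\sigma_N)K(X_{im2}'\beta/\sigma_N)$ is a product of two univariate kernels in the components of $V_{im}(\beta)/\sigma_N$, so that, given $Z_i$, it depends on $Z_m$ only through the two-dimensional index $V_m(\beta)$. By the law of iterated expectations and Assumption C1 (hence $Z_m \perp Z_i$ and $V_m(\beta)$ has marginal density $f_{V(\beta)}$), the left-hand side equals
\[
\int \sigma_N^{-2} K\!\Big(\frac{V_i^{(1)}(\beta)-a_1}{\sigma_N}\Big)K\!\Big(\frac{V_i^{(2)}(\beta)-a_2}{\sigma_N}\Big)\,g(a;Z_i)\,f_{V(\beta)}(a)\,\mathrm{d}a,
\]
where $g(a;Z_i) \equiv \mathbb{E}[Y_{im(1,1)}\,\mathrm{sgn}(W_{im}'\gamma)\mid Z_i, V_m(\beta)=a]$ satisfies $|g|\le 2$. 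The change of variables $a = V_i(\beta) - \sigma_N t$ turns this into $\int K(t_1)K(t_2)\,\Psi(\sigma_N t;Z_i)\,\mathrm{d}t$ with $\Psi(s;Z_i) \equiv g(V_i(\beta)-s;Z_i)\,f_{V(\beta)}(V_i(\beta)-s)$, and since $V_m(\beta)=V_i(\beta)$ is the same event as $V_{im}(\beta)=0$,
\[
\Psi(0;Z_i) = f_{V(\beta)}(V_i(\beta))\,\mathbb{E}[Y_{im(1,1)}\,\mathrm{sgn}(W_{im}'\gamma)\mid Z_i, V_{im}(\beta)=0],
\]
which is exactly the claimed leading term, with $f_{V(\beta)}(V_i(\beta))$ playing the role of ``$f_{V_{im}(\beta)}(0)$'' (the conditional density of $V_{im}(\beta)$ at $0$ given $Z_i$).

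The remaining step is to bound $\int K(t_1)K(t_2)[\Psi(\sigma_N t;Z_i) - \Psi(0;Z_i)]\,\mathrm{d}t$. I would Taylor-expand $\Psi(\cdot;Z_i)$ about $0$; Assumption C5 supplies enough derivatives (its order $\kappa_\gamma$ is calibrated precisely so that the relevant conditional-mean-times-density function is $\kappa_\gamma$-smooth with derivatives bounded uniformly in $Z_i$), the kernel-moment conditions in Assumption C6 — $\int K = 1$ and $\int t^\iota K(t)\,\mathrm{d}t = 0$ for $1\le \iota < \kappa_\gamma$ — annihilate every intermediate term, and the integrability bound $\int|t^{\kappa_\gamma}|K(t)\,\mathrm{d}t<\infty$ together with the uniform derivative bound controls the remainder by a constant multiple of $\sigma_N^{\kappa_\gamma}$, uniformly in $Z_i$. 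Finally, Assumption C7(iii), $\sqrt{N}\sigma_N^{\kappa_\gamma}\to 0$, upgrades $O(\sigma_N^{\kappa_\gamma})$ to $o(N^{-1/2})$ uniformly in $Z_i$, hence to $o_p(N^{-1/2})$, which is the assertion.

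I expect the only real work to be bookkeeping rather than anything deep: keeping the Taylor remainder uniform in the conditioning argument $Z_i$ (which is exactly what the uniform boundedness of densities and derivatives in Assumption C5 is for), handling values of $V_i(\beta)$ near the boundary of its support where the kernel places mass outside the support (these contribute only at the same $O(\sigma_N^{\kappa_\gamma})$ order), and reconciling the density prefactor with the notation, i.e., reading ``$f_{V_{im}(\beta)}(0)$'' as the conditional density given $Z_i$, which under C1 equals $f_{V(\beta)}(V_i(\beta))$. The uniform bound $|Y_{im(1,1)}\,\mathrm{sgn}(W_{im}'\gamma)|\le 2$ makes every integrability and dominated-convergence step immediate, so no moment conditions beyond those already invoked are needed.
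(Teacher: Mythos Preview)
Your proposal is correct and follows essentially the same route as the paper: the paper's proof conditions on $Z_i$, uses the law of iterated expectations to reduce to an integral in $V_m(\beta)$, and then invokes ``the same argument as in proving Lemma~\ref{lemmaA5}'' --- namely the change of variables $u_{im}=v_{im}/\sigma_N$ followed by a $\kappa_\gamma$-th order Taylor expansion, with the higher-order kernel moments in Assumption~C6 killing the intermediate terms and Assumption~C7(iii) turning the $O(\sigma_N^{\kappa_\gamma})$ remainder into $o(N^{-1/2})$. Your observation about reading the prefactor $f_{V_{im}(\beta)}(0)$ as the conditional density $f_{V(\beta)}(V_i(\beta))$ is also apt; the paper's notation is slightly loose on this point.
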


Note that Lemma \ref{LE:L_N^K_beta} indicates that $\hat{\beta}$ affects $\mathcal{\hat{%
L}}_{N}^{K}\left( \hat{\gamma}\right) $ at the rate of $N^{-1}\sigma
_{N}^{-2}$, and Lemma \ref{LE:L_N^K} shows that $\hat{\gamma}$ affects $%
\mathcal{\hat{L}}_{N}^{K}\left( \hat{\gamma}\right) $ at the rate of $%
N^{-1}$.

\begin{proof}
[Proof of Theorem \ref{TH:testing}] Note that $O_p(N^{-1}\sigma_N^{-2})=o_p(N^{-1/2})$ by Assumption C7 and $\hat{\gamma}-$ $\gamma =O_{p}(
N^{-1/2})$ by Theorem \ref%
{T:crossAsymp}. These results, together with Lemmas \ref{LE:L_N^K_beta} and \ref{LE:L_N^K}, imply that
\begin{equation*}
\mathcal{\hat{L}}_{N}^{K}\left( \hat{\gamma}\right) =\frac{1}{\sigma
_{N}^{2}N\left( N-1\right) }\sum_{i\neq m}K_{\sigma _{N},\gamma }\left(
V_{im}\left( \beta \right) \right) Y_{im\left( 1,1\right) }\text{sgn}\left(
W_{im}^{\prime }\gamma \right) +o_{p} ( N^{-1/2} ) .
\end{equation*}%
We can use the same arguments as those used for proving Lemmas \ref{lemmaA1} and \ref{lemmaA2} to show
\begin{equation*}
\mathcal{\hat{L}}_{N}^{K}\left( \hat{\gamma}\right) \overset{p}{\rightarrow }%
\mathcal{\bar{L}}\left( \gamma \right) \mathcal{=}f_{V_{im}\left( \beta
\right) }\left( 0\right) \mathbb{E}\left[ Y_{im\left( 1,1\right) }\text{sgn}%
\left( W_{im}^{\prime }\gamma \right) |V_{im}\left( \beta \right) =0\right].
\end{equation*}%
%Some standard calculation of a second-order U-statistic and the results in Lemma \ref{LE:delta}\ imply
Then, we can write
\begin{eqnarray*}
&&\mathcal{\hat{L}}_{N}^{K}\left( \hat{\gamma}\right) -\mathcal{\bar{L}}%
\left( \gamma \right) \\
&=&\frac{1}{\sigma _{N}^{2}N\left( N-1\right) }%
\sum_{i\neq m}K_{\sigma _{N},\gamma }\left( V_{im}\left( \beta \right)
\right) Y_{im\left( 1,1\right) }\text{sgn}\left( W_{im}^{\prime }\gamma
\right) -\mathcal{\bar{L}}\left( \gamma \right) +o_{p}( N^{-1/2})
\\
&=&\frac{2}{N}\sum_{i=1}^{N}\left\{ \mathbb{E}\left[ \sigma
_{N}^{-2}K_{\sigma _{N},\gamma }\left( V_{im}\left( \beta \right) \right)
Y_{im\left( 1,1\right) }\text{sgn}\left( W_{im}^{\prime }\gamma \right)
|Z_{i}\right] -\mathbb{E}\left[ \sigma _{N}^{-2}K_{\sigma _{N},\gamma
}\left( V_{im}\left( \beta \right) \right) Y_{im\left( 1,1\right) }\text{sgn}%
\left( W_{im}^{\prime }\gamma \right) \right] \right\} \\
&&+\mathbb{E}\left[ \sigma _{N}^{-2}K_{\sigma _{N},\gamma }\left(
V_{im}\left( \beta \right) \right) Y_{im\left( 1,1\right) }\text{sgn}\left(
W_{im}^{\prime }\gamma \right) \right] -\mathcal{\bar{L}}\left( \gamma
\right) +o_{p} ( N^{-1/2} ) \\
&=&\frac{2}{N}\sum_{i=1}^{N}f_{V_{im}(\beta)}(0)\left\{ \mathbb{E}\left[ Y_{im\left( 1,1\right) }%
\text{sgn}\left( W_{im}^{\prime }\gamma \right) |Z_{i},V_{im}\left( \beta
\right) =0\right] -\mathbb{E}\left[ Y_{im\left( 1,1\right) }\text{sgn}\left(
W_{im}^{\prime }\gamma \right) |V_{im}\left( \beta \right) =0\right] \right\}
\\
&&+o_{p}( N^{-1/2}),
\end{eqnarray*}
where the second equality uses the decomposition (\ref{eq:A4}) and Lemma \ref{lemmaA6}, and the last equality follows by Lemma \ref{LE:delta}.
Invoking the Lindeberg-L\'{e}vy CLT on the leading term of $\mathcal{\hat{L}}%
_{N}^{K}\left( \hat{\gamma}\right) -\mathcal{\bar{L}(\gamma)}$ in the above yields%
\begin{equation*}
\sqrt{N}( \mathcal{\hat{L}}_{N}^{K}\left( \hat{\gamma}\right) -\mathcal{%
\bar{L}}\left( \gamma \right) ) \overset{d}{\rightarrow }N\left(
0,\Delta_\gamma \right) .
\end{equation*}
\end{proof}

\subsection{Proof of of Theorem \ref{thm:cross_theta_identification}}
\begin{proof}[Proof of Theorem \ref{thm:cross_theta_identification}]
The proof is lengthy, but the underlying idea is straightforward. Using the (conditional) linear independence (equivalent to the regular full rank condition) and the free variation of the first regressor conditions in Assumption CL2, we show that there exists a positive probability of making an incorrect prediction for any $\vartheta\neq\theta$.

\noindent\textbf{Some Definitions.} We normalize the coefficients on
$X_{im1}^{(1)}$, $X_{im2}^{(1)}$, and $W_{i}^{(1)}$ to 1 as indicated in
Assumption CL3. To simplify notation, we express the three utility indexes
with a slight abuse of notation as $X_{im1}^{(1)}+u_{im1}(\vartheta)$,
$X_{im2}^{(1)}+u_{im2}(\vartheta)$, and $W_{im}^{(1)}+u_{imb}(\vartheta)$,
where $u_{im1}(\vartheta)\equiv\tilde{X}_{im1}^{\prime}b+S_{im}^{\prime
}\varrho_{1}$, $u_{im2}(\vartheta)\equiv\tilde{X}_{im2}^{\prime}%
b+S_{im}^{\prime}\varrho_{2}$, and $u_{imb}(\vartheta)\equiv\tilde{W}%
_{im}^{\prime}r+S_{im}^{\prime}\varrho_{b}$, respectively. We use $a\vee b$
and $a\wedge b$ to denote $\max\{a,b\}$ and $\min\{a,b\}$, respectively.
Finally, we define indicator functions for the events on the left-hand sides
of (\ref{eq:lad_ineq_3})--(\ref{eq:lad_ineq_4}), (\ref{eq:lad_ineq_5})--(\ref{eq:lad_ineq_6}), and
(\ref{eq:lad_ineq_7})--(\ref{eq:lad_ineq_8})\footnote{There is no need
to define $I_{im(0,0)}^{+}(\vartheta)$ and $I_{im(0,0)}^{-}(\vartheta)$ separately since
$I_{im(0,0)}^{+}(\vartheta)=I_{im(1,1)}^{-}(\vartheta)$ and $I_{im(0,0)}^{-}(\vartheta)=I_{im(1,1)}^{+}(\vartheta)$.} and
the two uninformative events in Remark \ref{remark:LAD1} in turn:
\begin{align*}
I_{im(0,1)}^{+}(\vartheta) &  =1[X_{im1}^{(1)}+u_{im1}(\vartheta)\leq
0,X_{im2}^{(1)}+u_{im2}(\vartheta)\geq0,W_{im}^{(1)}+u_{imb}(\vartheta
)\leq0],\\
I_{im(0,1)}^{-}(\vartheta) &  =1[X_{im1}^{(1)}+u_{im1}(\vartheta)\geq
0,X_{im2}^{(1)}+u_{im2}(\vartheta)\leq0,W_{im}^{(1)}+u_{imb}(\vartheta
)\geq0],\\
I_{im(1,1)}^{+}(\vartheta) &  =1[X_{im1}^{(1)}+u_{im1}(\vartheta)\geq
0,X_{im2}^{(1)}+u_{im2}(\vartheta)\geq0,W_{im}^{(1)}+u_{imb}(\vartheta
)\geq0],\\
I_{im(1,1)}^{-}(\vartheta) &  =1[X_{im1}^{(1)}+u_{im1}(\vartheta)\leq
0,X_{im2}^{(1)}+u_{im2}(\vartheta)\leq0,W_{im}^{(1)}+u_{imb}(\vartheta
)\leq0],\\
I_{imu}^{+}(\vartheta) &  =1[X_{im1}^{(1)}+u_{im1}(\vartheta)\leq
0,X_{im2}^{(1)}+u_{im2}(\vartheta)\leq0,W_{im}^{(1)}+u_{imb}(\vartheta
)\geq0],\\
I_{imu}^{-}(\vartheta) &  =1[X_{im1}^{(1)}+u_{im1}(\vartheta)\geq
0,X_{im2}^{(1)}+u_{im2}(\vartheta)\geq0,W_{im}^{(1)}+u_{imb}(\vartheta)\leq0].
\end{align*}
Recall that by construction $q_{(1,0)}(Z_{i},Z_{m},\theta)=0\leq
q_{(1,0)}(Z_{i},Z_{m},\vartheta)$ for all $\vartheta\in\Theta$ and
$(Z_{i},Z_{m})$. In what follows, we show for any $\vartheta\in\Theta
\setminus\{\theta\}$, $q_{(1,0)}(Z_{i},Z_{m},\vartheta)>0$ occurs with
positive probability, and hence $Q_{(1,0)}(\theta)<Q_{(1,0)}(\vartheta)$.

\noindent\textbf{Identification of }$(\beta,\rho_{1})$\textbf{.} For any
$\vartheta\in\Theta\setminus\{\theta\}$, we have
\begin{align}
&  P(q_{(1,0)}(Z_{i},Z_{m},\vartheta)>0)\nonumber\\
\geq &  P(\{q_{(1,0)}(Z_{i},Z_{m},\vartheta)>0\}\cap\{I_{im(1,1)}^{+}%
(\theta)+I_{im(1,1)}^{-}(\theta)=1\})\nonumber\\
= &  P(\{q_{(1,0)}(Z_{i},Z_{m},\vartheta)>0\}\cap\{I_{im(1,1)}^{+}%
(\theta)=1\})+P(\{q_{(1,0)}(Z_{i},Z_{m},\vartheta)>0\}\cap\{I_{im(1,1)}%
^{-}(\theta)=1\})\nonumber\\
\geq &  P(\{q_{(1,0)}(Z_{i},Z_{m},\vartheta)>0\}\cap\{I_{im(1,1)}^{+}%
(\theta)=I_{im(1,0)}^{-}(\vartheta)=1\})\nonumber\\
&  +P(\{q_{(1,0)}(Z_{i},Z_{m},\vartheta)>0\}\cap\{I_{im(1,1)}^{-}%
(\theta)=I_{im(1,0)}^{+}(\vartheta)=1\})\nonumber\\
= &  P(q_{(1,0)}(Z_{i},Z_{m},\vartheta)>0|I_{im(1,1)}^{+}(\theta
)=I_{im(1,0)}^{-}(\vartheta)=1)P(I_{im(1,1)}^{+}(\theta)=I_{im(1,0)}%
^{-}(\vartheta)=1)\nonumber\\
&  +P(q_{(1,0)}(Z_{i},Z_{m},\vartheta)>0|I_{im(1,1)}^{-}(\theta)=I_{im(1,0)}%
^{+}(\vartheta)=1)P(I_{im(1,1)}^{-}(\theta)=I_{im(1,0)}^{+}(\vartheta
)=1).\label{eq:clproof_1}%
\end{align}

By definitions, we can write
\begin{align*}
&  P(I_{im(1,1)}^{+}(\theta)=I_{im(1,0)}^{-}(\vartheta)=1)\\
=  &  P(u_{im1}(\vartheta)\leq-X_{im1}^{(1)}\leq u_{im1}(\theta),-X_{im2}%
^{(1)}\leq u_{im2}(\vartheta)\wedge u_{im2}(\theta),-W_{im}^{(1)}\leq
u_{imb}(\vartheta)\wedge u_{imb}(\theta)),
\end{align*}
and
\begin{align*}
&  P(I_{im(1,1)}^{-}(\theta)=I_{im(1,0)}^{+}(\vartheta)=1)\\
=  &  P(u_{im1}(\theta)\leq-X_{im1}^{(1)}\leq u_{im1}(\vartheta),-X_{im2}%
^{(1)}\geq u_{im2}(\vartheta)\vee u_{im2}(\theta),-W_{im}^{(1)}\geq
u_{imb}(\vartheta)\vee u_{imb}(\theta)).
\end{align*}
Assumptions CL2(i) and CL2(iii) imply that events $\{-X_{im2}^{(1)}\leq
u_{im2}(\vartheta)\wedge u_{im2}(\theta),-W_{im}^{(1)}\leq u_{imb}%
(\vartheta)\wedge u_{imb}(\theta)\}$ and $\{-X_{im2}^{(1)}\geq u_{im2}%
(\vartheta)\vee u_{im2}(\theta),-W_{im}^{(1)}\geq u_{imb}(\vartheta)\vee
u_{imb}(\theta)\}$ always have positive probabilities of occurring. Then, when
conditioning on them, Assumption CL2(ii) implies that the event ${u_{im1}%
(\vartheta)\neq u_{im1}(\theta)}$ has a positive probability to occur when
$(b,\varrho_{1})\neq(\beta,\rho_{1})$. Thus, under Assumption CL2(i), we
conclude that either ${u_{im1}(\vartheta)\leq-X_{im1}^{(1)}\leq u_{im1}%
(\theta)}$ or ${u_{im1}(\theta)\leq-X_{im1}^{(1)}\leq u_{im1}(\vartheta)}$
occurs with positive probability. In summary, we can claim that for any
$\vartheta$ with $(b,\varrho_{1})\neq(\beta,\rho_{1})$,
\begin{equation}
P(I_{im(1,1)}^{+}(\theta)=I_{im(1,0)}^{-}(\vartheta)=1)>0\text{ or
}P(I_{im(1,1)}^{-}(\theta)=I_{im(1,0)}^{+}(\vartheta)=1)>0.
\label{eq:clproof_2}%
\end{equation}

Then, if we can establish that
\begin{align}
P(q_{(1,0)}(Z_{i},Z_{m},\vartheta)  &  >0|I_{im(1,1)}^{+}(\theta
)=I_{im(1,0)}^{-}(\vartheta)=1)>0\text{ and}\nonumber\\
P(q_{(1,0)}(Z_{i},Z_{m},\vartheta)  &  >0|I_{im(1,1)}^{-}(\theta
)=I_{im(1,0)}^{+}(\vartheta)=1)>0, \label{eq:clproof_3}%
\end{align}
by substituting (\ref{eq:clproof_2}) into (\ref{eq:clproof_1}), we obtain
$P(q_{(1,0)}(Z_{i},Z_{m},\vartheta)>0)>0$ for any $(b,\varrho_{1})\neq
(\beta,\rho_{1})$. This establishes the identification of $(\beta,\rho_{1})$
in $\theta$. We defer the justification of the inequalities in
(\ref{eq:clproof_3}) to the end of this proof.

\noindent\textbf{Identification of }$(\beta,\rho_{2})$\textbf{.} Similarly, we
can write
\begin{align}
&  P(q_{(1,0)}(Z_{i},Z_{m},\vartheta)>0)\nonumber\\
\geq &  P(\{q_{(1,0)}(Z_{i},Z_{m},\vartheta)>0\}\cap\{I_{imu}^{+}%
(\theta)+I_{imu}^{-}(\theta)=1\})\nonumber\\
\geq &  P(q_{(1,0)}(Z_{i},Z_{m},\vartheta)>0|I_{imu}^{+}(\theta)=I_{im(1,0)}%
^{-}(\vartheta)=1)P(I_{imu}^{+}(\theta)=I_{im(1,0)}^{-}(\vartheta
)=1)\nonumber\\
&  +P(q_{(1,0)}(Z_{i},Z_{m},\vartheta)>0|I_{imu}^{-}(\theta)=I_{im(1,0)}%
^{+}(\vartheta)=1)P(I_{imu}^{-}(\theta)=I_{im(1,0)}^{+}(\vartheta
)=1),\label{eq:clproof_4}%
\end{align}
where
\begin{align*}
&  P(I_{imu}^{+}(\theta)=I_{im(1,0)}^{-}(\vartheta)=1)\\
= &  P(-X_{im1}^{(1)}\geq u_{im1}(\vartheta)\vee u_{im1}(\theta),u_{im2}%
(\theta)\leq-X_{im2}^{(1)}\leq u_{im2}(\vartheta),-W_{im}^{(1)}\leq
u_{imb}(\vartheta)\wedge u_{imb}(\theta)),
\end{align*}
and
\begin{align*}
&  P(I_{imu}^{-}(\theta)=I_{im(1,0)}^{+}(\vartheta)=1)\\
= &  P(-X_{im1}^{(1)}\leq u_{im1}(\vartheta)\wedge u_{im1}(\theta
),u_{im2}(\vartheta)\leq-X_{im2}^{(1)}\leq u_{im2}(\theta),-W_{im}^{(1)}\geq
u_{imb}(\vartheta)\vee u_{imb}(\theta)).
\end{align*}
Then, we can establish the identification of $(\beta,\rho_{2})$ in $\theta$
using similar arguments as above.

\noindent\textbf{Identification of }$(\gamma,\rho_{b})$\textbf{. }The
identification of $(\gamma,\rho_{b})$ can be similarly established based on
the following inequality:
\begin{align}
&  P(q_{(1,0)}(Z_{i},Z_{m},\vartheta)>0)\nonumber\\
\geq &  P(\{q_{(1,0)}(Z_{i},Z_{m},\vartheta)>0\}\cap\{I_{im(0,1)}^{+}%
(\theta)+I_{im(0,1)}^{-}(\theta)=1\})\nonumber\\
\geq &  P(q_{(1,0)}(Z_{i},Z_{m},\vartheta)>0|I_{im(0,1)}^{+}(\theta
)=I_{im(1,0)}^{-}(\vartheta)=1)P(I_{im(0,1)}^{+}(\theta)=I_{im(1,0)}%
^{-}(\vartheta)=1)\nonumber\\
&  +P(q_{(1,0)}(Z_{i},Z_{m},\vartheta)>0|I_{im(0,1)}^{-}(\theta)=I_{im(1,0)}%
^{+}(\vartheta)=1)P(I_{im(0,1)}^{-}(\theta)=I_{im(1,0)}^{+}(\vartheta
)=1),\label{eq:clproof_5}%
\end{align}
where
\begin{align*}
&  P(I_{im(0,1)}^{+}(\theta)=I_{im(1,0)}^{-}(\vartheta)=1)\\
= &  P(-X_{im1}^{(1)}\leq u_{im1}(\vartheta)\wedge u_{im1}(\theta
),-X_{im2}^{(1)}\geq u_{im2}(\vartheta)\vee u_{im2}(\theta),u_{imb}%
(\vartheta)\leq-W_{im}^{(1)}\leq u_{imb}(\theta)),
\end{align*}
and
\begin{align*}
&  P(I_{im(0,1)}^{-}(\theta)=I_{im(1,0)}^{+}(\vartheta)=1)\\
= &  P(-X_{im1}^{(1)}\geq u_{im1}(\vartheta)\vee u_{im1}(\theta),-X_{im2}%
^{(1)}\leq u_{im2}(\vartheta)\wedge u_{im2}(\theta),u_{imb}(\theta)\leq
-W_{im}^{(1)}\leq u_{imb}(\vartheta)).
\end{align*}
Finally, put all these results together to conclude that all elements in
$\theta$ are identified.

\noindent\textbf{On (\ref{eq:clproof_3}).} To complete the
proof, we need to justify the claim in (\ref{eq:clproof_3}) as well as similar
claims for (\ref{eq:clproof_4}) and (\ref{eq:clproof_5}). Since these proofs
follow similar arguments, we only present the proof for $P(q_{(1,0)}%
(Z_{i},Z_{m},\vartheta)>0|I_{im(1,1)}^{+}(\theta)=I_{im(1,0)}^{-}%
(\vartheta)=1)>0$ and omit the others for the sake of brevity. Note that by
definition, when $I_{im(1,0)}^{-}(\vartheta)=1$, $q_{(1,0)}(Z_{i}%
,Z_{m},\vartheta)>0$ if and only if $\Delta p_{(1,0)}(Z_{i},Z_{m})>0$.
Therefore, we show $P(\Delta p_{(1,0)}(Z_{i},Z_{m})>0|I_{im(1,1)}^{+}%
(\theta)=I_{im(1,0)}^{-}(\vartheta)=1)>0$ in the following, and the proof is done.

Under Assumption CL1(i), $Z_{im}$ has distribution symmetric at 0, and hence
we can pick $X_{im2}^{(1)}=W_{im}^{(1)}=0$, $\tilde{X}_{im2}=0$, $\tilde
{W}_{im}=0$, and $S_{im}=0$ so that $\{-X_{im2}^{(1)}\leq u_{im2}%
(\vartheta)\wedge u_{im2}(\theta),-W_{im}^{(1)}\leq u_{imb}(\vartheta)\wedge
u_{imb}(\theta)\}$ holds with equalities for any $\vartheta$. Note that here
we use the \textquotedblleft matching\textquotedblright\ insight for
developing the MRC procedure, that is, we match the utility indexes associated
with the second stand-alone alternative and the bundle effect. Meanwhile,
under Assumptions CL2(i) and CL2(ii), we can still have $\{u_{im1}%
(\vartheta)<-X_{im1}^{(1)}<u_{im1}(\theta)\}$ occur with positive probability
conditional on $E\equiv\{X_{im2}^{(1)}=W_{im}^{(1)}=0,\tilde{X}_{im2}%
=0,\tilde{W}_{im}=0,S_{im}=0\}$. In this scenario, inequality
(\ref{eq:lad_ineq_1}) implies that $\Delta p_{(1,0)}(Z_{i},Z_{m})>0,$ and say
$\Delta p_{(1,0)}(Z_{i},Z_{m})=\delta_{1}>0$ conditional on $E$. Due to the
continuity, there exists a small $\delta_{2}>0$ such that conditional on
\[
E^{\delta_{2}}=\{ \vert X_{im2}^{(1)}\vert +\vert
W_{im}^{(1)}\vert +\Vert \tilde{X}_{im2}\Vert +\Vert
\tilde{W}_{im}\Vert +\Vert S_{im}\Vert \leq\delta
_{2}\}  ,
\]
a small neighborhood of $E$, 
\[
\Delta p_{(1,0)}(Z_{i},Z_{m})\geq\delta_{1}/2>0.
\]
Again, due to the continuity, $P\left(  E^{\delta_{2}}\right)  >0$. Since what
we have discussed so far is only one possible scenario for $\Delta p_{(1,0)}%
(Z_{i},Z_{m})>0$ to occur, we must have $P(q_{(1,0)}(Z_{i},Z_{m}%
,\vartheta)>0|I_{im(1,1)}^{+}(\theta)=I_{im(1,0)}^{-}(\vartheta)=1)>P\left(
E^{\delta_{2}}\right)  >0$. This completes the proof.
\end{proof}

\subsection{Proof of of Theorem \ref{thm:cross_theta_consistency}}

\begin{proof}[Proof of Theorem \ref{thm:cross_theta_consistency}]
Let 
\[
\hat{Q}_{N(1,0)}^{D}(\vartheta):=\frac{2}{N(N-1)}\sum_{i=1}^{N-1}\sum_{m>i}\hat{q}_{im(1,0)}^{D}(\vartheta),
\]
where $\hat{q}_{im(1,0)}^{D}(\vartheta)$ is defined in (\ref{eq:lad_estimator}). We
prove the consistency of $\hat{\theta}$ by verifying the four sufficient
conditions in Theorem 2.1 of \cite{NeweyMcFadden1994}: (S1) $\Theta$ is a compact set,
(S2) $\sup_{\vartheta\in\Theta}\vert\hat{Q}_{N(1,0)}^{D}(\vartheta)-Q_{(1,0)}^{D}(\vartheta)\vert=o_{p}(1)$,
(S3) $Q_{(1,0)}^{D}(\vartheta)$ is continuous in $\vartheta$, and
(S4) $Q_{(1,0)}^{D}(\vartheta)$ is uniquely minimized at $\theta$.

Condition (S1) is trivially satisfied under Assumption CL3. The identification
condition (S4) is an immediate result of Theorem \ref{thm:cross_theta_identification} and the discussion
right after it. The continuity condition (S3) is also satisfied since
$Q_{(1,0)}^{D}(\vartheta)$, by definition, can be expressed as the
sum of terms of the following type: 
\begin{align*}
\int_{\Omega_{\vartheta}}(\vert1-\Delta p_{(1,0)}(z_{i},z_{m})\vert-1)f_{Z}(z_{i})f_{Z}(z_{m})dz_{i}dz_{m},
\end{align*}
where $\Omega_{\vartheta}:=\{x_{im1}'b+s_{im}'\varrho_{1}\geq0,x_{im2}'b+s_{im}'\varrho_{2}\leq0,w_{im}'r+s_{im}'\varrho_{b}\leq0\}$.
Then it is easy to see that Assumptions CL1 and CL2 imply the continuity
of $Q_{(1,0)}^{D}(\vartheta)$.

The remaining task is to show the uniform convergence condition (S2).
To this end, we prove $\sup_{\vartheta\in\Theta}\vert\hat{Q}_{N(1,0)}^{D}(\vartheta)-Q_{N(1,0)}^{D}(\vartheta)\vert=o_{p}(1)$
and $\sup_{\vartheta\in\Theta}\vert Q_{N(1,0)}^{D}(\vartheta)-Q_{(1,0)}^{D}(\vartheta)\vert=o_{p}(1)$,
where $Q_{N(1,0)}^{D}(\vartheta):=\frac{2}{N(N-1)}\sum_{i=1}^{N-1}\sum_{m>i}q_{(1,0)}^{D}(Z_{i},Z_{m},\vartheta)$.

Firstly, we have by definition
\begin{align*}
\sup_{\vartheta\in\Theta}\vert\hat{Q}_{N(1,0)}^{D}(\vartheta)-Q_{N(1,0)}^{D}(\vartheta)\vert & \leq\frac{2}{N(N-1)}\sum_{i=1}^{N-1}\sum_{m>i}\sup_{\vartheta\in\Theta}\vert\hat{q}_{im(1,0)}^{D}(\vartheta)-q_{(1,0)}^{D}(Z_{i},Z_{m},\vartheta)\vert\\
 & \leq\frac{4}{N(N-1)}\sum_{i=1}^{N-1}\sum_{m>i}\vert\Delta\hat{p}_{(1,0)}(Z_{i},Z_{m})-\Delta p_{(1,0)}(Z_{i},Z_{m})\vert\\
 & \leq\sup_{(z_{i},z_{m})\in\mathcal{Z}^{2}}4\vert\Delta\hat{p}_{(1,0)}(z_{i},z_{m})-\Delta p_{(1,0)}(z_{i},z_{m})\vert=o_{p}(1),
\end{align*}
where the last equality follows by Assumption CL5. 

Secondly, since $\vert\Delta p_{(1,0)}(Z_{i},Z_{m})\vert\leq1$, we
can write
\begin{align}
&\sup_{\vartheta\in\Theta}\vert Q_{N(1,0)}^{D}(\vartheta)-Q_{(1,0)}^{D}(\vartheta)\vert \nonumber \\
 \leq &\sup_{\vartheta\in\Theta}\vert\frac{2}{N(N-1)}\sum_{i=1}^{N-1}\sum_{m>i}q_{(1,0)}^{D}(Z_{i},Z_{m},\vartheta)-\mathbb{E}[q_{(1,0)}^{D}(Z_{i},Z_{m},\vartheta)]\vert\nonumber \\
  \leq &\sup_{\bar{\vartheta}\in\bar{\Theta}}\vert\frac{2}{N(N-1)}\sum_{i=1}^{N-1}\sum_{m>i}\psi_{(1,0)}^{D}(Z_{i},Z_{m},\bar{\vartheta})-\mathbb{E}[\psi_{(1,0)}^{D}(Z_{i},Z_{m},\bar{\vartheta})]\vert,\label{eq:lad_consistency_1}
\end{align}
where $\bar{\Theta}:=\{\bar{\vartheta}=(\vartheta,a)|\vartheta\in\Theta,a\in[-1,1]\}$
and 
\begin{align*}
\psi_{(1,0)}^{D}(Z_{i},Z_{m},\bar{\vartheta}):= & [\vert I_{im(1,0)}^{+}(\vartheta)-a\vert+\vert I_{im(1,0)}^{-}(\vartheta)+a\vert]\times[I_{im(1,0)}^{+}(\vartheta)+I_{im(1,0)}^{-}(\vartheta)]\\
 & +[1-(I_{im(1,0)}^{+}(\vartheta)+I_{im(1,0)}^{-}(\vartheta))].
\end{align*}
Let $\mathcal{G}=\{\psi_{(1,0)}^{D}(\cdot,\cdot,\bar{\vartheta})|\bar{\vartheta}=(\vartheta,a),\vartheta\in\Theta,a\in[-1,1]\}$
define the class of bounded functions $\psi_{(1,0)}^{D}(\cdot,\cdot,\bar{\vartheta})$
indexed by $\bar{\vartheta}$. Using similar arguments in the proof
of Lemma \ref{lemmaA2}, we can show that $\mathcal{G}$ is a Euclidean class\footnote{Alternatively, one can also apply results in Section 2.6 of \cite{vdVaartWellner1996} to show
that $\mathcal{G}$ is a VC-class.} of functions. Hence, we have
\begin{equation}
\sup_{\bar{\vartheta}\in\bar{\Theta}}\vert\frac{2}{N(N-1)}\sum_{i=1}^{N-1}\sum_{m>i}\psi_{(1,0)}^{D}(Z_{i},Z_{m},\bar{\vartheta})-\mathbb{E}[\psi_{(1,0)}^{D}(Z_{i},Z_{m},\bar{\vartheta})]\vert=o_{p}(1).\label{eq:lad_consistency_2}
\end{equation}
Plugging (\ref{eq:lad_consistency_2}) into (\ref{eq:lad_consistency_1})
gives $\sup_{\vartheta\in\Theta}\vert Q_{N(1,0)}^{D}(\vartheta)-Q_{(1,0)}^{D}(\vartheta)\vert=o_{p}(1)$.
Then (S2) follows immediately from the triangle inequality, which
completes the proof.
\end{proof}

\section{Proofs for the Panel Data Model}\label{appendixB}
\subsection{Proof of Theorem \ref{T:paneldist}}
In this section, we provide the proof of Theorem
\ref{T:paneldist}, applying the asymptotic theory developed in
\cite{SeoOtsu2018}. Before the main proof, we first present two supporting lemmas, Lemmas \ref{LE:P1} and
\ref{LE:P2}, whose proofs are relegated to
Appendix \ref{appendixE}. The outline of the proof process is as follows. Lemma
\ref{LE:P1} verifies the technical conditions in \cite{SeoOtsu2018}. Lemma
\ref{LE:P2} obtains technical terms for the final asymptotics. Then we apply
the results in \cite{SeoOtsu2018} and get the asymptotics of our estimators in
the proof of Theorem \ref{T:paneldist}. %In what follows, we first present the proof of Theorem \ref{T:paneldist}, after which technical lemmas \ref{LE:P1} and \ref{LE:P2} and their proofs are provided.

\begin{lemma}
\label{LE:P1}Suppose Assumptions P1--P9 hold. Then $\phi_{Ni}\left(  b\right)
$ and $\varphi_{Ni}\left(  r\right)  $ satisfy Assumption M in
\cite{SeoOtsu2018}.
\end{lemma}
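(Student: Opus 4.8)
The plan is to verify, component by component, the conditions bundled into Assumption M of \cite{SeoOtsu2018} for the two per-agent summands $\phi_{Ni}(b)$ (which delivers $\hat{\beta}$) and $\varphi_{Ni}(r)$ (which delivers $\hat{\gamma}$), treating the two in parallel since the arguments are structurally identical. The first step is cosmetic: rewrite each criterion in the normalized form used by \cite{SeoOtsu2018}, i.e.\ as $N^{-1}\sum_i$ of a kernel-weighted indicator contrast, by collecting the within-agent sums over period pairs $t>s$ and over $d\in\mathcal{D}$ into a single ``effective observation'' indexed by $i$. Since $\text{sgn}(\cdot)=2\cdot 1[\cdot]-1$ and additive terms free of $b$ (resp.\ $r$) do not affect the $\arg\max$, the estimators $\hat{\beta}$ and $\hat{\gamma}$ coincide with the maximizers of $N^{-1}\sum_i\phi_{Ni}(b)$ and $N^{-1}\sum_i\varphi_{Ni}(r)$, so it suffices to check the conditions for these objects.

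Next I would dispatch the ``population-side'' requirements. Uniqueness of the maximizer of the limiting criterion on $\mathcal{B}$ (resp.\ $\mathcal{R}$), together with $\beta$ (resp.\ $\gamma$) being an interior point, follows from the identification argument behind Theorem \ref{Thm:panelid} and the normalization in Assumption P2. For the smooth quadratic expansion, I would use P5 (the matching densities $f_{X_{jt},W_t}$ and $f_{X_{1ts},X_{2ts}}$ are bounded, positive near zero, and twice continuously differentiable) and P6 (the relevant conditional expectations, such as $\mathbb{E}[Y_{dts}\,\text{sgn}(X_{1ts}'b)\mid X_{2ts},W_{ts}]$, are twice continuously differentiable in $b$ with bounded derivatives): a change of variables localizes the kernel-weighted expectation to its value at the matching variables equal to zero, a second-order Taylor expansion in $b$ around $\beta$ then yields $\mathbb{E}[\phi_{Ni}(b)]=\tfrac12(b-\beta)'\mathbb{V}(b-\beta)+o(\|b-\beta\|^2)$ with $\mathbb{V}$ as in \eqref{EQ:V}, and negative definiteness of $\mathbb{V}$ comes from the full-support/rank conditions P3(i)--(ii) (analogously $\mathbb{W}$ from P3(iii)--(iv)); Assumption P9(iii) ensures the kernel bias is of smaller order than the quadratic term. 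The bulk of this computation is exactly the content of Lemma \ref{LE:P2}, which I would invoke here.

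Then I would verify the ``empirical-process-side'' requirements. The relevant function classes are generated by halfspace indicators $\{1[x'b>0]\}$, which form a VC class; multiplying by the fixed bounded-variation, compactly supported kernel (P7) and by the bounded random variables $Y_{idst}(-1)^{d_j}$ preserves the Euclidean/VC property, so $\{\phi_{Ni}(\cdot)\}$ and $\{\varphi_{Ni}(\cdot)\}$ are manageable with square-integrable envelopes—the needed second moments of $X_{jts}$ and $W_{ts}$ exist by P6. The local $L^2$-modulus condition, namely that $\mathbb{E}[(\phi_{Ni}(b_1)-\phi_{Ni}(b_2))^2]$ behaves, after rescaling by $h_N^{k_1+k_2}$, like $\|b_1-b_2\|$ for $b_1,b_2$ near $\beta$, follows again from P5--P6 and a change of variables, and this is the input producing the covariance kernels $\mathbb{H}_1$, $\mathbb{H}_2$ in \eqref{EQ:H1}--\eqref{EQ:H2}; the differentiability of $K$ needed for the drift term is P7(v), and Assumptions P8--P9 supply the remaining rate restrictions.

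I expect the main obstacle to be bookkeeping rather than conceptual: the ``observation'' here is a within-agent sum over period pairs $t>s$ and over the four choice categories, so the manifold and $L^2$-continuity calculations must be carried out for each $(t,s,d)$ block and then aggregated, while tracking that the only discontinuity in $b$ enters through $1[X_{i1ts}'b>0]$ (resp.\ $1[X_{i2ts}'b>0]$) and that the kernel weights localize the complementary matching variables to zero at the rate dictated by P9. Checking that these aggregated quantities still satisfy the precise form of Assumption M—in particular that the Hessians $\mathbb{V}$, $\mathbb{W}$ are finite and strictly negative definite and that the covariance kernels are well defined—is where P3 and P6 do the real work; the other items are routine consequences of the smoothness imposed in P5--P7 together with the bandwidth restrictions.
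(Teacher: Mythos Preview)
Your plan matches the paper's approach in substance: verify, block by block, the three parts of Assumption M in \cite{SeoOtsu2018} (M.i the quadratic expansion of the mean, M.ii a lower $L^2$ bound, M.iii a uniform upper $L^2$ bound), treating $\phi_{Ni}$ and $\varphi_{Ni}$ in parallel and focusing on a single representative summand such as $\mathcal{K}_{h_N}(X_{i2ts},W_{its})Y_{idst}(-1)^{d_1}(1[X_{i1ts}'b>0]-1[X_{i1ts}'\beta>0])$.

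Two places deserve care. First, you invert the dependency between Lemmas \ref{LE:P1} and \ref{LE:P2}: in the paper, the quadratic expansion (your ``population-side'' step, their M.i) is carried out \emph{inside} the proof of Lemma \ref{LE:P1}---using the Kim--Pollard surface-integral calculation to obtain \eqref{EQ:phiNi} and \eqref{EQ:phiNir}---and Lemma \ref{LE:P2} then plugs $b=\beta+\rho(Nh_N^{k_1+k_2})^{-1/3}$ into those displays. So you cannot ``invoke Lemma \ref{LE:P2}'' here; you must actually do the expansion (which is exactly what you describe, just misattributed). Second, your single ``$L^2$-modulus condition'' collapses two distinct requirements that the paper checks separately. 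M.ii is a \emph{lower} bound, $h_N^{(k_1+k_2)/2}\|\phi_{Ni}(b_1)-\phi_{Ni}(b_2)\|\geq C\|b_1-b_2\|$, obtained by noting the indicator difference squares to its absolute value and then lower-bounding the conditional expectation of the kernel-weighted factor. M.iii is an \emph{upper} bound on a supremum, $h_N^{k_1+k_2}\mathbb{E}[\sup_{\|b_1-b_2\|<\varepsilon}(\phi_{Ni}(b_1)-\phi_{Ni}(b_2))^2]\leq C\varepsilon$, not the pointwise second moment you write; the $\sup$ matters and is handled by the same ``difference of indicators is $\{0,\pm1\}$'' trick followed by P3/P5. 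Your VC/manageability paragraph is fine as an add-on (the paper states manageability in the proof of Theorem \ref{T:paneldist} rather than inside Lemma \ref{LE:P1}), but it does not substitute for verifying M.ii and M.iii directly.
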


\begin{lemma}
\label{LE:P2}Suppose Assumptions P1--P9 hold. Then
\[
\lim_{N\rightarrow\infty}(  Nh_{N}^{k_{1}+k_{2}})  ^{2/3}%
\mathbb{E}[  \phi_{Ni}(  \beta+\rho\left(  Nh_{N}^{k_{1}+k_{2}%
})  ^{-1/3}\right)]  =\frac{1}{2}\rho^{\prime}\mathbb{V}\rho,
\]%
\[
\lim_{N\rightarrow\infty}(  N\sigma_{N}^{2k_{1}})  ^{2/3}%
\mathbb{E}[  \varphi_{Ni}(  \gamma+\delta(  N\sigma_{N}%
^{2k_{1}})  ^{-1/3}) ]  =\frac{1}{2}\delta^{\prime
}\mathbb{W}\delta,
\]%
\[
\lim_{N\rightarrow\infty} (  Nh_{N}^{k_{1}+k_{2}} )  ^{1/3}%
\mathbb{E} [  h_{N}^{k_{1}+k_{2}}\phi_{Ni} (  \beta+\rho_{1} (
Nh_{N}^{k_{1}+k_{2}} )  ^{-1/3} )  \phi_{Ni} (  \beta+\rho_{2} (  Nh_{N}^{k_{1}+k_{2}} )  ^{-1/3} )   ]
=\mathbb{H}_{1}\left(  \rho_{1},\rho_{2}\right)  ,
\]
and
\[
\lim_{N\rightarrow\infty} (  N\sigma_{N}^{2k_{1}} )  ^{1/3}%
\mathbb{E} [  \sigma_{N}^{2k_{1}}\varphi_{Ni} (  \gamma+\delta
_{1} (  N\sigma_{N}^{2k_{1}} )  ^{-1/3} )  \varphi_{Ni} (
\gamma+\delta_{2} (  N\sigma_{N}^{2k_{1}} )  ^{-1/3} )  ]
=\mathbb{H}_{2}\left(  \delta_{1},\delta_{2}\right)  ,
\]
where $\rho,$ $\rho_{1},$ and $\rho_{2}$ are $k_{1}\times1$ vectors, $\delta,$
$\delta_{1},$ and $\delta_{2}$ are $k_{2}\times1$ vectors, $\mathbb{V}$ is a
$k_{1}\times$ $k_{1}$ matrix defined as
\begin{align}
\mathbb{V}= &  -\sum_{t>s}\sum_{d\in\mathcal{D}}\left\{  \int1\left[
x^{\prime}\beta=0\right]  \left(  \frac{\partial\kappa_{dts}^{\left(
1\right)  }\left(  x\right)  }{\partial x}^{\prime}\beta\right)
f_{X_{1ts}|\left\{  X_{2ts}=0,W_{ts}=0\right\}  }\left(  x\right)  xx^{\prime
}\text{\emph{d}}\sigma_{0}^{\left(  1\right)  }\right.  \label{EQ:V}\\
&  \left.  +\int1\left[  x^{\prime}\beta=0\right]  \left(  \frac
{\partial\kappa_{dts}^{\left(  2\right)  }\left(  x\right)  }{\partial
x}^{\prime}\beta\right)  f_{X_{2ts}|\left\{  X_{1ts}=0,W_{ts}=0\right\}
}\left(  x\right)  xx^{\prime}\text{\emph{d}}\sigma_{0}^{\left(  2\right)
}\right\}  ,\nonumber
\end{align}
$\mathbb{W}$ is a $k_{2}\times$ $k_{2}$ matrix defined as
\begin{equation}
\mathbb{W}=-\sum_{t>s}\int1\left[  w^{\prime}\gamma=0\right]  \left(
\frac{\partial\kappa_{\left(  1,1\right)  ts}^{\left(  3\right)  }\left(
w\right)  }{\partial w}^{\prime}\gamma\right)  f_{W_{ts}|\left\{
X_{1ts}=0,X_{2ts}=0\right\}  }\left(  w\right)  ww^{\prime}\text{\emph{d}%
}\sigma_{0}^{\left(  3\right)  },\label{EQ:W}%
\end{equation}
and $\mathbb{H}_{1}$ and $\mathbb{H}_{2}$ are written respectively as
\begin{align}
&  \mathbb{H}_{1}\left(  \rho_{1},\rho_{2}\right)  \label{EQ:H1}\\
= &  \frac{1}{2}\sum_{t>s}\sum_{d\in\mathcal{D}}\left\{  \int\kappa
_{dts}^{\left(  4\right)  }\left(  \bar{x}\right)  \left[  \left\vert \bar
{x}^{\prime}\rho_{1}\right\vert +\left\vert \bar{x}^{\prime}\rho
_{2}\right\vert -\left\vert \bar{x}^{\prime}\left(  \rho_{1}-\rho_{2}\right)
\right\vert \right]  f_{X_{1ts}|\left\{  X_{2ts}=0,W_{ts}=0\right\}  }\left(
0,\bar{x}\right)  \text{\emph{d}}\bar{x}f_{X_{2ts},W_{ts}}\left(  0,0\right)
\right.  \nonumber\\
& \left.  +\int\kappa_{dts}^{\left(  5\right)  }\left(  \bar{x}\right)
\left[  \left\vert \bar{x}^{\prime}\rho_{1}\right\vert +\left\vert \bar
{x}^{\prime}\rho_{2}\right\vert -\left\vert \bar{x}^{\prime}\left(  \rho
_{1}-\rho_{2}\right)  \right\vert \right]  f_{X_{2ts}|\left\{  X_{1ts}%
=0,W_{ts}=0\right\}  }\left(  0,\bar{x}\right)  \text{\emph{d}}\bar
{x}f_{X_{1ts},W_{ts}}\left(  0,0\right)  \right\}  \bar{K}_{2}^{k_{1}+k_{2}%
},\nonumber
\end{align}
and%
\begin{align}
\mathbb{H}_{2}\left(  \delta_{1},\delta_{2}\right)  = &  \frac{1}{2}\sum
_{t>s}\int\kappa_{\left(  1,1\right)  ts}^{\left(  6\right)  }\left(  \bar
{w}\right)  \left[  \left\vert \bar{w}^{\prime}\delta_{1}\right\vert
+\left\vert \bar{w}^{\prime}\delta_{2}\right\vert -\left\vert \bar{w}^{\prime
}\left(  \delta_{1}-\delta_{2}\right)  \right\vert \right]  \label{EQ:H2}\\
&  \cdot f_{W_{ts}|\left\{  X_{1ts}=0,X_{2ts}=0\right\}  }\left(  0,\bar
{w}\right)  \text{\emph{d}}\bar{w}f_{X_{1ts},X_{2ts}}\left(  0,0\right)
\bar{K}_{2}^{2k_{1}}.\nonumber
\end{align}
Technical terms that appear in $\mathbb{V}$, $\mathbb{W}$, $\mathbb{H}_{1}$,
$\mathbb{H}_{2}$ are defined as follows. $\sigma_{0}^{\left(  1\right)  },$
$\sigma_{0}^{\left(  2\right)  },$ and $\sigma_{0}^{\left(  3\right)  }$ are
the surface measures of $\left\{  X_{1ts}\beta=0\right\}  ,$ $\left\{
X_{2ts}\beta=0\right\}  $, and $\left\{  W_{ts}\gamma=0\right\}  ,$
respectively. $\bar{K}_{2}\equiv\int K\left(  u\right)  ^{2}$d$u.$ We
decompose $X_{1ts},X_{2ts}\ $and $W_{ts}$ into $X_{jts}=a\beta+\bar{X}%
_{jts},j=1,2,$ and $W_{ts}=a\gamma+\bar{W}_{ts},$ where $\bar{X}_{jts},j=1,2,$
are orthogonal to $\beta,$ and $\bar{W}_{ts}$ is orthogonal to $\gamma$. The
density for $X_{1ts}$ is written as $f_{X_{1ts}}\left(  a,\bar{X}%
_{1ts}\right)  $ under this decomposition. Densities for $X_{2ts}\ $and
$W_{ts}$ are written similarly. Further,%
\[
\kappa_{dts}^{\left(  1\right)  }\left(  x\right)  \equiv\mathbb{E}%
[Y_{idst}\left(  -1\right)  ^{d_{1}}|X_{i1ts}=x,X_{i2ts}=0,W_{its}=0],
\]%
\[
\kappa_{dts}^{\left(  2\right)  }\left(  x\right)  \equiv\mathbb{E}%
[Y_{idst}\left(  -1\right)  ^{d_{2}}|X_{i2ts}=x,X_{i1ts}=0,W_{its}=0],
\]%
\[
\kappa_{\left(  1,1\right)  ts}^{\left(  3\right)  }\left(  w\right)
\equiv\mathbb{E}\left[  Y_{i(1,1)ts}|W_{its}=w,X_{i1ts}=0,X_{i2ts}=0\right]  ,
\]%
\[
\kappa_{dts}^{\left(  4\right)  }\left(  x\right)  \equiv\mathbb{E}\left[
\left\vert Y_{idst}\right\vert \text{ }|X_{i1ts}=x,X_{i2ts}=0,W_{its}%
=0\right]  ,
\]%
\[
\kappa_{dts}^{\left(  5\right)  }\left(  x\right)  \equiv\mathbb{E}\left[
\left\vert Y_{idst}\right\vert \text{ }|X_{i2ts}=x,X_{i1ts}=0,W_{its}%
=0\right]  ,
\]
and%
\[
\kappa_{\left(  1,1\right)  ts}^{\left(  6\right)  }\left(  w\right)
\equiv\mathbb{E}\left[  \left\vert Y_{i\left(  1,1\right)  ts}\right\vert
\text{ }|W_{its}=w,X_{i1ts}=0,X_{i2ts}=0\right]  .
\]

\end{lemma}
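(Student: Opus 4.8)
The plan is to establish the four limits separately; they share a common architecture, so I will describe the template once and indicate how the instances differ. By linearity of expectation it suffices, for the two drift statements, to analyze a single representative summand of $\phi_{Ni}$ (resp.\ $\varphi_{Ni}$) over $(t,s)$, $d\in\mathcal{D}$ and the two ``index'' pieces, and for the two covariance statements a single representative pair of summands after expanding the product. For the index-$1$ piece at fixed $(t,s,d)$ the object is $\mathbb{E}[\mathcal{K}_{h_{N}}(X_{i2ts},W_{its})\,Y_{idst}(-1)^{d_{1}}(1[X_{i1ts}'b>0]-1[X_{i1ts}'\beta>0])]$. Conditioning on $X_{i1ts}$, the inner conditional expectation of $Y_{idst}(-1)^{d_{1}}$ equals $\kappa^{(1)}_{dts}$ when the matched covariates vanish; the change of variables $u=X_{i2ts}/h_{N}$, $v=W_{its}/h_{N}$ in the kernel, together with the second-order kernel property P7 and the smoothness in P5--P6, replaces $\mathcal{K}_{h_{N}}$ by the point mass at $(0,0)$ with an $O(h_{N}^{2})$ bias. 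That bias is negligible after the relevant scaling precisely because of P9(iii), which is why that rate condition is imposed.

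For the drift limits ($\mathbb{V}$, $\mathbb{W}$), decompose $X_{1ts}=a\beta+\bar X_{1ts}$ with $\bar X_{1ts}\perp\beta$, write $\int(\cdot)\,dx=\int\mathrm{d}\sigma_{0}^{(1)}(\bar x)\int\mathrm{d}a\,(\cdot)$, and substitute $a=(Nh_{N}^{k_{1}+k_{2}})^{-1/3}\tau$; the indicator difference then confines $\tau$ to an $O(1)$ set and contributes a Jacobian factor equal to the scale. The crucial structural fact, obtained from the conditional homogeneity P1(ii) exactly as in the derivation of the moment inequalities (\ref{panelmib1}), is that $\kappa^{(1)}_{dts}(x)=0$ whenever $x'\beta=0$: if $X_{i1ts}'\beta=0$, $X_{i2ts}=0$, $W_{its}=0$, the two periods carry identical index vectors, so by stationarity the period-$t$ and period-$s$ choice probabilities coincide for every $\alpha$, hence in expectation. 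Consequently the $O(\text{scale})$ term in the Taylor expansion of $\kappa^{(1)}_{dts}(a\beta+\bar x)f(a\beta+\bar x)$ about $a=0$ vanishes, and the leading surviving term is $O(\text{scale}^{2})$ and involves only the normal derivative $\tfrac{\partial\kappa^{(1)}_{dts}}{\partial x}{}'\beta$ on $\{x'\beta=0\}$ (the density's derivative drops out because $\kappa^{(1)}_{dts}$ itself vanishes there). Carrying out $\int\tau\,(1[\tau+\bar x'\rho>0]-1[\tau>0])\,\mathrm{d}\tau=-(\bar x'\rho)^{2}/2$, summing over $(t,s,d)$ and both pieces, and multiplying by $(Nh_{N}^{k_{1}+k_{2}})^{2/3}$ reproduces $\tfrac12\rho'\mathbb{V}\rho$ with $\mathbb{V}$ as in (\ref{EQ:V}); the overall minus sign and negative semidefiniteness reflect that $\beta$ maximizes the localized population criterion (which is also what, via P3, makes $\mathbb{V}$ nonsingular). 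The statement for $\varphi_{Ni}$ is identical with $(X_{1ts},X_{2ts})$ as the matched covariates, $W_{ts}$ the free regressor, $\kappa^{(3)}_{(1,1)ts}$ (vanishing on $\{w'\gamma=0\}$ by (\ref{panelmir3})) in place of $\kappa^{(1)}_{dts}$, and scale $(N\sigma_{N}^{2k_{1}})^{-1/3}$, giving $\tfrac12\delta'\mathbb{W}\delta$.

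For the covariance-kernel limits ($\mathbb{H}_{1}$, $\mathbb{H}_{2}$), expand $\phi_{Ni}(b_{1})\phi_{Ni}(b_{2})$. A term indexed by $(t,s,d,\text{piece})$ times one indexed by $(t',s',d',\text{piece}')$ carries a product of two kernels; unless the two triples share the same $(t,s)$ and the same index piece, that product does not concentrate on a common low-dimensional set, so by P5--P9 the off-diagonal contributions are $o((Nh_{N}^{k_{1}+k_{2}})^{-1/3})$ and drop. On the diagonal one gets $\mathcal{K}_{h_{N}}(X_{i2ts},W_{its})^{2}$, whose change of variables yields $h_{N}^{-(k_{1}+k_{2})}\bar K_{2}^{k_{1}+k_{2}}f_{X_{2ts},W_{ts}}(0,0)$, so the prefactor $h_{N}^{k_{1}+k_{2}}$ cancels the $h_{N}$-powers. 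What remains is a conditional second moment of the relevant $Y$-combination, which collapses to $\sum_{d}\kappa^{(4)}_{dts}$ at the switching configuration. Decomposing $X_{1ts}=a\beta+\bar X_{1ts}$ and substituting $a=(Nh_{N}^{k_{1}+k_{2}})^{-1/3}\tau$ as before, the key computation is the Kim--Pollard identity $\int(1[\tau+c_{1}>0]-1[\tau>0])(1[\tau+c_{2}>0]-1[\tau>0])\,\mathrm{d}\tau=\tfrac12(|c_{1}|+|c_{2}|-|c_{1}-c_{2}|)$ with $c_{j}=\bar x'\rho_{j}$, producing precisely the bracket $|\bar x'\rho_{1}|+|\bar x'\rho_{2}|-|\bar x'(\rho_{1}-\rho_{2})|$ of (\ref{EQ:H1}). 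Summing over $(t,s,d)$ and both pieces and multiplying by $(Nh_{N}^{k_{1}+k_{2})^{1/3}}$ gives $\mathbb{H}_{1}(\rho_{1},\rho_{2})$; the argument for $\mathbb{H}_{2}$ is the same with $W_{ts}$ decomposed along $\gamma$, $\kappa^{(6)}_{(1,1)ts}$, and scale $(N\sigma_{N}^{2k_{1}})^{-1/3}$.

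The main obstacle is bookkeeping rather than conceptual: (i) showing rigorously that the kernel-induced bias and every off-diagonal contribution (distinct $(t,s)$, distinct $d$, distinct index piece) are negligible at the exact rate, which is where the boundedness, twice differentiability, second-order kernel, and rate conditions in P5--P9 are used in full; and (ii) carrying the surface-measure and Jacobian factors through the coarea decomposition consistently so that the $xx'$ (resp.\ $ww'$) weighting and the conditional densities emerge in exactly the form of (\ref{EQ:V})--(\ref{EQ:H2}). Beyond this there is no difficulty, because the ``$\kappa\equiv 0$ on the switching surface'' property supplied by conditional homogeneity is what makes the drift quadratic and puts the problem squarely within the scope of \cite{KimPollard1990} and \cite{SeoOtsu2018}, whose framework (verified in Lemma \ref{LE:P1}) then delivers Theorem \ref{T:paneldist}.
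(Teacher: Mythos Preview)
Your proposal is correct and follows essentially the same architecture as the paper: kernel localization to the matched point with $O(h_N^2)$ bias controlled by P9(iii), the crucial vanishing $\kappa^{(1)}_{dts}(x)=0$ on $\{x'\beta=0\}$ from conditional homogeneity to make the drift quadratic, and the Kim--Pollard decomposition $x=a\beta+\bar x$ for the covariance kernel. The only differences are in execution: for $\mathbb{V}$ the paper differentiates in $b$ via the Kim--Pollard change-of-coordinates map $T_b$ rather than your direct substitution $a=(Nh_N^{k_1+k_2})^{-1/3}\tau$ and evaluation of $\int\tau(1[\tau+\bar x'\rho>0]-1[\tau>0])\,\mathrm{d}\tau$, and for $\mathbb{H}_1$ the paper first computes $\mathbb{L}(\rho)=\lim_{N}\upsilon_N\,\mathbb{E}[h_N^{k_1+k_2}\phi_{Ni}(\beta+\rho\upsilon_N^{-1})^2]$ and then polarizes, $\mathbb{H}_1(\rho_1,\rho_2)=\tfrac12[\mathbb{L}(\rho_1)+\mathbb{L}(\rho_2)-\mathbb{L}(\rho_1-\rho_2)]$, instead of invoking the product-of-indicators identity directly---both routes are equivalent.
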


\begin{proof}
[Proof of Theorem \ref{T:paneldist}]Lemma \ref{LE:P1} verifies the technical
conditions for applying the results in \cite{SeoOtsu2018}, and shows that
$\phi_{Ni}\left(  b\right)  $ and $\varphi_{Ni}\left(  r\right)  $ are
\textit{manageable} in the sense of \cite{KimPollard1990}. By Assumption P9,
Lemma 1 in \cite{SeoOtsu2018} and its subsequent analysis, we have%
\[
\hat{\beta}-\beta=O_{p} (   (  Nh_{N}^{k_{1}+k_{2}} )
^{-1/3} )  \text{ and }\hat{\gamma}-\gamma=O_{p} (   (
N\sigma_{N}^{2k_{1}} )  ^{-1/3} )  .
\]
Notice that $\hat{\beta}$ can be equivalently obtained from
\[
\arg\max_{b\in\mathcal{B}} (  Nh_{N}^{k_{1}+k_{2}})  ^{2/3}\cdot
N^{-1}\sum_{i=1}^{N}\phi_{Ni} (  \beta+ (  Nh_{N}^{k_{1}+k_{2}%
} )  ^{-1/3} \{  (  Nh_{N}^{k_{1}+k_{2}} )  ^{1/3} (
b-\beta )   \}   )  .
\]
We get the asymptotics of $\hat{\beta}$ if we can get the asymptotics of
\[
\arg\max_{\rho\in\mathbb{R}^{k_{1}}} (  Nh_{N}^{k_{1}+k_{2}} )
^{2/3}\cdot N^{-1}\sum_{i=1}^{N}\phi_{Ni} (  \beta+ (  Nh_{N}%
^{k_{1}+k_{2}} )  ^{-1/3}\rho)  .
\]
Since $\phi_{Ni}\left(  b\right)  $ is \textit{manageable} in the sense of
\cite{KimPollard1990}, by Theorem 1 in \cite{SeoOtsu2018} and Lemma
\ref{LE:P2}, we have the uniform convergence of the above stochastic process
and
\[
 (  Nh_{N}^{k_{1}+k_{2}} )  ^{2/3}\cdot N^{-1}\sum_{i=1}^{N}\phi
_{Ni} (  \beta+ (  Nh_{N}^{k_{1}+k_{2}})  ^{-1/3}\rho )
\rightsquigarrow\mathcal{Z}_{1}\left(  \rho\right)  ,
\]
where $\mathcal{Z}_{1}\left(  \rho\right)  $ is a Gaussian process with
continuous sample path, expected value $\frac{1}{2}\rho^{\prime}\mathbb{V}%
\rho$ and covariance kernel $\mathbb{H}_{1}\left(  \rho_{1},\rho_{2}\right)
.$ Apply the CMT to obtain
\[
( Nh_{N}^{k_{1}+k_{2}}) ^{1/3}( \hat{\beta}-\beta) \overset{d}{\rightarrow
}\arg\max_{\rho}\mathcal{Z}_{1}\left(  \rho\right) .
\]

Apply similar arguments to $\hat{\gamma}$. By Theorem 1 in \cite{SeoOtsu2018}
and Lemma \ref{LE:P2}, we get%
\[
(N\sigma_{N}^{2k_{1}})^{1/3}\left(  \hat{\gamma}-\gamma\right)
\overset{d}{\rightarrow}\arg\max_{\delta\in\mathbb{R}^{k_{2}}}\mathcal{Z}%
_{2}\left(  \delta\right)  ,
\]
where $\mathcal{Z}_{2}\left(  \delta\right)  $ is a Gaussian process with
continuous sample path, expected value $\frac{1}{2}\delta^{\prime}%
\mathbb{W}\delta$ and covariance kernel $\mathbb{H}_{2}\left(  \delta
_{1},\delta_{2}\right)  .$
\end{proof}

\subsection{Proof of Theorem \ref{Thm:panelinfer}}

\begin{proof}
[Proof of Theorem \ref{Thm:panelinfer}] We show that the
numerical bootstrap works for $\beta $. The discussion for $\gamma $ is
omitted due to the similarity.

The key is to show the results in \cite{HongLi2020} hold by modifying
condition (vi) in Theorem 4.1 (using their notations), for example, for $%
\beta $, that
\begin{equation*}
\Sigma _{1/2}\left( \rho _{1},\rho _{2}\right) =\lim_{N\rightarrow \infty
}(Nh_{N}^{k_{1}+k_{2}})^{1/3}\mathbb{E}[h_{N}^{k_{1}+k_{2}}\phi _{Ni}(\beta
+\rho _{1}(Nh_{N}^{k_{1}+k_{2}})^{-1/3})\phi _{Ni}(\beta +\rho
_{2}(Nh_{N}^{k_{1}+k_{2}})^{-1/3})]
\end{equation*}%
exists for all $\rho _{1},\rho _{2}\in \mathbb{R}^{k_{1}}$. This is indeed
the case as shown by our Lemma \ref{LE:P2} in Appendix \ref{appendixB}. We
now provide the details of the proof.

By some standard calculation as \cite{SeoOtsu2018}, we can show that the
convergence rate of $\hat{\beta}^{\ast }$ is $(\varepsilon
_{N1}^{-1}h_{N}^{k_{1}+k_{2}})^{-1/3}.$ Thus, if we manage to show the limit
of%
\begin{align}
& (\varepsilon _{N1}^{-1}h_{N}^{k_{1}+k_{2}})^{2/3}\cdot
N^{-1}\sum_{i=1}^{N}\phi _{Ni}(\beta +\rho (\varepsilon
_{N1}^{-1}h_{N}^{k_{1}+k_{2}})^{-1/3})+(\varepsilon
_{N1}^{-1}h_{N}^{k_{1}+k_{2}})^{2/3}\cdot (N\varepsilon _{N1})^{1/2}
\label{EQ:betahatstarobj} \\
& \cdot \lbrack N^{-1}\sum_{i=1}^{N}\phi _{Ni}^{\ast }(\beta +\rho
(\varepsilon _{N1}^{-1}h_{N}^{k_{1}+k_{2}})^{-1/3})-N^{-1}\sum_{i=1}^{N}\phi
_{Ni}(\beta +\rho (\varepsilon _{N1}^{-1}h_{N}^{k_{1}+k_{2}})^{-1/3})],
\notag
\end{align}%
then the limiting distribution of $\hat{\beta}^{\ast }$ can be established.

For the first term in equation (\ref{EQ:betahatstarobj}),
\begin{align}
& ( \varepsilon_{N1}^{-1}h_{N}^{k_{1}+k_{2}}) ^{2/3}\cdot
N^{-1}\sum_{i=1}^{N}\phi_{Ni}( \beta+\rho(
\varepsilon_{N1}^{-1}h_{N}^{k_{1}+k_{2}}) ^{-1/3})  \label{EQ:betaojbt1} \\
= &( \varepsilon_{N1}^{-1}h_{N}^{k_{1}+k_{2}}) ^{2/3}\mathbb{E}[ \phi_{Ni}(
\beta+\rho( \varepsilon_{N1}^{-1}h_{N}^{k_{1}+k_{2}}) ^{-1/3}) ]  \notag \\
& +( \varepsilon_{N1}^{-1}h_{N}^{k_{1}+k_{2}}) ^{2/3}\cdot
N^{-1}\sum_{i=1}^{N}[ \phi_{Ni}( \beta+\rho( \varepsilon
_{N1}^{-1}h_{N}^{k_{1}+k_{2}}) ^{-1/3}) -\mathbb{E}[ \phi_{Ni}( \beta+\rho(
\varepsilon_{N1}^{-1}h_{N}^{k_{1}+k_{2}}) ^{-1/3}) ] ]  \notag \\
= &\frac{1}{2}\rho^{\prime}\mathbb{V}\rho+O_{p}( ( \varepsilon
_{N1}^{-1}h_{N}^{k_{1}+k_{2}}) ^{2/3}\cdot N^{-1/2}\cdot h_{N}^{-(
k_{1}+k_{2}) /2}( \varepsilon_{N1}^{-1}h_{N}^{k_{1}+k_{2}}) ^{-1/6}) =\frac{1%
}{2}\rho^{\prime}\mathbb{V}\rho+o_{p}( 1) ,  \notag
\end{align}
where the second equality follows by Lemma \ref{LE:P2}, the assumption on $%
\varepsilon_{N1}$ such that the bias term $h_{N}^{2}$ is a small order term
compared to $( \varepsilon_{N1}^{-1}h_{N}^{k_{1}+k_{2}}) ^{2/3},$ and Markov
inequality, and the last equality holds by $N\varepsilon_{N1}\rightarrow%
\infty.$

The second term in equation (\ref{EQ:betahatstarobj}) is a sample average of
mean 0 series under $P^{\ast}$ multiplied by $(
\varepsilon_{N1}^{-1}h_{N}^{k_{1}+k_{2}}) ^{2/3}( N\varepsilon_{N1}) ^{1/2}.$
The covariance kernel of the second term is
\begin{align}
& ( \varepsilon_{N1}^{-1}h_{N}^{k_{1}+k_{2}}) ^{4/3}\cdot(
N\varepsilon_{N1}) \cdot N^{-1}\cdot\mathbb{E}^{\ast}[ \phi _{Ni}^{\ast}(
\beta+\rho_{1}( \varepsilon_{N1}^{-1}h_{N}^{k_{1}+k_{2}}) ^{-1/3})
\phi_{Ni}^{\ast}( \beta+\rho _{2}( \varepsilon_{N1}^{-1}h_{N}^{k_{1}+k_{2}})
^{-1/3}) ]  \notag \\
= &( \varepsilon_{N1}^{-1}h_{N}^{k_{1}+k_{2}}) ^{1/3}\mathbb{E}^{\ast}[
h_{N}^{k_{1}+k_{2}}\phi_{Ni}^{\ast}( \beta +\rho_{1}(
\varepsilon_{N1}^{-1}h_{N}^{k_{1}+k_{2}}) ^{-1/3}) \phi_{Ni}^{\ast}(
\beta+\rho_{2}( \varepsilon _{N1}^{-1}h_{N}^{k_{1}+k_{2}}) ^{-1/3}) ]  \notag
\\
= &( \varepsilon_{N1}^{-1}h_{N}^{k_{1}+k_{2}}) ^{1/3}\mathbb{E}[
h_{N}^{k_{1}+k_{2}}\phi_{Ni}( \beta+\rho_{1}(
\varepsilon_{N1}^{-1}h_{N}^{k_{1}+k_{2}}) ^{-1/3}) \phi _{Ni}(
\beta+\rho_{2}( \varepsilon_{N1}^{-1}h_{N}^{k_{1}+k_{2}}) ^{-1/3}) ] +o_{p}(
1)  \notag \\
= &\mathbb{H}_{1}( \rho_{1},\rho_{2}) +o_{p}( 1) ,  \label{EQ:betaobjt2}
\end{align}
by the i.i.d. sampling and Lemma \ref{LE:P2}.

Equations (\ref{EQ:betaojbt1}) and (\ref{EQ:betaobjt2}) imply that the term
in equation (\ref{EQ:betahatstarobj}) converges to $\mathcal{Z}_{1}^{\ast
}\left( \rho \right) $ with mean $\frac{1}{2}\rho ^{\prime }\mathbb{V}\rho $
and covariance kernel $\mathbb{H}_{1}\left( \rho _{1},\rho _{2}\right) $.
Thus, $\mathcal{Z}_{1}^{\ast }\left( \rho \right) $ is an independent copy
of $\mathcal{Z}_{1}\left( \rho \right) $ and
\begin{equation*}
(\varepsilon _{N1}^{-1}h_{N}^{k_{1}+k_{2}})^{1/3}(\hat{\beta}^{\ast }-\beta )%
\overset{d}{\rightarrow }\arg \max_{\rho \in \mathbb{R}^{k_{1}}}\mathcal{Z}%
_{1}^{\ast }\left( \rho \right) .
\end{equation*}%
Finally, note, by $N\varepsilon _{N1}\rightarrow \infty $ and
\begin{align*}
(\varepsilon _{N1}^{-1}h_{N}^{k_{1}+k_{2}})^{1/3}(\hat{\beta}^{\ast }-\hat{%
\beta})& =(\varepsilon _{N1}^{-1}h_{N}^{k_{1}+k_{2}})^{1/3}(\hat{\beta}%
^{\ast }-\beta )+(\varepsilon _{N1}^{-1}h_{N}^{k_{1}+k_{2}})^{1/3}(\hat{\beta%
}-\beta ) \\
& =(\varepsilon _{N1}^{-1}h_{N}^{k_{1}+k_{2}})^{1/3}(\hat{\beta}^{\ast
}-\beta )+O_{p}((\varepsilon _{N1}^{-1}h_{N}^{k_{1}+k_{2}})^{1/3}\cdot
(Nh_{N}^{k_{1}+k_{2}})^{-1/3}) \\
& =(\varepsilon _{N1}^{-1}h_{N}^{k_{1}+k_{2}})^{1/3}(\hat{\beta}^{\ast
}-\beta )+o_{p}(1),
\end{align*}%
we have%
\begin{equation*}
(\varepsilon _{N1}^{-1}h_{N}^{k_{1}+k_{2}})^{1/3}(\hat{\beta}^{\ast }-\hat{%
\beta})\overset{d}{\rightarrow }\arg \max_{\rho \in \mathbb{R}^{k_{1}}}%
\mathcal{Z}_{1}^{\ast }\left( \rho \right) .
\end{equation*}

\end{proof}

\subsection{Proof of Theorem \ref{T:paneltest}}
\begin{proof} [Proof of Theorem \ref{T:paneltest}] Recall that under Assumptions
P1--P9, we have shown that our estimator satisfies the technical conditions required
by \cite{SeoOtsu2018} in Lemma \ref{LE:P1}. Then applying Lemma 1 in \cite%
{SeoOtsu2018} to $N^{-1}\mathcal{L}_{N,\gamma }^{P,K}(r),$ we get%
\begin{equation}
N^{-1}\mathcal{L}_{N,\gamma }^{P,K}(r)-\mathcal{\bar{L}}^{P}\left( r\right)
=N^{-1}\mathcal{L}_{N,\gamma }^{P,K}(\gamma )-\mathcal{\bar{L}}^{P}\left(
\gamma \right) +\varepsilon \left\Vert r-\gamma \right\Vert ^{2}+O_{p}(
\sigma _{N}^{2}) +O_{p}( ( N\sigma _{N}^{2k_{1}})
^{-2/3})  \label{EQ:lnpk1}
\end{equation}%
for any small $\varepsilon >0,$ where use some of the results in Lemma \ref{LE:P2}.

Some standard calculations as in the proof of Lemma \ref{LE:P2} yield
\begin{equation}
\mathcal{\bar{L}}^{P}\left( r\right) -\mathcal{\bar{L}}^{P}\left( \gamma
\right) =\frac{1}{2}\left( r-\gamma \right) ^{\prime }\mathbb{W}\left(
r-\gamma \right) +o_{p}\left( \left\Vert r-\gamma \right\Vert ^{2}\right)  
\label{EQ:lnpk2}
\end{equation}%
for any $r$ in a small neighborhood of $\gamma .$

Recall that we assume $( N\sigma _{N}^{2k_{1}})^{2/3}\sigma
_{N}^{2}\rightarrow 0$ in Assumption P9 and that $\mathbb{W}$ is finite. Combining these two
conditions with (\ref{EQ:lnpk1}) and (\ref{EQ:lnpk2}) gives
\begin{equation*}
N^{-1}\mathcal{L}_{N,\gamma }^{P,K}(r)=N^{-1}\mathcal{L}_{N,\gamma
}^{P,K}(\gamma )+O_{p}\left( \left\Vert r-\gamma \right\Vert ^{2}\right)
+O_{p}( \left( N\sigma _{N}^{2k_{1}}\right) ^{-2/3}) ,
\end{equation*}%
which further implies
\begin{equation*}
N^{-1}\mathcal{L}_{N,\gamma }^{P,K}(\hat{\gamma})-N^{-1}\mathcal{L}%
_{N,\gamma }^{P,K}(\gamma )=O_{p}( \left( N\sigma _{N}^{2k_{1}}\right)
^{-2/3}) 
\end{equation*}%
since $\hat{\gamma}-\gamma =O_{p}( \left( N\sigma
_{N}^{2k_{1}}) ^{-1/3}\right)$ as shown in Theorem \ref{T:paneldist}. Therefore, we have
\begin{equation}
\sqrt{N\sigma _{N}^{2k_{1}}}\left( N^{-1}\mathcal{L}_{N,\gamma }^{P,K}(\hat{%
\gamma})-N^{-1}\mathcal{L}_{N,\gamma }^{P,K}(\gamma )\right) =o_{p}\left(
1\right) .  \label{EQ:lnpk3}
\end{equation}

Additionally, by applying the Lindeberg-Feller CLT, we have
\begin{equation*}
\sqrt{N\sigma _{N}^{2k_{1}}}\left( N^{-1}\mathcal{L}_{N,\gamma
}^{P,K}(\gamma )-\mathcal{\bar{L}}^{P}\left( \gamma \right) \right) \overset{%
d}{\rightarrow }N\left( 0,\Delta ^{P}_{\gamma}\right) .  
\end{equation*}%
Combining this with (\ref{EQ:lnpk3}), we arrive at the desired result
\begin{equation*}
\sqrt{N\sigma _{N}^{2k_{1}}}\left( N^{-1}\mathcal{L}_{N,\gamma }^{P,K}(\hat{%
\gamma})-\mathcal{\bar{L}}^{P}\left( \gamma \right) \right) \overset{d}{%
\rightarrow }N\left( 0,\Delta ^{P}_{\gamma}\right) .
\end{equation*}
\end{proof}

\section{Technical Details for Theorem \ref{T:cross_boot}}\label{appendixD}
In this section, we show the validity of the nonparametric
bootstrap for the cross-sectional model (Section \ref{SEC:infer_cross}). A complete proof of the consistency of the bootstrap procedure (Theorem \ref{T:cross_boot}) is lengthy and tedious. Considering that this process is relatively standard in the literature, we only present an outline of the proof. %We provide an outline to show the validity of the nonparametric bootstrap for
%our estimators in the cross section case. 
To fill in the gaps of the outline,
one first needs to define a product probability space for both the original
random series and the bootstrap series as in \cite{WellnerZhan1996} (an
application can be found in \cite{AbrevayaHuang2005}), define a suitable norm
for functions, and then establish some uniform convergence results following
\cite{Sherman1993, Sherman1994AoS, Sherman1994ET} on this
probability space with this norm. To apply Sherman's results, the criterion functions need to be \textit{manageable} in the sense
of \cite{KimPollard1990}. This is indeed the case for our estimators. For example, equation
(\ref{crossobjrK}) is a summation of
\[
\mathcal{K}_{\sigma_{N}}(  V_{im} (  \hat{\beta} )  )
Y_{im\left(  1,1\right)  }\text{sgn}\left(  W_{im}^{\prime}r\right)
=\mathcal{K}_{\sigma_{N}} (  V_{im} (  \hat{\beta} )  )
Y_{im\left(  1,1\right)  }\left(  2\cdot1\left[  W_{im}^{\prime}r>0\right]
-1\right)  .
\]
The collection of indicator functions $1\left[  W_{im}^{\prime}r>0\right]  $
for $r\in\mathcal{R}$\ is well known to be a Vapnik--Chervonenkis (VC) class.
The kernel function $\mathcal{K}$, when is chosen to satisfy some mild conditions (e.g., uniformly bounded with
compact support and continuously differentiable), %is assumed to be uniformly bounded with bounded support, and continuously differentiable in Assumption P7, and thus it
is also \textit{manageable}. The complication of the bandwidth can be handled in
the same way as in \cite{SeoOtsu2018}.

Now suppose we have handled the technical details mentioned above. In what follows, we present
the proof outline. To ease expositions, we assume $\hat{\beta}$ and $\hat{\gamma}$ are obtained
from maximizing$,$ respectively, $\mathbb{L}_{N,\beta}^{K}\left(  b\right)  $
and $\mathbb{L}_{N,\gamma}^{K} (  r,\hat{\beta} )  $, which are
defined as%
\[
\mathbb{L}_{N,\beta}^{K}\left(  b\right)  =\sum_{i=1}^{N-1}\sum_{m>i}%
\mathcal{K}_{h_{N}}\left(  X_{im}\right)  \chi_{1}\left(  Z_{i},Z_{m},b\right)
\]
and
\[
\mathbb{L}_{N,\gamma}^{K} (  r,\hat{\beta} )  =\sum_{i=1}^{N-1}%
\sum_{m>i}\mathcal{K}_{\sigma_{N}} (  X_{im}^{\prime}\hat{\beta} )
\chi_{2}\left(  Z_{i},Z_{m},r\right),
\]
where $X_{im}$ is a sub-vector of $Z_i-Z_m$. Note that the criterion functions and notations defined here are meant for general U-processes, and thus are different from the ones used in the main text. However, the procedures that we study here share the same structure as those in Section \ref{sec:rcm} (i.e., (\ref{crossobjrK})--(\ref{crossobjrK})). Therefore, if we can demonstrate the consistency of the bootstrap here, we can easily apply the same method to our MRC estimators.

We do normalization such that $\chi_{1}\left(  Z_{m},Z_{i},\beta\right)
=0$ and $\chi_{2}\left(  Z_{i},Z_{m},\gamma\right)  =0$ for all $Z.$%
\footnote{Taking equation (\ref{crossobjrK}) as an example, this can be done by
replacing $\text{sgn}\left(  W^{\prime}_{im}r\right)  $ with $\text{sgn}\left(
W^{\prime}_{im}r\right)  -\text{sgn}\left(  W^{\prime}_{im}\gamma\right)  $ in the
criterion function.} Assume that $\chi_{1}\left(  Z_{i},Z_{m},b\right)
=\chi_{1}\left(  Z_{m},Z_{i},b\right)  ,$ $\beta$ uniquely maximizes
$\mathbb{E}\left[  \chi_{1}\left(  Z_{i},Z_{m},b\right)  |X_{im}=0\right]  ,$
and $\mathcal{K}_{h_{N}}$ is a $p$-th order kernel with $h_{N}$ that satisfies
$Nh_{N}^{p}=o\left(  1\right)  .$ Similarly, we assume that $\chi_{2}\left(
Z_{i},Z_{m},r\right)  =\chi_{2}\left(  Z_{m},Z_{i},r\right)  ,$ $\gamma$
uniquely maximizes $\mathbb{E}\left[  \chi_{2}\left(  Z_{i},Z_{m},r\right)
|X_{im}^{\prime}\beta=0\right]  ,$ and $\mathcal{K}_{\sigma_{N}}$ is a $p$-th
order kernel with $\sigma_{N}$ that satisfies $N\sigma_{N}^{p}=o\left(
1\right)  .$ We similarly denote $\mathbb{L}_{N,\beta}^{K\ast}\left(  b\right)  $
and $\mathbb{L}_{N,\gamma}^{K\ast} (  r,\hat{\beta}^{\ast} )  $
as the corresponding criterion functions using the bootstrap series.

Before establishing the consistency of the bootstrap, it is helpful to do a
recast of how we derive the asymptotics for $(  \hat{\beta},\hat{\gamma
})  .$ $\mathbb{L}_{N,\beta}^{K}\left(  b\right)  $ can be decomposed
into%
\begin{align}
&  2\left[  N\left(  N-1\right)  \right]  ^{-1}\mathbb{L}_{N,\beta}^{K}\left(
b\right) \label{EQ:LNBK_de}\\
  = &\mathbb{E}\left[  \mathcal{K}_{h_{N}}\left(  X_{im}\right)  \chi
_{1}\left(  Z_{i},Z_{m},b\right)  \right] \nonumber\\
&  +\frac{2}{N}\sum_{i=1}^{N}\left\{  \mathbb{E}\left[  \mathcal{K}_{h_{N}%
}\left(  X_{im}\right)  \chi_{1}\left(  Z_{i},Z_{m},b\right)  |Z_{i}\right]
-\mathbb{E}\left[  \mathcal{K}_{h_{N}}\left(  X_{im}\right)  \chi_{1}\left(
Z_{i},Z_{m},b\right)  \right]  \right\} \nonumber\\
&  +\frac{1}{N\left(  N-1\right)  }\sum_{i=1}^{N}\sum_{m=1,m\neq i}%
^{N}\left\{  \mathcal{K}_{h_{N}}\left(  X_{im}\right)  \chi_{1}\left(
Z_{i},Z_{m},b\right)  -\mathbb{E}\left[  \mathcal{K}_{h_{N}}\left(
X_{im}\right)  \chi_{1}\left(  Z_{i},Z_{m},b\right)  |Z_{i}\right]  \right.
\nonumber\\
&  \left.  -\mathbb{E}\left[  \mathcal{K}_{h_{N}}\left(  X_{im}\right)
\chi_{1}\left(  Z_{i},Z_{m},b\right)  |Z_{m}\right]  +\mathbb{E}\left[
\mathcal{K}_{h_{N}}\left(  X_{im}\right)  \chi_{1}\left(  Z_{i},Z_{m}%
,b\right)  \right]  \right\} \nonumber\\
  \equiv &\mathcal{T}_{N1}+\mathcal{T}_{N2}+\mathcal{T}_{N3}.\nonumber
\end{align}
Since $\chi_{1}\left(  Z_{i},Z_{m},\beta\right)  =0$ and $\beta$ uniquely
maximizes $\mathbb{E}\left[  \chi_{1}\left(  Z_{i},Z_{m},b\right)
|X_{im}=0\right]  ,$ we have%
\begin{align*}
\mathcal{T}_{1N}  &  =\mathbb{E}\left[  \chi_{1}\left(  Z_{i},Z_{m},b\right)
|X_{im}=0\right]  +O_{p}\left(  h_{N}^{p}\right) \\
&  =\frac{1}{2}\left(  b-\beta\right)  ^{\prime}\mathbb{V}_{\beta}\left(
b-\beta\right)  +O_{p}\left(  \left\Vert b-\beta\right\Vert ^{3}\right)
+O_{p}\left(  h_{N}^{p}\right)  ,
\end{align*}
where the $O_{p} (  h_{N}^{p} )  $ is the bias term from matching
and $\mathbb{V}_{\beta}$ is a negative definite matrix. For the second
term,
\begin{align}
\mathcal{T}_{N2}  &  =\frac{2}{N}\sum_{i=1}^{N}\left\{  \mathbb{E}\left[
\chi_{1}\left(  Z_{i},Z_{m},b\right)  |Z_{i},X_{im}=0\right]  f_{X_{m}}\left(
X_{i}\right)  -\mathbb{E}\left[  \chi_{1}\left(  Z_{i},Z_{m},b\right)
|X_{im}=0\right]  \right\}  +O_{p}\left(  h_{N}^{p}\right) \nonumber\\
&  =2\left(  b-\beta\right)  ^{\prime}\frac{W_{N1}}{\sqrt{N}}+o_{p}\left(
\left\Vert b-\beta\right\Vert ^{2}\right)  +O_{p}\left(  h_{N}^{p}\right)  ,
\label{EQ:tn2}%
\end{align}
where $W_{N1}$ converges to a normal distribution. For the last term,%
\[
\mathcal{T}_{N3}=o_{p}\left(  N^{-1}\right)  ,
\]
by similar arguments in \cite{Sherman1993}. Put all these results together to obtain
\[
2\left[  N\left(  N-1\right)  \right]  ^{-1}\mathbb{L}_{N,\beta}^{K}\left(
b\right)  =\frac{1}{2}\left(  b-\beta\right)  ^{\prime}\mathbb{V}_{\beta
}\left(  b-\beta\right)  +2\left(  b-\beta\right)  ^{\prime}\frac{W_{N1}%
}{\sqrt{N}}+o_{p}\left(  \left\Vert b-\beta\right\Vert ^{2}\right)
+o_{p}\left(  N^{-1}\right)  .
\]
$\hat{\beta}-\beta$ can be shown to be $O_{p} (  N^{-1/2} )  $. Using
the rate result and invoking the CMT, we have
\begin{equation}
\sqrt{N} (  \hat{\beta}-\beta )  =-2\mathbb{V}_{\beta}^{-1}%
W_{N1}+o_{p}\left(  1\right)  . \label{EQ:beta^}%
\end{equation}

Let $\mathbb{E}^{\ast}$ and $P^{\ast}$\ denote the expectation and the density
for the bootstrap series, respectively. By definition, $P^{\ast}$ puts
$N^{-1}$ probability on each of $\left\{  Z_{i}\right\}  _{i=1}^{N}.$ We can
do a similar decomposition of $\mathbb{L}_{N,\beta}^{K\ast}\left(  b\right)  $
as for $\mathbb{L}_{N,\beta}^{K}\left(  b\right)  :$%
\begin{align}
&  2\left[  N\left(  N-1\right)  \right]  ^{-1}\mathbb{L}_{N,\beta}^{K\ast
}\left(  b\right) \nonumber\\
  = &\left[  N\left(  N-1\right)  \right]  ^{-1}\sum_{i=1}^{N}\sum_{m=1,m\neq
i}^{N}\mathcal{K}_{h_{N}}\left(  X_{im}^{\ast}\right)  \chi_{1}\left(
Z_{i}^{\ast},Z_{m}^{\ast},b\right) \nonumber\\
  = &\mathbb{E}^{\ast}\left[  \mathcal{K}_{h_{N}}\left(  X_{im}^{\ast}\right)
\chi_{1}\left(  Z_{i}^{\ast},Z_{m}^{\ast},b\right)  \right] \nonumber\\
&  +\frac{2}{N}\sum_{i=1}^{N}\left\{  \mathbb{E}^{\ast}\left[  \mathcal{K}%
_{h_{N}}\left(  X_{im}^{\ast}\right)  \chi_{1}\left(  Z_{i}^{\ast},Z_{m}%
^{\ast},b\right)  |Z_{i}^{\ast}\right]  -\mathbb{E}^{\ast}\left[
\mathcal{K}_{h_{N}}\left(  X_{im}^{\ast}\right)  \chi_{1}\left(  Z_{i}^{\ast
},Z_{m}^{\ast},b\right)  \right]  \right\} \nonumber\\
&  +\frac{1}{N\left(  N-1\right)  }\sum_{i=1}^{N}\sum_{m=1,m\neq i}%
^{N}\left\{  \mathcal{K}_{h_{N}}\left(  X_{im}^{\ast}\right)  \chi_{1}\left(
Z_{i}^{\ast},Z_{m}^{\ast},b\right)  -\mathbb{E}^{\ast}\left[  \mathcal{K}%
_{h_{N}}\left(  X_{im}^{\ast}\right)  \chi_{1}\left(  Z_{i}^{\ast},Z_{m}%
^{\ast},b\right)  |Z_{i}^{\ast}\right]  \right. \nonumber\\
&  \left.  -\mathbb{E}^{\ast}\left[  \mathcal{K}_{h_{N}}\left(  X_{im}^{\ast
}\right)  \chi_{1}\left(  Z_{i}^{\ast},Z_{m}^{\ast},b\right)  |Z_{m}^{\ast
}\right]  +\mathbb{E}^{\ast}\left[  \mathcal{K}_{h_{N}}\left(  X_{im}^{\ast
}\right)  \chi_{1}\left(  Z_{i}^{\ast},Z_{m}^{\ast},b\right)  \right]
\right\} \nonumber\\
  \equiv &\mathcal{T}_{N1}^{\ast}+\mathcal{T}_{N2}^{\ast}+\mathcal{T}%
_{N3}^{\ast}. \label{EQ:LNKstartb}%
\end{align}

For $\mathcal{T}_{N1}^{\ast},$ by the definition of of $\mathbb{E}^{\ast},$
\begin{align}
\mathcal{T}_{N1}^{\ast}    = &\mathbb{E}^{\ast}\left[  \mathcal{K}_{h_{N}%
}\left(  X_{im}^{\ast}\right)  \chi_{1}\left(  Z_{i}^{\ast},Z_{m}^{\ast
},b\right)  \right]  \nonumber \\
=&\frac{1}{N^{2}}\sum_{i=1}^{N}\sum_{m=1}^{N}\mathcal{K}_{h_{N}}\left(
X_{im}\right)  \chi_{1}\left(  Z_{i},Z_{m},b\right) \nonumber\\
  = &\frac{1}{2}\left(  b-\beta\right)  ^{\prime}\mathbb{V}_{\beta}\left(
b-\beta\right)  +2\left(  b-\beta\right)  ^{\prime}\frac{W_{N1}}{\sqrt{N}%
}+O_{p}\left(  \left\Vert b-\beta\right\Vert ^{3}\right)  +o_{p}\left(
N^{-1}\right)  ,\label{EQ:LNKstartb1}
\end{align}
where the last line holds because the second line is the same as $2\left[  N\left(
N-1\right)  \right]  ^{-1}\mathbb{L}_{N,\beta}^{K}\left(  b\right)  $ plus a
small order term.

Both $\mathcal{T}_{N2}$ and $\mathcal{T}_{N2}^{\ast}$ are sample averages of
mean zero series. Moreover,\textbf{ }for $\left\Vert b-\beta\right\Vert
=O (  N^{-1/2} )  ,$ $N\mathcal{T}_{N2}=2\sqrt{N}\left(
b-\beta\right)  ^{\prime}W_{N1}+o_{p}\left(  1\right)  $ and $W_{N1}$
converges to a normal distribution$.$ Then, we can apply the results in
\cite{gine1990bootstrapping} that $NT_{N2}^{\ast}$ approximates the
distribution of $NT_{N2}$ for $\left\Vert b-\beta\right\Vert =O (
N^{-1/2} )  .$ That is, for $\left\Vert b-\beta\right\Vert =O (
N^{-1/2} )  ,$
\begin{equation}
\mathcal{T}_{N2}^{\ast}=2\left(  b-\beta\right)  ^{\prime}\frac{W_{N1}^{\ast}%
}{\sqrt{N}}+o_{p}\left(  N^{-1}\right)  ,\label{EQ:LNKstartb2}%
\end{equation}
where $W_{N1}^{\ast}$ is an independent copy of $W_{N1}.$

For $\mathcal{T}_{N3}^{\ast},$ terms across $i$ and $m$ are uncorrelated with
each other under $P^{\ast}$. Therefore, $\left[  \mathbb{E}^{\ast}\left(
\mathcal{T}_{N3}^{\ast2}\right)  \right]  ^{1/2}$ is of the same order as
\begin{align*}
&  N^{-1}\left\{  \mathbb{E}^{\ast}\left[  \left[  \mathcal{K}_{h_{N}}\left(
X_{im}^{\ast}\right)  \chi_{1}\left(  Z_{i}^{\ast},Z_{m}^{\ast},b\right)
\right]  ^{2}\right]  \right\}  ^{1/2}  =N^{-1}\left\{  \frac{1}{N^{2}}\sum_{i=1}^{N}\sum_{m=1}^{N}\mathcal{K}%
_{h_{N}}\left(  X_{im}\right)  ^{2}\chi_{1}\left(  Z_{i},Z_{m},b\right)
^{2}\right\}^{1/2}.
\end{align*}
Note that $\chi_{1}\left(  Z_{i},Z_{m},\beta\right)  =0$, so the above term is
$o_{p}\left(  N^{-1}\right)  $ for $\left\Vert b-\beta\right\Vert =O\left(
N^{-1/2}\right)  .$ That means $\left[  \mathbb{E}^{\ast}\left(
\mathcal{T}_{N3}^{\ast2}\right)  \right]  ^{1/2}=o_{p}\left(  N^{-1}\right)
.$ By Markov inequality,
\begin{equation}
\mathcal{T}_{N3}^{\ast}=o_{p}\left(  N^{-1}\right)  .\label{EQ:LNKstartb3}%
\end{equation}

To summarize, equations (\ref{EQ:LNKstartb})--(\ref{EQ:LNKstartb3}) imply that for $\left\Vert
b-\beta\right\Vert =O\left(  N^{-1/2}\right)  ,$%
\[
2\left[  N\left(  N-1\right)  \right]  ^{-1}\mathbb{L}_{N,\beta}^{K\ast
}\left(  b\right)  =\frac{1}{2}\left(  b-\beta\right)  ^{\prime}%
\mathbb{V}_{\beta}\left(  b-\beta\right)  +2\left(  b-\beta\right)  ^{\prime
}\frac{W_{N1}}{\sqrt{N}}+2\left(  b-\beta\right)  ^{\prime}\frac{W_{N1}^{\ast
}}{\sqrt{N}}+o_{p}\left(  N^{-1}\right)  .
\]
We can similarly show that $\hat{\beta}^{\ast}-\beta=O_{p}\left(
N^{-1/2}\right)  .$ By the CMT again,
\[
\sqrt{N} (  \hat{\beta}^{\ast}-\beta)  =-2\mathbb{V}_{\beta}%
^{-1}W_{N1}-2\mathbb{V}_{\beta}^{-1}W_{N1}^{\ast}+o_{p}\left(  1\right)  .
\]
Substitute equation (\ref{EQ:beta^}) into the equation above to get%
\begin{equation}
\sqrt{N} (  \hat{\beta}^{\ast}-\hat{\beta} )  =-2\mathbb{V}_{\beta
}^{-1}W_{N1}^{\ast}+o_{p}\left(  1\right)  .\label{EQ:betahatstar}%
\end{equation}
Since $W_{N1}^{\ast}$ is an independent copy of $W_{N1},$ $\sqrt{N} (
\hat{\beta}^{\ast}-\hat{\beta} )  $ converges to the same distribution as
the one $\sqrt{N} (  \hat{\beta}-\beta )  $ converges to.

We turn to the inference for $\hat{\gamma}.$ Again, we decompose
$\mathbb{L}_{N,\gamma}^{K} (  r,\hat{\beta} )  $ and analyze each
term in the decompositions one by one. We start with%
\begin{align}
&  2\left[  N\left(  N-1\right)  \right]  ^{-1}\mathbb{L}_{N,\gamma}%
^{K}\left(  r,\hat{\beta}\right) \label{EQ:LNKr}\\
  = & \left[  N\left(  N-1\right)  \right]  ^{-1}\sum_{i=1}^{N}\sum_{m=1,m\neq
i}^{N}\mathcal{K}_{\sigma_{N}}\left(  X_{im}^{\prime}\hat{\beta}\right)
\chi_{2}\left(  Z_{i},Z_{m},r\right) \nonumber\\
  = &\left[  N\left(  N-1\right)  \right]  ^{-1}\sum_{i=1}^{N}\sum_{m=1,m\neq
i}^{N}\mathcal{K}_{\sigma_{N}}\left(  X_{im}^{\prime}\beta\right)  \chi
_{2}\left(  Z_{i},Z_{m},r\right) \nonumber\\
&  +\left[  N\left(  N-1\right)  \right]  ^{-1}\sum_{i=1}^{N}\sum_{m=1,m\neq
i}^{N}\chi_{2}\left(  Z_{i},Z_{m},r\right)  \nabla\mathcal{K}_{\sigma_{N}%
}\left(  X_{im}^{\prime}\beta\right)  X_{im}^{\prime}\frac{\left(  \hat{\beta
}-\beta\right)  }{\sigma_{N}}\nonumber\\
&  +\left[  N\left(  N-1\right)  \right]  ^{-1}\sum_{i=1}^{N}\sum_{m=1,m\neq
i}^{N}\chi_{2}\left(  Z_{i},Z_{m},r\right)  \nabla^{2}\mathcal{K}_{\sigma_{N}%
}\left(  X_{im}^{\prime}\tilde{\beta}\right)  \left[  X_{im}^{\prime}%
\frac{\left(  \hat{\beta}-\beta\right)  }{\sigma_{N}}\right]  ^{2}\nonumber\\
  \equiv & \mathcal{T}_{N4}+\mathcal{T}_{N5}+\mathcal{T}_{N6},\nonumber
\end{align}
where $\tilde{\beta}$ is a vector that lies between $\beta$ and $\hat
{\beta}$ and makes the equality hold.

Note that $\mathcal{T}_{N4}$ shares the same structure as $2\left[  N\left(
N-1\right)  \right]  ^{-1}\mathbb{L}_{N,\beta}^{K}\left(  b\right)  ,$ and so
\begin{equation}
\mathcal{T}_{N4}=\frac{1}{2}\left(  r-\gamma\right)  ^{\prime}\mathbb{V}%
_{\gamma}\left(  r-\gamma\right)  +2\left(  r-\gamma\right)  ^{\prime}%
\frac{W_{N2}}{\sqrt{N}}+O_{p}\left(  \left\Vert r-\gamma\right\Vert
^{3}\right)  +O_{p}\left(  \sigma_{N}^{p}\right)  +o_{p}\left(  N^{-1}\right)
, \label{EQ:LNKr1}%
\end{equation}
where $\mathbb{V}_{\gamma}$ is a negative definite matrix, and $W_{N2}$
converges to a normal distribution.

For term $\mathcal{T}_{N5},$
\[
\mathcal{T}_{N5}=\left\{  \left[  N\left(  N-1\right)  \right]  ^{-1}%
\sum_{i=1}^{N}\sum_{m=1,m\neq i}^{N}\chi_{2}\left(  Z_{i},Z_{m},r\right)
\sigma_{N}^{-1}\nabla\mathcal{K}_{\sigma_{N}}\left(  X_{im}^{\prime}%
\beta\right)  X_{im}^{\prime}\right\}  \left(  \hat{\beta}-\beta\right)  .
\]
Note that
\begin{align}
&  \left[  N\left(  N-1\right)  \right]  ^{-1}\sum_{i=1}^{N}\sum_{m=1,m\neq
i}^{N}\chi_{2}\left(  Z_{i},Z_{m},r\right)  \sigma_{N}^{-1}\nabla
\mathcal{K}_{\sigma_{N}}\left(  X_{im}^{\prime}\beta\right)  X_{im}^{\prime
}\label{EQ:tn5_o}\\
  = &\mathbb{E}\left[  \chi_{2}\left(  Z_{i},Z_{m},r\right)  \sigma_{N}%
^{-1}\nabla\mathcal{K}_{\sigma_{N}}\left(  X_{im}^{\prime}\beta\right)
X_{im}^{\prime}\right] \nonumber\\
&  +\frac{2}{N}\sum_{i=1}^{N}\left\{  \mathbb{E}\left[  \chi_{2}\left(
Z_{i},Z_{m},r\right)  \sigma_{N}^{-1}\nabla\mathcal{K}_{\sigma_{N}}\left(
X_{im}^{\prime}\beta\right)  X_{im}^{\prime}|Z_{i}\right]  -\mathbb{E}\left[
\chi_{2}\left(  Z_{i},Z_{m},r\right)  \sigma_{N}^{-1}\nabla\mathcal{K}%
_{\sigma_{N}}\left(  X_{im}^{\prime}\beta\right)  X_{im}^{\prime}\right]
\right\} \nonumber\\
&  +\frac{1}{N\left(  N-1\right)  }\sum_{i=1}^{N}\sum_{m=1,m\neq i}%
^{N}\left\{  \chi_{2}\left(  Z_{i},Z_{m},r\right)  \sigma_{N}^{-1}%
\nabla\mathcal{K}_{\sigma_{N}}\left(  X_{im}^{\prime}\beta\right)
X_{im}^{\prime}-\mathbb{E}\left[  \chi_{2}\left(  Z_{i},Z_{m},r\right)
\sigma_{N}^{-1}\nabla\mathcal{K}_{\sigma_{N}}\left(  X_{im}^{\prime}%
\beta\right)  X_{im}^{\prime}|Z_{i}\right]  \right. \nonumber\\
&  \left.  -\mathbb{E}\left[  \chi_{2}\left(  Z_{i},Z_{m},r\right)  \sigma
_{N}^{-1}\nabla\mathcal{K}_{\sigma_{N}}\left(  X_{im}^{\prime}\beta\right)
X_{im}^{\prime}|Z_{m}\right]  +\mathbb{E}\left[  \chi_{2}\left(  Z_{i}%
,Z_{m},r\right)  \sigma_{N}^{-1}\nabla\mathcal{K}_{\sigma_{N}}\left(
X_{im}^{\prime}\beta\right)  X_{im}^{\prime}\right]  \right\}  \nonumber
\end{align}
with the lead term $\mathbb{E} [  \chi_{2}\left(  Z_{i},Z_{m},r\right)
\sigma_{N}^{-1}\nabla\mathcal{K}_{\sigma_{N}}\left(  X_{im}^{\prime}%
\beta\right)  X_{im}^{\prime} ]  $. As $\int\nabla\mathcal{K}%
_{\sigma_{N}}\left(  u\right)  \text{d}u=0,$ the lead term cancels one more
$\sigma_{N}^{-1}$ when calculating the expectation. Since $\chi_{2}\left(
Z_{i},Z_{m},\gamma\right)  =0,$ we can write
\begin{equation}
\mathbb{E}\left[  \chi_{2}\left(  Z_{i},Z_{m},r\right)  \sigma_{N}^{-1}%
\nabla\mathcal{K}_{\sigma_{N}}\left(  X_{im}^{\prime}\beta\right)
X_{im}\right]  =\left(  r-\gamma\right)  ^{\prime}\Gamma+O (  \left\Vert
r-\gamma\right\Vert ^{2} )  +O\left(  \sigma_{N}^{p}\right)  ,
\label{EQ:tn5lead}%
\end{equation}
for some matrix $\Gamma.$ Collect all these result to obtain
\begin{align}
\mathcal{T}_{N5}  &  =\left(  r-\gamma\right)  ^{\prime}\Gamma\left(
\hat{\beta}-\beta\right)  +O_{p}\left(  \left\Vert r-\gamma\right\Vert
^{2}\left\Vert \hat{\beta}-\beta\right\Vert \right)  +O\left(  \sigma_{N}%
^{p}\right) \label{EQ:LNKr2}\\
&  =-2\left(  r-\gamma\right)  ^{\prime}\Gamma\mathbb{V}_{\beta}^{-1}%
\frac{W_{N1}}{\sqrt{N}}+O_{p}\left(  \left\Vert r-\gamma\right\Vert
^{2}\left\Vert \hat{\beta}-\beta\right\Vert \right)  +o_{p}\left(
\frac{\left\Vert r-\gamma\right\Vert }{\sqrt{N}}\right)  +O\left(  \sigma
_{N}^{p}\right), \nonumber
\end{align}
where the second line follows by substituting in equation (\ref{EQ:beta^}).

For term $\mathcal{T}_{N6},$ it is not hard to see that%
\begin{equation}
\mathcal{T}_{N6}=O_{p}\left(  \left.  \left\Vert r-\gamma\right\Vert
\left\Vert \hat{\beta}-\beta\right\Vert ^{2}\right/  \sigma_{N}^{2}\right)
=o_{p}\left(  N^{-1}\right)  , \label{EQ:LNKr3}%
\end{equation}
for $\Vert r-\gamma\Vert=O (  N^{-1/2} )  $, and $\sigma_{N}$ in Assumption C7.

To summarize, equations (\ref{EQ:LNKr}), (\ref{EQ:LNKr1}), (\ref{EQ:LNKr2}),
and (\ref{EQ:LNKr3}) imply that for $\left\Vert r-\gamma\right\Vert =O\left(
N^{-1/2}\right)  ,$%
\begin{align*}
2\left[  N\left(  N-1\right)  \right]  ^{-1}\mathbb{L}_{N,\gamma}^{K}\left(
r,\hat{\beta}\right)     = &\frac{1}{2}\left(  r-\gamma\right)  ^{\prime
}\mathbb{V}_{\gamma}\left(  r-\gamma\right) \\
&  +2\left(  r-\gamma\right)  ^{\prime}\frac{W_{N2}}{\sqrt{N}}-2\left(
r-\gamma\right)  ^{\prime}\Gamma\mathbb{V}_{\beta}^{-1}\frac{W_{N1}}{\sqrt{N}%
}+o_{p}\left(  N^{-1}\right)  .
\end{align*}
Once $\hat{\gamma}-\gamma=O_{p}\left(  N^{-1/2}\right)$ is established,  applying the CMT gives
\begin{equation}
\sqrt{N}\left(  \hat{\gamma}-\gamma\right)  =-2\mathbb{V}_{\gamma}^{-1}%
W_{N2}+2\mathbb{V}_{\gamma}^{-1}\Gamma\mathbb{V}_{\beta}^{-1}W_{N1}%
+o_{p}\left(  1\right)  . \label{EQ:gammahat}%
\end{equation}

For $\mathbb{L}_{N,\gamma}^{K\ast}\left(  r,b\right)  ,$ we decompose it
similarly:
\begin{align}
&  2\left[  N\left(  N-1\right)  \right]  ^{-1}\mathbb{L}_{N,\gamma}^{K\ast
}\left(  r,\hat{\beta}^{\ast}\right) \label{EQ:LNKrstar}\\
&  =\left[  N\left(  N-1\right)  \right]  ^{-1}\sum_{i=1}^{N}\sum_{m=1,m\neq
i}^{N}\mathcal{K}_{\sigma_{N}}\left(  X_{im}^{\ast\prime}\hat{\beta}^{\ast
}\right)  \chi_{2}\left(  Z_{i}^{\ast},Z_{m}^{\ast},r\right) \nonumber\\
&  =\left[  N\left(  N-1\right)  \right]  ^{-1}\sum_{i=1}^{N}\sum_{m=1,m\neq
i}^{N}\mathcal{K}_{\sigma_{N}}\left(  X_{im}^{\ast\prime}\hat{\beta}\right)
\chi_{2}\left(  Z_{i}^{\ast},Z_{m}^{\ast},r\right) \nonumber\\
&  +\left[  N\left(  N-1\right)  \right]  ^{-1}\sum_{i=1}^{N}\sum_{m=1,m\neq
i}^{N}\chi_{2}\left(  Z_{i}^{\ast},Z_{m}^{\ast},r\right)  \nabla
\mathcal{K}_{\sigma_{N}}\left(  X_{im}^{\ast\prime}\hat{\beta}\right)  \left(
\frac{X_{im}^{\ast\prime}\left(  \hat{\beta}^{\ast}-\hat{\beta}\right)
}{\sigma_{N}}\right) \nonumber\\
&  +\left[  N\left(  N-1\right)  \right]  ^{-1}\sum_{i=1}^{N}\sum_{m=1,m\neq
i}^{N}\chi_{2}\left(  Z_{i}^{\ast},Z_{m}^{\ast},r\right)  \nabla
^{2}\mathcal{K}_{\sigma_{N}}\left(  X_{im}^{\ast\prime}\tilde{\beta}^{\ast
}\right)  \left(  \frac{X_{im}^{\ast\prime}\left(  \hat{\beta}^{\ast}%
-\hat{\beta}\right)  }{\sigma_{N}}\right)  ^{2}\nonumber\\
&  \equiv\mathcal{T}_{N4}^{\ast}+\mathcal{T}_{N5}^{\ast}+\mathcal{T}%
_{N6}^{\ast},\nonumber
\end{align}
where $\tilde{\beta}^{\ast}$ is some vector that lies between $b$ and
$\hat{\beta}$ and makes the equality hold.

For $\mathcal{T}_{N4}^{\ast},$ we have%
\begin{align}
\mathcal{T}_{N4}^{\ast}    =&\left[  N\left(  N-1\right)  \right]  ^{-1}%
\sum_{i=1}^{N}\sum_{m=1,m\neq i}^{N}\mathcal{K}_{\sigma_{N}}\left(
X_{im}^{\ast\prime}\hat{\beta}\right)  \chi_{2}\left(  Z_{i}^{\ast}%
,Z_{m}^{\ast},r\right) \nonumber\\
  = &\mathbb{E}^{\ast}\left[  \mathcal{K}_{\sigma_{N}}\left(  X_{im}%
^{\ast\prime}\hat{\beta}\right)  \chi_{2}\left(  Z_{i}^{\ast},Z_{m}^{\ast
},r\right)  \right] \nonumber\\
&  +\frac{2}{N}\sum_{i=1}^{N}\left\{  \mathbb{E}^{\ast}\left[  \mathcal{K}%
_{\sigma_{N}}\left(  X_{im}^{\ast\prime}\hat{\beta}\right)  \chi_{2}\left(
Z_{i}^{\ast},Z_{m}^{\ast},r\right)  |Z_{i}^{\ast}\right]  -\mathbb{E}^{\ast
}\left[  \mathcal{K}_{\sigma_{N}}\left(  X_{im}^{\ast\prime}\hat{\beta
}\right)  \chi_{2}\left(  Z_{i}^{\ast},Z_{m}^{\ast},r\right)  \right]
\right\} \nonumber\\
&  +\frac{1}{N\left(  N-1\right)  }\sum_{i=1}^{N}\sum_{m=1,m\neq i}%
^{N}\left\{  \mathcal{K}_{\sigma_{N}}\left(  X_{im}^{\ast\prime}\hat{\beta
}\right)  \chi_{2}\left(  Z_{i}^{\ast},Z_{m}^{\ast},r\right)  -\mathbb{E}%
^{\ast}\left[  \mathcal{K}_{\sigma_{N}}\left(  X_{im}^{\ast\prime}\hat{\beta
}\right)  \chi_{2}\left(  Z_{i}^{\ast},Z_{m}^{\ast},r\right)  |Z_{i}^{\ast
}\right]  \right. \nonumber\\
&  \left.  -\mathbb{E}^{\ast}\left[  \mathcal{K}_{\sigma_{N}}\left(
X_{im}^{\ast\prime}\hat{\beta}\right)  \chi_{2}\left(  Z_{i}^{\ast}%
,Z_{m}^{\ast},r\right)  |Z_{m}^{\ast}\right]  +\mathbb{E}^{\ast}\left[
\mathcal{K}_{\sigma_{N}}\left(  X_{im}^{\ast\prime}\hat{\beta}\right)
\chi_{2}\left(  Z_{i}^{\ast},Z_{m}^{\ast},r\right)  \right]  \right\}
.\nonumber\\
  \equiv &\mathcal{T}_{N4,1}^{\ast}+\mathcal{T}_{N4,2}^{\ast}+\mathcal{T}%
_{N4,3}^{\ast}.\label{EQ:LNKrstar1}
\end{align}
We analyze the three terms in (\ref{EQ:LNKrstar1}) one by one.

For $\mathcal{T}_{N4,1}^{\ast},$ we have%
\begin{align}
\mathcal{T}_{N4,1}^{\ast}  &  =\mathbb{E}^{\ast}\left[  \mathcal{K}%
_{\sigma_{N}}\left(  X_{im}^{\ast\prime}\hat{\beta}\right)  \chi_{2}\left(
Z_{i}^{\ast},Z_{m}^{\ast},r\right)  \right] \label{EQ:LNKrstar1_1}\\
&  =\frac{1}{N^{2}}\sum_{i=1}^{N}\sum_{m=1}^{N}\mathcal{K}_{\sigma_{N}}\left(
X_{im}^{\prime}\hat{\beta}\right)  \chi_{2}\left(  Z_{i},Z_{m},r\right)
\nonumber\\
&  =\frac{1}{2}\left(  r-\gamma\right)  ^{\prime}\mathbb{V}_{\gamma}\left(
r-\gamma\right)  +2\left(  r-\gamma\right)  ^{\prime}\frac{W_{N2}}{\sqrt{N}%
}-2\left(  r-\gamma\right)  ^{\prime}\Gamma\mathbb{V}_{\beta}^{-1}\frac
{W_{N1}}{\sqrt{N}}+o_{p}\left(  N^{-1}\right)  ,\nonumber
\end{align}
where the last line holds because the term in the second line is $2\left[
N\left(  N-1\right)  \right]  ^{-1}\mathbb{L}_{N,\gamma}^{K}(
r,\hat{\beta})  $ plus a small order term.

For $\mathcal{T}_{N4,2}^{\ast},$ we first note that
\begin{align}
&  \sum_{i=1}^{N}\left\{  \mathbb{E}\left[  \mathcal{K}_{\sigma_{N}}\left(
X_{im}^{\prime}\beta\right)  \chi_{2}\left(  Z_{i},Z_{m},r\right)
|Z_{i}\right]  -\mathbb{E}\left[  \mathcal{K}_{\sigma_{N}}\left(
X_{im}^{\prime}\beta\right)  \chi_{2}\left(  Z_{i},Z_{m},r\right)  \right]
\right\} \label{EQ:t421}\\
  = &\sqrt{N}\left(  r-\gamma\right)  ^{\prime}W_{N2}+o_{p}\left(  1\right)
,\nonumber
\end{align}
for $\left\Vert r-\gamma\right\Vert =O\left(
N^{-1/2}\right)   ,$ which holds for
the same reason as for equation (\ref{EQ:tn2})$.$ Since $\hat{\beta}%
-\beta=O_{p}\left(  N^{-1/2}\right)  ,$ by the equicontinuity we claimed,%
\begin{align}
\sum_{i=1}^{N}  &  \left\{  \mathbb{E}\left[  \mathcal{K}_{\sigma_{N}}\left(
X_{im}^{\prime}\beta\right)  \chi_{2}\left(  Z_{i},Z_{m},r\right)
|Z_{i}\right]  -\mathbb{E}\left[  \mathcal{K}_{\sigma_{N}}\left(
X_{im}^{\prime}\beta\right)  \chi_{2}\left(  Z_{i},Z_{m},r\right)  \right]
\right\} \label{EQ:t422}\\
&  -\sum_{i=1}^{N}\left\{  \mathbb{E}_{N}\left[  \mathcal{K}_{\sigma_{N}%
}\left(  X_{im}^{\prime}\hat{\beta}\right)  \chi_{2}\left(  Z_{i}%
,Z_{m},r\right)  |Z_{i}\right]  -\mathbb{E}_{N}\left[  \mathcal{K}_{\sigma
_{N}}\left(  X_{im}^{\prime}\hat{\beta}\right)  \chi_{2}\left(  Z_{i}%
,Z_{m},r\right)  \right]  \right\}  =o_{p}\left(  1\right)  .\nonumber
\end{align}
Then equations (\ref{EQ:t421}) and (\ref{EQ:t422}) imply that%
\begin{align*}
&  2\sum_{i=1}^{N}\left\{  \mathbb{E}_{N}\left[  \mathcal{K}_{\sigma_{N}%
}\left(  X_{im}^{\prime}\hat{\beta}\right)  \chi_{2}\left(  Z_{i}%
,Z_{m},r\right)  |Z_{i}\right]  -\mathbb{E}_{N}\left[  \mathcal{K}_{\sigma
_{N}}\left(  X_{im}^{\prime}\hat{\beta}\right)  \chi_{2}\left(  Z_{i}%
,Z_{m},r\right)  \right]  \right\} \\
&  =2\sqrt{N}\left(  r-\gamma\right)  ^{\prime}W_{N2}+o_{p}\left(  1\right)  ,
\end{align*}
which approximates a normal distribution for $\left\Vert r-\gamma\right\Vert =O\left(
N^{-1/2}\right)   .$ Thus, we are able to apply the results of
\cite{gine1990bootstrapping} and get that
\[
N\mathcal{T}_{N4,2}^{\ast}=2\sum_{i=1}^{N}\left\{  \mathbb{E}^{\ast}\left[
\mathcal{K}_{\sigma_{N}}\left(  X_{im}^{\ast\prime}\hat{\beta}\right)
\chi_{2}\left(  Z_{i}^{\ast},Z_{m}^{\ast},r\right)  |Z_{i}^{\ast}\right]
-\mathbb{E}^{\ast}\left[  \mathcal{K}_{\sigma_{N}}\left(  X_{im}^{\ast\prime
}\hat{\beta}\right)  \chi_{2}\left(  Z_{i}^{\ast},Z_{m}^{\ast},r\right)
\right]  \right\}
\]
approximates the distribution of the above term. That is
\begin{equation}
\mathcal{T}_{N4,2}^{\ast}=2\left(  r-\gamma\right)  ^{\prime}\frac
{W_{N2}^{\ast}}{\sqrt{N}}+o_{p}\left(  N^{-1}\right)  , \label{EQ:LNKrstar1_2}%
\end{equation}
where $W_{N2}^{\ast}$ is an independent copy of $W_{N2}.$

The same arguments as we used for $\mathcal{T}_{N3}^{\ast}$ lead to
\begin{equation}
\mathcal{T}_{N4,3}^{\ast}=o_{p}\left(  N^{-1}\right)  . \label{EQ:LNKrstar1_3}%
\end{equation}

Then, equations (\ref{EQ:LNKrstar1}), (\ref{EQ:LNKrstar1_1}), (\ref{EQ:LNKrstar1_2}%
), and (\ref{EQ:LNKrstar1_3}) together imply that for $\Vert r-\gamma\Vert=O (  N^{-1/2})  $,
\begin{align}
\mathcal{T}_{N4}^{\ast}    = &\frac{1}{2}\left(  r-\gamma\right)  ^{\prime
}\mathbb{V}_{\gamma}\left(  r-\gamma\right)  +2\left(  r-\gamma\right)
^{\prime}\frac{W_{N2}}{\sqrt{N}}-2\left(  r-\gamma\right)  ^{\prime}%
\Gamma\mathbb{V}_{\beta}^{-1}\frac{W_{N1}}{\sqrt{N}}\label{EQ:tn4}\\
&  +2\left(  r-\gamma\right)  ^{\prime}\frac{W_{N2}^{\ast}}{\sqrt{N}}%
+o_{p}\left(  N^{-1}\right)  .\nonumber
\end{align}

For $\mathcal{T}_{N5}^{\ast},$ we have
\begin{align*}
\mathcal{T}_{N5}^{\ast}  &  =\left[  N\left(  N-1\right)  \right]  ^{-1}%
\sum_{i=1}^{N}\sum_{m=1,m\neq i}^{N}\chi_{2}\left(  Z_{i}^{\ast},Z_{m}^{\ast
},r\right)  \nabla\mathcal{K}_{\sigma_{N}}\left(  X_{im}^{\ast\prime}%
\hat{\beta}\right)  \left(  \frac{X_{im}^{\ast\prime}\left(  \hat{\beta}%
^{\ast}-\hat{\beta}\right)  }{\sigma_{N}}\right) \\
&  =\left\{  \left[  N\left(  N-1\right)  \right]  ^{-1}\sum_{i=1}^{N}%
\sum_{m=1,m\neq i}^{N}\chi_{2}\left(  Z_{i}^{\ast},Z_{m}^{\ast},r\right)
\sigma_{N}^{-1}\nabla\mathcal{K}_{\sigma_{N}}\left(  X_{im}^{\ast\prime}%
\hat{\beta}\right)  X_{im}^{\ast\prime}\right\}  \left(  \hat{\beta}^{\ast
}-\hat{\beta}\right)  .
\end{align*}
Following the same analysis for $\mathcal{T}_{N5},$ the lead term for $\mathcal{T}%
_{N5}^{\ast}$ is
\[
\mathbb{E}^{\ast}\left[  \chi_{2}\left(  Z_{i}^{\ast},Z_{m}^{\ast},r\right)
\sigma_{N}^{-1}\nabla\mathcal{K}_{\sigma_{N}}\left(  X_{im}^{\ast\prime}%
\hat{\beta}\right)  X_{im}^{\ast\prime}\right]  \left(  \hat{\beta}^{\ast
}-\hat{\beta}\right)  .
\]
Note that%
\begin{align*}
&  \mathbb{E}^{\ast}\left[  \chi_{2}\left(  Z_{i}^{\ast},Z_{m}^{\ast
},r\right)  \sigma_{N}^{-1}\nabla\mathcal{K}_{\sigma_{N}}\left(  X_{im}%
^{\ast\prime}\hat{\beta}\right)  X_{im}^{\ast\prime}\right]  \left(
\hat{\beta}^{\ast}-\hat{\beta}\right) \\
&  =\left[  \frac{1}{N^{2}}\sum_{i=1}^{N}\sum_{m=1}^{N}\chi_{2}\left(
Z_{i},Z_{m},r\right)  \sigma_{N}^{-1}\nabla\mathcal{K}_{\sigma_{N}}\left(
X_{im}^{\prime}\beta\right)  X_{im}\right]  \left(  \hat{\beta}^{\ast}%
-\hat{\beta}\right) \\
&  =\left(  r-\gamma\right)  ^{\prime}\Gamma\left(  \hat{\beta}^{\ast}%
-\hat{\beta}\right)  +O_{p}\left(  \left\Vert r-\gamma\right\Vert
^{2}\left\Vert \hat{\beta}^{\ast}-\hat{\beta}\right\Vert \right)
+o_{p}\left(  N^{-1}\right) \\
&  =-2\left(  r-\gamma\right)  ^{\prime}\Gamma\mathbb{V}_{\beta}^{-1}%
\frac{W_{N1}^{\ast}}{\sqrt{N}}+o_{p}\left(  \frac{\left\Vert r-\gamma
\right\Vert }{\sqrt{N}}\right)  +O_{p}\left(  \left\Vert r-\gamma\right\Vert
^{2}\left\Vert \hat{\beta}^{\ast}-\hat{\beta}\right\Vert \right)
+o_{p}\left(  N^{-1}\right)  ,
\end{align*}
where we use equations (\ref{EQ:tn5_o}) and (\ref{EQ:tn5lead}) to get the
third line, and we substitute equation (\ref{EQ:betahatstar}) in the last
line. Then we have%
\begin{equation}
\mathcal{T}_{N5}^{\ast}=-2\left(  r-\gamma\right)  ^{\prime}\Gamma
\mathbb{V}_{\beta}^{-1}\frac{W_{N1}^{\ast}}{\sqrt{N}}+o_{p}\left(
\frac{\left\Vert r-\gamma\right\Vert }{\sqrt{N}}\right)  +O_{p}\left(
\left\Vert r-\gamma\right\Vert ^{2}\left\Vert \hat{\beta}^{\ast}-\hat{\beta
}\right\Vert \right)  +o_{p}\left(  N^{-1}\right)  . \label{EQ:tn5}%
\end{equation}

For $\mathcal{T}_{N6}^{\ast},$ it is not hard to see that%
\begin{equation}
\mathcal{T}_{N6}^{\ast}=O_{p}\left(  \frac{\left\Vert r-\gamma\right\Vert
\left\Vert \hat{\beta}^{\ast}-\hat{\beta}\right\Vert ^{2}}{\sigma_{N}^{2}%
}\right)  . \label{EQ:tn6}%
\end{equation}

Then, equations (\ref{EQ:LNKrstar}), (\ref{EQ:tn4}), (\ref{EQ:tn5}), and
(\ref{EQ:tn6}) together imply that for $\Vert r-\gamma\Vert=O\left(  N^{-1/2}\right)  $,
\begin{align*}
&  2\left[  N\left(  N-1\right)  \right]  ^{-1}\mathbb{L}_{N,\gamma}^{K\ast
}\left(  r,\hat{\beta}^{\ast}\right) \\
  =&\frac{1}{2}\left(  r-\gamma\right)  ^{\prime}\mathbb{V}_{\gamma}\left(
r-\gamma\right)  +2\left(  r-\gamma\right)  ^{\prime}\frac{W_{N2}}{\sqrt{N}%
}-2\left(  r-\gamma\right)  ^{\prime}\Gamma\mathbb{V}_{\beta}^{-1}\frac
{W_{N1}}{\sqrt{N}}+2\left(  r-\gamma\right)  ^{\prime}\frac{W_{N2}^{\ast}%
}{\sqrt{N}}\\
&  -2\left(  r-\gamma\right)  ^{\prime}\Gamma\mathbb{V}_{\beta}^{-1}%
\frac{W_{N1}^{\ast}}{\sqrt{N}}+o_{p}\left(  N^{-1}\right),
\end{align*}
from which we can show that $\hat{\gamma}^{\ast}-\gamma=O_{p}\left(  N^{-1/2}\right)  .$
With this result and by the CMT, we get%
\[
\sqrt{N}\left(  \hat{\gamma}^{\ast}-\gamma\right)  =-2\mathbb{V}_{\gamma}%
^{-1}W_{N2}+2\mathbb{V}_{\gamma}^{-1}\Gamma\mathbb{V}_{\beta}^{-1}%
W_{N1}-2\mathbb{V}_{\gamma}^{-1}W_{N2}^{\ast}+2\mathbb{V}_{\gamma}^{-1}%
\Gamma\mathbb{V}_{\beta}^{-1}W_{N1}^{\ast}+o_{p}\left(  1\right),
\]
which implies%
\[
\sqrt{N}\left(  \hat{\gamma}^{\ast}-\hat{\gamma}\right)  =-2\mathbb{V}%
_{\gamma}^{-1}W_{N2}^{\ast}+2\mathbb{V}_{\gamma}^{-1}\Gamma\mathbb{V}_{\beta
}^{-1}W_{N1}^{\ast}+o_{p}\left(  N^{-1}\right)  
\]
by substituting equation (\ref{EQ:gammahat}) in.

Since $W_{N1}^{\ast}$ and $W_{N2}^{\ast}$ are independent copies of $W_{N1}$
and $W_{N2},$ respectively, $\sqrt{N}\left(  \hat{\gamma}^{\ast}-\hat{\gamma
}\right)  $ approximates the distribution of $\sqrt{N}\left(  \hat{\gamma
}-\gamma\right)  .$

\section{Additional Discussions\label{appendixAdd}}
We discuss the convergence rates of our cross-sectional MRC and panel data MS estimators in Section \ref{appendixAdd_1}. We explain why we choose not to construct the panel data MS estimator for $\gamma$ via index matching in Section \ref{appendixAdd_2}.

\subsection{On the Convergence Rates of MRC and MS Estimators\label{appendixAdd_1}}
The rate of convergence of the MRC estimators for the
cross-sectional model is $N^{-1/2}$. The intuition is that each observation
can be matched with $Nh_{N}^{k_{1}+k_{2}}$ and $N\sigma_{N}^{2k_{1}}$
observations (in probability) for estimating $\beta$ and $\gamma$,
respectively. Thus, all observations are useful and the curse of
dimensionality is not a concern for the cross-sectional estimators. In
contrast, an observation is useful for the panel data MS estimators if all its
covariates are very close to each other across two time periods. For a fixed
$T,$ the probability of an observation being useful is proportional to
$h_{N}^{k_{1}+k_{2}}$ or $\sigma_{N}^{2k_{1}}.$ Therefore, the number of
\textquotedblleft useful\textquotedblright\ observations is proportional to
$Nh_{N}^{k_{1}+k_{2}}$ and $N\sigma_{N}^{2k_{1}},$ and the estimation suffers
from the curse of dimensionality. We may remedy this problem if $T\rightarrow
\infty$ such that $Th_{N}^{k_{1}+k_{2}}\rightarrow\infty$ and $T\sigma
_{N}^{2k_{1}}\rightarrow\infty$ and each observation can be ``matched'' at
certain two time periods with large probability. Since this paper focuses on
models for short panel data, we leave the discussion on this issue to future
research. Further, the criterion functions have a \textquotedblleft
sharp\textquotedblright\ edge as in \cite{KimPollard1990}. As a result, the
convergence rates for $\hat{\beta}$ and $\hat{\gamma}$ are expected to be
$(Nh_{N}^{k_{1}+k_{2}})^{-1/3}$ and $(N\sigma_{N}^{2k_{1}})^{-1/3}$,
respectively. The cross-sectional estimators, however, do not suffer this
\textquotedblleft sharp\textquotedblright\ edge effect because the criterion
functions are U-statistics and the \textquotedblleft sharp\textquotedblright%
\ edge effect vanishes after we decompose the criterion functions (see, e.g.,
the second line of equation (\ref{eq:A4})) when deriving the asymptotics.

\subsection{On the Construction of Panel Data MS Estimator \label{appendixAdd_2}}
Unlike the cross-section case, our panel data MS procedure does not
estimate $\gamma$ through matching $X_{jt}^{\prime}\hat{\beta}$ and $X_{js}%
^{\prime}\hat{\beta}$, $j=1,2$, because of the following observation. As
demonstrated in Theorem \ref{T:paneldist}, the convergence rate of
$\hat{\beta}$ is $(Nh_{N}^{k_{1}+k_{2}})^{-1/3}$. If we knew the true value of
$\beta$, $\hat{\gamma}$ could be obtained by matching $X_{jt}^{\prime}\beta$
and $X_{js}^{\prime}\beta$. Using the same arguments, we can show that this
infeasible estimator has a rate of convergence $(N\sigma_{N}^{2})^{-1/3}$. As
a result, it is only possibly beneficial to match $X_{jt}\hat{\beta}$ and
$X_{js}\hat{\beta}$ for estimating $\gamma$ when $\sigma_{N}^{2}
=o(h_{N}^{k_{1}+k_{2}})$; otherwise the asymptotics of $\hat{\beta}$ would
dominate the limiting distribution of $\hat{\gamma}$. However, this involves
careful selection of tuning parameters $h_{N}$ and $\sigma_{N}$. In more
general cases, this selection relies not only on the dimension of the
regressor space but also on the cardinality of the choice set, and thus should
be made on a case-by-case basis. Because of the lack of a universally
applicable treatment, we choose to simply match the covariates to avoid this
complexity and the ensuing discussion.

\subsection{Identification of the Case with Three Alternatives \label{appendixAdd_3}}
This section discusses how to create identification inequalities for models with three stand-alone alternatives and four potential bundles. More complex cases can be handled similarly. Given these inequalities, one can prove the identification and construct estimation procedures for model coefficients using the same methods employed in Sections \ref{SEC2} and \ref{SEC3}. The technical conditions required for identification are similar to those of the two-alternative model studied in Sections \ref{SEC2} and \ref{SEC3}. We omit them for conciseness and only focus on the identification strategies.

Throughout this section, we consider choice set $\mathcal{J}=\{0,1,2,3,(1,2),(1,3),(2,3),\left(
1,2,3\right) \}$ or equivalently $\mathcal{D}=\{d|d=(d_{1},d_{2},d_{3})\in\{0,1\}^{3}\}$, which is the set of all possible $d=(d_{1},d_{2},d_{3})\in\{0,1\}^{3}$.

\subsubsection{Cross-Sectional MRC Method}\label{SEC:idenJ3}
In the cross-sectional model, assume that an agent chooses $d$ that maximizes the latent utility
\begin{align}
U_{d}=& \sum_{j=1}^{3}F_{j}(X_{j}^{\prime }\beta ,\epsilon _{j})\cdot
d_{j}+ \eta _{110}\cdot F_{110}\left(W_{1}^{\prime }\gamma _{1}\right)
\cdot d_{1}\cdot d_{2}+ \eta _{101}\cdot F_{101}\left(W_{2}^{\prime }\gamma
_{2}\right) \cdot d_{1}\cdot d_{3} \nonumber \\
& +\eta _{011}\cdot F_{011}\left( W_{3}^{\prime }\gamma _{3}\right)\cdot
d_{2}\cdot d_{3} + \eta _{111}\cdot F_{111}\left(W_{4}^{\prime
}\gamma _{4}\right) \cdot d_{1}\cdot d_{2}\cdot d_{3},\label{eq:cross_3_1}
\end{align}
where $X_{1},X_{2},X_{3}\in \mathbb{R}^{k_{1}},\ W_{1}\in \mathbb{R}%
^{k_{2}},W_{2}\in \mathbb{R}^{k_{3}},W_{3}\in \mathbb{R}^{k_{4}},W_{4}\in
\mathbb{R}^{k_{5}}$, $\eta \equiv (\eta _{110},\eta _{101},\eta _{011},\eta
_{111})\in \mathbb{R}_{+}^{4}$ are the bundle-specific unobserved
heterogeneity, and all $F_j(\cdot)$'s and $F_d(\cdot)$'s are strictly increasing in all arguments. Note that by properly re-organizing the covariate vector, expression (\ref{eq:cross_3_1}) can accommodate both choice-specific
and common regressors. See \cite{CameronTrivedi2005} p.498 for a more detailed discussion. Thus, for notational convenience, here we do not distinguish common regressors form those choice-specific regressors as we did in Section \ref{SEC:LAD}. Let $Z\equiv (X_{1} ,X_{2} ,X_{3} ,W_{1} ,W_{2} ,W_{3} ,W_{4} ) $ to collect all observed covariates. We assume $(\epsilon _{1},\epsilon _{2},\epsilon _{3},\eta )\perp Z$.

The observed dependent variable $Y_{d}$ takes the form
\begin{equation*}
Y_{d}=1[U_{d}>U_{d^{\prime}},\forall d^{\prime}\in\mathcal{D}\setminus d].
\end{equation*}
Then, similar to the two-alternatives\footnote{This refers to models with $J=2$ or equivalently $\mathcal{J}=\{0,1,2,(1,2)\}$ that we discussed in Sections \ref{SEC2} and \ref{SEC3}.} case, by $(\epsilon_{1},\epsilon_{2}%,
\epsilon_{3},\eta)\perp Z,$%
\begin{equation*}
P(Y_{(1,d_{2},d_{3})}=1|Z,X_{2}=x_{2},X_{3}=x_{3},W_{1}=w_{1},W_{2}=w_{2},W_{3}=w_{3},W_{4}=w_{4})
\end{equation*}
is increasing in $X_{1}^{\prime}\beta$ for any constant vectors $\left(
x_{2},x_{3},w_{1},w_{2},w_{3},w_{4}\right)$ and any $\left( d_{2},d_{3}\right)
$. Let $\left( Y_{i},Z_{i}\right) $ and $\left( Y_{m},Z_{m}\right) $
be two independent copies of $\left( Y,Z\right) $ with
\begin{equation*}
Y\equiv(Y_{(0,0,0)},Y_{(1,0,0)},Y_{(0,1,0)},Y_{(0,0,1)},Y_{(1,1,0)},Y_{(1,0,1)},Y_{(0,1,1)},Y_{(1,1,1)}).
\end{equation*}
Then we have the following (conditional) moment inequalities for all $d_{2}$
and $d_{3}$:
\begin{align}
& \left\{ X_{i1}^{\prime}\beta\geq X_{m1}^{\prime}\beta\right\}
\label{EQ:beta} \\
 \Leftrightarrow & \left\{ P(Y_{i\left( 1,d_{2},d_{3}\right)
}=1|Z_{i},X_{i2}=x_{2},X_{i3}=x_{3},W_{i1}=w_{1},W_{i2}=w_{2},W_{i3}=w_{3},W_{i4}=w_{4})\right.
\notag \\
& \geq\left. P(Y_{m\left( 1,d_{2},d_{3}\right)
}=1|Z_{m},X_{m2}=x_{2},X_{m3}=x_{3},W_{m1}=w_{1},W_{m2}=w_{2},W_{m3}=w_{3},W_{m4}=w_{4})\right\} .
\notag
\end{align}
Similar moment inequalities can be obtained for $X_{ij}^{\prime}\beta\geq
X_{mj}^{\prime}\beta$ by fixing $\left\{ X_{1},X_{2},X_{3}\right\}
\backslash\left\{ X_{j}\right\} $ and $W_{1},W_{2},W_{3},W_{4}$.
Collectively, these moment inequalities establish the identification of $%
\beta,$ with some standard technical conditions as for the two-alternatives
case.

Once $\beta $ is identified, we can proceed to identify $\gamma $'s by fixing $X_{j}^{\prime }\beta ,$ $j=1,2,3,$
, or simply by fixing $X_{1},X_{2},X_{3}$. We
follow the former strategy. For $\gamma _{1},$ we fix $(X_{1}^{\prime
}\beta ,X_{2}^{\prime }\beta ,X_{3}^{\prime }\beta )$ at some constant
vector $(v_{1},v_{2},v_{3})$ and fix $(W_{i2},W_{i3},W_{i4})$ at some constant vector $(w_2,w_3,w_4)$. Then
by $(\epsilon _{1},\epsilon _{2},\epsilon _{3},\eta )\perp Z,$%
\begin{equation*}
P(Y_{(1,1,d_{3})}=1|Z,X_{1}^{\prime }\beta =v_{1},X_{2}^{\prime }\beta
=v_{2},X_{3}^{\prime }\beta =v_{3},W_{2}=w_{2},W_{3}=w_{3},W_{4}=w_{4})
\end{equation*}%
is increasing in $W_{1}^{\prime }\gamma_1 $ for any $d_{3}.$ As a result, for
any $d_{3}$, $(v_{1},v_{2},v_{3})$, and $(w_{2},w_{3},w_{4})$, we have
\begin{align}
& \left\{ W_{i1}^{\prime }\gamma _{1}\geq W_{m1}^{\prime }\gamma
_{1}\right\}   \label{EQ:gamma1} \\
 \Leftrightarrow &\left\{ P(Y_{i(1,1,d_{3})}=1|Z_{i},X_{i1}^{\prime }\beta
=v_{1},X_{i2}^{\prime }\beta =v_{2},X_{i3}^{\prime }\beta
=v_{3},W_{i2}=w_{2},W_{i3}=w_{3},W_{i4}=w_{4})\right.   \notag \\
& \geq \left. P(Y_{m(1,1,d_{3})}=1|Z_{m},X_{m1}^{\prime }\beta
=v_{1},X_{m2}^{\prime }\beta =v_{2},X_{m3}^{\prime }\beta
=v_{3},W_{m2}=w_{2},W_{m3}=w_{3},W_{m4}=w_{4})\right\} .  \notag
\end{align}%
We can similarly identify $\gamma _{2}$ and $\gamma _{3}.$

Once $\beta$, $%
\gamma_{1},\gamma_{2},$ and $\gamma_{3}$ are identified, we can identify $\gamma_{4}$ by fixing either $(
X_{1},X_{2},X_{3},W_{1},W_{2},W_{3})$ or $\left(
X_{1}^{\prime}\beta,X_{2}^{\prime}\beta
,X_{3}^{\prime}\beta,W_{1}^{\prime}\gamma_{1},W_{2}^{\prime}%
\gamma_{2},W_{3}^{\prime}\gamma_{3}\right)$. We consider
the latter and fix $(X_{1}^{\prime}\beta,X_{2}^{\prime}\beta,X_{3}^{%
\prime}\beta
,W_{1}^{\prime}\gamma_{1},W_{2}^{\prime}\gamma_{2},W_{3}^{\prime}\gamma
_{3})$ at some constant vector $(v_{1},v_{2},v_{3},v_{4},v_{5},v_{6})$. Then $%
(\epsilon_{1},\epsilon_{2},\epsilon_{3},\eta)\perp Z$ implies that%
\begin{equation*}
P\left( Y_{(1,1,1)}=1|Z,X_{1}^{\prime}\beta=v_{1},X_{2}^{\prime}\beta
=v_{2},X_{3}^{\prime}\beta=v_{3},W_{1}^{\prime}\gamma_{1}=v_{4},W_{2}^{%
\prime }\gamma_{2}=v_{5},W_{3}^{\prime}\gamma_{3}=v_{6}\right)
\end{equation*}
is increasing in $W_{4}^{\prime}\gamma_{4}$ for any fixed $%
(v_{1},v_{2},v_{3},v_{4},v_{5},v_{6})$. Thus, the following
inequality holds for any $(v_{1},v_{2},v_{3},v_{4},v_{5},v_{6})^{\prime},$%
\begin{align}
& \left\{ W_{i4}^{\prime}\gamma_{4}\geq W_{m4}^{\prime}\gamma_{4}\right\}
\label{EQ:gamma4} \\
 \Leftrightarrow &\left\{ P(Y_{i(1,1,1)}=1|Z_{i},X_{i1}^{\prime}\beta
=v_{1},X_{i2}^{\prime}\beta=v_{2},X_{i3}^{\prime}\beta=v_{3},W_{i1}^{\prime
}\gamma_{1}=v_{4},W_{i2}^{\prime}\gamma_{2}=v_{5},W_{i3}^{\prime}\gamma
_{3}=v_{6})\right.  \notag \\
& \geq\left.
P(Y_{m(1,1,1)}=1|Z_{m},X_{m1}^{\prime}\beta=v_{1},X_{m2}^{\prime}%
\beta=v_{2},X_{m3}^{\prime}\beta=v_{3},W_{m1}^{\prime}%
\gamma_{1}=v_{4},W_{m2}^{\prime}\gamma_{2}=v_{5},W_{m3}^{\prime}\gamma
_{3}=v_{6})\right\} .  \notag
\end{align}

The inequalities established in equations (\ref{EQ:beta})--(\ref{EQ:gamma4}) can be used to construct localized MRC estimators as in Section \ref{sec:rcm}. However, this matching-based identification strategy cannot identify coefficients on common regressors, i.e., covariates that appear in more than one index.

\subsubsection{LAD Method}
The LAD estimation for $\theta:=(\beta,\gamma_{1},\gamma_{2},\gamma_{3},\gamma_{4})$
is analogous to the simple case presented in the main text. Here, we only outline the procedure for the cross-sectional model for illustration purposes. The panel data LAD estimation can be constructed similarly and is omitted for brevity.\footnote{For the panel data setting, the identification relies on the conditional
homogeneity condition $\xi_{s}\overset{d}{=}\xi_{t}|(\alpha,Z_{s},Z_{t})$.}
\begin{enumerate}
\item By $(\epsilon_{1},\epsilon_{2},\epsilon_{3},\eta)\perp Z$ and the
monotonicity of $F_{j}(\cdot)$'s and $F_{d}(\cdot)$'s functions,
we can write 
\begin{align}
\{X_{im1}'\beta\geq0,X_{im2}'\beta\leq0,X_{im3}'\beta\leq0, & W_{im1}'\gamma_{1}\leq0,W_{im2}'\gamma_{2}\leq0,W_{im3}'\gamma_{3},\leq0,W_{im4}'\gamma_{4}\leq0\}\label{eq:lad_J3_1}\\
 & \Rightarrow\nonumber \\
\Delta p_{(1,0,0)}(Z_{i},Z_{m}):=P(Y_{i(1,0,0)} & =1|Z_{i})-P(Y_{m(1,0,0)}=1|Z_{i})\geq0\label{eq:lad_J3_2}
\end{align}
and
\begin{align}
\{X_{im1}'\beta\leq0,X_{im2}'\beta\geq0,X_{im3}'\beta\geq0, & W_{im1}'\gamma_{1}\geq0,W_{im2}'\gamma_{2}\geq0,W_{im3}'\gamma_{3},\geq0,W_{im4}'\gamma_{4}\geq0\}\label{eq:lad_J3_3}\\
 & \Rightarrow\nonumber \\
\Delta p_{(1,0,0)}(Z_{i},Z_{m}) & \leq0,\label{eq:lad_J3_4}
\end{align}
i.e., the event (\ref{eq:lad_J3_1}) and (\ref{eq:lad_J3_3}) are
respectively sufficient (but not necessary) conditions for inequality
(\ref{eq:lad_J3_2}) and (\ref{eq:lad_J3_4}) to hold.
\item Define
\[
Q_{(1,0,0)}(\vartheta):=\mathbb{E}[q_{(1,0,0)}(Z_{i},Z_{m},\vartheta)],
\]
where
\begin{align*}
 & q_{(1,0,0)}(Z_{i},Z_{m},\vartheta)\\
= & [\vert I_{im(1,0,0)}^{+}(\vartheta)-1[\Delta p_{(1,0,0)}(Z_{i},Z_{m})\geq0]\vert+\vert I_{im(1,0,0)}^{-}(\vartheta)-1[\Delta p_{(1,0,0)}(Z_{i},Z_{m})\leq0]\vert]\\
 & \times[I_{im(1,0,0)}^{+}(\vartheta)+I_{im(1,0,0)}^{-}(\vartheta)]
\end{align*}
with
\begin{align*}
 & I_{im(1,0,0)}^{+}(\vartheta)\\
= & 1[X_{im1}'\beta\geq0,X_{im2}'\beta\leq0,X_{im3}'\beta\leq0,W_{im1}'\gamma_{1}\leq0,W_{im2}'\gamma_{2}\leq0,W_{im3}'\gamma_{3},\leq0,W_{im4}'\gamma_{4}\leq0]\\
 & I_{im(1,0,0)}^{-}(\vartheta)\\
= & 1[X_{im1}'\beta\leq0,X_{im2}'\beta\geq0,X_{im3}'\beta\geq0,W_{im1}'\gamma_{1}\geq0,W_{im2}'\gamma_{2}\geq0,W_{im3}'\gamma_{3},\geq0,W_{im4}'\gamma_{4}\geq0].
\end{align*}
The intuition underlying the construction of this population criterion
function is the same as the discussion in Section \ref{SEC:LAD} (below (\ref{eq:lad_obj})--(\ref{eq:lad_ineq_2_2})). Then the identification of $\theta$ can be established using arguments similar to those for Theorem \ref{thm:cross_theta_identification}.
\item To compute the estimator $\hat{\theta}$ for $\theta$, one can consider
the ``debiased'' sample criterion analogous to (\ref{eq:lad_estimator}), i.e.,
\begin{equation}
\hat{\theta}=\arg\min_{\vartheta\in\Theta}\sum_{i=1}^{N-1}\sum_{m>i}\hat
{q}^{D}_{im(1,0,0)}(\vartheta), 
\end{equation}
where
\begin{align*}
\hat{q}_{im(1,0,0)}^{D}(\vartheta)=  &  [\vert I_{im(1,0,0)}^{+}(\vartheta)-\Delta
\hat{p}_{(1,0,0)}(Z_{i},Z_{m})\vert+\vert I_{im(1,0,0)}^{-}(\vartheta)+\Delta\hat
{p}_{(1,0,0)}(Z_{i},Z_{m})\vert]\\
&  \times[I_{im(1,0,0)}^{+}(\vartheta)+I_{im(1,0,0)}^{-}(\vartheta)]+[1-(I_{im(1,0,0)}
^{+}(\vartheta)+I_{im(1,0,0)}^{-}(\vartheta))].
\end{align*}
\item Repeat Steps 1--3 to obtain sample criteria associated with $\Delta p_{(0,1,0)}(Z_{i},Z_{m})$,
$\Delta p_{(0,0,1)}(Z_{i},Z_{m})$, $\Delta p_{(1,1,0)}(Z_{i},Z_{m})$,
$\Delta p_{(1,0,1)}(Z_{i},Z_{m})$, $\Delta p_{(0,1,1)}(Z_{i},Z_{m})$,
$\Delta p_{(0,0,0)}(Z_{i},Z_{m})$, and $\Delta p_{(1,1,1)}(Z_{i},Z_{m})$.
Collectively, use some convex combination of these criteria to obtain
$\hat{\theta}$ with better finite sample performance.
\end{enumerate}

\subsubsection{Panel Data MS Method}\label{SEC:idenJ3P}
We specify the latent utility of the panel data bundle choice model with $J=3$ as follows:
\begin{align*}
U_{dt}=& \sum_{j=1}^{3}F_{j}(X_{jt}^{\prime }\beta ,\epsilon _{jt},\alpha
_{j})\cdot d_{j}+\eta _{110t}\cdot F_{110}\left(  W_{1t}^{\prime }\gamma
_{1}+\alpha _{b1}\right) \cdot d_{1}\cdot d_{2}+\eta
_{101t}\cdot F_{101}\left(W_{2t}^{\prime }\gamma _{2}+\alpha _{b2}\right) \cdot
d_{1}\cdot d_{3} \\
& +\eta _{011t}\cdot F_{011}\left(W_{3t}^{\prime }\gamma _{3}+\alpha
_{b3}\right) \cdot d_{2}\cdot d_{3}+\eta _{111t}\cdot F_{111}\left( 
W_{4t}^{\prime }\gamma _{4}+\alpha _{b4}\right) \cdot d_{1}\cdot
d_{2}\cdot d_{3},
\end{align*}%
where $X_{1t},X_{2t},X_{3t}\in \mathbb{R}^{k_{1}},\ W_{1t}\in \mathbb{R}%
^{k_{2}},W_{2t}\in \mathbb{R}^{k_{3}},W_{3t}\in \mathbb{R}^{k_{4}},$ $W_{4t}\in
\mathbb{R}^{k_{5}}$, and $\eta _{t}\equiv (\eta _{110t},\eta _{101t},\eta
_{011t},\eta _{111t})\in \mathbb{R}_{+}^{4}$. All $F_j(\cdot)$'s and $F_d(\cdot)$'s are strictly increasing in all arguments. An agent chooses $d$ that maximizes the latent utility, i.e.,
\begin{equation*}
Y_{dt}=1[U_{dt}>U_{d^{\prime }t},\forall d^{\prime }\in \mathcal{D}\setminus
d].
\end{equation*}%
Denote $Z_{t} =(X_{1t},X_{2t},X_{3t},W_{1t},W_{2t},W_{3t},W_{4t})$, $\xi
_{t}=(\epsilon _{1t},\epsilon _{2t},\epsilon _{3t},\eta _{110t},\eta
_{101t},\eta _{011t},\eta _{111t})$, and $\alpha  =\left( \alpha _{1},\alpha _{2},\alpha _{3},\alpha
_{b1},\alpha _{b2},\alpha _{b3},\alpha _{b4}\right)$. The key identification assumption is that $\xi _{s}\overset{d}{=}\xi _{t}|(\alpha
,Z_{s},Z_{t})$ holds for any two time periods $s$ and $t$.

The identification of the panel data model is similar to the cross-sectional model discussed in Section \ref{SEC:idenJ3}. The primary difference is that in a panel data setting, we need to match and compare the same agent in time periods $s$ and $t$, rather than two independent agents $i$ and $m$. As a result, we can construct the same set of identification inequalities as Section \ref{SEC:idenJ3}, but replace the $(i,m)$ therein with $(s,t)$. For example, analogous to (\ref{EQ:beta}) and (\ref{EQ:gamma1}), we can write for any $%
d_{2},d_{3},x_{1},x_{2},x_{3},w_{1},w_{2},w_{3},w_{4}$, and $c$
\begin{align}
& \left\{ X_{1t}^{\prime }\beta \geq X_{1s}^{\prime }\beta \right\}
\label{EQ:beta1P} \\
 \Leftrightarrow &\{P(Y_{\left( 1,d_{2},d_{3}\right)t
}=1|Z_t,X_{2t}=x_{2},X_{3t}=x_{3},W_{1t}=w_{1},W_{2}=w_{2t},W_{3t}=w_{3},W_{4t}=w_{4},\alpha =c)
\notag \\
& \geq P(Y_{\left( 1,d_{2},d_{3}\right)s }=1|Z_s,X_{2s}=x_{2},X_{3s}=x_{3},W_{1s}=w_{1},W_{2s}=w_{2},W_{3s}=w_{3},W_{4s}=w_{4},\alpha =c)\}
\notag
\end{align}
and
\begin{align}
& \left\{W_{1t}^{\prime }\gamma _{1}\geq W_{1s}^{\prime }\gamma _{1}\right\}
\label{EQ:gamma1P} \\
\Leftrightarrow & P(Y_{\left( 1,1,d_{3}\right)
t}=1|Z_t,X_{1t}=x_{1},X_{2t}=x_{2},X_{3t}=x_{3},W_{2t}=w_{2},W_{3t}=w_{3},W_{4t}=w_{4},\alpha =c)
\notag \\
& \geq P(Y_{(1,1,d_{3})s}=1|Z_s,X_{1s}=x_{1},X_{2s}=x_{2},X_{3s}=x_{3},W_{2s}=w_{2},W_{3s}=w_{3},W_{4s}=w_{4},\alpha =c).  \notag
\end{align}
Then, under similar regular conditions, we can identify and estimate $(\beta,\gamma_1)$ based on inequalities (\ref{EQ:beta1P}) and (\ref{EQ:gamma1P}). Identification inequalities for $(\gamma_2,\gamma_3,\gamma_4)$ can be similarly constructed and hence are omitted for brevity.  

The panel data MS estimator differs from the cross-sectional MRC estimator in that its convergence rate decreases as the number of matching variables increases. For example, using similar analysis, we can show that for MS estimators based on inequalities (\ref{EQ:beta1P}) and (\ref{EQ:gamma1P}), we have
\begin{equation*}
\hat{\beta}-\beta =O_{p}\left(
(Nh_{N}^{2k_{1}+k_{2}+k_{3}+k_{4}+k_{5}})^{-1/3}\right) \text{ and }\hat{%
\gamma}_{1}-\gamma _{1}=O_{p}\left(
(N\sigma_{N}^{3k_{1}+k_{3}+k_{4}+k_{5}})^{-1/3}\right),
\end{equation*}%
where we assume we use the same type of estimator as in Section \ref{SEC3.1}
and the same bandwidth ($h_{N}$ or $\sigma_{N}$) for all covariates. The convergence rates
of $\hat{\gamma}_{2},\hat{\gamma}_{3}$, and $\hat{\gamma}_{4}$ can be
similarly obtained. Generally speaking, we can show that the MS estimator has a convergence rate of $O_p((Nh_N^{\bar{k}})^{-1/3})$ ($O_p((N\sigma_N^{\bar{k}})^{-1/3})$), where $\bar{k}$ represents the number of matching variables.

\section{Proofs of Technical Lemmas} \label{appendixE} 
We prove Lemmas \ref{lemmaA1}--\ref{LE:delta} and Lemmas \ref{LE:P1}--\ref{LE:P2} in order.
For Lemmas \ref{lemmaA1}--\ref{lemmaA7}, we use the following results, which can be derived under Assumptions C2 and C5 by applying standard arguments (e.g., Example 6.4 of \cite{KimPollard1990}).  %These are direct results from the full-rank conditions, smoothness conditions, and finite moment conditions in . 
%Since the derivations are standard, we list them without proofs.} 
We use $\Vert \cdot \Vert _{F}$ to
denote the Frobenius norm; that is, for any matrix $\mathbf{A}$, $\Vert
\mathbf{A}\Vert _{F}=\sqrt{\text{trace}\left( \mathbf{A}\mathbf{A}^{\prime
}\right) }$.

Let $\mathcal{N}_{\beta }$ denote a neighborhood of $\beta$. Then, for $\varrho _{i}(b)$ defined in (\ref{EQ:rhoi}),
\begin{enumerate}
\item[-] All mixed second partial derivatives of $\varrho _{i}(b)$ exist on $%
\mathcal{N}_{\beta }$.
\item[-] There is an integrable function $M_{1}(Z_{i})$ such that for all $%
b\in \mathcal{N}_{\beta }$, $\Vert \nabla ^{2}\varrho _{i}(b)-\nabla
^{2}\varrho _{i}(\beta )\Vert _{F}\leq M_{1}(Z_{i})\Vert b-\beta \Vert $.
\item[-] $\mathbb{E}[\Vert \nabla \varrho _{i}(\beta )\Vert ^{2}]<\infty $, $%
\mathbb{E}[\Vert \nabla ^{2}\varrho _{i}(\beta )\Vert _{F}]<\infty $, and $%
\mathbb{E}[\nabla ^{2}\varrho _{i}(\beta )]$ is negative definite.
\end{enumerate}

Let $\mathcal{N}_{\gamma }$ denote a neighborhood of $%
\gamma$. Then, for $\tau _{i}(r)$ and $\mu (v_{1},v_{2},r)$ defined respectively in (\ref{EQ:Tau}) and (\ref{EQ:Mu}), 
\begin{enumerate}
\item[-] All mixed second partial derivatives of $\tau _{i}(r)$ exist on $%
\mathcal{N}_{\gamma }$.
\item[-] There is an integrable function $M_{2}(Z_{i})$ such that for all $%
r\in \mathcal{N}_{\gamma }$ $\Vert \nabla ^{2}\tau _{i}(r)-\nabla ^{2}\tau
_{i}(\gamma )\Vert _{F}\leq M_{2}(Z_{i})\Vert r-\gamma \Vert $.
\item[-] $\mathbb{E}[\Vert \nabla \tau _{i}(\gamma )\Vert ^{2}]<\infty $, $%
\mathbb{E}[\Vert \nabla ^{2}\tau _{i}(\gamma )\Vert _{F}]<\infty $, and $%
\mathbb{E}[\nabla ^{2}\tau _{i}(\gamma )]$ is negative definite.
\item[-] All mixed second partial derivatives of $\mu (v_{1},v_{2},r)$ exist
on $\mathcal{N}_{\gamma }$.
\item[-] There exists an continuous, integrable function $M_{3}(v_{1},v_{2})$
such that for all $r\in \mathcal{N}_{\gamma }$, $\Vert \nabla _{11}^{2}\mu
(v_{1},v_{2},r)-\nabla _{11}^{2}\mu (v_{1},v_{2},\gamma )\Vert _{F}\leq
M_{3}(v_{1},v_{2})\Vert r-\gamma \Vert $.
\item[-] $\mathbb{E}[\Vert \nabla _{13}^{2}\mu (V_{i}\left( \beta \right)
,V_{i}\left( \beta \right) ,\gamma )\Vert _{F}^{2}]<\infty $ and $\sup_{r\in
\mathcal{N}_{\gamma }}\mathbb{E}[\Vert \nabla _{33}^{2}\nabla _{1}\mu
(V_{i}\left( \beta \right) ,V_{i}\left( \beta \right) ,\gamma )\Vert
_{F}]<\infty $.
\end{enumerate}

Further, using the definition of $\tau _{i}(r)$, we
can write $\tau _{i}(r)$ as
\begin{equation*}
\tau _{i}(r)=f_{V_{im}}(0)\mathbb{E}%
[h_{im}(r)|V_{i}=v_{i},W_{i}=w_{i},V_{im}=0]=\int
B(v_{i},v_{i},w_{i},w_{m})S_{im}(r)f_{V,W}(v_{i},w_{m})dw_{m}.
\end{equation*}

\begin{proof}
[Proof of Lemma \ref{lemmaA1}]First, note that
\begin{equation}
\frac{1}{2}\sigma_{N}^{2}\sup_{r\in\mathcal{R}}|\hat{\mathcal{L}}_{N}%
^{K}(r)-\mathcal{L}_{N}^{K}(r)|\leq\frac{1}{N(N-1)}\sum_{i\neq m}%
|K_{\sigma_{N},\gamma}(\hat{V}_{im})-K_{\sigma_{N},\gamma}(V_{im})|.
\label{eq:A1}%
\end{equation}
Applying a second-order Taylor expansion to the right-hand side of
(\ref{eq:A1}) yields the lead term of the form
\begin{align*}
&  \frac{1}{\sigma_{N}N(N-1)}\sum_{i\neq m}|\nabla_{1}K_{\sigma_{N},\gamma
}(V_{im})X_{im1}^{\prime}(\hat{\beta}-\beta)+\nabla_{2}K_{\sigma_{N},\gamma
}(V_{im})X_{im2}^{\prime}(\hat{\beta}-\beta)|\\
\leq &  \frac{1}{\sigma_{N}N(N-1)}\sum_{i\neq m}\sum_{j=1}^{2}|\nabla
_{j}K_{\gamma}(V_{im}/\sigma_{N})X_{imj}^{\prime}(\hat{\beta}-\beta)|,
\end{align*}
where $\nabla_{j}K_{\gamma}(\cdot)$ ($\nabla_{j}K_{\sigma_{N},\gamma}(\cdot)$)
denotes the partial derivative of $K_{\gamma}(\cdot)$ ($K_{\sigma_{N},\gamma
}(\cdot)$) w.r.t. its $j$-th argument corresponding to alternative $j$. It
suffices to focus on the term with $j=1$. Then by Assumptions C1--C7,
and the Cauchy-Schwarz inequality, we conclude that
\begin{align*}
&  \frac{1}{N(N-1)}\sum_{i\neq m}\left\vert \nabla_{1}K_{\gamma}(V_{im}%
/\sigma_{N})\frac{X_{im1}^{\prime}(\hat{\beta}-\beta)}{\sigma_{N}}\right\vert
\leq\frac{1}{N(N-1)}\sum_{i\neq m}|\nabla_{1}K_{\gamma}(V_{im}/\sigma
_{N})|\Vert X_{im1}\Vert\frac{\Vert\hat{\beta}-\beta\Vert}{\sigma_{N}}\\
=  &  [O_{p}(\sigma_{N}^{2})+o_{p}(1)]O_{p}(\Vert\hat{\beta}-\beta\Vert
/\sigma_{N})=o_{p}(1),
\end{align*}
where the penultimate equality follows from the SLLN of U-statistics (see
e.g., \cite{Serfling2009}) and $\mathbb{E}[|\nabla_{1}K_{\gamma}(V_{im}%
/\sigma_{N})|\Vert X_{im1}\Vert]=O_{p}(\sigma_{N}^{2})$, and the last equality
is due to the $\sqrt{N}$-consistency of $\hat{\beta}$. Then the desired
convergence result follows.
\end{proof}

\begin{proof}
[Proof of Lemma \ref{lemmaA2}]Note that Lemma \ref{lemmaA2} follows from
$\sup_{r\in\mathcal{R}}|\mathcal{L}_{N}^{K}(r)-\mathbb{E}[\mathcal{L}_{N}%
^{K}(r)]|=o_{p}(1)$ and $\sup_{r\in\mathcal{R}}|\mathbb{E}[\mathcal{L}_{N}%
^{K}(r)]-\mathcal{L}(r)|=o(1)$.

Define $\mathcal{F}_{N}=\{K_{\sigma_{N},\gamma}(v_{im})h_{im}(r)|r\in
\mathcal{R}\}$ with $\sigma_{N}>0$ and $\sigma_{N}\rightarrow0$.
$\mathcal{F}_{N}$ is a sub-class of the fixed class $\mathcal{F}%
\equiv\{K_{\gamma}(v_{im}/\sigma)h_{im}(r)|r\in\mathcal{R},\sigma
>0\}=\mathcal{F}_{r}\mathcal{F}_{\sigma}$, with $\mathcal{F}_{\sigma}%
\equiv\{K_{\gamma}(v_{im}/\sigma)|\sigma>0\}$, which is Euclidean for the
constant envelope $\sup_{v\in\mathbb{R}^{2}}|K_{\gamma}(v)|$ by Lemma 22 in
\cite{NolanPollard1987} and $\mathcal{F}_{r}\equiv\{h_{im}(r)|r\in
\mathcal{R}\}$. Noticing that $h_{im}(r)$ is uniformly bounded by 2, Example
2.11 and Lemma 2.15 in \cite{PakesPollard1989} then implies that
$\mathcal{F}_{r}$ is Euclidean for the constant envelope 2. Putting all these
results together, we conclude using Lemma 2.14 in \cite{PakesPollard1989} that
$\mathcal{F}$ is Euclidean for the constant envelope $2\sup_{v\in
\mathbb{R}^{2}}|K_{\gamma}(v)|$. Applying Corollary 4 in \cite{Sherman1994AoS}%
, we obtain that $\sup_{r\in\mathcal{R}}|\mathcal{L}_{N}^{K}(r)-\mathbb{E}%
[\mathcal{L}_{N}^{K}(r)]|=O_{p}(N^{-1}/\sigma_{N}^{2})=o_{p}(1)$ by Assumption C7.

It remains to show that $\sup_{r\in\mathcal{R}}|\mathbb{E}[\mathcal{L}_{N}%
^{K}(r)]-\mathcal{L}(r)|=o(1)$. Let $\eta(\cdot)\equiv f_{V_{im}}%
(\cdot)\mathbb{E}[h_{im}(r)|V_{im}=\cdot]$. Then $\mathcal{L}(r)=\eta(0)$ by
definition. We can write by Assumptions C6 and C7 that
\begin{align*}
&  \sup_{r\in\mathcal{R}}|\mathbb{E}[\mathcal{L}_{N}^{K}(r)]-\mathcal{L}%
(r)|=\sup_{r\in\mathcal{R}}|\sigma_{N}^{-2}\mathbb{E}[K_{\gamma}(V_{im}%
/\sigma_{N})h_{im}(r)]-\eta(0)|\\
=  &  \sup_{r\in\mathcal{R}}|\int\sigma_{N}^{-2}K_{\gamma}(v/\sigma_{N}%
)\eta(v)\text{d}v-\eta(0)|=\sup_{r\in\mathcal{R}}|\int\sigma_{N}^{-2}%
K_{\gamma}(v/\sigma_{N})[\eta(0)+v^{\prime}\nabla\eta(\bar{v})]\text{d}%
v-\eta(0)|\\
=  &  \sup_{r\in\mathcal{R}}|\int K_{\gamma}(u)[\eta(0)+\sigma_{N}u^{\prime
}\nabla\eta(u_{N})]\text{d}u-\eta(0)|=\sup_{r\in\mathcal{R}}|\sigma_{N}\int
K_{\gamma}(u)u^{\prime}\nabla\eta(u_{N})\text{d}u|\\
\leq &  \sigma_{N}\cdot\sup_{r\in\mathcal{R}}\int|K_{\gamma}(u)|\Vert
u\Vert\Vert\nabla\eta(u_{N})\Vert\text{d}u=O(\sigma_{N})=o(1),
\end{align*}
where the third equality applies a mean-value expansion and the fourth
equality uses a change of variables. Then, the desired result follows.
\end{proof}

\begin{proof}
[Proof of Lemma \ref{lemmaA3}]The proof proceeds by verifying the four
sufficient conditions for Theorem 2.1 in \cite{NeweyMcFadden1994}: (S1)
$\mathcal{R}$ is a compact set, (S2) $\sup_{r\in\mathcal{R}}|\hat{\mathcal{L}%
}_{N}^{K}(r)-\mathcal{L}(r)|=o_{p}(1)$, (S3) $\mathcal{L}(r)$ is continuous in
$r$, and (S4) $\mathcal{L}(r)$ is uniquely maximized at $\gamma$.

Condition (S1) is satisfied by Assumption C4. Lemmas \ref{lemmaA1} and
\ref{lemmaA2} together imply that condition (S2) holds. Note that
\begin{align*}
&  \mathbb{E}[h_{im}(r)|V_{im}=0]= \mathbb{E}\left\{  \mathbb{E}%
[Y_{im(1,1)}(\text{sgn}(W_{im}^{\prime}r)-\text{sgn}(W_{im}^{\prime}%
\gamma))|Z_{i},Z_{m}]|V_{im}=0\right\} \\
=  &  \mathbb{E}\left[  (P(Y_{i(1,1)}=1|Z_{i})-P(Y_{m(1,1)}=1|Z_{m}%
))(\text{sgn}(W_{im}^{\prime}r)-\text{sgn}(W_{im}^{\prime}\gamma
))|V_{im}=0\right]  .
\end{align*}
Then the identification condition (S4) can be verified by similar arguments
used in the proof of Theorem 2.1 given that $f_{V_{im}%
}(0)>0$ by Assumption C5.

It remains to verify the continuity of $\mathcal{L}(r)$ in $r$. $\mathbb{E}[h_{im}(r)|V_{im}=0]$ can be expressed as the
sum of terms like
\begin{align*}
&  P(Y_{im(1,1)}=d,W_{im}^{(1)}>-\tilde{W}_{im}^{\prime}\tilde{r}%
|V_{im}=0)\\
=  &  \int\int_{-\tilde{W}_{im}^{\prime}\tilde{r}/r^{(1)}}^{\infty
}P(Y_{im(1,1)}=d|W_{im}^{(1)}=w,\tilde{W}_{im},V_{im}=0)f_{W_{im}^{(1)}%
|\tilde{W}_{im},V_{im}=0}(w)\text{d}w\text{d}F_{\tilde{W}_{im}|V_{im}=0}.
\end{align*}
for some $d\in\{-1,0,1\}$. Then $\mathcal{L}(r)$ is continuous in $r$ if
$f_{W_{im}^{(1)}|\tilde{W}_{im},V_{im}=0}(\cdot)$ does not have any mass
points, which is guaranteed by Assumption C3. This completes the proof.
\end{proof}

\begin{proof}
[Proof of Lemma \ref{lemmaA4}]First, we express $\mathbb{E}[\mathcal{L}%
_{N}^{K}(r)]$ as the integral
\begin{align}
&  \int\sigma_{N}^{-2}K_{\sigma_{N},\gamma}(v_{im})B(v_{i},v_{m},w_{i}%
,w_{m})S_{im}(r)\text{d}F_{V,W}(v_{i},w_{i})\text{d}F_{V,W}(v_{m}%
,w_{m})\label{eq:A7}\\
=  &  \int K_{\gamma}(u_{im})B(v_{m}+u_{im}\sigma_{N},v_{m},w_{i},w_{m}%
)S_{im}(r)f_{V,W}(v_{m}+u_{im}\sigma_{N},w_{i})\text{d}w_{i}\text{d}%
u_{im}\text{d}F_{V,W}(v_{m},w_{m}),\nonumber
\end{align}
where we apply the change of variables $u_{im}=v_{im}/\sigma_{N}$ to obtain
the equality.

Take the $\kappa_{\gamma}^{\text{th}}$-order Taylor expansion inside the
integral around $v_{m}$ to obtain the lead term
\begin{align}
&  \int K_{\gamma}(u_{im})B(v_{m},v_{m},w_{i},w_{m})S_{im}(r)f_{V,W}%
(v_{m},w_{i})\text{d}w_{i}\text{d}u_{im}\text{d}F_{V,W}(v_{m},w_{m}%
)\nonumber\\
=  &  \int B(v_{m},v_{m},w_{i},w_{m})S_{im}(r)f_{V,W}(v_{m},w_{i}%
)\text{d}w_{i}\text{d}F_{V,W}(v_{m},w_{m})=\mathbb{E}[\tau_{m}(r)],
\label{eq:A8}%
\end{align}
where the first equality follows by $\int K_{\gamma}(u)du=1$.\footnote{More precisely, we apply $\kappa_{1\gamma}^{\text{th}}$-order and $\kappa_{2\gamma}^{\text{th}}$-order Taylor expansions to $B(\cdot,v_{m},w_{i},w_{m}%
)$ and $f_{V,W}(\cdot,w_{i})$, respectively.} All remaining
terms are zero except the last one which is of order $O(\sigma_{N}^{\kappa_\gamma
}\delta_{N})=o(\Vert r-\gamma\Vert^{2})$ by Assumptions C6 and C7.

Note that a second-order Taylor expansion around $\gamma$ gives
\begin{align*}
&  \tau_{m}(r)-\tau_{m}(\gamma)= (r-\gamma)^{\prime}\nabla\tau_{m}%
(\gamma)+\frac{1}{2}(r-\gamma)^{\prime}\nabla^{2}\tau_{m}(\bar{r})(r-\gamma)\\
=  &  (r-\gamma)^{\prime}\nabla\tau_{m}(\gamma)+\frac{1}{2}(r-\gamma)^{\prime
}\nabla^{2}\tau_{m}(\gamma)(r-\gamma)+\frac{1}{2}(r-\gamma)^{\prime}\left[
\nabla^{2}\tau_{m}(\bar{r})-\nabla^{2}\tau_{m}(\gamma)\right]  (r-\gamma),
\end{align*}
and hence by Assumption C7
\begin{align}
\mathbb{E}[\tau_{m}(r)]=  &  (r-\gamma)^{\prime}\mathbb{E}[\nabla\tau
_{m}(\gamma)]+\frac{1}{2}(r-\gamma)^{\prime}\mathbb{E}[\nabla^{2}\tau
_{m}(\gamma)](r-\gamma)+o(\Vert r-\gamma\Vert^{2})\nonumber\\
=  &  \frac{1}{2}(r-\gamma)^{\prime}\mathbb{V}_{\gamma}(r-\gamma)+o(\Vert
r-\gamma\Vert^{2}), \label{eq:A9}%
\end{align}
where the second equality uses the fact that $\mathbb{E}[\nabla\tau_{m}%
(\gamma)]=0$ since $\mathbb{E}[\tau_{m}(r)]$ is maximized at $\gamma$. Then,
applying (\ref{eq:A7}), (\ref{eq:A8}), and (\ref{eq:A9}) proves the lemma.
\end{proof}

\begin{proof}
[Proof of Lemma \ref{lemmaA5}]We can establish a representation for
$\mathbb{E}[\mathcal{L}_{N}^{K}(r)|Z_{m}]$ using the same arguments as in
proving Lemma \ref{lemmaA4}, but no longer integrating over $Z_{m}$.
Specifically, a change of variables $u_{im}=v_{im}/\sigma_{N}$ gives
\begin{align*}
&  \mathbb{E}[\mathcal{L}_{N}^{K}(r)|Z_{m}=z_{m}]=\int\sigma_{N}^{-2}%
K_{\sigma_{N},\gamma}(v_{im})B(v_{i},v_{m},w_{i},w_{m})S_{im}(r)\text{d}%
F_{V,W}(v_{i},w_{i})\\
=  &  \int K_{\gamma}(u_{im})B(v_{m}+u_{im}\sigma_{N},v_{m},w_{i},w_{m}%
)S_{im}(r)f_{V,W}(v_{m}+u_{im}\sigma_{N},w_{i})\text{d}w_{i}\text{d}u_{im}.
\end{align*}
The lead term of the $\kappa_{\gamma}^{\text{th}}$-order Taylor expansion
inside the integral around $v_{m}$ is
\begin{align*}
&  \int K_{\gamma}(u_{im})B(v_{m},v_{m},w_{i},w_{m})S_{im}(r)f_{V,W}%
(v_{m},w_{i})\text{d}w_{i}\text{d}u_{im}\\
=  &  \int B(v_{m},v_{m},w_{i},w_{m})S_{im}(r)f_{V,W}(v_{m},w_{i}%
)\text{d}w_{i}=\tau_{m}(r)
\end{align*}
by $\int K_{\gamma}(u)du=1$, and the sample average of the bias term is of
order $o(\Vert r-\gamma\Vert^{2})$.

Then, applying Lemma \ref{lemmaA4} and Assumption C7, we can write
\begin{align*}
&  \frac{2}{N}\sum_{m}\mathbb{E}[\mathcal{L}_{N}^{K}(r)|Z_{m}]-2\mathbb{E}%
[\mathcal{L}_{N}^{K}(r)]\\
=  &  \frac{2}{N}\sum_{m}\tau_{m}(r)-(r-\gamma)^{\prime}\mathbb{V}_{\gamma
}(r-\gamma)+o_{p}(\Vert r-\gamma\Vert^{2})\\
=  &  \frac{2}{N}\sum_{m}(r-\gamma)^{\prime}\nabla\tau_{m}(\gamma)+\frac{1}%
{N}\sum_{m}(r-\gamma)^{\prime}\left(  \nabla^{2}\tau_{m}(\gamma)-\mathbb{V}%
_{\gamma}\right)  (r-\gamma)+o_{p}(\Vert r-\gamma\Vert^{2}).
\end{align*}
Then the desired result follows by noticing that $N^{-1}\sum_{m}\nabla^{2}%
\tau_{m}(\gamma)-\mathbb{V}_{\gamma}=o_{p}(1)$ by the SLLN.
\end{proof}

\begin{proof}
[Proof of Lemma \ref{lemmaA6}]By construction, we can write
\[
\rho_{N}(r)=\frac{1}{N(N-1)}\sum_{i\neq m}\rho_{im}(r),
\]
where $\rho_{im}(r)\equiv(K_{\sigma_{N},\gamma}(V_{im})h_{im}(r)-2N^{-1}%
\sum_{i}\mathbb{E}[K_{\sigma_{N},\gamma}(V_{im})h_{im}(r)|Z_{i}]+\mathbb{E}%
[K_{\sigma_{N},\gamma}(V_{im})h_{im}(r)])/\sigma_{N}^{2}$.

Note that $\rho_{im}(\gamma)=0$ and $|\rho_{im}(r)|$ is bounded by a multiple
of $M/\sigma_{N}^{2}$ where $M$ is a positive constant. Define $\mathcal{F}%
_{N}^{\ast}=\{\rho_{im}^{\ast}(r)|r\in\mathcal{R}\}$ where $\rho_{im}^{\ast
}(r)\equiv\sigma_{N}^{2}\rho_{im}(r)/M$. Then the Euclidean properties of the
($P$-degenerate) class of functions $\mathcal{F}_{N}^{\ast}$ are deduced using
similar arguments for proving Lemma \ref{lemmaA2} in combination with
Corollary 17 and Corollary 21 in \cite{NolanPollard1987}. As $\int
K_{\sigma_{N},\gamma}(v)^{p}$d$v=O(\sigma_{N}^{2})$ for $p=1,2$, we have
$\sup_{r\in\mathcal{R}_{N}}\mathbb{E}[\rho_{im}^{\ast}(r)^{2}]=O(\sigma
_{N}^{2})$. Then applying Theorem 3 of \cite{Sherman1994ET} gives
\[
\frac{1}{N(N-1)}\sum_{i\neq m}\rho_{im}^{\ast}(r)=O_{p}(N^{-1}\sigma
_{N}^{\alpha}),
\]
where $0<\alpha<1$ and hence $q_{N}(r)=O_{p}(N^{-1}\sigma_{N}^{\alpha
-2})=O_{p}(N^{-1}\sigma_{N}^{-2})$.
\end{proof}

\begin{proof}
[Proof of Lemma \ref{lemmaA7}]The first step is to plug (\ref{eq:A10}) into
(\ref{eq:A5}) so that $\Delta\mathcal{L}_{N}^{K}(r)$ can be expressed as
\begin{equation}
\frac{1}{\sigma_{N}^{3}N(N-1)(N-2)}\sum_{i\neq m\neq l}[\nabla_{1}%
K_{\sigma_{N},\gamma}(V_{im})X_{im1}^{\prime}\psi_{l,\beta}+\nabla
_{2}K_{\sigma_{N},\gamma}(V_{im})X_{im2}^{\prime}\psi_{l,\beta}]h_{im}(r)
\label{eq:A11}%
\end{equation}
plus a remainder term of higher order since $\sqrt{N}\sigma_{N}\rightarrow
\infty$. (\ref{eq:A11}) is a third order U-process. We use the U-statistic
decomposition found in \cite{Sherman1994ET} (see also \cite{Serfling2009}) to
derive a linear representation. Note that (\ref{eq:A11}) has unconditional
mean zero, so as is its mean conditional on each of its first two arguments $(Z_i,Z_m)$.
Furthermore, by Theorem 3 of \cite{Sherman1994ET} and similar arguments for
proving Lemma \ref{lemmaA6}, the remainder term of the decomposition
(projection) is of order $O_{p}(N^{-1}\sigma_{N}^{-2})$. Hence it suffices to
derive a linear representation for its mean conditional on its third
argument:
\[
\frac{1}{\sigma_{N}^{3}N}\sum_{l}\int B(v_{i},v_{m},w_{i},w_{m})S_{im}%
(r)\nabla K_{\sigma_{N},\gamma}(v_{im})^{\prime}\left(
\begin{array}
[c]{c}%
x_{im1}^{\prime}\psi_{l,\beta}\\
x_{im2}^{\prime}\psi_{l,\beta}%
\end{array}
\right)  \text{d}F_{X,W}(x_{i},w_{i})\text{d}F_{X,W}(x_{m},w_{m}).
\]
where $F_{X,W}(\cdot,\cdot)$ denotes the joint distribution function of
$(X_{i},W_{i})$. While the above integral is expressed w.r.t. $(x_{i},w_{i})$
and $(x_{m},w_{m})$, it will prove convenient to express the integral in terms
of $v_{i}$ and $v_{m}$. We do so as follows:
\begin{equation}
\frac{1}{\sigma_{N}^{3}N}\sum_{l}\left(  \int\nabla K_{\sigma_{N},\gamma
}(v_{im})^{\prime}G(v_{i},v_{m},r)f_{V}(v_{i})f_{V}(v_{m})\text{d}%
v_{i}\text{d}v_{m}\right)  \psi_{l,\beta}. \label{eq:A12}%
\end{equation}

Apply a change of variables in (\ref{eq:A12}) with $u_{im}=v_{im}/\sigma_{N}$
to obtain
\begin{align*}
&  \frac{1}{\sigma_{N}^{3}N}\sum_{l}\left(  \int\nabla K_{\sigma_{N},\gamma
}(v_{im})^{\prime}G(v_{i},v_{m},r)f_{V}(v_{i})f_{V}(v_{m})\text{d}%
v_{i}\text{d}v_{m}\right)  \psi_{l,\beta}\\
=  &  \frac{1}{\sigma_{N}N}\sum_{l}\left(  \int\nabla K_{\gamma}%
(u_{im})^{\prime}G(v_{m}+u_{im}\sigma_{N},v_{m},r)f_{V}(v_{m}+u_{im}\sigma
_{N})f_{V}(v_{m})\text{d}u_{im}\text{d}v_{m}\right)  \psi_{l,\beta}.
\end{align*}
As $\int\nabla K_{\gamma}(u)^{\prime}$d$u=0$ and $\int u_{j}\nabla
_{j}K_{\gamma}(u)$d$u=-1$ for $j=1,2$, a second-order expansion inside the
integral around $u_{im}\sigma_{N}=0$ yields the lead term\footnote{The
second-order term is of order $O_{p}(\delta_{N}\sigma_{N}/\sqrt{N})$, which
will be $o_{p}(\Vert r-\gamma\Vert^{2})$.}
\[
-\frac{1}{N}\sum_{l}\left(  \int\nabla_{1}\mu(v_{m},v_{m},r)f_{V}%
(v_{m})\text{d}v_{m}\right)  \psi_{l,\beta}.
\]
We next apply a second-order Taylor expansion of $\nabla_{1}\mu(v_{m}%
,v_{m},r)$ around $\gamma$ to yield
\[
(r-\gamma)^{\prime}\frac{1}{N}\sum_{l}\left(  -\int\nabla_{13}^{2}\mu
(v_{m},v_{m},\gamma)f_{V}(v_{m})\text{d}v_{m}\right)  \psi_{l,\beta}%
+o_{p}(\Vert r-\gamma\Vert^{2}),
\]
which concludes the proof of this lemma.
\end{proof}

\begin{proof}
[Proof of Lemma \ref{LE:L_N^K_beta}] Recall that $%
V_{im}\left( \beta \right) =\left( X_{im1}^{\prime }\beta ,X_{im2}^{\prime
}\beta \right) .$ The following calculation is useful:\textbf{\ }%
\begin{eqnarray}
&&K_{\sigma _{N},\gamma } ( V_{im} ( \hat{\beta} )  )
-K_{\sigma _{N},\gamma }\left( V_{im}\left( \beta \right) \right)
 \notag\\
&=&K\left( \frac{X_{im1}^{\prime }\hat{\beta}}{\sigma _{N}}\right) K\left(
\frac{X_{im2}^{\prime }\hat{\beta}}{\sigma _{N}}\right) -K\left( \frac{%
X_{im1}^{\prime }\beta }{\sigma _{N}}\right) K\left( \frac{X_{im2}^{\prime
}\beta }{\sigma _{N}}\right)  \notag \\
&=&K^{\prime }\left( \frac{X_{im1}^{\prime }\beta }{\sigma _{N}}\right)
K\left( \frac{X_{im2}^{\prime }\beta }{\sigma _{N}}\right) \frac{%
X_{im1}^{\prime } ( \hat{\beta}-\beta  ) }{\sigma _{N}}\notag\\
&&+K\left(
\frac{X_{im1}^{\prime }\beta }{\sigma _{N}}\right) K^{\prime }\left( \frac{%
X_{im2}^{\prime }\beta }{\sigma _{N}}\right) \frac{X_{im2}^{\prime } (
\hat{\beta}-\beta  ) }{\sigma _{N}}+O_{p} ( N^{-1}\sigma
_{N}^{-2} ) ,  \label{EQ:k_diff}
\end{eqnarray}%
where the last line applies the Taylor expansion. Suppose we have a random
vector $q_{im}$ such that $\mathbb{E}[q_{im}|V_{im}\left( \beta
\right) =v] $ exists and is $\kappa _{\gamma }$-th differentiable in $v,$
with derivatives up to $\kappa _{\gamma }$-th being uniformly bounded. We
make the following claim%
\begin{equation}
\frac{1}{N\left( N-1\right) }\sum_{i\neq m}K^{\prime }\left(
\frac{X_{im1}^{\prime }\beta }{\sigma _{N}}\right) K\left( \frac{%
X_{im2}^{\prime }\beta }{\sigma _{N}}\right) q_{im}=O_{p}(
N^{-1/2}\sigma _{N}^{2}),  \label{EQ:k_diff_claim}
\end{equation}%
whose proof is deferred to the end. Using (\ref{EQ:k_diff_claim}), we can write
\begin{eqnarray}
&&\frac{1}{\sigma _{N}^{2}N\left( N-1\right) }\sum_{i\neq m}K^{\prime
}\left( \frac{X_{im1}^{\prime }\beta }{\sigma _{N}}\right) K\left( \frac{%
X_{im2}^{\prime }\beta }{\sigma _{N}}\right) \frac{X_{im1}^{\prime }(
\hat{\beta}-\beta ) }{\sigma _{N}}Y_{im\left( 1,1\right) }\text{sgn}%
\left( W_{im}^{\prime }\gamma \right)  \notag \\
&=&\left[ \frac{1}{\sigma _{N}^{2}N\left( N-1\right) }\sum_{i\neq
m}K^{\prime }\left( \frac{X_{im1}^{\prime }\beta }{\sigma _{N}}\right)
K\left( \frac{X_{im2}^{\prime }\beta }{\sigma _{N}}\right) X_{im1}^{\prime
}Y_{im\left( 1,1\right) }\text{sgn}\left( W_{im}^{\prime }\gamma \right) %
\right] \frac{( \hat{\beta}-\beta) }{\sigma _{N}}  \notag \\
&=&O_{p}(N^{-1/2}( \hat{\beta}-\beta)\sigma _{N}^{-1}%
) =O_{p}( N^{-1}\sigma _{N}^{-1}) ,  \label{EQ:k_diff_1}
\end{eqnarray}%
where we let $q_{im}=X_{im1}^{\prime }Y_{im\left( 1,1\right) }$sgn$\left(
W_{im}^{\prime }\gamma \right)$ and use the result $\hat{\beta}-\beta
=O_{p} ( N^{-1/2} ) .$ Clearly, this also holds if we replace $%
X_{im1}^{\prime } ( \hat{\beta}-\beta  ) $ with $X_{im2}^{\prime
} ( \hat{\beta}-\beta  ) .$

Combine (\ref{EQ:k_diff}) and (\ref{EQ:k_diff_1}) to obtain%
\begin{eqnarray*}
&&\frac{1}{\sigma _{N}^{2}N\left( N-1\right) }\sum_{i\neq m}\left[ K_{\sigma
_{N},\gamma } ( V_{im} ( \hat{\beta} )  ) -K_{\sigma
_{N},\gamma }\left( V_{im}\left( \beta \right) \right) \right] Y_{im\left(
1,1\right) }\text{sgn}\left( W_{im}^{\prime }\gamma \right) \\
&=&2\cdot O_{p}\left( N^{-1}\sigma _{N}^{-1}\right) +O_{p}\left(
N^{-1}\sigma _{N}^{-2}\right) =O_{p}\left( N^{-1}\sigma _{N}^{-2}\right) ,
\end{eqnarray*}%
as desired.

The remaining task is to show the claim in (\ref{EQ:k_diff_claim}). Some
standard results on U-statistics imply that the variance of (\ref%
{EQ:k_diff_claim}) is $O_{p}( N^{-1})$. So we only need show the
expectation of the U-statistic is $O( N^{-1/2})$. Under Assumption C5, we can
calculate the expectation of the summand of (\ref{EQ:k_diff_claim}) as follows:%
\begin{eqnarray*}
&&\mathbb{E}\left[ K^{\prime }\left( \frac{X_{im1}^{\prime }\beta }{\sigma
_{N}}\right) K\left( \frac{X_{im2}^{\prime }\beta }{\sigma _{N}}\right)
q_{im}\right] =\mathbb{E}\left[ K^{\prime }\left( \frac{X_{im1}^{\prime }\beta }{\sigma
_{N}}\right) K\left( \frac{X_{im2}^{\prime }\beta }{\sigma _{N}}\right)
\mathbb{E}\left( q_{im}|V_{im}(\beta)
\right) \right] \\
&=&\int \int K^{\prime }\left( \frac{v_{1}}{\sigma _{N}}\right) K\left(
\frac{v_{2}}{\sigma _{N}}\right) \mathbb{E}[q_{im}|V_{im}(\beta)=(v_1,v_2)] f_{V_{im}(\beta)}( v_{1},v_{2}) dv_{1}dv_{2}\\
&=&\sigma_N^2\int \int K^{\prime }\left( u_{1}\right) K\left( u_{2}\right) \mathbb{E}%
[q_{im}|V_{im}(\beta)=(u_{1}\sigma _{N},u_{2}\sigma _{N})]f_{V_{im}(\beta)}( u_{1}\sigma _{N},u_{2}\sigma _{N}) du_{1}du_{2} \\
&=&\sigma_N^2\mathbb{E}%
[q_{im}|V_{im}(\beta)=(0,0)]f_{V_{im}(\beta)}(0, 0) \int \int K^{\prime }\left( u_{1}\right) K\left( u_{2}\right)
du_{1}du_{2}+O_{p} ( \sigma_N^4 ) \\
&=&0+O_{p} ( N^{-1/2} ) =O_{p} ( N^{-1/2} ),
\end{eqnarray*}%
where the last line holds by Assumptions C6 and C7. This shows the claim.
\end{proof}

\begin{proof}
[Proof of Lemma \ref{LE:L_N^K}] Applying similar arguments for obtaining (\ref{eq:A13}), we can write  
\begin{eqnarray*}
\mathcal{\hat{L}}_{N}^{K}\left( r\right) &=&\frac{1}{2}\left( r-\gamma
\right) ^{\prime }\mathbb{V}_{\gamma }\left( r-\gamma \right) +O_{p} (
N^{-1/2}\left\Vert r-\gamma \right\Vert  ) +o_{p} ( \left\Vert
r-\gamma \right\Vert ^{2} ) \\
&&+\frac{1}{\sigma _{N}^{2}N\left( N-1\right) }\sum_{i\neq m}K_{\sigma
_{N},\gamma } ( V_{im} ( \hat{\beta} )  ) Y_{im\left(
1,1\right) }\text{sgn}\left( W_{im}^{\prime }\gamma \right) +O_{p}\left(
N^{-1}\sigma _{N}^{-2}\right) .
\end{eqnarray*}%
Then, using the result in Lemma \ref{LE:L_N^K_beta}, we have
\begin{eqnarray*}
\mathcal{\hat{L}}_{N}^{K}\left( r\right) &=&\frac{1}{2}\left( r-\gamma
\right) ^{\prime }\mathbb{V}_{\gamma }\left( r-\gamma \right) +O_{p} (
N^{-1/2}\left\Vert r-\gamma \right\Vert  ) +o_{p} ( \left\Vert
r-\gamma \right\Vert ^{2} ) \\
&&+\frac{1}{\sigma _{N}^{2}N\left( N-1\right) }\sum_{i\neq m}K_{\sigma
_{N},\gamma }\left( V_{im}\left( \beta \right) \right) Y_{im\left(
1,1\right) }\text{sgn}\left( W_{im}^{\prime }\gamma \right) +o_{p} (
N^{-1/2} ) +O_{p} ( N^{-1}\sigma _{N}^{-2} ) .
\end{eqnarray*}%
Since $\ N^{1/2}\sigma _{N}^{2}\rightarrow \infty $ by Assumption C7,
$O_{p}\left( N^{-1}\sigma _{N}^{-2}\right) =o_{p}\left(
N^{-1/2}\right)$, which completes the proof.
\end{proof}

\begin{proof}
[Proof of Lemma \ref{LE:delta}] Using the definition that $%
V_{im}\left( \beta \right) =\left( X_{im1}^{\prime }\beta ,X_{im2}^{\prime
}\beta \right) $, we can write
\begin{eqnarray*}
&&\mathbb{E}\left[ \sigma _{N}^{-2}K_{\sigma _{N},\gamma }\left(
V_{im}\left( \beta \right) \right) Y_{im\left( 1,1\right) }\text{sgn}\left(
W_{im}^{\prime }\gamma \right) |Z_{i}\right] \\
&=&\mathbb{E}\left[ \left. \sigma _{N}^{-2}K\left( \frac{X_{im1}^{\prime
}\beta }{\sigma _{N}}\right) K\left( \frac{X_{im2}^{\prime }\beta }{\sigma
_{N}}\right) Y_{im\left( 1,1\right) }\text{sgn}\left( W_{im}^{\prime }\gamma
\right) \right\vert Z_{i}\right] \\
&=&\mathbb{E}\left[ \left. \sigma _{N}^{-2}K\left( \frac{X_{im1}^{\prime
}\beta }{\sigma _{N}}\right) K\left( \frac{X_{im2}^{\prime }\beta }{\sigma
_{N}}\right) \mathbb{E}\left[ Y_{im\left( 1,1\right) }\text{sgn}\left(
W_{im}^{\prime }\gamma \right) |Z_{i},X_{m1},X_{m2}\right] \right\vert Z_{i}%
\right] \\
&=&f_{V_{im}(\beta)}(0)\mathbb{E}\left[ Y_{im\left( 1,1\right) }\text{sgn}\left( W_{im}^{\prime
}\gamma \right) |Z_{i},V_{im}\left( \beta \right) =0%
\right] +o_{p} ( N^{-1/2} ),
\end{eqnarray*}%
where the last line uses the same argument as in proving Lemma \ref{lemmaA5}.
\end{proof}

\begin{proof}
[Proof of Lemma \ref{LE:P1}]We verify the conditions in Assumption M,
specifically, M.i, M.ii, and M.iii, in \cite{SeoOtsu2018} one by one. Recall
that
\begin{align*}
\phi_{Ni}\left(  b\right)   &  \equiv\sum_{t>s}\sum_{d\in\mathcal{D}%
}\{\mathcal{K}_{h_{N}}(X_{i2ts},W_{its})Y_{idst}\left(  -1\right)  ^{d_{1}%
}\left(  1[X_{i1ts}^{\prime}b>0]-1[X_{i1ts}^{\prime}\beta>0]\right)  \\
&  +\mathcal{K}_{h_{N}}(X_{i1ts},W_{its})Y_{idst}\left(  -1\right)  ^{d_{2}%
}\left(  1[X_{i2st}^{\prime}b>0]-1[X_{i2st}^{\prime}\beta>0]\right)  \},
\end{align*}
The \textquotedblleft$h_{n}$\textquotedblright\ in \cite{SeoOtsu2018} needs to
be set as $h_{N}^{k_{1}+k_{2}}$ for the term $\phi_{Ni}\left(  b\right)  $.
Similarly, the \textquotedblleft$h_{n}$\textquotedblright\ in
\cite{SeoOtsu2018} needs to be set as $\sigma_{N}^{2k_{1}}$ for the term
$\varphi_{Ni}\left(  r\right)  $. We conduct the analysis for $\phi
_{Ni}\left(  b\right)  $ first$,$ and the results for $\varphi_{Ni}\left(
r\right)  $ follow similarly. For $\phi_{Ni}\left(  b\right)  ,$ we analyze
the term
\[
\mathcal{K}_{h_{N}}(X_{i2ts},W_{its})Y_{idst}\left(  -1\right)  ^{d_{1}%
}\left(  1[X_{i1ts}^{\prime}b>0]-1[X_{i1ts}^{\prime}\beta>0]\right)  ,
\]
and the remaining terms can be analyzed similarly, thanks to the similar
structure of those terms.

\noindent\textbf{On Assumption M.i in} \cite{SeoOtsu2018}. Note that%
\begin{align}
&  \mathbb{E}[\mathcal{K}_{h_{N}}(X_{i2ts},W_{its})Y_{idst}\left(  -1\right)
^{d_{1}}\left(  1\left[  X_{i1ts}^{\prime}b>0\right]  -1\left[  X_{i1ts}%
^{\prime}\beta>0\right]  \right)  ]\label{EQ:TECH0}\\
= &  \int_{\mathbb{R}^{k_{2}}}\int_{\mathbb{R}^{k_{1}}}\mathbb{E}%
[Y_{idst}\left(  -1\right)  ^{d_{1}}\left(  1\left[  X_{i1ts}^{\prime
}b>0\right]  -1\left[  X_{i1ts}^{\prime}\beta>0\right]  \right)
|X_{i2ts}=x_{2},W_{its}=w]\nonumber\\
&  \cdot\mathcal{K}_{h_{N}}(x_{2},w)f_{X_{2ts},W_{ts}}\left(  x_{2},w\right)
\text{d}x_{2}\text{d}w\nonumber\\
= &  \mathbb{E}[Y_{idst}\left(  -1\right)  ^{d_{1}}\left(  1\left[
X_{i1ts}^{\prime}b>0\right]  -1\left[  X_{i1ts}^{\prime}\beta>0\right]
\right)  |X_{i2ts}=0,W_{its}=0]f_{X_{2ts},W_{ts}}\left(  0,0\right)
+O(h_{N}^{2}),\nonumber
\end{align}
where the second equality holds by change of variables and Taylor expansion,
and the small order term holds by the properties of the kernel function and
the boundedness conditions imposed in Assumptions P5--P7. Further, by
Assumption P8, the small order term satisfies $O(h_{N}^{2})=o((Nh_{N}%
^{k_{1}+k_{2}})^{-2/3}).$ As a result, we only need to focus on the lead term
\begin{align}
&  \mathbb{E}[Y_{idst}\left(  -1\right)  ^{d_{1}}\left(  1\left[
X_{i1ts}^{\prime}b>0\right]  -1\left[  X_{i1ts}^{\prime}\beta>0\right]
\right)  |X_{i2ts}=0,W_{its}=0]\nonumber\\
= &  \mathbb{E}[\mathbb{E}[Y_{idst}\left(  -1\right)  ^{d_{1}}|X_{i1ts}%
,X_{i2ts}=0,W_{its}=0]\left(  1\left[  X_{i1ts}^{\prime}b>0\right]  -1\left[
X_{i1ts}^{\prime}\beta>0\right]  \right)  |X_{i2ts}=0,W_{its}=0]\nonumber\\
\equiv &  \mathbb{E}[\kappa_{dts}^{\left(  1\right)  }\left(  X_{i1ts}\right)
\left(  1\left[  X_{i1ts}^{\prime}b>0\right]  -1\left[  X_{i1ts}^{\prime}%
\beta>0\right]  \right)  |X_{i2ts}=0,W_{its}=0]\nonumber\\
= &  \int_{\mathbb{R}^{k_{1}}}\kappa_{dts}^{\left(  1\right)  }\left(
x\right)  \left(  1\left[  x^{\prime}b>0\right]  -1\left[  x^{\prime}%
\beta>0\right]  \right)  f_{X_{1ts}|\left\{  X_{2ts}=0,W_{ts}=0\right\}
}\left(  x\right)  \text{d}x\nonumber\\
\equiv &  \int_{\mathbb{R}^{k_{1}}}\kappa_{dts}^{\left(  1\right)  }\left(
x\right)  \left(  1\left[  x^{\prime}b>0\right]  -1\left[  x^{\prime}%
\beta>0\right]  \right)  f_{X_{1ts}|\left\{  X_{2ts}=0,W_{ts}=0\right\}
}\left(  x\right)  \text{d}x,\label{EQ:TECH1}%
\end{align}
where to ease notations we denote
\[
\kappa_{dts}^{\left(  1\right)  }\left(  x\right)  =\mathbb{E}[Y_{idst}\left(
-1\right)  ^{d_{1}}|X_{i1ts}=x,X_{i2ts}=0,W_{its}=0]
\]
in the third line.

We now get the first and second derivatives of the above term w.r.t. $b$
around $\beta.$ Since the calculation is the classical differential geometry
and very similar to that in Sections 5 and 6.4 of \cite{KimPollard1990} and
Section B.1 of \cite{SeoOtsu2018}, we only present key steps and omit some
standard similar details. First, define the mapping%
\[
T_{b}= ( I-\left\Vert b\right\Vert _{2}^{-2}bb^{\prime} ) ( I-\beta
\beta^{\prime} ) +\left\Vert b\right\Vert _{2}^{-2}b\beta^{\prime}.
\]
Then $T_{b}$ maps $\left\{  X_{i1ts}^{\prime}b>0\right\}  $ onto $\left\{
X_{i1ts}^{\prime}\beta>0\right\}  $ and $\left\{  X_{i1ts}^{\prime
}b=0\right\}  $ onto $\left\{  X_{i1ts}^{\prime}\beta=0\right\}  .$ With
equation (\ref{EQ:TECH1}), equations (5.2) and (5.3) in \cite{KimPollard1990}
imply%
\begin{align*}
&  \frac{\partial}{\partial b}\mathbb{E}\left[  \left.  Y_{idst}\left(
-1\right)  ^{d_{1}}\left(  1\left[  X_{i1ts}^{\prime}b>0\right]  -1\left[
X_{i1ts}^{\prime}\beta>0\right]  \right)  \right\vert X_{i2ts}=0,W_{its}%
=0\right] \\
&  =\frac{\partial}{\partial b}\int_{\mathbb{R}^{k_{1}}}\kappa_{dts}^{\left(
1\right)  }\left(  x\right)  \left(  1\left[  x^{\prime}b>0\right]  -1\left[
x^{\prime}\beta>0\right]  \right)  f_{X_{1ts}|\left\{  X_{2ts}=0,W_{ts}%
=0\right\}  }\left(  x\right)  \text{d}x,\\
&  =\left\Vert b\right\Vert _{2}^{-2}b^{\prime}\beta\left(  I-\left\Vert
b\right\Vert _{2}^{-2}bb^{\prime}\right)  \int1\left[  x^{\prime}%
\beta=0\right]  \kappa_{dts}^{\left(  1\right)  }\left(  T_{b}x\right)
xf_{X_{1ts}|\left\{  X_{2ts}=0,W_{ts}=0\right\}  }\left(  T_{b}x\right)
\text{d}\sigma_{0}^{\left(  1\right)  },
\end{align*}
where $\sigma_{0}^{\left(  1\right)  }$ is the surface measure of $\left\{
X_{i1ts}^{\prime}\beta=0\right\}  .$ Note that $T_{\beta}x=x,$ then%
\[
1\left[  x^{\prime}\beta=0\right]  \kappa_{dts}^{\left(  1\right)  }\left(
T_{\beta}x\right)  =1\left[  x^{\prime}\beta=0\right]  \kappa_{dts}^{\left(
1\right)  }\left(  x\right)  =0
\]
because%
\begin{align*}
\left.  \kappa_{dts}^{\left(  1\right)  }\left(  X_{i1t}\right)  \right\vert
_{x^{\prime}\beta=0}  &  =\mathbb{E}\left[  \left.  Y_{idst}\left(  -1\right)
^{d_{1}}\right\vert X_{i1ts}^{\prime}\beta=0,X_{i2ts}=0,W_{its}=0\right] \\
&  =\mathbb{E}\left[  \left.  \left(  Y_{idt}-Y_{ids}\right)  \left(
-1\right)  ^{d_{1}}\right\vert X_{i1t}^{\prime}\beta=X_{i1s}^{\prime}%
\beta,X_{i2t}=X_{i2s},W_{it}=W_{is}\right] \\
&  =0,
\end{align*}
where the last line holds by Assumption P2 that $\xi_{s}\overset{d}{=}\xi_{t}|
( \alpha,Z^{T} ) .$ This implies that
\[
\left.  \frac{\partial}{\partial b}\mathbb{E}\left[  \left.  Y_{idst}\left(
-1\right)  ^{d_{1}}\left(  1\left[  X_{i1ts}^{\prime}b>0\right]  -1\left[
X_{i1ts}^{\prime}\beta>0\right]  \right)  \right\vert X_{i2ts}=0,W_{its}%
=0\right]  \right\vert _{b=\beta}=0.
\]
For the same reason, the nonzero component of the second order derivative at
$b=\beta$ only comes from the derivative of $\kappa_{dts}^{\left(  1\right)
}\left(  X_{i1ts}\right)  .$ Notice that $\left.  \frac{\partial\kappa
_{dts}^{\left(  1\right)  }\left(  T_{b}x\right)  }{\partial b}\right\vert
_{b=\beta}=-\left(  \frac{\partial\kappa_{dts}^{\left(  1\right)  }\left(
x\right)  }{\partial x}^{\prime}\beta\right)  x,$ we have%
\begin{align*}
&  \left.  \frac{\partial^{2}}{\partial b\partial b^{\prime}}\mathbb{E}\left[
\left.  Y_{idst}\left(  -1\right)  ^{d_{1}}\left(  1\left[  X_{i1ts}^{\prime
}b>0\right]  -1\left[  X_{i1ts}^{\prime}\beta>0\right]  \right)  \right\vert
X_{i2ts}=0,W_{its}=0\right]  \right\vert _{b=\beta}\\
=  & -\int1\left[  x^{\prime}\beta=0\right]  \left(  \frac{\partial
\kappa_{dts}^{\left(  1\right)  }\left(  x\right)  }{\partial x}^{\prime}%
\beta\right)  f_{X_{1ts}|\left\{  X_{2ts}=0,W_{ts}=0\right\}  }\left(
x\right)  xx^{\prime}\text{d}\sigma_{0}^{\left(  1\right)  } \equiv
\mathbb{V}_{dts}^{\left(  1\right)  }.
\end{align*}
With the first and second derivatives obtained, we can move on to apply the
Taylor expansion to the expectation term in equation (\ref{EQ:TECH1}) around
$b=\beta$\ as%
\begin{align*}
&  \mathbb{E}\left[  \left.  Y_{idst}\left(  -1\right)  ^{d_{1}}\left(
1\left[  X_{i1ts}^{\prime}b>0\right]  -1\left[  X_{i1ts}^{\prime}%
\beta>0\right]  \right)  \right\vert X_{i2ts}=0,W_{its}=0\right] \\
=  & \frac{1}{2}\left(  b-\beta\right)  ^{\prime}\mathbb{V}_{dts}^{\left(
1\right)  }\left(  b-\beta\right)  +o\left(  \left\Vert b-\beta\right\Vert
^{2}\right)  .
\end{align*}
Substitute this back to equation (\ref{EQ:TECH0}), we have%
\begin{align*}
&  \mathbb{E}\left[  \mathcal{K}_{h_{N}}(X_{i2ts},W_{its})Y_{idst}\left(
-1\right)  ^{d_{1}}\left(  1\left[  X_{i1ts}^{\prime}b>0\right]  -1\left[
X_{i1ts}^{\prime}\beta>0\right]  \right)  \right] \\
=  & \frac{1}{2}\left(  b-\beta\right)  ^{\prime}\mathbb{V}_{dts}^{\left(
1\right)  }\left(  b-\beta\right)  +o\left(  \left\Vert b-\beta\right\Vert
^{2}\right)  +o\left(  \left(  Nh_{N}^{k_{1}+k_{2}}\right)  ^{-2/3}\right)  .
\end{align*}

We can similarly define
\[
\mathbb{V}_{dts}^{\left(  2\right)  }\equiv-\int1\left[  x^{\prime}%
\beta=0\right]  \left(  \frac{\partial\kappa_{dts}^{\left(  2\right)  }\left(
x\right)  }{\partial x}^{\prime}\beta\right)  f_{X_{2ts}|\left\{
X_{1ts}=0,W_{ts}=0\right\}  }\left(  x\right)  xx^{\prime}\text{d}\sigma
_{0}^{\left(  2\right)  },
\]
where
\[
\kappa_{dts}^{\left(  2\right)  }\left(  x\right)  \equiv\mathbb{E}\left[
\left.  Y_{idst}\left(  -1\right)  ^{d_{2}}\right\vert X_{i2ts}=x,X_{i1ts}%
=0,W_{its}=0\right]  ,
\]
and $\sigma_{0}^{\left(  2\right)  }$ is the surface measure of $\left\{
X_{i2ts}^{\prime}\beta=0\right\}  $. A similar derivation yields
\begin{align*}
&  \mathbb{E}\left[  \mathcal{K}_{h_{N}}(X_{i1ts},W_{its})Y_{idst}\left(
-1\right)  ^{d_{2}}\left(  1\left[  X_{i2st}^{\prime}b>0\right]  -1\left[
X_{i2st}^{\prime}\beta>0\right]  \right)  \right] \\
=  & \frac{1}{2}\left(  b-\beta\right)  ^{\prime}\mathbb{V}_{dts}^{\left(
2\right)  }\left(  b-\beta\right)  +o\left(  \left\Vert b-\beta\right\Vert
^{2}\right)  +o\left(  \left(  Nh_{N}^{k_{1}+k_{2}}\right)  ^{-2/3}\right)  .
\end{align*}
Put everything so far together, we have%
\begin{equation}
\mathbb{E}\left[  \phi_{Ni}\left(  b\right)  \right]  =\frac{1}{2}\left(
b-\beta\right)  ^{\prime}\left[  \sum_{t>s}\sum_{d\in\mathcal{D}}\left(
\mathbb{V}_{dts}^{\left(  1\right)  }+\mathbb{V}_{dts}^{\left(  2\right)
}\right)  \right]  \left(  b-\beta\right)  +o ( \left\Vert b-\beta\right\Vert
^{2} ) +o ( ( Nh_{N}^{k_{1}+k_{2}} ) ^{-2/3}) . \label{EQ:phiNi}%
\end{equation}
Assumption M.i in \cite{SeoOtsu2018} is then verified by
\begin{align*}
\mathbb{V} \equiv & \sum_{t>s}\sum_{d\in\mathcal{D}}\left(  \mathbb{V}%
_{dts}^{\left(  1\right)  }+\mathbb{V}_{dts}^{\left(  2\right)  }\right)  .\\
=  & -\sum_{t>s}\sum_{d\in\mathcal{D}}\left\{  \int1\left[  x^{\prime}%
\beta=0\right]  \left(  \frac{\partial\kappa_{dts}^{\left(  1\right)  }\left(
x\right)  }{\partial x}^{\prime}\beta\right)  f_{X_{1ts}|\left\{
X_{2ts}=0,W_{ts}=0\right\}  }\left(  x\right)  xx^{\prime}\text{d}\sigma
_{0}^{\left(  1\right)  }\right. \\
&  \left.  +\int1\left[  x^{\prime}\beta=0\right]  \left(  \frac
{\partial\kappa_{dts}^{\left(  2\right)  }\left(  x\right)  }{\partial
x}^{\prime}\beta\right)  f_{X_{2ts}|\left\{  X_{1ts}=0,W_{ts}=0\right\}
}\left(  x\right)  xx^{\prime}\text{d}\sigma_{0}^{\left(  2\right)  }\right\}
.
\end{align*}

We can obtain similar results for $\varphi_{Ni}\left(  r\right)  $ using
exactly the same line of analysis, i.e.,%
\begin{equation}
\mathbb{E}\left[  \varphi_{Ni}\left(  r\right)  \right]  =\frac{1}{2}\left(
r-\gamma\right)  ^{\prime}\mathbb{W}\left(  r-\gamma\right)  +o ( \left\Vert
r-\gamma\right\Vert ^{2} ) +o( ( N\sigma_{N}^{2k_{1}} ) ^{-2/3}) ,
\label{EQ:phiNir}%
\end{equation}
where%
\[
\mathbb{W\equiv}-\sum_{t>s}\int1\left[  w^{\prime}\gamma=0\right]  \left(
\frac{\partial\kappa_{\left(  1,1\right)  ts}^{\left(  3\right)  }\left(
w\right)  }{\partial w}^{\prime}\gamma\right)  f_{W_{ts}|\left\{
X_{1ts}=0,X_{2ts}=0\right\}  }\left(  w\right)  ww^{\prime}\text{d}\sigma_{0}
\]
with
\[
\kappa_{\left(  1,1\right)  ts}^{\left(  3\right)  }\left(  w\right)
\equiv\mathbb{E}\left[  Y_{i(1,1)ts}|W_{its}=w,X_{i1ts}=0,X_{i2ts}=0\right]
\]
and $\sigma_{0}^{\left(  3\right)  }$ being the surface measure of $\left\{
W_{its}^{\prime}\gamma=0\right\}  $.

\noindent\textbf{On Assumption M.ii in }\cite{SeoOtsu2018}\textbf{. }Again, we
first verify this condition for%
\[
\mathcal{K}_{h_{N}}(X_{i2ts},W_{its})Y_{idst}\left(  -1\right)  ^{d_{1}%
}\left(  1\left[  X_{i1ts}^{\prime}b>0\right]  -1\left[  X_{i1ts}^{\prime
}\beta>0\right]  \right)  ,
\]
and other components in $\phi_{Ni}\left(  b\right)  $ follow similarly. We
evaluate norm of the difference of the above term at $b=b_{1}$ and $b=b_{2}$
multiplied by $h_{N}^{\left(  k_{1}+k_{2}\right)  /2}$:%
\begin{align}
&  h_{N}^{\left(  k_{1}+k_{2}\right)  /2}\left\Vert \mathcal{K}_{h_{N}%
}(X_{i2ts},W_{its})Y_{idst}\left(  -1\right)  ^{d_{1}}\left(  1\left[
X_{i1ts}^{\prime}b_{1}>0\right]  -1\left[  X_{i1ts}^{\prime}b_{2}>0\right]
\right)  \right\Vert \nonumber\\
=  & \left\{  \mathbb{E}\left[  \mathbb{E}\left[  \left.  h_{N}^{k_{1}+k_{2}%
}\mathcal{K}_{h_{N}}^{2}(X_{i2ts},W_{its})Y_{idst}^{2}\right\vert
X_{i1ts}\right]  \left(  1\left[  X_{i1ts}^{\prime}b_{1}>0\right]  -1\left[
X_{i1ts}^{\prime}b_{2}>0\right]  \right)  ^{2}\right]  \right\}
^{1/2}\nonumber\\
=  & \left\{  \mathbb{E}\left[  \mathbb{E}\left[  \left.  h_{N}^{k_{1}+k_{2}%
}\mathcal{K}_{h_{N}}^{2}(X_{i2ts},W_{its})Y_{idst}^{2}\right\vert
X_{i1ts}\right]  \left\vert 1\left[  X_{i1ts}^{\prime}b_{1}>0\right]
-1\left[  X_{i1ts}^{\prime}b_{2}>0\right]  \right\vert \right]  \right\}
^{1/2}\\
\geq &  C_{1}\mathbb{E}\left[  \left\vert 1\left[  X_{i1ts}^{\prime}%
b_{1}>0\right]  -1\left[  X_{i1ts}^{\prime}b_{2}>0\right]  \right\vert
\right]  \geq C_{2}\left\Vert b-\beta\right\Vert _{2},\nonumber
\end{align}
where the second equality holds by the fact that the difference of two
indicator functions can only take values $-1,0,$ or 1. $C_{1}$ and $C_{2}$ are
two positive constants. Applying the same analysis to all other terms in
$\phi_{Ni}\left(  b\right)  ,$ we have%
\[
h_{N}^{\left(  k_{1}+k_{2}\right)  /2}\left\Vert \phi_{Ni}\left(
b_{1}\right)  -\phi_{Ni}\left(  b_{2}\right)  \right\Vert \geq C_{3}\left\Vert
b_{1}-b_{2}\right\Vert
\]
for some positive constant $C_{3}.$

Applying similar analysis to $\varphi_{Ni}\left(  r\right)  $ leads to%
\[
\sigma_{N}^{k_{1}}\left\Vert \varphi_{Ni}\left(  r_{1}\right)  -\varphi
_{Ni}\left(  r_{2}\right)  \right\Vert \geq C_{4}\left\Vert r_{1}%
-r_{2}\right\Vert
\]
for some positive constant $C_{4}.$

\noindent\textbf{On Assumption M.iii in} \cite{SeoOtsu2018}\textbf{. }We still
begin with the analysis on%
\[
\mathcal{K}_{h_{N}}(X_{i2ts},W_{its})Y_{idst}\left(  -1\right)  ^{d_{1}%
}\left(  1\left[  X_{i1ts}^{\prime}b>0\right]  -1\left[  X_{i1ts}^{\prime
}\beta>0\right]  \right)  ,
\]
and other components in $\phi_{Ni}\left(  b\right)  $ follow similarly. We
evaluate the square of difference of the above term at $b=b_{1}$ and
$b=b_{2},$ such that $\left\Vert b_{1}-b_{2}\right\Vert <\varepsilon$,
multiplied by $h_{N}^{k_{1}+k_{2}}$:%
\begin{align*}
&  h_{N}^{k_{1}+k_{2}}\mathbb{E}\left[  \sup_{b_{1},b_{2}\in\mathcal{B}%
,\left\Vert b_{1}-b_{2}\right\Vert <\varepsilon}\left[  \mathcal{K}_{h_{N}%
}(X_{i2ts},W_{its})Y_{idst}\left(  -1\right)  ^{d_{1}}\left(  1\left[
X_{i1ts}^{\prime}b_{1}>0\right]  -1\left[  X_{i1ts}^{\prime}b_{2}>0\right]
\right)  \right]  ^{2}\right] \\
\leq &  \mathbb{E}\left[  \mathbb{E}\left[  \left.  h_{N}^{k_{1}+k_{2}%
}\mathcal{K}_{h_{N}}^{2}(X_{i2ts},W_{its})Y_{idst}^{2}\right\vert
X_{i1ts}\right]  \sup_{b_{1},b_{2}\in\mathcal{B},\left\Vert b_{1}%
-b_{2}\right\Vert <\varepsilon}\left\vert 1\left[  X_{i1ts}^{\prime}%
b_{1}>0\right]  -1\left[  X_{i1ts}^{\prime}b_{2}>0\right]  \right\vert \right]
\\
\leq &  C_{5}\mathbb{E}\left[  \sup_{b_{1},b_{2}\in\mathcal{B},\left\Vert
b_{1}-b_{2}\right\Vert <\varepsilon}\left\vert 1\left[  X_{i1ts}^{\prime}%
b_{1}>0\right]  -1\left[  X_{i1ts}^{\prime}b_{2}>0\right]  \right\vert
\right]  \leq C_{6}\varepsilon,
\end{align*}
where the second line follows by the fact that the maximum of absolute value
of the difference of two indicator functions is 1, and the last line holds by
Assumptions P4 and P5. Apply the same analysis to all other terms in
$\phi_{Ni}\left(  b\right) $ to obtain
\[
h_{N}^{k_{1}+k_{2}}\mathbb{E}\left[  \sup_{b_{1},b_{2}\in\mathcal{B}%
,\left\Vert b_{1}-b_{2}\right\Vert <\varepsilon}\left[  \phi_{Ni}\left(
b_{1}\right)  -\phi_{Ni}\left(  b_{2}\right)  \right]  ^{2}\right]  \leq
C_{7}\varepsilon
\]
for some positive constant $C_{7}.$

Similar analysis on $\varphi_{Ni}\left(  r\right)  $ leads to%
\[
h_{N}^{2k_{1}}\mathbb{E}\left[  \sup_{r_{1},r_{2}\in\mathcal{R},\left\Vert
r_{1}-r_{2}\right\Vert <\varepsilon}\left[  \varphi_{Ni}\left(  r_{1}\right)
-\varphi_{Ni}\left(  r_{2}\right)  \right]  ^{2}\right]  \leq C_{8}\varepsilon
\]
for some positive constant $C_{8}.$
\end{proof}

\begin{proof}
[Proof of Lemma \ref{LE:P2}]The first two equalities are direct results from
Lemma B.1 by Taylor expansions. Taking $b=\beta+\rho( Nh_{N}%
^{k_{1}+k_{2}} ) ^{-1/3}$ in equation (\ref{EQ:phiNi}) yields%
\[
\left(  Nh_{N}^{k_{1}+k_{2}}\right)  ^{2/3}\mathbb{E}\left[  \phi_{Ni}\left(
\beta+\rho\left(  Nh_{N}^{k_{1}+k_{2}}\right)  ^{-1/3}\right)  \right]
=\frac{1}{2}\rho^{\prime}\mathbb{V}\rho+o\left(  1\right)  \rightarrow\frac
{1}{2}\rho^{\prime}\mathbb{V}\rho.
\]
Similarly, setting $r=\gamma+\delta( N\sigma_{N}^{2k_{1}} ) ^{-1/3}$ in
equation (\ref{EQ:phiNir}) gives%
\[
\left(  N\sigma_{N}^{2k_{1}}\right)  ^{2/3}\mathbb{E}\left[  \varphi
_{Ni}\left(  \gamma+\delta\left(  N\sigma_{N}^{2k_{1}}\right)  ^{-1/3}\right)
\right]  =\frac{1}{2}\delta^{\prime}\mathbb{W\delta}+o\left(  1\right)
\rightarrow\frac{1}{2}\delta^{\prime}\mathbb{W\delta}.
\]

$\mathbb{H}_{1}$ and $\mathbb{H}_{2}$ can be obtained in the same way as in
\cite{KimPollard1990}. We omit some similar yet tedious details and refer
interested readers to Section 6.4 in \cite{KimPollard1990}. To get
$\mathbb{H}_{1},$ we let $\upsilon_{N}\equiv( Nh_{N}^{k_{1}+k_{2}} ) ^{1/3}$
and define%
\begin{align*}
\mathbb{L}\left(  \rho_{1}-\rho_{2}\right)   &  \equiv\lim_{N\rightarrow
\infty}\upsilon_{N}\mathbb{E}\left\{  h_{N}^{k_{1}+k_{2}}\left[  \phi
_{Ni}\left(  \beta+\rho_{1}\upsilon_{N}^{-1}\right)  -\phi_{Ni}\left(
\beta+\rho_{2}\upsilon_{N}^{-1}\right)  \right]  ^{2}\right\}  ,\\
\mathbb{L}\left(  \rho_{1}\right)   &  \equiv\lim_{N\rightarrow\infty}%
\upsilon_{N}\mathbb{E}\left\{  h_{N}^{k_{1}+k_{2}}\left[  \phi_{Ni}\left(
\beta+\rho_{1}\upsilon_{N}^{-1}\right)  -\phi_{Ni}\left(  \beta\right)
\right]  ^{2}\right\}  ,\text{ and}\\
\mathbb{L}\left(  \rho_{2}\right)   &  \equiv\lim_{N\rightarrow\infty}%
\upsilon_{N}\mathbb{E}\left\{  h_{N}^{k_{1}+k_{2}}\left[  \phi_{Ni}\left(
\beta+\rho_{2}\upsilon_{N}^{-1}\right)  -\phi_{Ni}\left(  \beta\right)
\right]  ^{2}\right\}  .
\end{align*}
Since $\phi_{Ni}\left(  \beta\right)  =0,$%
\begin{equation}
\mathbb{H}_{1}\left(  \rho_{1},\rho_{2}\right)  =\frac{1}{2}\left[
\mathbb{L}\left(  \rho_{1}\right)  +\mathbb{L}\left(  \rho_{2}\right)
-\mathbb{L}\left(  \rho_{1}-\rho_{2}\right)  \right]  . \label{EQ:H}%
\end{equation}
We calculate $\mathbb{L}$ as follows. Notice that
\begin{align*}
&  \mathbb{E}\left[  h_{N}^{k_{1}+k_{2}}\mathcal{K}_{h_{N}}(X_{ijts}%
,W_{its})^{2}\right] \\
=  & \int\frac{1}{h_{N}^{k_{1}+k_{2}}}\left[  \Pi_{\iota=1}^{k_{1}}K\left(
\frac{X_{ijts,\iota}}{h_{N}}\right)  \Pi_{\iota=1}^{k_{2}}K\left(
\frac{W_{its,\iota}}{h_{N}}\right)  \right]  ^{2}\text{d}F_{X_{ijts},W_{its}%
}=O(1),\\
&  \mathbb{E}\left[  h_{N}^{k_{1}+k_{2}}\mathcal{K}_{h_{N}}(X_{ijts}%
,W_{its})\mathcal{K}_{h_{N}}(X_{ij^{\prime}t^{\prime}s^{\prime}}%
,W_{it^{\prime}s^{\prime}})\right] \\
=  & \int\int\frac{1}{h_{N}^{k_{1}+k_{2}}}\Pi_{\iota=1}^{k_{1}}K\left(
\frac{X_{ijts,\iota}}{h_{N}}\right)  \Pi_{\iota=1}^{k_{2}}K\left(
\frac{W_{its,\iota}}{h_{N}}\right) \\
&  \cdot\Pi_{\iota=1}^{k_{1}}K\left(  \frac{X_{ij^{\prime}t^{\prime}s^{\prime
},\iota}}{h_{N}}\right)  \Pi_{\iota=1}^{k_{2}}K\left(  \frac{W_{it^{\prime
}s^{\prime},\iota}}{h_{N}}\right)  \text{d}F_{X_{ijts},W_{its}}\text{d}%
F_{X_{ij^{\prime}t^{\prime}s^{\prime}},W_{it^{\prime}s^{\prime}}}\\
=  & o\left(  1\right)  ,\text{ for any }\left(  j,t,s\right)  \neq\left(
j^{\prime},t^{\prime},s^{\prime}\right)  ,
\end{align*}
and%
\begin{align*}
&  \phi_{Ni}\left(  \beta+\rho_{1}\upsilon_{N}^{-1}\right)  -\phi_{Ni}\left(
\beta+\rho_{2}\upsilon_{N}^{-1}\right) \\
=  & \sum_{t>s}\sum_{d\in\mathcal{D}}\left\{  \mathcal{K}_{h_{N}}%
(X_{i2ts},W_{its})Y_{idst}\left(  -1\right)  ^{d_{1}}\left(  1\left[
X_{i1ts}^{\prime}\left(  \beta+\rho_{1}\upsilon_{N}^{-1}\right)  >0\right]
-1\left[  X_{i1ts}^{\prime}\left(  \beta+\rho_{2}\upsilon_{N}^{-1}\right)
>0\right]  \right)  \right. \\
&  \left.  +\mathcal{K}_{h_{N}}(X_{i1ts},W_{its})Y_{idst}\left(  -1\right)
^{d_{2}}\left(  1\left[  X_{i2st}^{\prime}\left(  \beta+\rho_{1}\upsilon
_{N}^{-1}\right)  >0\right]  -1\left[  X_{i2st}^{\prime}\left(  \beta+\rho
_{2}\upsilon_{N}^{-1}\right)  >0\right]  \right)  \right\}  .
\end{align*}
Thus, the interaction terms after expanding the square in $\mathbb{L}$'s are
asymptotically negligible, and we only need to focus on the squares of each
term above.

We now derive the square of the first term multiplied by $h_{N}^{k_{1}+k_{2}}$
and the rest follow similarly.
\begin{align*}
&  \mathbb{E}\left\{  h_{N}^{k_{1}+k_{2}}\left[  \mathcal{K}_{h_{N}}%
(X_{i2ts},W_{its})Y_{idst}\left(  -1\right)  ^{d_{1}}\left(  1\left[
X_{i1ts}^{\prime}\left(  \beta+\rho_{1}\upsilon_{N}^{-1}\right)  >0\right]
-1\left[  X_{i1ts}^{\prime}\left(  \beta+\rho_{2}\upsilon_{N}^{-1}\right)
>0\right]  \right)  \right]  ^{2}\right\} \\
= & \mathbb{E}\left\{  \mathbb{E}\left[  \left.  \left\vert Y_{idst}%
\right\vert \left\vert 1\left[  X_{i1ts}^{\prime}\left(  \beta+\rho
_{1}\upsilon_{N}^{-1}\right)  >0\right]  -1\left[  X_{i1ts}^{\prime}\left(
\beta+\rho_{2}\upsilon_{N}^{-1}\right)  >0\right]  \right\vert \text{
}\right\vert X_{i2ts},W_{its}\right]  h_{N}^{k_{1}+k_{2}}\mathcal{K}_{h_{N}%
}^{2}(X_{i2ts},W_{its})\right\} \\
= & \mathbb{E}\left[  \left.  \left\vert Y_{idst}\right\vert \left\vert
1\left[  X_{i1ts}^{\prime}\left(  \beta+\rho_{1}\upsilon_{N}^{-1}\right)
>0\right]  -1\left[  X_{i1ts}^{\prime}\left(  \beta+\rho_{2}\upsilon_{N}%
^{-1}\right)  >0\right]  \right\vert \text{ }\right\vert X_{i2ts}%
=0,W_{its}=0\right] \\
&  \cdot f_{X_{2ts},W_{ts}}\left(  0,0\right)  \left[  \int K^{2}\left(
u\right)  \text{d}u\right]  ^{k_{1}+k_{2}}+O\left(  h_{N}\right) \\
= & \mathbb{E}\left\{  \mathbb{E}\left[  \left\vert Y_{idst}\right\vert \text{
}|X_{i1ts},X_{i2ts}=0,W_{its}=0\right]  \right.  \text{ }\\
&  \cdot\left\vert 1\left[  X_{i1ts}^{\prime}\left(  \beta+\rho_{1}%
\upsilon_{N}^{-1}\right)  >0\right]  -1\left[  X_{i1ts}^{\prime}\left(
\beta+\rho_{2}\upsilon_{N}^{-1}\right)  >0\right]  \right\vert \text{
}|\left.  X_{i2ts}=0,W_{its}=0\right\} \\
&  \cdot f_{X_{2ts},W_{ts}}\left(  0,0\right)  \left[  \int K^{2}\left(
u\right)  \text{d}u\right]  ^{k_{1}+k_{2}}+O\left(  h_{N}\right) \\
= & \int\kappa_{dts}^{\left(  4\right)  }\left(  x\right)  \left\vert 1\left[
x^{\prime}\left(  \beta+\rho_{1}\upsilon_{N}^{-1}\right)  >0\right]  -1\left[
x^{\prime}\left(  \beta+\rho_{2}\upsilon_{N}^{-1}\right)  >0\right]
\right\vert f_{X_{1ts}|\left\{  X_{2ts}=0,W_{its}=0\right\}  }\left(
x\right)  \text{d}x\\
&  \cdot f_{X_{2ts},W_{ts}}\left(  0,0\right)  \bar{K}_{2}^{k_{1}+k_{2}%
}+O\left(  h_{N}\right) ,
\end{align*}
where $\kappa_{dts}^{\left(  4\right)  }\left(  x\right)  \equiv
\mathbb{E}\left[  \left\vert Y_{idst}\right\vert \text{ }|X_{i1ts}%
=x,X_{i2ts}=0,W_{its}=0\right] $ and $\bar{K}_{2}\equiv\int K^{2}\left(
u\right)  \text{d}u$. By Assumption P9, $\upsilon_{N}h_{N}= ( Nh_{N}%
^{k_{1}+k_{2}} ) ^{1/3}h_{N}\rightarrow0,$ so the bias term is asymptotic
negligible. To calculate the above integral, we follow \cite{KimPollard1990}
and decompose $X_{1ts}=a\beta+\bar{X}_{1ts},$ with $\bar{X}_{1ts}$ orthogonal
to $\beta.$ We use $f_{X_{1ts}}\left(  a,\bar{x}\right)  $ to denote the
density of $X_{1ts}$ at $X_{1ts}=a\beta+\bar{x}$. Using the results in
\cite{KimPollard1990},
\begin{align*}
&  \lim_{N\rightarrow\infty}\upsilon_{N}\cdot\int\kappa_{dts}^{\left(
4\right)  }\left(  x\right)  \left\vert 1\left[  x^{\prime}\left(  \beta
+\rho_{1}\upsilon_{N}^{-1}\right)  >0\right]  -1\left[  x^{\prime}\left(
\beta+\rho_{2}\upsilon_{N}^{-1}\right)  >0\right]  \right\vert f_{X_{1ts}%
|\left\{  X_{2ts}=0,W_{ts}=0\right\}  }\left(  x\right)  \text{d}x\\
=  & \int\kappa_{dts}^{\left(  4\right)  }\left(  \bar{x}\right)  \left\vert
\bar{x}^{\prime}\left(  \rho_{1}-\rho_{2}\right)  \right\vert f_{X_{1ts}%
|\left\{  X_{2ts}=0,W_{ts}=0\right\}  }\left(  0,\bar{x}\right)  \text{d}%
\bar{x}.
\end{align*}
To summarize, we have%
\begin{align*}
&  \lim_{N\rightarrow\infty}\upsilon_{N}\mathbb{E}\left\{  h_{N}^{k_{1}+k_{2}%
}\left[  \mathcal{K}_{h_{N}}(X_{i2ts},W_{its})Y_{idst}\left(  -1\right)
^{d_{1}}\left(  1\left[  X_{i1ts}^{\prime}\left(  \beta+\rho_{1}\upsilon
_{N}^{-1}\right)  >0\right]  -1\left[  X_{i1ts}^{\prime}\left(  \beta+\rho
_{2}\upsilon_{N}^{-1}\right)  >0\right]  \right)  \right]  ^{2}\right\} \\
=  & \int\kappa_{dts}^{\left(  4\right)  }\left(  \bar{x}\right)  \left\vert
\bar{x}^{\prime}\left(  \rho_{1}-\rho_{2}\right)  \right\vert f_{X_{1ts}%
|\left\{  X_{2ts}=0,W_{ts}=0\right\}  }\left(  0,\bar{x}\right)  \text{d}%
\bar{x}f_{X_{2ts},W_{ts}}\left(  0,0\right)  \bar{K}_{2}^{k_{1}+k_{2}}.
\end{align*}
Similarly, we have for the second term
\begin{align*}
&  \lim_{N\rightarrow\infty}\upsilon_{N}\mathbb{E}\left\{  h_{N}^{k_{1}+k_{2}%
}\left[  \mathcal{K}_{h_{N}}(X_{i1ts},W_{its})Y_{idst}\left(  -1\right)
^{d_{2}}\left(  1\left[  X_{i2st}^{\prime}\left(  \beta+\rho_{1}\upsilon
_{N}^{-1}\right)  >0\right]  -1\left[  X_{i2st}^{\prime}\left(  \beta+\rho
_{2}\upsilon_{N}^{-1}\right)  >0\right]  \right)  \right]  ^{2}\right\} \\
=  & \int\kappa_{dts}^{\left(  5\right)  }\left(  \bar{x}\right)  \left\vert
\bar{x}^{\prime}\left(  \rho_{1}-\rho_{2}\right)  \right\vert f_{X_{2ts}%
|\left\{  X_{1ts}=0,W_{ts}=0\right\}  }\left(  0,\bar{x}\right)  \text{d}%
\bar{x}f_{X_{1ts},W_{ts}}\left(  0,0\right)  \bar{K}_{2}^{k_{1}+k_{2}},
\end{align*}
where%
\[
\kappa_{dts}^{\left(  5\right)  }\left(  x\right)  \equiv\mathbb{E}\left[
\left\vert Y_{idst}\right\vert \text{ }|X_{i2ts}=x,X_{i1ts}=0,W_{its}%
=0\right]  .
\]
Therefore,
\begin{align*}
\mathbb{L}\left(  \rho_{1}-\rho_{2}\right)   &  =\lim_{N\rightarrow\infty
}\upsilon_{N}\mathbb{E}\left\{  h_{N}^{k_{1}+k_{2}}\left[  \phi_{Ni}\left(
\beta+\rho_{1}\upsilon_{N}^{-1}\right)  -\phi_{Ni}\left(  \beta+\rho
_{2}\upsilon_{N}^{-1}\right)  \right]  ^{2}\right\} \\
&  =\sum_{t>s}\sum_{d\in\mathcal{D}}\left\{  \int\kappa_{dts}^{\left(
4\right)  }\left(  \bar{x}\right)  \left\vert \bar{x}^{\prime}\left(  \rho
_{1}-\rho_{2}\right)  \right\vert f_{X_{1ts}|\left\{  X_{2ts}=0,W_{ts}%
=0\right\}  }\left(  0,\bar{x}\right)  \text{d}\bar{x}f_{X_{2ts},W_{ts}%
}\left(  0,0\right)  \right. \\
&  \left.  +\int\kappa_{dts}^{\left(  5\right)  }\left(  \bar{x}\right)
\left\vert \bar{x}^{\prime}\left(  \rho_{1}-\rho_{2}\right)  \right\vert
f_{X_{2ts}|\left\{  X_{1ts}=0,W_{ts}=0\right\}  }\left(  0,\bar{x}\right)
\text{d}\bar{x}f_{X_{1ts},W_{ts}}\left(  0,0\right)  \right\}  \bar{K}%
_{2}^{k_{1}+k_{2}}.
\end{align*}
Finally, by equation (\ref{EQ:H}),
\begin{align*}
&  \mathbb{H}_{1}\left(  \rho_{1},\rho_{2}\right) \\
=  & \frac{1}{2}\sum_{t>s}\sum_{d\in\mathcal{D}}\left\{  \int\kappa
_{dts}^{\left(  4\right)  }\left(  \bar{x}\right)  \left[  \left\vert \bar
{x}^{\prime}\rho_{1}\right\vert +\left\vert \bar{x}^{\prime}\rho
_{2}\right\vert -\left\vert \bar{x}^{\prime}\left(  \rho_{1}-\rho_{2}\right)
\right\vert \right]  f_{X_{1ts}|\left\{  X_{2ts}=0,W_{ts}=0\right\}  }\left(
0,\bar{x}\right)  \text{d}\bar{x}f_{X_{2ts},W_{ts}}\left(  0,0\right)  \right.
\\
&  +\left.  \int\kappa_{dts}^{\left(  5\right)  }\left(  \bar{x}\right)
\left[  \left\vert \bar{x}^{\prime}\rho_{1}\right\vert +\left\vert \bar
{x}^{\prime}\rho_{2}\right\vert -\left\vert \bar{x}^{\prime}\left(  \rho
_{1}-\rho_{2}\right)  \right\vert \right]  f_{X_{2ts}|\left\{  X_{1ts}%
=0,W_{ts}=0\right\}  }\left(  0,\bar{x}\right)  \text{d}\bar{x}f_{X_{1ts}%
,W_{ts}}\left(  0,0\right)  \right\}  \bar{K}_{2}^{k_{1}+k_{2}}.
\end{align*}

The same arguments can be applied to obtain $\mathbb{H}_{2}$ as
\begin{align*}
\mathbb{H}_{2}\left(  \delta_{1},\delta_{2}\right)  =  & \frac{1}{2}\sum
_{t>s}\int\kappa_{\left(  1,1\right)  ts}^{\left(  6\right)  }\left(  \bar
{w}\right)  \left[  \left\vert \bar{w}^{\prime}\delta_{1}\right\vert
+\left\vert \bar{w}^{\prime}\delta_{2}\right\vert -\left\vert \bar{w}^{\prime
}\left(  \delta_{1}-\delta_{2}\right)  \right\vert \right]  f_{W_{ts}|\left\{
X_{1ts}=0,X_{2ts}=0\right\}  }\left(  0,\bar{w}\right)  \text{d}\bar{w}\\
&  \cdot f_{X_{1ts},X_{2ts}}\left(  0,0\right)  \bar{K}_{2}^{2k_{1}},
\end{align*}
where $W_{ts}=a\gamma+\bar{W}_{ts}$ with $\bar{W}_{ts}$ orthogonal to $\gamma
$, $f_{W_{ts}}\left(  a,\bar{w}\right)  $ denotes the density of $W_{ts}$ at
$W_{ts}=a\gamma+\bar{W}_{ts},$ and
\[
\kappa_{\left(  1,1\right)  ts}^{\left(  6\right)  }\left(  w\right)
\equiv\mathbb{E}\left[  \left\vert Y_{i\left(  1,1\right)  ts}\right\vert
\text{ }|W_{its}=w,X_{i1ts}=0,X_{i2ts}=0\right]  \text{.}%
\]
\end{proof}

\end{appendix}

\bibliography{biblist}
\end{document}